\newtheorem{observation}{Observation}
\newcommand{\appref}[1]{{\hyperref[proof:#1]{\appsymb}}}
\newcommand{\mytodo}[2]{\xspace}
\newcommand{\myrevtodo}[2]{{%
		\let\marginpar\marginnote
		\reversemarginpar
		\renewcommand{\baselinestretch}{0.8}%
		}}
\newcommand{\myinlinetodo}[2]{\todo[size=\small, color=#1!50!white, inline, 
	caption={}]{#2}\xspace}
\newcommand{\registerAuthor}[3]{%
	\expandafter\newcommand\csname #2com\endcsname[1]{\mytodo{#3}{\textsc{#2}: 
			##1}}%
	\expandafter\newcommand\csname 
	#2revcom\endcsname[1]{\myrevtodo{#3}{\textsc{#2}: ##1}}%
	\expandafter\newcommand\csname 
	#2inline\endcsname[1]{\myinlinetodo{#3}{\textsc{#2}: ##1}}%
	\expandafter\newcommand\csname 
	#2inlineLater\endcsname[1]{\lv{\myinlinetodo{#3}{\textsc{#2}: ##1}}}%
}
\newcommand{\HRUQ}{\textsl{HR-$\text{Q}^\text{U}$}\xspace}
\newcommand{\HRLQ}{\textsl{HR-$\text{Q}_\text{L}$}\xspace}
\newcommand{\HRLUQ}{\textsl{HR-$\text{Q}_\text{L}^\text{U}$}\xspace}
\newcommand{\HRLUQI}{\textsl{HA-$\text{Q}_\text{L}^\text{U}$}\xspace}
\newcommand{\HALUQ}{\textsl{HA-$\text{Q}_\text{L}^\text{U}$}\xspace}
\newcommand{\HALQ}{\textsl{HA-$\text{Q}_\text{L}$}\xspace}
\newcommand{\HRUQT}{\textsl{HR-$\text{Q}^\text{U}$-T}\xspace}
\newcommand{\HRLQT}{\textsl{HR-$\text{Q}_\text{L}$-T}\xspace}
\newcommand{\HRLUQT}{\textsl{HR-$\text{Q}_\text{L}^\text{U}$-T}\xspace}
\newcommand{\HRLUQIT}{\textsl{HA-$\text{Q}_\text{L}^\text{U}$-T}\xspace}
\newcommand{\HALUQT}{\textsl{HA-$\text{Q}_\text{L}^\text{U}$-T}\xspace}
\newcommand{\HRLUQtwo}{\textsl{HR-$\text{Q}_{\text{L}\le 2}^\text{U}$}\xspace}
\newtheorem{theorem}{Theorem}
\newtheorem{lemma}{Lemma}
\newtheorem{corollary}{Corollary}
\newtheorem{proposition}{Proposition}
\tikzstyle{vertex}=[draw, circle, fill, inner sep = 2pt]
\tikzstyle{squared-vertex}=[draw, fill, inner sep = 2pt]
\newcommand{\hone}{\ensuremath{h^1}}
\newcommand{\htwo}{\ensuremath{h^2}}
\newcommand{\ibefore}{\ensuremath{\mathcal{I}^\mathrm{before}}}
\newcommand{\iafter}{\ensuremath{\mathcal{I}^\mathrm{after}}}
\newcommand{\uq}{u}
\newcommand{\fl}{flexible\xspace}
\newcommand{\infl}{inflexible\xspace}
\newcommand{\lowerq}{\ell}
\DeclareMathOperator{\edge}{edge}
\DeclareMathOperator{\ver}{vert}
\DeclareMathOperator{\full}{full} 
\newcommand{\minus}{\scalebox{0.75}[1.0]{$-$}}
\newcommand{\ab}[1]{AB\textsuperscript{+}\minus#1\xspace}
\newcommand{\ba}[1]{BA\minus#1\xspace}
\DeclareMathOperator{\open}{open}
\DeclareMathOperator{\quota}{quota}
\tikzstyle{vertex}=[draw, circle, fill, inner sep = 2pt]
\tikzstyle{squared-vertex}=[draw, fill, inner sep = 2pt]
\algnewcommand\algorithmicinput{\textbf{Input:}}
\algnewcommand\algorithmicoutput{\textbf{Output:}}
\algnewcommand\Input{\item[\algorithmicinput]}
\algnewcommand\Output{\item[\algorithmicoutput]}
\algnewcommand\algorithmicgoto{\textbf{GoTo }}
\algnewcommand\GoTo{\item[\algorithmicgoto]}
\begin{document}
	\title{\Large \bf {A Fine-Grained View on Stable Many-To-One Matching Problems with 
	Lower 
	and Upper Quotas}\thanks{An extended abstract of the paper appeared in the 
	proceedings of the 16th International Conference on Web and Internet 
	Economics (WINE 2020), Springer LNCS 12495, pages 31-44, 2020. We thank 
	Robert Bredereck and Rolf Niedermeier 
		for 
		useful discussions.}}

\author{Niclas~Boehmer\thanks{Supported by the DFG project MaMu  (NI 
		369/19).}\xspace and Klaus~Heeger\thanks{Supported by DFG 
		Research Training Group 2434 ``Facets of Complexity''.}}

\date{
	\small
	TU Berlin, Algorithmics and Computational Complexity, Germany\\
	\texttt{\{niclas.boehmer,heeger\}@tu-berlin.de}\\
}
	\maketitle              
	\begin{abstract}
		In the Hospital Residents problem with lower and upper quotas
		(\HRLUQ), 
the goal is to find a stable
matching of residents to hospitals where the
number of residents matched to a hospital is either between its lower and upper 
quota or zero [Bir\'{o} et al., TCS 2010].  We analyze this problem from
a parameterized perspective using several natural parameters such as the number 
of hospitals and the number of residents. Moreover, we present a polynomial-time 
algorithm that finds a
stable matching if it exists on instances with maximum lower quota two.
Alongside \HRLUQ,
we also 
consider two closely related models of independent interest, namely, 
the special case of \HRLUQ where 
each hospital has only a lower quota but no upper quota and the variation of \HRLUQ where
hospitals do not have preferences over residents, which is also known as the 
House Allocation problem with lower and upper quotas. Lastly, we investigate 
how the parameterized complexity of these three models changes if 
preferences may contain ties. 
	\end{abstract}

	\section{Introduction}
	Since its introduction by Gale and 
	Shapely~\cite{GaleS62}, the Hospital
	Residents problem, which is also known as the College Admission problem,
	has
	attracted a lot of attention. Besides a rich body of theoretical 
	work~\cite{DBLP:books/ws/Manlove13}, also
	many
	practical applications have been 
	identified~\cite{DBLP:journals/ijgt/Roth08}.
	Applications include various centralized assignment problems, for example, 
	in 
	the context of education 
	\cite{10.1257/000282805774670167,biroGS,Zhang2010AnalysisOT} or in the
	context of assigning 
	career starters to their first work 
	place~\cite{RePEc:inm:orinte:v:33:y:2003:i:3:p:1-11,RePEc:ucp:jpolec:v:92:y:1984:i:6:p:991-1016}.
	In the classical 
	Hospital Residents problem (\HRUQ), we are given a set of residents, each
	with strict
	preferences over hospitals, and a set of hospitals,
	each 
	with an upper quota and strict preferences over residents.
	In a
	feasible matching of residents to hospitals, the number of residents that 
	are assigned to a hospital is at most its 
	upper quota. A hospital-resident pair~$(h,r)$ blocks a matching~$M$ if
	resident~$r$
	prefers hospital~$h$ to the hospital to which $r$ is matched by~$M$ and
	the number of residents 
	matched to $h$ is below its upper quota or $h$
	prefers~$r$ 
	to one of the residents
	matched to it.
	The task in the Hospital Residents problem is to find a stable
	matching, i.e., a feasible matching that does not admit a blocking pair.
	Gale and Shapely~\cite{GaleS62} presented 
	a linear-time 
	algorithm that always finds a stable matching in a Hospital Residents instance.
	
	In practice, some hospitals may also
	have a lower quota, i.e., a minimum number of assigned residents such that 
	the hospital can open and 
	accommodate them. For example, due to economical or
	political reasons, a university might have a
	lower quota on the number of students for each course of study. Moreover,
	practical considerations might impose lower quotas as well, for instance, if
	residents assigned to a hospital need to perform certain tasks together
	which require at least a given number of participating residents.
	Bir{\'{o}} et 
	al.~\cite{DBLP:journals/tcs/BiroFIM10}
	captured these considerations by extending the Hospital
	Residents problem such that each hospital has a lower and upper quota (\HRLUQ).
	Here, feasibility additionally requires that the number of residents
	assigned to a hospital is either zero or at least its lower quota, while
	stability additionally requires that there does not exist a blocking 
	coalition, i.e., a sufficiently large subset of
	residents that
	want to open a currently closed hospital together.
	Bir{\'{o}} et 
	al.~\cite{DBLP:journals/tcs/BiroFIM10} proved that deciding the existence
	of a stable 
	matching in an~\HRLUQ instance is
	NP-complete. We 
	complement their work with a thorough parameterized complexity analysis of
	\HRLUQ and closely related problems, considering various problem-specific
	parameters. Moreover, we 
	study the special case of \HRLUQ where hospitals have only a lower quota (\HRLQ), which
	has not been considered before.
	
	Lower and upper quotas have also been 
	applied to 
	the House Allocation problem~(\HRLUQI)
	 where the goal is to match a set of applicants to a set of houses 
	 \cite{DBLP:journals/mss/CechlarovaF17,DBLP:journals/orl/Kamiyama13,MONTE201314}.
	  In~\HRLUQI,
	houses have a lower and upper quota but no preferences over 
	applicants, while applicants have preferences over houses. One
	possible application 
	of this model is the assignment of kids to 
	different 
	activities, where lower quotas could arise due to
	economical or practical constraints, 
	for
	instance, playing soccer with only three kids is 
	less fun.
	So far, literature on
	\HRLUQI mainly focused on finding Pareto optimal matchings.
	However, in 
	contrast to the 
	classical House Allocation problem, 
	Pareto optimality in \HRLUQI does
	not imply stability. Thus, finding stable matchings here is an interesting
	problem on its own. For notational convenience, we also refer to houses as 
	hospitals and to applicants as residents in context of \HRLUQI.
	
	\subsection{Our Contributions}
	We provide an extensive parameterized complexity analysis of the Hospital 
	Residents
	problem with lower and upper quotas (\HRLUQ) and of the two closely related
	problems~\HRLQ and \HALUQ.  We also (briefly)
	consider a generalization
	of these models where we allow for ties in the preferences (\HRLUQT, 
\HRLQT, and \HALUQT, respectively).
    By applying the framework of parameterized 
    complexity, we 
	analyze the influence of various problem-specific parameters such as the number of residents or the number of hospitals on the
	complexity of these problems.
	Motivated by
	the observation that there might exist stable matchings opening a different 
	set of hospitals
	(of possibly different sizes)~\cite{DBLP:journals/tcs/BiroFIM10}, we also
	consider the problem of
	deciding whether there exists a stable matching where exactly a given set 
	of hospitals~$H_{\open}$ is open and the problem of deciding whether there 
	exists a
	stable matching with exactly $m_{\open}$ ($m_{\text{closed}}$) open
	(closed) 
	hospitals parameterized by $m_{\open}$ ($m_{\text{closed}}$).
	 
	We present an overview of our results in \Cref{ta:sum}. Our most 
	important technical contribution
	is the design of a polynomial-time algorithm for \HRLUQ (and therefore also 
	for \HRLQ) instances
	where all hospitals have lower quota at most two. This answers an open 
	question raised by Bir\'{o} et al. \cite{DBLP:journals/tcs/BiroFIM10} and by Manlove 
	\cite[p.~298]{DBLP:books/ws/Manlove13}. Such \HRLUQ instances are of 
special 
	theoretical interest, as they, for example, subsume a variant of 
	three-dimensional \textsc{Stable Marriage}, where, 
	given two sets
	of agents each with preferences over the agents from the other set, 
	the
	goal
	is to find a stable set of triples, each consisting of two agents from the
	first and one
	agent from the second set. Moreover, there also exist several
	applications where 
	a lower quota of two is of particular interest, for example, assuming 
	that hospitals correspond to (tennis) coaches and residents to (tennis) 
	players, a coach may require that at least two players are assigned to her
	(as she does not always want to play herself).
	
	\begin{table}[t!]
		\begin{center}
			\renewcommand{\arraystretch}{1.5}
			\setlength\tabcolsep{4pt}
			\resizebox{\textwidth}{!}{\begin{tabular}{c|c|c|c|c|c|c|c|c} \hline
				
				& $q_l\leq 2$ & $q_l = 3$ & $H_{\open}$ &$n$ & $m$ &
				$m_{\quota}$ &
				$m_{\open}$ & $m_{\text{closed}}$  \\ \hline\hline
				
				\HRLUQ & \multirow{1}{*}{P} (T. \ref{t:q2})  & 
				\multirow{6}{*}{\shortstack{NP-c. \\ (T. \ref{th::NP-compl})}}  
				&P (P. \ref{pr:HRULQ-H'})& \multirow{6}{*}{\shortstack{W[1]-h. 
						\\ (T. 
						\ref{th:W-n})}}  & 
				\multicolumn{2}{c|}{FPT (C. \ref{co:mFPT})}  
				& 
				W[1]-h. (C. \ref{c:op}) & 
				\multirow{4}{*}{\shortstack{W[1]-h. \\(T. 
						\ref{thm:ha-m-closed})}}
				\\
				
				\HRLUQT & NP-c. (P. \ref{p:q2T})&&NP-c. (T. 
				\ref{th:HRLUQIH'})&& \multicolumn{2}{c|}{W[1]-h. (P. 
					\ref{pr:HRT-W})}& para-h. (C. \ref{c:HAco})&    \\ 
					\cline{1-2} \cline{4-4}\cline{6-8}
				
				\HRLQ & P (T. \ref{t:q2})  & &
				P (O. \ref{ob:HRLQ-H'})& & 
				\multicolumn{2}{c|}{FPT (C. \ref{co:mFPT})}  
				& \multirow{2}{*}{\shortstack{W[1]-h.\\ (C. \ref{c:op})}}
				& 
				\\
				
				\HRLQT & NP-c. (P. \ref{p:q2T})&&P (P. 
				\ref{pr:HRLQT-H'})&&\multicolumn{2}{c|}{FPT (P. 
				\ref{pr:LTies})}&&   \\ \cline{1-2} 
				\cline{4-4}
				\cline{6-9}
				
				\HALUQ & \multirow{2}{*}{\shortstack{NP-c. \\(P. 
						\ref{pr:HRLUQI-NP-const})}}   &&  
				\multirow{2}{*}{\shortstack{NP-c.\\ 
						(T. 
						\ref{th:HRLUQIH'})}} & &
				FPT (P. 
				\ref{pr:HRLUQI-FPTM}) & 
				\multirow{2}{*}{\shortstack{para-h. \\(P. 
						\ref{pr:HRLUQI-NP-const})}}& 
				\multirow{2}{*}{\shortstack{para-h. \\(C. \ref{c:HAco})}} & 
				\multirow{2}{*}{\shortstack{W[1]-h. \\(C. \ref{c:HAco})}} \\
				\HALUQT &  && & & FPT (C. 
				\ref{co:HRLUQIT-m})
				&& 
				&  \\ \hline
			\end{tabular}}
			\caption{Overview of our results.
			The maximum lower quota of a hospital is denoted by~$q_l$, the 
			number of residents by~$n$, the number of hospitals by~$m$, the 
			number of hospitals with lower quota larger than 1 by~$m_{\quota}$, 
			the number of open hospitals by $m_{\open}$, and the number of 
			closed hospitals by $m_{\operatorname{closed}}$.
			``P'' stands for polynomial-time solvability, ``NP-c.'' for NP-completeness, ``FPT'' for fixed-parameter tractability, ``W[1]-h.'' for W[1]-hardness, and ``para-h.'' for paraNP-hardness with respect to the corresponding parameter.
				With the exception of the parameter 
				$m_{\text{closed}}$ for \HRLUQI, \HALUQT, and \HRLUQT, we 
				present XP algorithms for
				all 
				W[1]-hard cases.
				}
			\label{ta:sum}
		\end{center}
	\end{table}
	
	Our rich set of tractability
	and intractability results allows us to draw several high-level conclusions
	about the considered problems.
	First, our results highlight the differences between the three considered 
	models from a computational perspective: While~\HRLQ is very
	similar to \HRLUQ, 
	\HALUQ is computationally more demanding than \HRLUQ. The first observation 
	suggests that the complexity of \HRLUQ comes solely from the
	lower quotas of hospitals. The second observation
	indicates that the hospitals' preferences in the lower and upper quotas 
	setting make the problem easier, as 
	they may act as a ``tie-breaker'' to decide which resident deserves a 
	better spot in a stable matching. 
	
	Second, our results identify the ``difficult parts'' of the
	considered
	problems. Considering that for \HRLQ and \HRLUQ, we can decide 
	in 
	polynomial time whether there exists a stable matching opening exactly a
	given set of 
	hospitals, the complexity of \HRLQ and \HRLUQ comes purely from
	deciding which hospitals to open and not from the 
	task of assigning residents to hospitals. This 
	finding is strengthened by the observation that most of our 
	hardness reductions 
	also work if we ignore blocking pairs, i.e.,  for
	the problem of assigning residents
	to hospitals such that the lower quota of each hospital is respected and no
	blocking coalition of residents to open a closed hospital exists. 

	Third, our results identify the fine line between
	tractability and intractability. For example, parameterizing the three 
	problems by the number of hospitals leads to fixed-parameter tractability, 
	while only considering the number of open or closed hospitals in a stable 
	matching as a parameter results in W[1]-hardness. 
	A similar contrast arises when considering the
	influence of the total
	number of hospitals and the number of hospitals with non-unit lower quota 
	on~\HALUQ. 
	
	Fourth, we analyze what happens if we allow for ties in the preferences. In 
	this case, \HRLUQT generalizes both \HALUQT and \HRLQT. This is also 
	reflected in the 
	complexity of the three problems parameterized by the number $m$ of 
	hospitals: While for \HALUQT and \HRLQT it is possible to construct a 
	fixed-parameter tractable algorithm by bounding the number 
	of resident types in a function of $m$ (hospitals are indifferent 
	between residents), \HRLUQT is W[1]-hard parameterized by $m$. To prove 
	this, as a side 
	result, we establish 
	that the problem of deciding whether a \HRUQT instance admits a stable 
	matching which matches all residents is W[1]-hard parameterized by $m$. 
	
	\subsection{Related Work}
	After the work of Bir{\'{o}} et al.~\cite{DBLP:journals/tcs/BiroFIM10},
	only few papers revisited computational problems related to the NP-hard 
	Hospital
	Residents problem with lower and
	upper quotas. 
	A notable exception is the work of
	Agoston et al.~\cite{DBLP:journals/jco/AgostonBM16} who proposed an ILP 
	formulation to find stable matchings and several 
	preprocessing rules to decide which hospitals must be open in a stable 
matching.
	Apart from this, most of the
	follow-up work applied the idea of lower and upper quotas to other 
	settings, such as the House Allocation problem
	\cite{DBLP:journals/mss/CechlarovaF17,DBLP:journals/orl/Kamiyama13,MONTE201314} or maximum-weight many-to-one matchings in bipartite graphs~\cite{DBLP:journals/algorithmica/ArulselvanCGMM18},
	 or interpreted it differently.
	 For example, Arulselvan et
	al.~\cite{DBLP:journals/algorithmica/ArulselvanCGMM18}
	studied finding maximum-weight many-to-one matchings in bipartite
	graphs where vertices on one side of the bipartition have a lower and upper quota. They
	conducted a parameterized
	analysis of the resulting computational problems and studied quota-
	and degree-restricted cases.
		
	Hamada et al.~\cite{DBLP:journals/algorithmica/HamadaIM16} introduced 
	an alternative version of the Hospital Residents problem with lower and
	upper quotas. In their model, hospitals have lower and upper quotas,
	but are not allowed to be closed. Thus, in a feasible matching, the
	lower and upper quota of each hospital 
	needs to be respected. As deciding whether a stable matching 
	exists is polynomial-time solvable for this model, their main focus lied on 
	finding a feasible matching 
	minimizing the number of blocking pairs. Mnich et
	al.~\cite{DBLP:journals/algorithmica/MnichS20} studied the \textsc{Stable
	Marriage with Covering Constraints} problem, which corresponds to the special case
	of Hamada et al.'s model where each hospital 
	has unit upper quota, from a parameterized perspective considering parameters
	such as the number of
	blocking pairs and the
	number of hospitals with non-zero lower quota. To capture stable matching
	problems 
	with 
	diversity or distributional constraints, the model of Hamada et 
	al.~\cite{DBLP:journals/algorithmica/HamadaIM16} has been adapted and 
	further developed in various directions, for example, by assuming that 
	residents belong to different types and each hospital has type-specific lower and upper
	quotas 
	\cite{DBLP:conf/atal/0001GSW19,DBLP:journals/jet/EhlersHYY14,DBLP:journals/jair/KurataHIY17}.
	
	Another popular stable matching problem is the Hospital Residents problem 
with 
couples~(\textsl{HRC})~\cite{DBLP:conf/wea/BiroMM14,DBLP:journals/disopt/MarxS11},
 where some of the
residents are grouped in pairs and submit their preferences together.
	The \HRLUQ problem where all hospitals have upper quota at most 
two is closely related to the special case of \textsl{HRC}
where all hospitals have upper quota one:
	Switching the roles of residents and hospitals and interpreting couples as 
hospitals with lower quota two, the only difference between the two problems is 
that the preferences of couples are over pairs of hospitals, while the 
preferences of quota-two hospitals are over single residents. Notably,  
\textsl{HRC} is already NP-hard in the 
described special case \cite{DBLP:conf/wea/BiroMM14}, which is in sharp 
contrast to our polynomial-time algorithm which also applies for \HRLUQ where 
all hospitals have upper quota at most two.
	
	From a technical perspective, our work falls in line with previous work on 
	the parameterized complexity of stable matching problems 
	\cite{DBLP:journals/tcs/AdilGRSZ18,DBLP:conf/sagt/BoehmerBHN20,DBLP:conf/isaac/BredereckHKN19,DBLP:journals/algorithmica/MarxS10,DBLP:journals/disopt/MarxS11,MeeksR20,DBLP:journals/algorithmica/MnichS20}.
	
	\subsection{Structure of the Paper}
    The paper is structured as follows.
    We start with the preliminaries in \Cref{se:pre}.
    In \Cref{sec:strict}, we analyze the parameterized complexity of the 
    considered Hospital Residents problems with lower and upper quotas with 
    respect to several parameters related to the number of residents or 
    hospitals.
    In \Cref{sec:q2}, we present a polynomial-time algorithm for the Hospital Residents problem with lower and upper quotas if the lower quota of each hospital is at most two.
    Afterwards, we extend our model by allowing ties in the preferences in 
    \Cref{se:ties}.
	
	\section{Preliminaries} \label{se:pre}

	We consider different models of stable bipartite many-to-one 
	matchings. For the sake of readability, we refer to all of them as 
	different 
	variants of the Hospital Residents problem with lower and upper quotas
	(\HRLUQ).
	In \HRLUQ, we are given a set~$R=\{r_1,\dots r_n\}$ of
	\emph{residents} and a set $H=\{h_1,\dots,
	h_m\}$ of \emph{hospitals}, each with a lower and upper quota.
	Throughout the paper, $n$ denotes the number of residents and~$m$ the 
	number of hospitals.
	We refer to the joint set of
	residents and hospitals as
	\emph{agents}.
	Each resident $r\in R$ \emph{accepts} a subset of hospitals $A(r)\subseteq 
	H$ and 
	each hospital $h\in H$ \emph{accepts} a subset of residents~$A(h) \subseteq 
	R$. We assume that acceptability is symmetric, 
	that is, $h\in A(r)$ if and only if $r\in A(h)$. 
	Each agent $a\in R\cup H$ has a preference list $\succ_a$ in which all 
	agents 
	from~$A(a)$ are ranked in strict order. For three agents $a$, $a_1$, and 
	$a_2$, 
	we say that $a$ \emph{prefers}~$a_1$ to $a_2$ and write $a_1\succ_a a_2$ if 
	$a_1,a_2\in A(a)$ and $a$ ranks $a_1$ above $a_2$. 

	A \emph{matching} $M$ is a subset of $R\times H$ where each resident
	is contained in at most one pair and for each pair $(r,h)\in M$,
	agents $r$ and $h$ accept each other. For a matching~$M$ and a resident~$r\in 
	R$, we denote by $M(r)$ the hospital to which $r$ is
	matched  
	in~$M$,~i.e.,~$M(r)=h$ if $(r,h)\in M$, and we set $M(r):=\square$
	if $r$ is not matched, i.e., $r\neq r'$ for all $(r', h) \in M$. All residents $r$ prefer each 
	hospital~$h\in A(r)$ to being unmatched, i.e., to
	$M(r)=\square$.
	Further, for a hospital~$h\in H$, we denote
	by $M(h)$ the set of residents 
	that are matched to $h$, i.e., $r\in M(h)$ if $(r,h)\in M$. We 
	sometimes 
	write $M$ as a set of pairs of
	the form $(h,\{r_1,\dots, r_k\})$ which denotes that the residents
	$r_1,\dots, r_k$ (and possibly other residents if $h$ appears in more than 
	one 
	tuple) are matched to hospital $h$ in~$M$.
	
	In \HRLUQ, each hospital $h\in H$ has an upper quota 
	$u(h)$ and a lower quota $l(h)$ with $1\leq l(h)\leq u(h)$. We call a 
	matching $M$ \emph{feasible} if,
	for all hospitals $h\in H$, it either holds that $|M(h)|=0$ or 
	$l(h)\leq|M(h)|\leq u(h)$. We say that a hospital $h\in H$ is \emph{closed} 
	in $M$ if $|M(h)|=0$ and we say that it is \emph{open} otherwise. Moreover,
	we call an open hospital $h\in H$ \emph{full} if $|M(h)|=u(h)$ and an open 
	hospital~$h\in H$ \emph{undersubscribed} if~$|M(h)|<u(h)$.
	In a matching~$M$, a hospital-resident pair~$(r, h)\in R\times H$ is a
	\emph{blocking pair} if 
	$h$ is open in $M$, both~$r$ and $h$ find each other acceptable, $r$ 
	prefers $h$ to $M(r)$, and~$h$ is 
	either undersubscribed or prefers~$r$ to
	at least one resident 
	from $M(h)$. 
	Moreover, we call $(h,\{r_1,\dots, r_k\})$ with $k=l(h)$ a \emph{blocking
	coalition} if 
	$h$ is closed in $M$ and, for all~$i\in [k]$, resident $r_i$
	prefers $h$ to $M(r_i)$.
	In this case, we also write that $\{r_1,\dots,
	r_k\}$ forms a blocking coalition to open $h$. A
	feasible matching is called \emph{stable} if it neither admits a blocking 
	pair nor a blocking coalition.
	It is possible to express different problems related to the classical
	\textsc{Stable Marriage} problem~\cite{GaleS62} in this framework.
	For example, the
	\textsc{Stable Marriage with Incomplete Lists} problem corresponds 
	to finding 
	a stable matching in an \HRLUQ instance where all hospitals have upper 
	quota one.
	
	We also consider the 
	Hospital Residents problem with upper and lower quotas and ties (\HRLUQT). 
	In this case, the preference list $\succsim_{a}$ of each agent $a\in R\cup 
	H$ is a 
	weak order 
	over the set of agents they accept. We say that an agent~$a$
	is \emph{indifferent} between two agents $a_1$ and $a_2$ and write 
	$a_1\sim_a a_2$ if both $a_1 \succsim_a a_2$ and $a_2 \succsim_a a_1$. 
	Further, we say that an agent $a$ \emph{prefers} agent~$a_1$ to 
	agent $a_2$ and write $a_1\succ_a a_2$ if $a_1\succsim_{a} a_2$ and $a$ is 
	not indifferent between $a_1$ and $a_2$. Note 
	that the resulting stability notion is known as weak stability in the 
	literature \cite[Chapter 3]{DBLP:books/ws/Manlove13}.  
	
	We now describe how the other two models considered in this paper can be 
	formulated as variants of \HRLUQ. 
	
	\paragraph{House Allocation problem with lower and upper quotas.}
	\HALUQ corresponds to \HRLUQ with one-sided 
	preferences, 
i.e., all hospitals are indifferent among all 
	residents and residents have strict preferences over hospitals. While the definition of a blocking coalition still applies in this setting, a hospital-resident pair $(r, h)\in R\times H$ is only
	\emph{blocking} if 
	$h$ is open in $M$, $r$ accepts~$h$, resident $r$ 
	prefers $h$ to~$M(r)$, and~$h$ is
	undersubscribed. Note that
\HRLUQ does not subsume \HALUQ, as, in 
\HRLUQ, no ties in the preferences are allowed.
    However, \HRLUQT subsumes \HALUQT by setting the preference of each hospital to be a tie involving all acceptable residents.
	
	\paragraph{Hospital Residents problem with lower quotas.} \HRLQ is
	the special case of \HRLUQ where each hospital has upper quota $n+1$. 
	Thereby, no hospital can be full in a matching. Consequently, in a
	matching~$M$, a resident $r$ forms a blocking pair with each open
	hospital~$h$
	she prefers to $M(r)$. Thus, in every stable matching, all residents
	need to be matched to their most preferred open hospital. This in 
	turn implies that the preferences of hospitals over residents can be 
	omitted, as they have no 
	influence on the stability of a matching. 
	Hence, \HRLQ is equivalent to the House
	Allocation problem with lower quotas $(=\HALQ)$ and thus lies in the ``intersection'' of \HRLUQ and \HRLUQI.

	\paragraph{First observations.}
	\label{se:obs}
	As already observed, \HRLQ instances can be expressed both as~\HRLUQ and 
	\HALUQ instances.
	Notably, most instances constructed in our reductions fulfill an additional 
	property which directly transfers the hardness results to a variant 
of~\HRLQ 
	where only blocking coalitions may make a matching unstable.
	\begin{observation} \label{ob:equiv}
		In \HRLQ instances where for each hospital $h\in H$, the 
		number of
		residents
		accepting $h$ is equal to its lower quota $l(h)$, no feasible matching admits a blocking pair,
		while a feasible matching may admit blocking coalitions.
	\end{observation}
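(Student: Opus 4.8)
The plan is to prove the two halves of the statement separately: the ``no blocking pair'' claim by a short counting argument, and the ``blocking coalition possible'' claim by an explicit instance.

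For the first half, I would start from the fact that in any matching~$M$ every hospital satisfies $M(h)\subseteq A(h)$, since a matching only pairs mutually accepting agents. Together with the hypothesis $|A(h)|=l(h)$ this gives the upper bound $|M(h)|\le l(h)$. Now let $M$ be feasible and let $h$ be open. Since every hospital has upper quota $n+1$ in \HRLQ, feasibility of an open hospital reduces to $|M(h)|\ge l(h)$. Combining the two bounds yields $|M(h)|=l(h)=|A(h)|$, and with $M(h)\subseteq A(h)$ this forces $M(h)=A(h)$. Informally: whenever a hospital is open in a feasible matching, it is matched to \emph{exactly} the residents that accept it, with none left over.

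Given this identity, I would rule out blocking pairs directly. Suppose toward a contradiction that $(r,h)$ blocks a feasible matching~$M$. By definition of a blocking pair, $h$ is open and $r$ and $h$ accept each other, so $r\in A(h)$. By the previous step $A(h)=M(h)$, hence $r\in M(h)$, i.e.\ $M(r)=h$. But a blocking pair requires $r$ to prefer $h$ to $M(r)$, which is impossible once $M(r)=h$. This contradiction establishes that no feasible matching admits a blocking pair. I do not expect any real obstacle here; the only point to handle carefully is invoking the mutual-acceptance requirement in the definition of a blocking pair to deduce $r\in A(h)$, which is precisely what triggers the contradiction.

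For the second half it suffices to exhibit one instance meeting the hypothesis in which a feasible matching has a blocking coalition. I would take two hospitals $h_1,h_2$, each with lower quota one, and a single resident $r_1$ who accepts both and ranks $h_2\succ_{r_1} h_1$; then $|A(h_1)|=|A(h_2)|=1=l(h_1)=l(h_2)$, as required. The matching $M=\{(r_1,h_1)\}$ is feasible, but $\{r_1\}$ forms a blocking coalition to open $h_2$: the hospital $h_2$ is closed in~$M$ and $r_1$ prefers $h_2$ to $M(r_1)=h_1$. This shows that, in contrast to blocking pairs, blocking coalitions can indeed occur in such instances, completing the observation.
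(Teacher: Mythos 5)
Your proposal is correct. The paper states this observation without any proof, and your first-half argument is exactly the intended reasoning: for an open hospital $h$, feasibility gives $|M(h)|\ge l(h)$ while $M(h)\subseteq A(h)$ and $|A(h)|=l(h)$ give $|M(h)|\le l(h)$, so $M(h)=A(h)$ and no resident who accepts $h$ is left outside it, which kills every candidate blocking pair. For the second half, the paper's implicit witness is the instance of \Cref{ob:counter} (each hospital there is accepted by exactly $l(h)=2$ residents, and its feasible matchings, e.g.\ the empty matching, admit blocking coalitions); your two-hospital, one-resident example is simpler and equally valid.
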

	Unfortunately, a stable matching may fail to exist in \HRLQ instances (and
	therefore also in \HRLUQ and \HALUQ instances),
	even if all hospitals have lower quota
	at most two:
	\begin{observation}\label{ob:counter}
		Let $R=\{r_1,r_2,r_3\}$ and $H=\{h_1,h_2,h_3\}$ be a \HRLQ instance 
		with each hospital having lower quota two and the following 
		preferences: $r_1: h_1 \succ h_2;$ $r_2: h_2 \succ h_3;$ $r_3: h_3 
		\succ h_1.$ This instance does not admit a stable matching.
	\end{observation}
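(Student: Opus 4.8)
The plan is to rule out stability by an exhaustive case analysis on the set of open hospitals, exploiting the cyclic symmetry of the instance. First I would record the acceptability sets induced by the preference lists: since only the listed hospitals appear in each resident's list and acceptability is symmetric, we have $A(h_1)=\{r_1,r_3\}$, $A(h_2)=\{r_1,r_2\}$, and $A(h_3)=\{r_2,r_3\}$. In particular, each hospital is accepted by exactly two residents and has lower quota two, so in any feasible matching $M$ an open hospital $h$ must satisfy $M(h)=A(h)$; that is, both of its accepting residents are assigned to it.

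Next I would bound the number of open hospitals. Two distinct open hospitals would require two disjoint pairs of residents, hence four residents, but only three are available; therefore at most one hospital is open. This leaves exactly two cases to eliminate. In the first case no hospital is open, so $M$ is the empty matching. Then $h_1$ is closed and both residents in $A(h_1)=\{r_1,r_3\}$ are unmatched and thus prefer $h_1$ to $\square$, so $\{r_1,r_3\}$ forms a blocking coalition opening $h_1$, and the empty matching is unstable.

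In the second case exactly one hospital is open. Here I would use that the preference profile is invariant under the cyclic automorphism $r_i\mapsto r_{i+1}$, $h_i\mapsto h_{i+1}$ (indices modulo $3$), so it suffices to treat $h_1$ open, in which case $M(h_1)=\{r_1,r_3\}$ and $r_2$ is unmatched. The key step — and the part I expect to be the main obstacle — is to locate a blocking coalition even though no resident is simply sitting unmatched at its top choice. The coalition I would exhibit opens $h_3$ via $\{r_2,r_3\}$: resident $r_2$ is unmatched and hence prefers $h_3$ to $\square$, while $r_3$ is currently matched to $h_1$ but prefers $h_3$ to $h_1$ by its list $r_3\colon h_3\succ h_1$. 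Thus $\{r_2,r_3\}$ is a blocking coalition opening $h_3$. Since every case produces a blocking coalition, the instance admits no stable matching.
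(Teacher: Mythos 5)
Your proof is correct. The paper itself offers no explicit proof of this observation---it is stated as self-evident, with only the remark that the instance resembles the Condorcet paradox---so your argument is precisely the verification the authors leave implicit: since each hospital has exactly two acceptors and lower quota two, an open hospital must absorb both of its acceptors, hence at most one hospital can be open; the empty matching is blocked by any coalition of two unmatched acceptors, and if one hospital is open (WLOG $h_1$ by the cyclic symmetry $r_i\mapsto r_{i+1}$, $h_i\mapsto h_{i+1}$), the unmatched resident $r_2$ together with $r_3$, who prefers $h_3$ to $h_1$, blocks by opening $h_3$. All steps check out, including the symmetry claim and the reduction of feasibility to $M(h)=A(h)$ for open hospitals.
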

	Note that this example resembles the Condorcet paradox.
	In the following hardness reductions, we 
	will frequently use this construction 
	as a penalizing component to ensure that certain residents 
	are matched to some designated set of hospitals in a stable matching.
	For a visualization of the example, see \Cref{fig:counter}. 
					\begin{figure}
		\begin{center}
			\begin{tikzpicture}
			
			\node[vertex, label=180:$r_1$] (r1) at (0, 2) {};
			\node[vertex, label=180:$r_2$] (r2) at (0, 1) {};
			\node[vertex, label=180:$r_3$] (r3) at (0, 0) {};
			\node[squared-vertex, label=90:$h_3$, label=0:{$[2,\infty]$}] 
			(h3) at (2.5,0) {};
			\node[squared-vertex, label=90:$h_2$, label=0:{$[2,\infty]$}] 
			(h2) at (2.5,1) {};
			\node[squared-vertex, label=90:$h_1$, label=0:{$[2,\infty]$}] 
			(h1) at (2.5,2) {};
			\draw (h1) edge node[pos=0.78, fill=white, inner sep=2pt] 
			{\scriptsize $1$} (r1);
			\draw (h2) edge node[pos=0.78, fill=white, inner sep=2pt] 
			{\scriptsize $2$} (r1);
			\draw (h2) edge node[pos=0.78, fill=white, inner sep=2pt] 
			{\scriptsize $1$} (r2);
			\draw (h3) edge node[pos=0.78, fill=white, inner sep=2pt] 
			{\scriptsize $2$} (r2);		
			\draw (h3) edge node[pos=0.78, fill=white, inner sep=2pt] 
			{\scriptsize $1$} (r3);
			\draw (h1) edge node[pos=0.78, fill=white, inner sep=2pt] 
			{\scriptsize $2$} (r3);				
			\end{tikzpicture}
			
		\end{center}
		\caption{Visualization of a \HRLQ instance without a stable matching 
			(\Cref{ob:counter}).}
		\label{fig:counter}
	\end{figure}
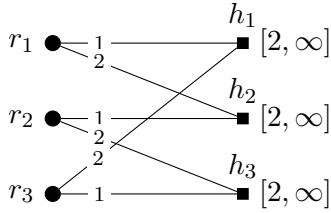
	In this and all 
	following 
	figures, each resident is represented by a dot and each hospital by a 
	square. The quotas of each hospital $h\in H$ are depicted next to the 
	corresponding square as $[l(h),u(h)]$. The preferences of the agents are 
	encoded on the edges:
	The number on an edge between agent $a$ and agent $a'$ that is closer to 
	agent 
	$a$ indicates the rank of $a'$ in $a$'s preference relation, that is, the 
	number of agents $a$ prefers to $a'$ plus one. Note that in this 
	example, hospitals do not have preferences.
	
	Moreover, for all 
	three models, there might exist stable matchings with a different number of 
	open/closed hospitals and a different number of assigned residents. 
	Consider an \HRLQ instance consisting of one 
	hospital~$h_1$ with lower quota 
	one, one hospital $h_2$ with lower quota two, and one hospital~$h_3$ with 
lower quota four together with four residents with 
	the following preferences: 
	$
	r_1: h_3 \succ h_1;$ $
	r_2: h_2 \succ h_3;$ $
	r_3: h_3 \succ h_2;$ $r_4:h_3.
	$
	This instance admits two stable 
	matchings~$M_1=\{(h_3,\{r_1,r_2,r_3,r_4\})\}$ 
and
	$M_2=\{(h_1,\{r_1\}),(h_2,\{r_2,r_3\})\}$.

	\paragraph*{Basic concepts of parameterized complexity.}
A \emph{parameterized problem}~$P\subseteq \{0,1 \}^* \times \mathbb{N}$ is \emph{fixed-parameter tractable} if there exists an algorithm running in
time~$f(k)|\mathcal{I}|^{O(1)}$ for some computable function~$f$ which decides every instance~$(\mathcal{I}, k)$ of $P$. Moreover, a parameterized problem~$P$ is in XP if there exists an algorithm running in time $|\mathcal{I}|^{f(k)}$ for some computable function~$f$ which decides every instance~$(\mathcal{I}, k)$ of~$P$.

There is also a theory of hardness of parameterized
problems that includes the notion of W$[1]$-hardness. If a parameterized problem is
W[$1$]-hard, then it is widely believed
not to be
fixed-parameter tractable. The usual approach to prove W[$1$]-hardness of
a given parameterized problem is
to reduce a known W[$1$]-hard problem to it, using a
parameterized reduction. In this paper, we do not use parameterized reductions 
in full generality but only a special case of
parameterized reductions, that is, standard many-one
reductions which run in polynomial time and fulfill that the parameter of the
output
instance is upper-bounded by a computable function of the parameter of the input
instance.

A parameterized problem is \emph{paraNP-hard} if it is NP-hard for a constant parameter value.

For more details on parameterized complexity, we refer to standard textbooks~\cite{DBLP:books/sp/CyganFKLMPPS15,DBLP:series/txcs/DowneyF13}.

	\section{Parameterized Complexity}
	\label{sec:strict}
	
	In this section, we analyze the parameterized
	computational complexity of \HRLQ, \mbox{\HRLUQ},
	and \mbox{\HRLUQI}.
	We start by proving in \Cref{sec:NP-hard} that all three problems are 
	NP-complete. 
	Then, in \Cref{sec:resi}, we 
	analyze the influence of the number of residents on the complexity of the 
	three problems. 
	Lastly, in \Cref{se:hosp}, we consider the number of hospitals and various 
	parameters related to it such as the number of hospitals with non-unit 
	lower quota and the number of open (or closed) hospitals in a stable 
	matching.
	
	\subsection{An NP-Completeness Result}\label{sec:NP-hard}
	Bir{\'{o}} et al.~\cite{DBLP:journals/tcs/BiroFIM10} proved that
	\HRLUQ is 
	NP-complete, even if each hospital has upper quota at most three.
	However, their reduction
	does not settle the computational
	complexity of~\HRLQ or \HALUQ.
	To answer this question, note that \HRLQ and therefore also \HRLUQ and 
	\HALUQ subsume hedonic games
	(see 
	\cite{DBLP:reference/choice/AzizS16} for definitions):
	We introduce a resident for
	each agent in the given hedonic game and a hospital for each possible 
	coalition with lower quota equal 
	to the size of the coalition. We replace the coalitions in the agents'
	preferences by the corresponding hospitals.
	Core stable outcomes in the hedonic game then correspond
	to 
	stable matchings in the constructed \HRLQ instance, which notably falls 
under
	\Cref{ob:equiv}. As deciding the existence of a core stable outcome is 
	NP-complete, even if all coalitions 
	have size three
	\cite{DBLP:journals/siamdm/NgH91}, this 
	implies that all three problems are NP-complete, even if each hospital has 
	lower quota (and upper 
	quota) at most three. 
	By slightly adopting the reduction from Ng and Hirschberg~\cite{DBLP:journals/siamdm/NgH91},
	one can also bound the number of residents acceptable to a hospital and the 
	number of hospitals acceptable to a resident. For the sake of completeness 
	and as a warm-up to 
	illustrate the basic features of the three problems we consider, we present 
	here 
	an alternative reduction from 
	\textsc{Satisfiability}.

	\begin{theorem}
	\label{th::NP-compl}
	\HRLQ, \HRLUQ, and \HALUQ are NP-complete, even if each resident accepts
	at most four hospitals, each hospital accepts at most three
	residents, and the lower (and upper) quota of every hospital
	is at most 
	three.
	\end{theorem}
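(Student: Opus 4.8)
The plan is to give a polynomial-time many-one reduction from \textsc{Satisfiability} and to combine it with an easy membership argument. Since every \HRLQ instance is simultaneously an \HRLUQ and an \HALUQ instance with exactly the same stable matchings (as recorded just before \Cref{ob:equiv}), it suffices to build one hard \HRLQ instance and verify the degree and quota bounds there; the hardness and bounds then transfer to the other two models. For the upper-quota claim in \HRLUQ and \HALUQ, I would observe that every hospital in the constructed instance is accepted by at most three residents, so capping its upper quota at three changes neither feasibility nor the blocking conditions and hence preserves the set of stable matchings. Membership in NP is immediate for all three models: given a matching, feasibility and the absence of blocking pairs and blocking coalitions can be checked in polynomial time, since a closed hospital $h$ is opened by a blocking coalition if and only if at least $l(h)$ residents accept $h$ and prefer it to their current assignment.

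For encoding a truth assignment I would use a \emph{bistable} gadget built from two lower-quota-two hospitals. For a variable $x$, introduce hospitals $T_x,F_x$ (each with lower quota two) and residents $p_x,q_x$ with $p_x\colon T_x \succ F_x$ and $q_x\colon F_x \succ T_x$. Opening a hospital needs two residents and only $p_x,q_x$ are available, so at most one of $T_x,F_x$ can open; leaving both closed is unstable because $\{p_x,q_x\}$ forms a blocking coalition; and either single choice is stable. Hence every stable matching opens exactly one of $T_x,F_x$, which I read as the value of $x$. Because each of $T_x,F_x$ may be accepted by at most one further resident if hospital degree is to stay at three, I would propagate a variable's value to all of its literal occurrences through a chain of analogous bistable ``copy'' gadgets that are forced to agree, so that each occurrence reads the value off a private hospital. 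This chaining is what keeps the hospital degree at three and the resident degree at four, and it is where a bounded-occurrence variant of \textsc{Satisfiability} is convenient.

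The clause gadgets would be penalizing components based on \Cref{ob:counter}. For a clause $C=\ell_1\vee\ell_2\vee\ell_3$, I would install the three-hospital Condorcet cycle on fresh residents $c_1,c_2,c_3$ and hospitals $d_1,d_2,d_3$ (each with lower quota two), which on its own admits no stable matching. To each $c_i$ I would attach an \emph{escape} hospital $e_i$ (lower quota two) that $c_i$ prefers to its cycle hospitals, whose second required resident is a literal reader $s_i$ that is free exactly when $\ell_i$ is true; when $\ell_i$ is false, $s_i$ is pinned to the open hospital encoding $\overline{\ell_i}$ and so cannot help open $e_i$. Thus if some literal of $C$ is true, its reader frees the corresponding $c_i$ from the cycle and the residual two-resident cycle resolves stably, whereas if all literals are false the full Condorcet cycle survives and no stable matching exists; assembling these local statements yields both correctness directions. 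I expect the main obstacle to be ruling out \emph{spurious} blocking coalitions crossing gadget boundaries --- in particular, guaranteeing that a reader $s_i$ or a clause resident $c_i$ can never team up with a variable-gadget resident to reopen a hospital that must stay closed --- while simultaneously respecting the tight bounds of at most four hospitals per resident, at most three residents per hospital, and lower quota at most three. The bulk of the work, and the delicate case analysis, should lie precisely in this global stability argument rather than in the gadget descriptions themselves.
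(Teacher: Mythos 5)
Your NP-membership argument and the transfer of hardness from \HRLQ to \HRLUQ and \HALUQ (including capping upper quotas at three once every hospital has at most three acceptors) are fine, and your building blocks---Condorcet-style penalizing components and clause gadgets that are released by a true literal---are the same ones the paper uses. But the reduction as you describe it cannot be completed, for a reason internal to the paper itself: every hospital you construct (the bistable pair $T_x,F_x$, the ``analogous'' copy gadgets, the cycle hospitals $d_1,d_2,d_3$, and the escapes $e_i$) has lower quota two, and by \Cref{t:q2}, \HRLUQ---and hence \HRLQ---with all lower quotas at most two is polynomial-time solvable. So unless P$\,=\,$NP, no correct polynomial-time reduction from \textsc{Satisfiability} can output such instances, no matter how the unspecified copy chains are realized.

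The concrete failure is in the reader mechanism, in the direction ``satisfiable $\Rightarrow$ stable matching exists.'' Suppose literal $\ell_i$ is true, so its pinning hospital---say $F_x$, with lower quota two and acceptors $p_x,q_x,s_i$---is closed and the reader $s_i$ is matched to the escape $e_i$. By your own design $s_i$ prefers $F_x$ to $e_i$, and the internal resident $q_x$ (matched to $T_x$) prefers $F_x$ to $T_x$; hence $\{q_x,s_i\}$ is a blocking coalition of size two that opens $F_x$, and the intended matching is not stable. Reversing the reader's preferences ($e_i\succ F_x$) destroys pinning instead: $\{c_i,s_i\}$ then blocks to open $e_i$ whatever the truth value, so the backward direction fails. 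This dilemma is intrinsic to quota two---a closed quota-two hospital is opened by just two dissatisfied acceptors, and your gadgets always supply two. The paper's proof avoids it by giving the clause hospitals lower quota three (they are blocked only when \emph{all three} literal residents are unhappy), by having each variable resident rank her two clause hospitals directly between a top hospital $h_i$ and a bottom hospital $h_i^*$ (which removes any need for value propagation; the 2-positive/2-negative occurrence \textsc{Satisfiability} variant keeps resident degree at four), and by verifying that every deliberately closed hospital retains at most quota-minus-one willing openers. If you wanted to salvage your architecture, the hospitals carrying readers would themselves need lower quota three, and the propagation gadget---currently a black box in your write-up---is exactly where the remaining, and in its present quota-two form impossible, work lies.
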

	\begin{proof}
		We reduce from the NP-hard variant of \textsc{Satisfiability} where 
each 
		clause 
		contains 
		exactly three literals and each variable occurs exactly twice 
		positively 
		and twice negatively~\cite{DBLP:journals/eccc/ECCC-TR03-049}. An 
		instance of \textsc{Satisfiability} consists of 
		a  
		set 
		$X=\{x_1,\dots x_q\}$ of variables and a set $C=\{c_1,\dots c_p\}$ of 
		clauses. For
		each 
		$i\in [q]$, we denote as $c^\text{pos}_{i,1}$ and 
		$c^\text{pos}_{i,2}$ 
		the two clauses in which variable $x_i$ appears positively and as
		$c^\text{neg}_{i,1}$ and $c^\text{neg}_{i,2}$ the two clauses in which
		variable $x_i$ appears negatively. From this, we construct an instance 
		of \HRLQ. 
		
		\textbf{Construction:}
		For each $i\in [q]$, we introduce two hospitals $h_{i}$ and  
		$\bar{h}_i$ with lower quota three and two hospitals $h_{i}^*$ and 
		$\bar{h}_{i}^*$ 
		with 
		lower quota two. Moreover, we add three hospitals~$h^1_i$, $h^2_i$, and~$h^3_i$ with lower quota two that will help us to build a penalizing 
		component. Finally, we create one hospital $h_c$ for each clause $c\in 
		C$ 
		with lower quota three, which is supposed to be closed in every stable 
		matching. 
		
		Turning to the residents, for each $i\in [q]$, we add two 
		\emph{variable residents} ($r_{i}$
		and~$\bar{r}_i$),
		two \emph{dummy residents} ($d^1_i$ and
		$d^2_i$), and three \emph{penalizing residents} ($s^*_i,s^1_i,s^2_i$). 
		The
		assignment of the variable residents encode the truth assignments and 
		their 
		preferences are as follows:
		$$r_i: h_{i} \succ h_{c^\text{pos}_{i,1}} \succ  
		h_{c^\text{pos}_{i,2}} \succ h_{i}^*, \qquad  \bar{r}_i: 
		\bar{h}_i \succ h_{c^\text{neg}_{i,1}} \succ  
		h_{c^\text{neg}_{i,2}} \succ \bar{h}_{i}^*.$$
		The two dummy residents enable us to choose for each $i\in [q]$ which 
		of 
		the two hospitals~$h_{i}$ and $\bar{h}_i$ should be open and 
		thereby which of the two residents $r_i$ and $\bar{r}_i$ 
		are 
		matched to her top-choice. The preferences of the dummy residents are as
		follows: 
		$$d^1_i: h_{i} \succ \bar{h}_i, \qquad  d^2_i: 
		\bar{h}_i \succ h_{i}.$$
		Lastly, we construct a penalizing component for each variable $i\in 
		[q]$ 
		consisting of three agents whose preferences are as follows: 
		$$s^*_i: h_{i}^*\succ \bar{h}_{i}^* \succ h_i^1 \succ h_i^2,
		\qquad  s_i^1:
		h_{i}^2 \succ h_{i}^3, \qquad s_i^2:
		h_{i}^3 \succ h_{i}^1.$$ Note that if $s^*_i$ is not matched to
		$h_{i}^*$ or $\bar{h}_{i}^*$, then no stable matching of these
		three residents~$s_i^*$, $s_i^1$, and $s_i^2$ to their acceptable
		hospitals $h_i^1$, $h_i^2$, and $h_i^3$ exists. Using this, we prove in 
		the following that in a stable matching, for
		all $i\in [q]$, either $r_i$ is assigned to $h_i^*$ and 
		$\bar{r}_i$ is assigned to $\bar{h}_i$ (which 
		corresponds to setting $x_i$ to false) or 
		$\bar{r}_i$ is assigned to $\bar{h}_i^*$ and $r_i$ 
		is assigned to~$h_i$ (which corresponds to setting $x_i$ to 
		true). The resulting assignment needs to satisfy every clause~$c\in 
		C$, as otherwise the residents corresponding to the literals from $c$ 
		form a blocking coalition to open $h_c$.  Note that 
		in the 
		constructed instance, for all hospitals, the number of residents 
		accepting it is equal to its lower quota.
		As every variable appears exactly twice positively and twice negatively, every resident accepts at most four hospitals.
		As every clause contains exactly three literals, every hospital accepts at most three residents.
		We now prove that
		the 
		given \textsc{Satisfiability} instance has a satisfying assignment if 
		and only if
		there exists a stable matching in the constructed~\HRLQ instance. 
		
		{\bfseries ($\Rightarrow$)} Let $Z$ be the set of variables that are 
		set 
		to 
		true in a satisfying assignment of the given \textsc{Satisfiability} instance.
		From 
		this we construct a stable matching $M$:
		\begin{align*}
		\{(h_{i},\{r_i, d_i^1, d_i^2\}),(\bar{h}_{i}^*,\{\bar{r}_i, 
		s^*_i\})  \mid x_i\in Z\} & \cup
		\{(\bar{h}_i,\{\bar{r}_i, 
		d_i^1, d_i^2\}),(h_{i}^*,\{r_i, 
		s^*_i\})  \mid x_i\notin Z\}\\ & \cup
		\{(h_i^3, \{s_i^1, s_i^2\})\mid i\in [q] \}.
		\end{align*}
		As the constructed instance falls under \Cref{ob:equiv}, no blocking 
		pair exists.
		We now iterate over all closed hospitals and argue why there 
		does not exist a blocking coalition to open this hospital. For each $i\in
		[q]$, one of $h_{i}$ and $\bar{h}_i$ is closed. However, 
		as 
		one of the three residents (either $d_i^1$ or~$d_i^2$) that find such a
		hospital acceptable is matched 
		to her top-choice in $M$, no blocking coalition to open $h_i$ or $\bar{h}_i$ exists.
		Moreover, for each $i\in [q]$, one of $h_{i}^*$ and $\bar{h}_i^*$ 
		is closed. However, 
		as 
		one of the two residents (either $r_i$ or $\bar{r}_i$) that find 
		such a 
		hospital acceptable is matched 
		to her top-choice in~$M$, no blocking coalition to open $h_i^*$ or $\bar{h}_i^*$ exists.
        Next we consider a clause hospital~$h_c$ for some $c\in C$.
		As~$Z$ induces a satisfying assignment, for at
		least 
		one 
		literal occurring in~$c$, the corresponding resident is matched to its 
		top-choice. 
		Thus, there is no blocking coalition to open~$h_c$.
		Finally, there does not exist a blocking coalition to open~$h_i^1$ and~$h_i^2$, as both or one of the two residents that find one 
		of 
		these hospitals acceptable are matched better in $M$.
		
		 {\bfseries ($\Leftarrow$)} Assume that there exists a stable matching 
		 $M$ 
		of 
		residents to hospitals. First of all, note that every resident $s_i^*$
		needs 
		to be matched to~$h_i^*$ or $\bar{h}_i^*$, as otherwise no stable 
		matching of the residents~$s_i^*$, $s_i^1$, and~$s_i^2$ to the
		remaining acceptable hospitals $h_i^1$, $h_i^2$, and~$h_i^3$ can exist 
		(see also \Cref{ob:counter}).
		Consequently, for each $i\in [q]$,
		either $h_{i}^*$ or~$\bar{h}_{i}^*$ needs to be open and at 
		least one of the residents 
		$r_i$ and $\bar{r}_i$ needs to be matched to one of the two. 
Moreover, 
		no clause hospital can be open in a stable matching: Let us assume
		that 
		some clause hospital is open and, without loss of generality, some 
		resident $r_i$ is matched to 
		it. Then, as argued above, $\bar{r}_i$ needs to be assigned to 
		$\bar{h}_{i}^*$. However, such an assignment is blocked by the two 
		unassigned 
		dummy residents~$d_i^1$ and~$d_i^2$ and~$\bar{r}_i$ to 
		open~$\bar{r}_i$'s top-choice~$\bar{h}_i$.
		Thus, for each~$i \in [q]$, either~$r_i$ is matched to~$h_{i}$ 
		and~$\bar{r}_i$ is matched to~$\bar{h}_i^*$, or~$r_i$ is 
		matched 
		to~$h_{i}^*$ and~$\bar{r}_i$ is matched to 
		$\bar{h}_i$.
		
		Let $Z=\{
		x_i\in X \mid M(r_{i})\neq h_{i}^*\}$. We claim that setting all 
		variables in 
		$Z$ 
		to true and all others to false induces a satisfying assignment of the 
		given propositional formula. For the sake of contradiction, let us 
		assume 
		that there exists a clause $c=\{\ell_1,\ell_2,\ell_3\}\in C$ which is 
		not 
		fulfilled. However, this implies that all three corresponding variable 
		residents are all matched to their least preferred acceptable 
		hospital. From this is follows that $M$
		cannot be stable, as these three variable residents then form a 
		blocking coalition to open~$h_c$. 
	\end{proof}
	Note that the \HRLQ instance constructed in the reduction falls under 
	\Cref{ob:equiv}. This implies that all three models we consider remain 
	computationally hard for lower quota at most three if stability only 
	requires that no 
	blocking coalition exists (but blocking pairs might exist).
	\subsection{Parameterization by Number of Residents}\label{sec:resi}
	
	After establishing the NP-hardness of \HRLQ, \HRLUQ, and \HALUQ, we now analyze
	their computational complexity parameterized by the number of residents. 
While there 
	exists a
	straightforward XP algorithm for this parameter that guesses for each 
	resident the hospital she is assigned to, all three problems are W[1]-hard.
	
	\begin{theorem}
		\label{th:W-n}
		Parameterized by the number $n$  of residents, \HRLQ,
		\HRLUQ, and~\HRLUQI are W[1]-hard, even if every hospital has lower
		(and upper) quota at most four and accepts at most four residents.
	\end{theorem}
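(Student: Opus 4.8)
The plan is to give a polynomial (parameterized) reduction from \textsc{Multicolored Clique}, which is W[1]-hard parameterized by the number $k$ of color classes, to \HRLQ. Since \HRLQ is subsumed by both \HRLUQ and \HRLUQI, and since I will build an instance in which, for every hospital, the number of residents accepting it equals its lower quota, \Cref{ob:equiv} guarantees that no feasible matching can be blocked by a pair; hence the very same construction establishes the claim for all three models at once. Given a graph whose vertex set is partitioned into classes $V_1,\dots,V_k$, I would introduce only a constant number of residents per color and per pair of colors, so that the total number of residents is $O(k^2)$, i.e.\ bounded by a function of the parameter $k$ alone. The (polynomially many) vertices and edges of the input graph are encoded not by extra residents but by the \emph{choice} of which hospital a selector resident is matched to, which is possible because residents may accept arbitrarily many hospitals while hospitals stay of bounded degree.

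Concretely, for each color $i$ I would introduce a vertex-selector resident $a_i$ whose preference list ranks one hospital per vertex of $V_i$; matching $a_i$ to the hospital of vertex $v$ encodes ``$a_i$ selects $v$''. To force each $a_i$ to commit to exactly one vertex in every stable matching, rather than staying unmatched or escaping the intended range, I would attach to each selector a penalizing gadget in the spirit of \Cref{ob:counter}: a Condorcet-type component of three residents together with lower-quota-two hospitals that admits a stable matching only if $a_i$ is matched within its vertex hospitals (so $a_i$ plays the role that $s_i^*$ played in \Cref{th::NP-compl}). The clique constraint would then be enforced pairwise: for every pair of colors $\{i,j\}$ and every \emph{non-adjacent} pair $(v,w)$ with $v\in V_i$ and $w\in V_j$, I would add a lower-quota-two ``forbidden'' hospital accepting only $a_i$ and $a_j$, placed in the two preference lists just above the hospitals of $v$ and of $w$. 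If a matching selects a non-edge $(v,w)$, both $a_i$ and $a_j$ prefer this hospital to their assignment and thus form a blocking coalition. Every hospital is acceptable to at most four residents (two selectors, or the three residents of a penalizing component) and has lower and upper quota at most four, matching the claimed restriction.

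The main obstacle is the converse direction of correctness: guaranteeing that a genuine multicolored clique yields a \emph{stable} matching, i.e.\ that no forbidden or penalizing hospital constitutes a blocking coalition in the intended solution. Because stability forbids only \emph{improving} coalitions, a forbidden hospital for a non-edge $(v,w)$ whose endpoints both rank above the chosen clique vertices would still block; the interleaving of the forbidden hospitals into the selectors' lists must therefore be engineered so that each forbidden hospital can be improving only for the precise selections it is meant to rule out, and not for ``better'' ones. I expect this to require either a careful layering of the vertex hospitals and forbidden hospitals by preference rank, combined with auxiliary penalizing components that prevent a selector from deviating upward without an available partner, or an alternative encoding via edge-selector residents (one per color pair) whose endpoint-consistency is verified by Condorcet components exactly as in \Cref{th::NP-compl}. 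Once the gadgets are fixed, both directions follow routinely: a multicolored clique induces a feasible matching in which every closed hospital is non-improving for at least one accepting selector, and conversely any stable matching selects one vertex per color with all cross-color pairs adjacent, hence a multicolored clique.
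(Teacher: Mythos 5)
Your reduction has a genuine gap at exactly the point you flag, and the gap is not a technicality: it is the crux of the proof, and the construction as sketched fails in the forward direction. With a single selector resident $a_i$ per color, the set of hospitals that $a_i$ prefers to its assigned hospital $h_{v^*}$ is an upward-closed prefix of one fixed linear order on $V_i$. Since each forbidden hospital $f_{(v,w)}$ sits just above $h_v$ in $a_i$'s list and just above $h_w$ in $a_j$'s list, the coalition $(f_{(v,w)},\{a_i,a_j\})$ blocks whenever $v$ is ranked at least as high as the selected vertex $v^*$ \emph{and} $w$ at least as high as the selected $w^*$. Stability of the intended matching therefore requires that \emph{every} cross-color pair in the entire upward-closed ``rectangle'' above $(v^*,w^*)$ be an edge, not merely that $(v^*,w^*)$ itself be an edge. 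An arbitrary multicolored clique gives no such guarantee, so a YES-instance of \textsc{Multicolored Clique} need not map to a YES-instance of \HRLQ; the reduction is unsound unless this is repaired, and your proposal explicitly leaves the repair open (``I expect this to require either a careful layering \dots or an alternative encoding'').

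The paper's proof is designed precisely to defeat this monotonicity problem. It reduces from \textsc{Multicolored Independent Set} and uses \emph{two} color residents $r^c_1,r^c_2$ per color whose preference lists order the vertex hospitals of $V^c$ in mutually \emph{reversed} orders, with the edge hospitals of each vertex interleaved immediately before that vertex hospital in both lists; the pair-forbidding (edge) hospitals get lower quota four, so a blocking coalition needs all four color residents of the two colors involved. Because the two orders are reversed, the intersection of the two upward-closed prefixes above the common assignment $h_{v^*}$ collapses to exactly the edge hospitals incident to $v^*$, so a pair hospital can block only for the actually selected pair of vertices — this is the interleaving/reversal trick your sketch is missing. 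A per-color penalizing component (a Condorcet cycle as in \Cref{ob:counter}, entered through a resident $s^c_*$ that ranks all vertex hospitals of its color first) then forces exactly one vertex hospital per color to open with the three residents $r^c_1,r^c_2,s^c_*$, and since every hospital is accepted by exactly its lower quota many residents, \Cref{ob:equiv} transfers the hardness to all three models, as you intended. So your high-level plan (parameter-bounded resident set, Condorcet penalization, pair-forbidding quota-two/four hospitals, \Cref{ob:equiv}) matches the paper's, but without the two-residents-with-reversed-orders device the central stability argument does not go through.
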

		\begin{proof}
			We prove hardness by a parameterized reduction from 
			\textsc{Multicolored 
				Independent Set}. In an instance of \textsc{Multicolored 
Independent Set}, we are 
			given an 
			undirected graph~$G=(V=\{v_1,\dots, v_n\},E)$ together with a
			partition 
			$(V^1,\dots, V^k)$ of
			$V$ into $k$ different colors and the question is whether there 
			exists an independent set of size~$k$ containing 
exactly one vertex from each color, 
			i.e., 
			a 
			subset $V'\subseteq V$ with $|V'|=k$ and $|V'\cap V^c|=1$ for all
			$c\in 
			[k]$ such that no two vertices in $V'$ are adjacent.
			\textsc{Multicolored Independent Set} parameterized by~$k$ is 
W[1]-hard~\cite{Pietrzak03}.
			To
			simplify
			notation, we reduce from a restricted still W[1]-hard version of 
			\textsc{Multicolored Independent Set} where there exist some 
			(arbitrary) integers~$p$ and~$q$
			such 
			that each vertex $v$ is adjacent to exactly~$p$ vertices, and 
each color $c\in [k]$ contains exactly~$q$ vertices.
Moreover, without loss of generality, we assume that there are no edges between 
vertices of the same color.
			For each vertex~$v\in V$, let $u_1^{v},\dots,u_{p}^{v}$ be a list 
			of all
			vertices 
			incident to $v$. For a vertex $v\in V$ and an integer $i\in 
			[p]$, we write $\text{z}(v,i)$ to denote the $i$-th vertex that is 
			incident to $v$, i.e., $\text{z}(v,i):=u_i^{v}$.
			Moreover, for each
			color $c\in [k]$, let	$v_1^c,\dots, v_{q}^c$ be a list of all 
			vertices 
			with this 
			color. From an instance of \textsc{Multicolored Independent Set}, 
			we now construct an instance 
			of \HRLQ. 
			
			\textbf{Construction:} For each vertex
			$v\in V$, we introduce a \emph{vertex hospital}
			$h_v$ with lower quota
			three. Moreover, 
			for each edge $\{v,v'\}\in V$,
			we introduce an \emph{edge hospital}
			$h_{\{v,v'\}}$ with lower quota
			four. Furthermore, we 
			introduce for each color $c\in [k]$ a penalizing component 
			consisting 
			of 
			three \emph{penalizing hospitals} $h_1^c$, $h_2^c$, and $h_3^c$ 
			with lower
			quota two.
			
			Turning to the residents, we introduce for each color $c\in [k]$ 
			two 
			\emph{color residents} $r^c_1$ and~$r^c_2$.
			One of the color residents ranks the vertex
			hospitals corresponding to vertices of color $c$ in some ordering 
			and 
			the 
			other color resident ranks them in reversed order. Both residents 
			rank
			directly 
			in front of each vertex hospital all edge hospitals involving 
			the 
			corresponding vertex in an arbitrary order. That 
			is:                                                                 
			\begin{align*}
			r^c_1:\mkern4mu &  h_{\{v_1^c, \text{z}(v_1^c,1)\}}\succ \dots \succ
			h_{\{v_1^c, 
				\text{z}(v_1^c,p)\}}\succ h_{v_1^c}\succ \dots
			\succ 
			h_{\{v_{2}^c, \text{z}(v_{2}^c,1)\}}\succ \dots \succ 
			h_{\{v_{2}^c, 
				\text{z}(v_{2}^c,p)\}}
			\succ h_{v_{2}^c} \succ \\
			&  h_{\{v_{q-1}^c, \text{z}(v_{q-1}^c,1)\}}\succ \dots \succ
			h_{\{v_{q-1}^c, 
				\text{z}(v_{q-1}^c,p)\}}\succ h_{v_{q-1}^c}\succ \dots
			\succ 
			h_{\{v_{q}^c, \text{z}(v_{q}^c,1)\}}\succ \dots \succ 
			h_{\{v_{q}^c, 
				\text{z}(v_{q}^c,p)\}}
			\succ h_{v_{q}^c},\\
			r^c_2:\mkern4mu & h_{\{v_{q}^c, \text{z}(v_{q}^c,1)\}}\succ \dots 
			\succ
			h_{\{v_{q}^c,
				\text{z}(v_{q}^c,p)\}}
			\succ h_{v_{q}^c}\succ \dots \succ h_{\{v_{q-1}^c, 
				\text{z}(v_{q-1}^c,1)\}}\succ \dots \succ h_{\{v_{q-1}^c, 
				\text{z}(v_{q-1}^c,p)\}}\succ h_{v_{q-1}^c} \succ \\
			& h_{\{v_{2}^c, \text{z}(v_{2}^c,1)\}}\succ \dots \succ
			h_{\{v_{2}^c,
				\text{z}(v_{2}^c,p)\}}
			\succ h_{v_{2}^c}\succ \dots \succ h_{\{v_1^c, 
				\text{z}(v_1^c,1)\}}\succ \dots \succ h_{\{v_1^c, 
				\text{z}(v_1^c,p)\}}\succ h_{v_1^c}.
			\end{align*}
			Moreover, for each color $c\in [k]$, we introduce a penalizing 
			component ensuring that no edge hospital can be open in a stable 
			matching. The penalizing component consists of three 
			\emph{penalizing residents} 
			$s^c_*$, $s^c_1$,
			$s^c_2$:
			$$s^c_*:h_{v_1^c}\succ\dots \succ h_{v_{q}^c}\succ h_1^c\succ
			h_2^c,
			\qquad s^c_1: h_2^c\succ h_3^c, \qquad s^c_2: h_3^c\succ h_1^c.$$
			See \Cref{fig:residents-hardness} for an example of the construction.
			The penalizing component enforces that for each color at least 
			(and, in 
			fact, exactly one) hospital needs to be open. The preferences of 
			the 
			color residents are constructed in a way such that no two 
			hospitals that correspond to adjacent vertices can be open. Note 
			that for each constructed hospital, the number of residents 
			accepting it is equal its lower quota.
			Thereby, the set of open vertex hospitals corresponds to a 
			multicolored independent set.
			We now prove that there exists a solution to the given 
			\textsc{Multicolored 
				Independent Set} instance if and only if there exists a stable 
matching 
			in 
			the 
			constructed \HRLQ instance. 
			
			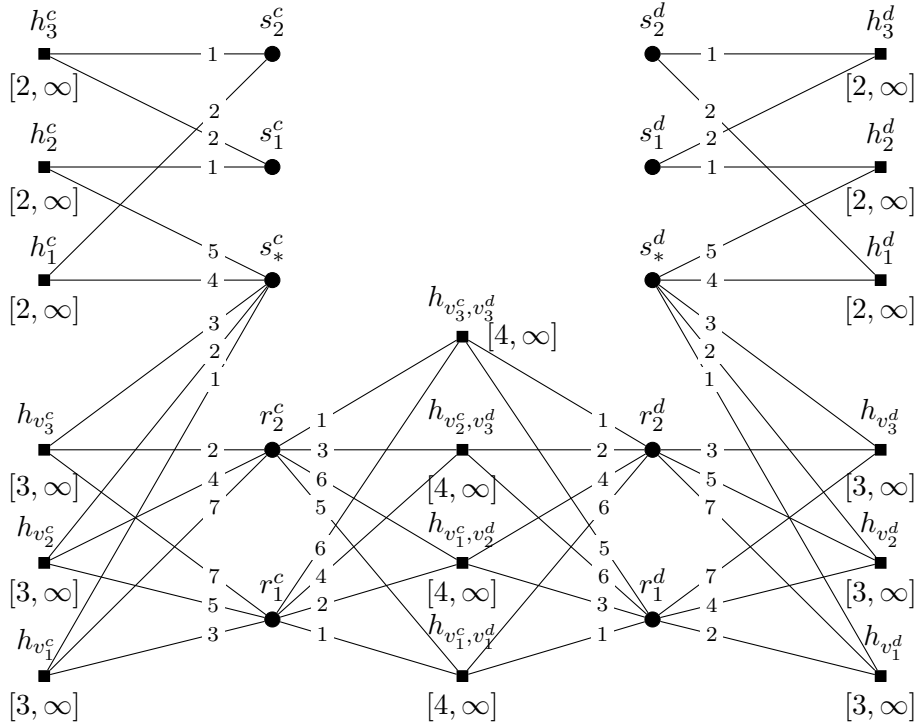
\begin{figure}
			  \begin{center}
			    \begin{tikzpicture}
                  \node (ver-dist) at (0, 1.5) {};
                  \node (hor-dist) at (5, 0) {};
                  \node (hd) at (3, 0) {};

			      \node[vertex, label=90:$r^c_1$] (r1) at (0, 0) {};
			      \node[vertex, label=90:$r^c_2$] (r2) at ($(r1) + 1.5*(ver-dist)$) {};
			      \node[vertex, label=90:$s^c_*$] (r3) at ($(r2) + 1.5*(ver-dist)$) {};
			      \node[vertex, label=90:$s^c_1$] (r4) at ($(r3) + (ver-dist)$) {};
			      \node[vertex, label=90:$s^c_2$] (r5) at ($(r4) + (ver-dist)$) {};

			      \node[vertex, label=90:$r^d_1$] (s1) at ($(r1) + (hor-dist)$) {};
			      \node[vertex, label=90:$r^d_2$] (s2) at ($(s1) + 1.5*(ver-dist)$) {};
			      \node[vertex, label=90:$s^d_*$] (s3) at ($(s2) + 1.5*(ver-dist)$) {};
			      \node[vertex, label=90:$s^d_1$] (s4) at ($(s3) + (ver-dist)$) {};
			      \node[vertex, label=90:$s^d_2$] (s5) at ($(s4) + (ver-dist)$) {};

			      \node[squared-vertex, label=90:$h_1^c$, label=270:{$[2,\infty]$}] (hc1) at ($(r3) - (hd)$) {};
			      \node[squared-vertex, label=90:$h_2^c$, 
			      label={[yshift=-0.05cm]270:{$[2,\infty]$}}] (hc2) at ($(hc1) 
			      + (ver-dist)$) {};
			      \node[squared-vertex, label=90:$h_3^c$, 
			      label={[yshift=-0.05cm]270:{$[2,\infty]$}}] (hc3) at ($(hc2) 
			      + 
			      (ver-dist)$) {};

			      \node[squared-vertex, label=90:$h_1^d$, label=270:{$[2,\infty]$}] (hd1) at ($(s3) + (hd)$) {};
			      \node[squared-vertex, label=90:$h_2^d$, 
			      label={[yshift=-0.05cm]270:{$[2,\infty]$}}] (hd2) at ($(hd1) 
			      + (ver-dist)$) {};
			      \node[squared-vertex, label=90:$h_3^d$, 
			      label={[yshift=-0.05cm]270:{$[2,\infty]$}}] (hd3) at ($(hd2) 
			      + (ver-dist)$) {};

			      \node[squared-vertex, label={[yshift=0.15cm]90:$h_{v_1^c, v_1^d}$}, label=270:{$[4,\infty]$}] (ve1) at ($0.5*(hor-dist) -0.5*(ver-dist)$) {};
			      \node[squared-vertex, label=90:$h_{v_1^c, v_2^d}$, label=270:{$[4,\infty]$}] (ve2) at ($(ve1) + (ver-dist) $) {};
			      \node[squared-vertex, label=90:$h_{v_2^c, v_3^d}$, 
			      label={[yshift=-0.13cm]270:{$[4,\infty]$}}] (ve3) at ($(ve2) 
			      +(ver-dist) $) {};
			      \node[squared-vertex, label=90:$h_{v_3^c, v_3^d}$, 
			      label={[xshift=0.1cm]0:{$[4,\infty]$}}] (ve4) at ($(ve3) 
			      +(ver-dist) $) {};

			      \node[squared-vertex, label={[xshift=-0.1cm]90:$h_{v_1^c}$}, label=270:{$[3,\infty]$}] (vc1) at ($(r1) - (hd) - 0.5*(ver-dist)$) {};
			      \node[squared-vertex, label={[xshift=-0.1cm]90:$h_{v_2^c}$}, label=270:{$[3,\infty]$}] (vc2) at ($(vc1) + (ver-dist)$) {};
			      \node[squared-vertex, label={[xshift=-0.1cm]90:$h_{v_3^c}$}, 
			      label={[yshift=-0.1cm]270:{$[3,\infty]$}}] (vc3) at ($(vc2) 
			      + (ver-dist)$) {};

			      \node[squared-vertex, label={[xshift=0.05cm]90:$h_{v_1^d}$}, 
			      label=270:{$[3,\infty]$}] (vd1) at ($(s1) + (hd) - 
			      0.5*(ver-dist)$) {};
			      \node[squared-vertex, label=90:$h_{v_2^d}$, label=270:{$[3,\infty]$}] (vd2) at ($(vd1) + (ver-dist)$) {};
			      \node[squared-vertex, label=90:$h_{v_3^d}$, 
			      label={[yshift=-0.1cm]270:{$[3,\infty]$}}] (vd3) at ($(vd2) 
			      + (ver-dist)$) {};

        \draw (vc1) edge node[pos=0.76, fill=white, inner sep=2pt] {\scriptsize $3$} (r1);
        \draw (vc1) edge node[pos=0.76, fill=white, inner sep=2pt] {\scriptsize $7$} (r2);
        \draw (vc2) edge node[pos=0.76, fill=white, inner sep=2pt] {\scriptsize $5$} (r1);
        \draw (vc2) edge node[pos=0.76, fill=white, inner sep=2pt] {\scriptsize $4$} (r2);
        \draw (vc3) edge node[pos=0.76, fill=white, inner sep=2pt] {\scriptsize $7$} (r1);
        \draw (vc3) edge node[pos=0.76, fill=white, inner sep=2pt] {\scriptsize $2$} (r2);

        \draw (vd1) edge node[pos=0.76, fill=white, inner sep=2pt] {\scriptsize $2$} (s1);
        \draw (vd1) edge node[pos=0.76, fill=white, inner sep=2pt] {\scriptsize $7$} (s2);
        \draw (vd2) edge node[pos=0.76, fill=white, inner sep=2pt] {\scriptsize $4$} (s1);
        \draw (vd2) edge node[pos=0.76, fill=white, inner sep=2pt] {\scriptsize $5$} (s2);
        \draw (vd3) edge node[pos=0.76, fill=white, inner sep=2pt] {\scriptsize $7$} (s1);
        \draw (vd3) edge node[pos=0.76, fill=white, inner sep=2pt] {\scriptsize $3$} (s2);

        \draw (vc1) edge node[pos=0.76, fill=white, inner sep=2pt] {\scriptsize $1$} (r3);
        \draw (vc2) edge node[pos=0.76, fill=white, inner sep=2pt] {\scriptsize $2$} (r3);
        \draw (vc3) edge node[pos=0.76, fill=white, inner sep=2pt] {\scriptsize $3$} (r3);
        \draw (hc1) edge node[pos=0.76, fill=white, inner sep=2pt] {\scriptsize $4$} (r3);
        \draw (hc2) edge node[pos=0.76, fill=white, inner sep=2pt] {\scriptsize $5$} (r3);

        \draw (hc2) edge node[pos=0.76, fill=white, inner sep=2pt] {\scriptsize $1$} (r4);
        \draw (hc3) edge node[pos=0.76, fill=white, inner sep=2pt] {\scriptsize $2$} (r4);

        \draw (hc3) edge node[pos=0.76, fill=white, inner sep=2pt] {\scriptsize $1$} (r5);
        \draw (hc1) edge node[pos=0.76, fill=white, inner sep=2pt] {\scriptsize $2$} (r5);

        \draw (ve1) edge node[pos=0.76, fill=white, inner sep=2pt] {\scriptsize $1$} (r1);
        \draw (ve1) edge node[pos=0.76, fill=white, inner sep=2pt] {\scriptsize 
        $5$} (r2);
        \draw (ve2) edge node[pos=0.76, fill=white, inner sep=2pt] {\scriptsize $2$} (r1);
        \draw (ve2) edge node[pos=0.76, fill=white, inner sep=2pt] {\scriptsize 
        $6$} (r2);
        \draw (ve3) edge node[pos=0.76, fill=white, inner sep=2pt] {\scriptsize $4$} (r1);
        \draw (ve3) edge node[pos=0.76, fill=white, inner sep=2pt] {\scriptsize $3$} (r2);
        \draw (ve4) edge node[pos=0.76, fill=white, inner sep=2pt] {\scriptsize $6$} (r1);
        \draw (ve4) edge node[pos=0.76, fill=white, inner sep=2pt] {\scriptsize $1$} (r2);

        \draw (vd1) edge node[pos=0.76, fill=white, inner sep=2pt] {\scriptsize $1$} (s3);
        \draw (vd2) edge node[pos=0.76, fill=white, inner sep=2pt] {\scriptsize $2$} (s3);
        \draw (vd3) edge node[pos=0.76, fill=white, inner sep=2pt] {\scriptsize $3$} (s3);
        \draw (hd1) edge node[pos=0.76, fill=white, inner sep=2pt] {\scriptsize $4$} (s3);
        \draw (hd2) edge node[pos=0.76, fill=white, inner sep=2pt] {\scriptsize $5$} (s3);

        \draw (hd2) edge node[pos=0.76, fill=white, inner sep=2pt] {\scriptsize $1$} (s4);
        \draw (hd3) edge node[pos=0.76, fill=white, inner sep=2pt] {\scriptsize $2$} (s4);

        \draw (hd3) edge node[pos=0.76, fill=white, inner sep=2pt] {\scriptsize $1$} (s5);
        \draw (hd1) edge node[pos=0.76, fill=white, inner sep=2pt] {\scriptsize $2$} (s5);

        \draw (ve1) edge node[pos=0.76, fill=white, inner sep=2pt] {\scriptsize $1$} (s1);
        \draw (ve1) edge node[pos=0.76, fill=white, inner sep=2pt] {\scriptsize $6$} (s2);
        \draw (ve2) edge node[pos=0.76, fill=white, inner sep=2pt] {\scriptsize $3$} (s1);
        \draw (ve2) edge node[pos=0.76, fill=white, inner sep=2pt] {\scriptsize $4$} (s2);
        \draw (ve3) edge node[pos=0.76, fill=white, inner sep=2pt] {\scriptsize 
        $6$} (s1);
        \draw (ve3) edge node[pos=0.76, fill=white, inner sep=2pt] {\scriptsize $2$} (s2);
        \draw (ve4) edge node[pos=0.76, fill=white, inner sep=2pt] {\scriptsize 
        $5$} (s1);
        \draw (ve4) edge node[pos=0.76, fill=white, inner sep=2pt] {\scriptsize $1$} (s2);

        \end{tikzpicture}

			  \end{center}
              \caption{An example for the reduction showing W[1]-hardness of 
              \HRLQ, \HRLUQ, and \HALUQ from \Cref{th:W-n}.
              Let $G=(\{v_1^c,v_2^c,v_3^c,v_1^d,v_2^d,v_3^d\},\{\{v_1^c, 
              v_1^d\},\{v_1^c, v_2^d\},\{v_2^c, v_3^d\},\{v_3^c, 
              v_3^d\}\})$ and 
              $(V^c=\{v_1^c,v_2^c,v_3^c\},V^d=\{v_1^d,v_2^d,v_3^d\})$. 
              The picture shows the output of the reduction on this instance.}
              \label{fig:residents-hardness}
			\end{figure}

			{\bfseries ($\Rightarrow$)} Assume that $V'=\{v_{i_1},\dots, 
			v_{i_{k}}\}$ is an 
independent set
			in $G$ with $v_{i_c}\in V^c$ for all~$c\in [k]$. From this we
			construct a stable matching $M$ as follows:
			$$M=\{(h_{v_{i_c}},\{r^c_1,r^c_2, s^c_*\}),
			(h^{c}_3,\{s^c_1,s^c_2\})\mid
			c\in [k]\}.$$
			As the constructed instance falls under \Cref{ob:equiv}, no 
			blocking pair can exist.
			It remains to argue that for no closed hospital $h$
			there exists a
			coalition of residents to open it in $M$. Note that
			there does not exist a coalition to open $h^c_1$ or $h^c_2$ for any
			$c\in
			[k]$, as $s^c_*$ is
			matched 
			to a hospital she prefers to both $h^c_1$ and $h^c_2$. Note further
			that for
			each color $c\in [k]$, the only hospitals that both $r^c_1$ and
			$r^c_2$
			prefer 
			to the hospital~$h_{v_{i_c}}$ (which is the hospital $r^c_1$ and $r^c_2$ are
matched to in $M$) are the
			edge 
			hospitals
			$h_{\{v_{i_c},
				\text{z}(v_{i_c},1)\}},\dots, h_{\{v_{i_c},
				\text{z}(v_{i_c},p)\}}$
			corresponding to 
			$v_{i_c}$ and the vertices that are adjacent to~$v_{i_c}$. Hence,
			as 
			no two color residents
			prefer the 
			same vertex hospital, there cannot exist a blocking coalition to 
			open a 
			vertex hospital. Moreover, as $V'$ is an independent set, there 
do not exist 
			two 
			adjacent vertices in $V'$ and thereby also no edge 
			hospital 
			corresponding to a vertex pair from $V'$. Thus, there does not 
			exist an edge hospital that is preferred by four color residents to 
			the 
			hospital 
			they are matched to in $M$. Thus, $M$ is stable.
			
			{\bfseries ($\Leftarrow$)} Assume that there exists a stable 
			matching $M$ in 
			the 
			constructed \HRLQ instance. First of all note that, for each color 
			$c\in [k]$, resident $s^c_*$
			needs to be
			matched 
			to a vertex hospital in~$M$, as otherwise there does not exist a
			stable matching of the residents $s_*^c$,~$s_1^c$, and~$s_2^c$ to
			the hospitals~$h^c_1$, $h^c_2$, and $h^c_3$ (see 
			\Cref{ob:counter}). Thus, for each color $c\in [k]$,
there 
			exists 
			exactly one
			vertex $v_{i}^c\in V^c$ such that the
			hospital~$h_{v_{i}^c}$ is open in $M$. We claim that $V'=\{
			v_{i}^c 
			\mid c\in[k] \wedge i\in [q] \wedge \text{ $h_{v_{i}^c}$ is open in 
			}M\}$ forms an 
independent set in $G$. For the sake of 
			contradiction, 
			let us assume that there exists a pair of vertices~$v,v'\in V'$
			with 
			$v\in 
			V^{c}$ and $v'\in V^{d}$ for two $c\neq d\in [k]$ that are
			adjacent. By 
			construction, 
			$r^{c}_1$ and $r^{c}_2$ are matched to $h_{v}$
			and 
			$r^{d}_1$ and $r^{d}_2$ are matched to~$h_{v'}$ in $M$.
			However, as 
			$v$ 
			and $v'$ are adjacent, this implies that all four 
			residents~$r^{c}_1$,~$r^{c}_2$,~$r^{d}_1$, and $r^{d}_2$
			prefer the 
edge hospital
			$h_{\{v,v'\}}$ to the hospital they are matched to in~$M$, which 
			contradicts our assumption that $M$ is a stable matching.
		\end{proof}
	Again, the \HRLQ instance constructed in the reduction falls under 
	\Cref{ob:equiv}, which implies that the three models are W[1]-hard 
	parameterized by $n$ even if stability only forbids the existence of 
	blocking coalitions. 
	Compared to \Cref{th::NP-compl}, the hardness statement from \Cref{th:W-n} does not
	bound the number of hospitals accepted by each resident.
	In 
	fact, all three problems are
	fixed-parameter tractable by the combined parameter number of residents plus maximum number of hospitals accepted by a resident,
	as the size of the instance (ignoring hospitals which no resident accepts) 
	is bounded in a function of this parameter.
	
	In the following \Cref{se:hosp}, among others, we address the computational 
	complexity of deciding whether 
	there exists a stable matching where exactly a given subset of hospitals is 
	open and the parameterized complexity of deciding whether there exists a 
	stable matching 
	opening~$m_{\text{open}}$~hospitals parameterized by $m_{\text{open}}$. 
	Obviously, both questions can be analogously asked for residents. However, 
	the reduction from \Cref{th:W-n} proves that deciding the existence of a 
	stable matching assigning all residents is NP-hard and W[1]-hard 
	parameterized by the number of residents. Thus, both questions asked for 
	the residents are computationally intractable for all our three models. 
	
	\subsection{Influence of Hospitals}\label{se:hosp} 
	After studying the parameterization by the number of residents,
	we turn to the number of hospitals and several closely related parameters.
	We start by considering
	the problem of finding a stable matching opening exactly a given
	set of hospitals.
	\subsubsection{Which hospitals should be open?} \label{se:hopen}
	It is possible 
	to think of finding a stable matching as a two-step 
	process. First,
	decide which hospitals are open and second,
	compute a stable matching between the residents and the selected set of open hospitals
	respecting all quotas.
	This observation leads to the question what happens if the first step
	has been already done, e.g., by an
	oracle or by some
	authority, and we are left with the task of finding a stable matching where
	exactly a given set
	of hospitals is open. We show that while for \HRLQ
	and
	\HRLUQ
	this problem is solvable in
	polynomial time, it is NP-hard for \HRLUQI.
	
	As already observed in \Cref{se:obs}, in an \HRLQ instance $(H,R)$, every stable matching assigns all
	residents to their most preferred open hospital.
	Thereby, deciding whether there exists a stable matching where exactly a
	given set $H_{\open}\subseteq H$ of hospitals is open reduces to
	assigning each 
	resident to her most preferred hospital in $H_{\open}$ and checking
	whether the 
	resulting matching is stable in $(H,R)$. 
	\begin{observation} \label{ob:HRLQ-H'}
		Given a subset of hospitals $H_{\open}\subseteq H$, deciding
		whether there 
		exists a stable matching in an \HRLQ instance $(H,R)$ in which exactly 
		the hospitals from $H_{\open}$ are open is solvable in
		$\mathcal{O}(nm)$ time.
	\end{observation}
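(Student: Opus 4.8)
The plan is to exploit the property, already noted when \HRLQ was introduced, that in any stable matching every resident must be assigned to her most preferred open hospital: since no hospital can be full in \HRLQ (each has upper quota $n+1$), any open hospital a resident prefers to her current assignment would immediately form a blocking pair. Consequently, once the set $H_{\open}$ of open hospitals is fixed, the matching is forced, so that there is a unique candidate, namely the matching $M$ that assigns each resident $r$ to the hospital in $H_{\open}\cap A(r)$ she ranks highest (leaving $r$ unmatched if $H_{\open}\cap A(r)=\emptyset$). Building $M$ takes $\mathcal{O}(nm)$ time by scanning each resident's preference list once. It then suffices to test whether this single matching $M$ is feasible, opens exactly $H_{\open}$, and is stable; if it is, we answer yes, and otherwise no.

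First I would verify that $M$ opens exactly $H_{\open}$ and is feasible. By construction no resident is ever assigned to a hospital outside $H_{\open}$, so every hospital in $H\setminus H_{\open}$ is closed automatically. For the hospitals in $H_{\open}$ I check $|M(h)|\ge l(h)$; the upper-quota constraint $|M(h)|\le u(h)=n+1$ holds trivially since $|M(h)|\le n$. If some $h\in H_{\open}$ receives fewer than $l(h)$ residents, then it cannot be opened feasibly, and since the assignment of residents is forced, no stable matching opening exactly $H_{\open}$ can exist, so we reject. This check costs $\mathcal{O}(n)$ once the values $|M(h)|$ are accumulated while constructing $M$.

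It remains to test stability. Blocking pairs require no work: because each resident is matched to her most preferred open hospital, there is no open hospital she prefers to $M(r)$, and (as open hospitals are never full) this is exactly what rules out blocking pairs. The only remaining threat is a blocking coalition to open some closed hospital $h\in H\setminus H_{\open}$. For this I would, for each hospital $h$, count the residents $r$ with $h\in A(r)$ that prefer $h$ to $M(r)$ (treating $M(r)=\square$ as preferring every acceptable hospital), and declare a blocking coalition precisely when $h$ is closed and this count reaches $l(h)$, since any $l(h)$ such residents form a blocking coalition and, conversely, a blocking coalition consists of exactly $l(h)$ such residents. These counts can be obtained by iterating once over all acceptable resident--hospital pairs, of which there are $\mathcal{O}(nm)$, so the whole test runs in $\mathcal{O}(nm)$ time.

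The only conceptual point---and the reason the argument works at all---is that the set of open hospitals determines the stable matching uniquely, which collapses the search to verifying a single matching; there is no genuine combinatorial obstacle beyond this. The one subtlety to get right is the equivalence ``a blocking coalition for a closed hospital $h$ exists if and only if at least $l(h)$ residents accept $h$ and prefer it to their assignment,'' together with the convention that an unmatched resident prefers every acceptable hospital, which I would state explicitly to justify the counting step.
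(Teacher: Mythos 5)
Your proposal is correct and takes essentially the same approach as the paper: since in \HRLQ no open hospital can ever be full, any stable matching opening exactly $H_{\open}$ must assign each resident to her most preferred hospital in $H_{\open}$, so the candidate matching is unique and one simply constructs it and verifies feasibility and stability (blocking pairs being impossible by construction, and blocking coalitions checked by counting, per closed hospital, the residents preferring it to their assignment). The paper states this reduction more tersely; your explicit handling of the coalition-counting equivalence and the unmatched-resident convention just fills in the same details.
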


	For \HRLUQ, a slightly more involved reasoning is needed, which utilizes 
	the famous Rural Hospitals Theorem 
	\cite{RePEc:ucp:jpolec:v:92:y:1984:i:6:p:991-1016,10.2307/1913160}:
	\begin{proposition} \label{pr:HRULQ-H'}
		Given a subset of hospitals $H_{\open}\subseteq H$, deciding
		whether there 
		exists a stable matching in an \HRLUQ instance $(H,R)$ in which exactly 
		the hospitals from $H_{\open}$ are open is solvable in
		$\mathcal{O}(nm)$ time.
	\end{proposition}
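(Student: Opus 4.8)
The plan is to reduce the problem to computing a single, carefully chosen stable matching in a classical Hospital Residents instance. Concretely, I would first build the \HRUQ instance $I'$ whose hospitals are exactly the members of $H_{\open}$ together with their upper quotas $u(h)$, and whose residents are $R$ with preferences restricted to $H_{\open}$ (all lower quotas are ignored in $I'$). I then compute the resident-optimal stable matching $M_R$ of $I'$ by resident-proposing Gale--Shapley, which runs in $\mathcal{O}(nm)$ time. Interpreting $M_R$ as a matching of the original \HRLUQ instance, with all hospitals outside $H_{\open}$ left closed, I accept if and only if two conditions hold: (i) $l(h) \le |M_R(h)|$ for every $h \in H_{\open}$, and (ii) for every closed hospital $h \in H \setminus H_{\open}$, the number of residents that accept $h$ and prefer $h$ to their $M_R$-partner is strictly smaller than $l(h)$.

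For the easy direction, I would argue that if both checks succeed, then $M_R$ is a stable matching opening exactly $H_{\open}$. Feasibility with respect to the upper quotas and the absence of blocking pairs are inherited directly from the stability of $M_R$ in $I'$, since every blocking pair of the original instance must involve an open hospital, that is, a hospital of $H_{\open}$, and the undersubscription condition is identical in $I'$ and in the original instance. Check~(i) guarantees lower-quota feasibility and, because every lower quota is at least one, that every hospital of $H_{\open}$ is genuinely open while every other hospital is closed; check~(ii) is by definition the statement that no closed hospital admits a blocking coalition. Hence $M_R$ is a valid solution.

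For the converse I would invoke the Rural Hospitals Theorem. Suppose some stable matching $M^*$ of the original instance opens exactly $H_{\open}$. Since closed hospitals carry no residents, $M^*$ uses only hospitals of $H_{\open}$ and is therefore a stable matching of $I'$ (a blocking pair of $I'$ would again be a blocking pair of the original instance). By the Rural Hospitals Theorem, $|M^*(h)| = |M_R(h)|$ for every hospital and the set of matched residents coincides in $M^*$ and $M_R$; feasibility of $M^*$ then yields $|M_R(h)| = |M^*(h)| \ge l(h)$ for all $h \in H_{\open}$, so check~(i) passes. For check~(ii), the key point is that $M_R$ is resident-optimal, so every resident likes its $M_R$-partner at least as much as its $M^*$-partner. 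Consequently, for any closed hospital $h$, every resident preferring $h$ to $M_R(r)$ also prefers $h$ to $M^*(r)$, so the set of potential coalition members for $h$ under $M_R$ is contained in the corresponding set under $M^*$. The latter has size below $l(h)$ because $M^*$ admits no blocking coalition, and therefore check~(ii) passes as well.

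The crux of the argument, and the reason the naive adaptation of \Cref{ob:HRLQ-H'} does not suffice, is that, unlike the number of residents assigned to each hospital, the set of residents who would defect to a given closed hospital is \emph{not} invariant across the stable matchings of $I'$. The main obstacle is thus to justify committing to the resident-optimal matching: it simultaneously minimizes, hospital by hospital, the pool of residents available for a blocking coalition (by resident-optimality) while leaving lower-quota feasibility unchanged (by the Rural Hospitals Theorem). Establishing this containment of coalition pools is the heart of the correctness proof; the $\mathcal{O}(nm)$ running time then follows since both Gale--Shapley and the two checks run within this bound.
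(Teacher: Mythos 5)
Your proposal is correct and follows essentially the same route as the paper: compute the resident-optimal stable matching of $R$ to $H_{\open}$ ignoring lower quotas via Gale--Shapley, use the Rural Hospitals Theorem to transfer lower-quota feasibility from any hypothetical stable matching opening $H_{\open}$, and use resident-optimality to show that any blocking coalition for the computed matching would also block that hypothetical matching. Your checks (i) and (ii) are just a more explicit phrasing of the paper's ``stable in $\mathcal{I}$ and obeys the quotas'' test, so there is nothing substantively different.
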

	\begin{proof}
		Given an \HRLUQ instance $\mathcal{I}=(H,R)$, we calculate the
		resident-optimal stable matching~$M$ of the residents $R$ to the 
		hospitals 
		$H_{\open}$ ignoring their lower quota
		by applying the Gale and Shapely algorithm~\cite{GaleS62}. We return YES if $M$ is  
		stable in $\mathcal{I}$ and obeys the 
		quotas and otherwise NO.
		If the algorithm returns YES, then this 
		answer is clearly correct.
		It remains to show that if there exists a stable matching, then the
		algorithm returns YES.
		Assume that there exists a stable
		matching $M'$ in $\mathcal{I}$ that opens exactly the hospitals 
		$H_{\open}$, and let $M$ be the computed matching.
		As $M'$ is a stable
		matching that opens exactly the hospitals from $H_{\open}$, matching 
		$M'$
		is also a stable matching
		in 
		the instance $\mathcal{I}' :=(H_{\open},R)$ without lower quotas.
		By the Rural Hospitals Theorem, it follows that every stable matching 
		in $\mathcal{I}'$---and therefore also $M$---matches the same
		number of residents as $M'$ to each hospital and thus is a feasible 
		matching in 
		$\mathcal{I}$.
		Matching $M$ does not admit a blocking pair in $\mathcal{I}$ by the 
		definition of $M$.
		Lastly, due to the resident-optimality of $M$, it follows that any 
		blocking coalition for $M$ in $\mathcal{I}$ is also a blocking 
		coalition for $M'$ in $\mathcal{I}$. Thus, as $M'$ is stable in 
		$\mathcal{I}$, matching $M$ is also stable in~$\mathcal{I}$.
	\end{proof}

	So far, our hardness results for all three models including
	\HRLQ indicated
	that the complexity of our problems mainly comes from the hospital's lower 
	quotas. In fact, for all questions we consider in this and the following 
	section, \HRLQ and 
	\HRLUQ are tractable in the same cases. Moreover, we have shown in
	\Cref{ob:HRLQ-H'} 
	and \Cref{pr:HRULQ-H'} that the difficult part of \HRLQ and 
	\HRLUQ is to 
	decide which hospitals to open and not how to assign the residents to the 
	open 
	hospitals. 
	
	In contrast to this, as proven in the following theorem, for \HALUQ, by 
	exploiting the upper quotas of the hospitals, the problem of assigning the 
	residents to the open hospitals becomes computationally hard. Generally 
	speaking, the reason for this is that we have a lot of flexibility how to 
	assign the residents here, as blocking pairs need to involve an 
	undersubscribed hospital. For instance, as done in the following reduction, 
	there may exist a hospital that every resident wants to be matched to. In 
	this case, by enforcing an upper quota on the number of residents matched 
	to such a ``popular'' hospital and as this popular hospital does not have 
	preferences that 
	could act as a ``tie-breaker'',  there exists, in principle, an exponential 
	number 
	of 
	possibilities 
	which residents to assign to the popular hospital. 
	
	Thus, in contrast to the Hospital Residents problem with lower quotas where 
	upper quotas do not really seem to add 
	complexity to the problem, as the preferences of hospitals already provide 
	some information which residents to assign to a hospital, upper quotas add 
	complexity to the House Allocation problem with lower quotas. We show this 
	in the following 
	theorem by proving 
	that \HALUQ remains NP-complete even if we know which hospitals are open in 
	a stable matching:
	
	\begin{theorem}
		\label{th:HRLUQIH'}
		Given a subset of hospitals $H_{\open}\subseteq H$, deciding
		whether there 
		exists a stable matching in an \HRLUQI instance $(H,R)$ in which 
		exactly 
		the hospitals from $H_{\open}$ are open
		is NP-complete, even if
		$|H_{\open}|=4$, each resident 
		accepts at most five hospitals, all hospitals have lower quota at most 
		two, and we know that if there exists a stable matching, then it opens
		exactly the hospitals from~$H_{\open}$.
	\end{theorem}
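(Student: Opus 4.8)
The problem lies in NP: given a candidate matching $M$, one verifies in polynomial time that $M$ is feasible, opens exactly $H_{\open}$, admits no blocking pair (for each open undersubscribed hospital, check that no accepting resident prefers it to her assignment), and admits no blocking coalition (for each closed hospital $h$, count the residents accepting $h$ that prefer $h$ to their assignment and compare with $l(h)$). For the hardness, the plan is to give a polynomial-time many-one reduction from a bounded-degree selection problem whose constraints are intrinsically pairwise, so that they match lower quota two; a convenient choice is \textsc{Vertex Cover} (equivalently \textsc{Independent Set}) on graphs of maximum degree three, which is NP-hard. The degree bound will translate into the bound of five on the hospitals a resident accepts, the pairwise edge constraints into hospitals of lower quota two, and the cardinality bound of the cover into the upper quota of a single \emph{popular} hospital, exactly the source of flexibility described before the theorem.

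The core gadget is a popular hospital $p$ that is the top choice of every vertex resident $r_v$ and whose upper quota equals the target cover size $t$; the residents placed in $p$ play the role of the chosen cover, and the remaining residents are absorbed by a small constant-size frame of auxiliary open hospitals that each $r_v$ ranks below all of her other acceptable hospitals. For every edge $e=\{u,v\}$ I would introduce a closed hospital $h_e$ of lower quota two, accepted only by $r_u$ and $r_v$ and ranked by both between $p$ and the auxiliary hospitals. Since every open hospital will be full in any intended stable matching, no blocking pair can occur (a blocking pair in \HALUQ requires an undersubscribed open hospital), so stability is governed entirely by blocking coalitions: $h_e$ admits a blocking coalition precisely when both $r_u$ and $r_v$ are matched worse than $h_e$, i.e.\ when neither endpoint lies in $p$. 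Hence the occupants of $p$ form a vertex cover of size $t$ if and only if the matching is stable. This gives both directions: the forward direction turns a cover of size at most $t$ into a stable matching (padding the cover to size exactly $t$ so that $p$ is full and calibrating the auxiliary quotas to absorb the rest), and the backward direction simply reads a cover off the occupants of $p$.

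The delicate point, and the step I expect to be the main obstacle, is establishing the promised property that \emph{every} stable matching opens exactly the four designated hospitals (the popular hospital together with its three auxiliary hospitals), since this is what justifies fixing $H_{\open}$ with $|H_{\open}|=4$ in advance. Forcing these hospitals \emph{open} is easy, because each has small lower quota and is wanted by enough residents, so leaving it closed immediately creates a blocking coalition. The hard part is forbidding a stable matching from \emph{opening} an edge hospital $h_e$: in \HALUQ this cannot be ruled out through blocking pairs, because if $h_e$ is filled with $r_u,r_v$ one can keep $p$ and the auxiliary hospitals full with the remaining residents, and a resident stranded in a full hospital cannot block. I would therefore attach penalizing components in the spirit of \Cref{ob:counter} so that opening any $h_e$ necessarily leaves two residents preferring some closed hospital—producing a blocking coalition—rather than merely leaving an open hospital undersubscribed. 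Making these penalizing components interact correctly while simultaneously respecting lower quota two, the acceptance bound of five, and the requirement that exactly four hospitals end up open is the technically demanding part of the construction, and its verification is where the bulk of the case analysis will go.
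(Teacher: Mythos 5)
Your NP-membership check is fine, and your core gadget is essentially the one the paper uses: the paper reduces from \textsc{Independent Set} on 3-regular graphs, gives every vertex resident the preferences $h^+ \succ h_{e^v_1} \succ h_{e^v_2} \succ h_{e^v_3} \succ h^-$ with a top hospital $h^+$ of upper quota $n-k$, edge hospitals of lower and upper quota two in the middle, and a bottom hospital $h^-$ of upper quota $k$; your ``vertex cover at the top hospital'' is just the complement view of the paper's ``independent set at the bottom hospital''. The genuine gap is exactly the step you flag and defer: ruling out stable matchings that open edge hospitals, which is what both the promise clause (``every stable matching opens exactly $H_{\open}$'') and the downstream use of this theorem (\Cref{c:HAco}) rest on. Your construction as described really does fail there. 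Take a triangle $\{u,v,w\}$ with $t=1$: matching $r_u,r_v$ to $h_{\{u,v\}}$ and $r_w$ to $p$, leaving all auxiliary hospitals closed, is stable --- every open hospital is full, and no resident prefers any closed hospital to her assignment --- yet it opens an edge hospital and none of your frame. This also refutes your claim that forcing the auxiliary hospitals open is ``easy'': a hospital that every resident ranks last is never the object of a blocking pair or coalition once all residents are matched above it, so nothing forces it open.

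The repair you sketch (penalizing components attached to the edge hospitals) points in the wrong direction and cannot meet the stated bounds: each component in the spirit of \Cref{ob:counter} contributes additional \emph{open} hospitals, so $|H_{\open}|$ would grow with $|E|$; and if the forced resident of such a component accepted $h_e$, she would always prefer the closed $h_e$, so a blocking coalition for $h_e$ would need only \emph{one} endpoint outside $p$, forcing the ``cover'' to contain both endpoints of every edge. What the paper does instead is a single global forcing-plus-counting argument anchored at the \emph{bottom} hospital: one penalizing resident $r^*$ with preferences $h^- \succ h_1$ placed in front of one Condorcet component ($h_1,h_2,h_3$ with $r_1,r_2,r_3$, all quota two). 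By \Cref{ob:counter}, every stable matching puts $r^*$ at $h_1$, strictly below $h^-$; since $l(h^-)=1$, hospital $h^-$ must then be open (else $(h^-,\{r^*\})$ is a blocking coalition) and full (else $(r^*,h^-)$ is a blocking pair), and $h^+$ is open and full for the same reason, being everybody's first choice with $l(h^+)=1$. Because $u(h^+)+u(h^-)=(n-k)+k=n$ equals the number of vertex residents, all of them are consumed by $h^+$ and $h^-$, so no edge hospital can ever be open, and $H_{\open}=\{h^+,h^-,h_1,h_3\}$ has size exactly four. This forcing-and-counting mechanism --- one desirable bottom hospital whose fullness is enforced by a single penalizing resident, rather than per-edge penalization --- is the missing idea; with it, your reduction becomes the paper's proof.
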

	\begin{proof}
       	We reduce from the NP-hard \textsc{Independent Set} problem on 
       	3-regular graphs, that is, 
       	graphs were all vertices have exactly three neighbors 
       	\cite{DBLP:books/fm/GareyJ79}.
		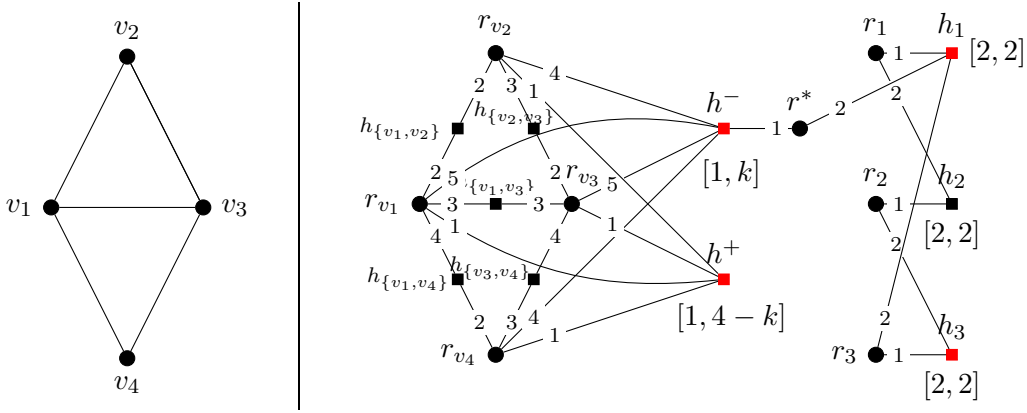
\begin{figure}
			\begin{minipage}{0.25\textwidth}
				
				\begin{center}
					\begin{tikzpicture}
					\node (ver-dist) at (0, 1.5) {};
					\node (hor-dist) at (5, 0) {};
					\node (hd) at (3, 0) {};
					
					\node[vertex, label=180:$v_1$] (r1) at (0, 0) {};
					\node[vertex, label=90:$v_2$] (r2) at ($(r1) + (1,2)$) {};
					\node[vertex, label=0:$v_3$] (r3) at ($(r1) + (2, 0)$) {};
					\node[vertex, label=270:$v_4$] (r5) at ($(r1) + (1, -2)$) 
					{};
					
					\draw (r1) -- (r2) -- (r3) -- (r1);
					\draw (r2) -- (r3) -- (r5);
					\draw (r5) -- (r1);
					
					\end{tikzpicture}
				\end{center}
			\end{minipage}\vline
			\begin{minipage}{0.65\textwidth}
				\begin{center}
					\begin{tikzpicture}
					\node (ver-dist) at (0, 1.5) {};
					\node (hor-dist) at (5, 0) {};
					\node (hd) at (3, 0) {};
					
					\node[vertex, label=180:$r_{v_1}$] (r1) at (0, 0) {};
					\node[vertex, label=90:$r_{v_2}$] (r2) at ($(r1) + (1,2)$) 
					{};
					\node[vertex, label={[xshift=0.15cm]90:$r_{v_3}$}] (r3) at 
					($(r1) + (2, 0)$) {};
					\node[vertex, label=180:$r_{v_4}$] (r4) at ($(r1) + 
					(1,-2)$) {};
					
					\node[squared-vertex, label=180:\scriptsize$h_{\{v_1, 
					v_2\}}$] (e12) at ($0.5*(r1)+ 0.5*(r2)$) {};
					\node[squared-vertex, 
					label={[yshift=-0.1cm]90:\scriptsize$h_{\{v_1, v_3\}}$}] 
					(e13) at ($0.5*(r1)+ 0.5*(r3)$) {};
					\node[squared-vertex, 
					label={[xshift=0.5cm,yshift=0.2cm]180:\scriptsize$h_{\{v_2, 
							v_3\}}$}] (e23) at ($0.5*(r3)+ 0.5*(r2)$) {};
					\node[squared-vertex, 
					label={[xshift=0.1cm]180:\scriptsize$h_{\{v_1, v_4\}}$}] 
					(e15) at ($0.5*(r5)+ 0.5*(r1)$) {};
					\node[squared-vertex, 
					label={[xshift=0.18cm,yshift=0.15cm]180:\scriptsize$h_{\{v_3,
							v_4\}}$}] 
					(e34) at ($0.5*(r3)+ 0.5*(r4)$) {};
					
					\draw (e12) edge node[pos=0.6, fill=white, inner sep=2pt] 
					{\scriptsize $2$} (r1);
					\draw (e12) edge node[pos=0.6, fill=white, inner sep=2pt] 
					{\scriptsize $2$} (r2);
					\draw (e23) edge node[pos=0.6, fill=white, inner sep=2pt] 
					{\scriptsize $3$} (r2);
					\draw (e23) edge node[pos=0.6, fill=white, inner sep=2pt] 
					{\scriptsize $2$} (r3);
					\draw (e13) edge node[pos=0.6, fill=white, inner sep=2pt] 
					{\scriptsize $3$} (r1);
					\draw (e13) edge node[pos=0.6, fill=white, inner sep=2pt] 
					{\scriptsize $3$} (r3);
					\draw (e15) edge node[pos=0.6, fill=white, inner sep=2pt] 
					{\scriptsize $4$} (r1);
					\draw (e15) edge node[pos=0.6, fill=white, inner sep=2pt] 
					{\scriptsize $2$} (r4);
					\draw (e34) edge node[pos=0.6, fill=white, inner sep=2pt] 
					{\scriptsize $4$} (r3);
					\draw (e34) edge node[pos=0.6, fill=white, inner sep=2pt] 
					{\scriptsize $3$} (r4);
					
					\node[red, squared-vertex, label=90:$h^+$, 
					label={[yshift=-0.1cm, 
						xshift=0.1cm]270:{$[1, 4-k]$}}] (hg) at (4, -1) {};
					\node[red, squared-vertex, label=90:$h^-$, 
					label={[yshift=-0.2cm, xshift=0.1cm]270:{$[1, k]$}}] (hb) 
					at (4, 1) {};
					
					\draw (hb) edge[bend right=25] node[pos=0.9, fill=white, 
					inner sep=2pt] {\scriptsize $5$} (r1);
					\draw (hb) edge node[pos=0.76, fill=white, inner sep=2pt] 
					{\scriptsize $4$} (r2);
					\draw (hb) edge node[pos=0.76, fill=white, inner sep=2pt] 
					{\scriptsize $5$} (r3);
					\draw (hb) edge node[pos=0.85, fill=white, inner sep=2pt] 
					{\scriptsize $4$} (r4);
					
					\draw (hg) edge[bend left = 20] node[pos=0.9, fill=white, 
					inner sep=2pt] {\scriptsize $1$} (r1);
					\draw (hg) edge node[pos=0.85, fill=white, inner sep=2pt] 
					{\scriptsize $1$} (r2);
					\draw (hg) edge node[pos=0.76, fill=white, inner sep=2pt] 
					{\scriptsize $1$} (r3);
					\draw (hg) edge node[pos=0.76, fill=white, inner sep=2pt] 
					{\scriptsize $1$} (r4);
					
					\node[vertex, label=90:$r^*$] (rs) at (5, 1) {};
					\node[vertex, label=180:$r_3$] (re3) at (6, -2) {};
					\node[vertex, label=90:$r_1$] (re1) at (6, 2) {};
					\node[vertex, label=90:$r_2$] (re2) at (6, 0) {};
					
					\node[red, squared-vertex, label=90:$h_1$, 
					label=0:{$[2,2]$}] (h1) at (7, 2) {};
					\node[squared-vertex, label=90:$h_2$, label=270:{$[2,2]$}] 
					(h2) at (7, 0) {};
					\node[red, squared-vertex, label=90:$h_3$, 
					label=270:{$[2,2]$}] (h3) at (7, -2) {};
					
					\draw (h1) edge node[pos=0.76, fill=white, inner sep=2pt] 
					{\scriptsize $1$} (re1);
					\draw (h2) edge node[pos=0.76, fill=white, inner sep=2pt] 
					{\scriptsize $2$} (re1);
					\draw (h2) edge node[pos=0.76, fill=white, inner sep=2pt] 
					{\scriptsize $1$} (re2);
					\draw (h3) edge node[pos=0.76, fill=white, inner sep=2pt] 
					{\scriptsize $2$} (re2);
					\draw (h3) edge node[pos=0.76, fill=white, inner sep=2pt] 
					{\scriptsize $1$} (re3);
					\draw (h1) edge node[pos=0.9, fill=white, inner sep=2pt] 
					{\scriptsize $2$} (re3);
					
					\draw (h1) edge node[pos=0.76, fill=white, inner sep=2pt] 
					{\scriptsize $2$} (rs);
					\draw (hb) edge node[pos=0.76, fill=white, inner sep=2pt] 
					{\scriptsize $1$} (rs);
					\end{tikzpicture}
				\end{center}
				
			\end{minipage}
			
			\caption{An example for the reduction showing NP-hardness of 
				\HALUQ with four open hospitals from \Cref{th:HRLUQIH'}.
				For the sake of illustration, the input graph~$T$ (depicted in 
				the left picture) is not 3-regular.
				The output of the reduction is depicted in the right picture.
				The lower and upper quota two for each edge hospital is not 
				drawn 
				for the sake of readability.
				The set $H_{\open}$ is marked in red.
			}
			\label{fig:hardness-open-hospitals}
		\end{figure}
	
		\textbf{Construction:} Let $(G=(V,E),k)$ be an instance of \textsc{Independent
		Set}, where $G$ is a 3-regular graph.
		For each~$v\in V$, let $e^v_1$, $e^v_2$, and $e^v_3$ be a list of all 
		edges incident to $v$. We introduce a \emph{good hospital}~$h^+$ with 
		lower
		quota one and upper quota $n-k$ and a \emph{bad hospital}~$h^-$ with
		lower
		quota one and upper quota $k$. Moreover, we introduce for each edge
		$e\in E$ an \emph{edge hospital} $h_e$ with lower and upper quota two.
		
		Turning to the residents, we introduce for each vertex $v\in V$ a
		\emph{vertex resident} $r_v$ with the following preferences: 
		$$r_v:h^+\succ h_{e^v_1} 
		\succ h_{e^v_2} \succ h_{e^v_3} \succ h^-.$$
		Finally, we introduce a
		penalizing component ensuring that no vertex resident can be 
		matched to an edge hospital. The penalizing component consists of three 
		hospitals $h_1$, $h_2$,
		$h_3$, each with lower and upper quota two, and four
		residents $r^*$, $r_1$, $r_2$, and $r_3$: 
		$$r^*:h^-\succ h_1, \qquad r_1:h_1\succ h_2, \qquad r_2:h_2\succ h_3, 
		\qquad r_3:h_3\succ h_1.$$
		Note that the penalizing component ensures that $k$ vertex residents 
		need to be matched to the bad hospital.
		We set $H_{\open}:=\{h^+, h^-, h_1,h_3\}$.
		See \Cref{fig:hardness-open-hospitals} for an example.
		Intuitively, the vertex
        residents assigned to the bad hospital in a stable 
        matching form an independent set, as two 
        vertex residents corresponding to adjacent vertices matched to the bad 
        hospital would form a blocking 
        coalition to open the respective edge hospital together.
        We now prove that there exists an independent set $V'$ of size $k$ in 
the given graph~$G$ if and only if the constructed \HALUQ instance admits a
stable matching opening exactly the hospitals from $H_{\text{open}}$.
        
		{\bfseries ($\Rightarrow$)} Let $V'\subseteq V$ be an independent set 
		of size $k$ 
		in 
		the given graph $G$. From this, we construct a stable matching $M$ 
		opening $H_{\open}$ in
		the 
		constructed instance by assigning all vertex residents corresponding to 
		vertices in $V'$ to the bad hospital, all other vertex residents to
		the good hospital, $(h_1,\{r^*,r_1\})$, and
		$(h_3,\{r_2,r_3\})$.
        As no hospital is undersubscribed, it remains to
		argue why there does not exists a blocking coalition to
		open a closed hospital: As $r_1$ is matched to her top-choice, there does not exist a
		blocking coalition to open $h_2$. Moreover, for no edge are both 
residents
		corresponding to the two endpoints matched to the bad hospital~$h^-$ 
as
		$V'$ is an independent set. Thus, no blocking coalition to open an edge 
		hospital exists.
		
	{\bfseries ($\Leftarrow$)} Let $M$ be a stable matching in the constructed 
		instance. First of all note that~$r^*$ needs to be matched to $h_1$, as 
		otherwise~$r_1$,~$r_2$, and~$r_3$ cannot be assigned to the 
		hospitals~$h_1$,~$h_2$, and~$h_3$ in a stable way (see 
		\Cref{ob:counter}). 
		This implies that $k$ vertex residents need to be matched to the bad 
		hospital~$h^-$, implying that the remaining $n-k$ vertex residents are matched to the good hospital~$h^+$.
		Thus, all edge hospitals are closed, since the residents which accept 
		them need to be either matched to the good or bad hospital as discussed 
		above.
		The
		$k$~vertices~$\{v_1, \dots, v_k\}$ corresponding to the residents 
matched to $h^-$ in $M$ form an
		independent set, as a pair of vertex residents $r_{v_i}$ and
		$r_{v_{j}}$ with~$\{v_i,v_j\}\in E$ both matched to the bad hospital 
		forms a blocking coalition to open $h_{
		\{v_i,v_j\}}$ in~$M$.
	\end{proof}
	
		Note that this reduction can be easily adapted to yield NP-completeness 
		for \HRLUQI restricted to instances with upper quota at most two by 
		splitting the hospitals $h^+$ and~$h^-$ into multiple hospitals with 
		upper and lower quota one.
		However, in this case, the bounds on the size of $H_{\open}$ and
		the number 
		of hospitals acceptable to a single resident do not hold any more.

	\subsubsection{Parameterization by the number of hospitals (with non-unit 
	lower quota)}
	Together
	with the number $n$ of residents, the number $m$ of hospitals is a very
	important and straightforward structural parameter of the studied problems.
	As in some applications this 
	parameter is much smaller than the number of residents, checking for
	fixed-parameter tractability is of special interest here.
	
	For both \HRLQ and \HRLUQ, it is possible to iterate over all possible 
	subsets $H_{\open}\subseteq H$ of hospitals and use
	\Cref{ob:HRLQ-H'} and 
	\Cref{pr:HRULQ-H'}, respectively, to decide whether there exists a stable 
	matching in which exactly the hospitals from $H_{\open}$ are open.
	Let 
	$H^{\quota}\subseteq H$ denote the set of hospitals with non-unit
	lower quota. 
	In fact, it is only necessary to iterate over all possible 
	subsets~$H_{\open}\subseteq H^{\quota}$ with non-unit lower quota and 
	subsequently add all hospitals 
	with
	lower quota one to $H_{\open}$. For each resulting set $H_{\open}$, we apply
	the procedures 
	described in
	\Cref{ob:HRLQ-H'} and \Cref{pr:HRULQ-H'} to compute a matching which we 
	then check for stability and feasibility. 
	
	\begin{corollary} \label{co:mFPT}
\HRLQ and \HRLUQ are solvable in
		$\mathcal{O}(n m \cdot 2^{m_{\quota}})$ time, where $m_{\quota}$ is the 
		number of hospitals with non-unit lower quota.
	\end{corollary}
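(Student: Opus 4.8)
The plan is to turn the two-step view described above into a brute-force-over-open-hospitals algorithm, but to restrict the enumeration to the set $H^{\quota}$ of hospitals with non-unit lower quota. The crucial observation is that a hospital $h$ with $l(h)=1$ imposes no genuine choice: opening it with a single resident is always feasible, and, more importantly, in any stable matching a \emph{closed} unit-lower-quota hospital cannot be preferred by any resident to her assigned hospital, since a single such resident would already form a blocking coalition to open $h$ (recall that a blocking coalition for $h$ has size $l(h)=1$). Hence, once we have fixed which non-unit hospitals are open, the open/closed status of every unit hospital is forced by the matching rather than being an independent degree of freedom.

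First I would enumerate all $2^{m_{\quota}}$ subsets $S\subseteq H^{\quota}$. For each $S$, I would form the candidate open set $H_{\open}:=S\cup\{h\in H: l(h)=1\}$ and compute a matching exactly as in the proofs of \Cref{ob:HRLQ-H'} (for \HRLQ, assign every resident to her most preferred hospital in $H_{\open}$) and \Cref{pr:HRULQ-H'} (for \HRLUQ, compute the resident-optimal stable matching of $R$ to $H_{\open}$ ignoring lower quotas via Gale--Shapley). Each such computation, together with a direct check of feasibility and of the absence of blocking pairs and blocking coalitions, runs in $\mathcal{O}(nm)$ time, so the whole procedure runs in $\mathcal{O}(nm\cdot 2^{m_{\quota}})$ time. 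I return YES if and only if some $S$ yields a feasible stable matching.

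Soundness is immediate, as a YES answer is always witnessed by an explicit feasible stable matching. The work lies in completeness: I must show that if a stable matching $M^\ast$ exists, then the iteration with $S^\ast:=H^{\quota}\cap\{h: h\text{ open in }M^\ast\}$ recovers a stable matching. For \HRLQ this follows by arguing that the greedy assignment to the candidate set $H_{\open}=S^\ast\cup\{h:l(h)=1\}$ coincides with $M^\ast$: every resident's most preferred hospital in $H_{\open}$ equals $M^\ast(r)$, because any hospital she strictly prefers would have to be a closed unit hospital, contradicting stability of $M^\ast$ by the blocking-coalition argument above. For \HRLUQ the same candidate set enlarges the open set of $M^\ast$ only by unit hospitals that no resident prefers to her $M^\ast$-match; hence $M^\ast$ is still a stable matching (ignoring lower quotas) of $R$ to $H_{\open}$, and by the Rural Hospitals Theorem every stable matching of this sub-instance, including the resident-optimal one we compute, opens exactly the same hospitals and assigns the same number of residents to each. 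The resident-optimality argument of \Cref{pr:HRULQ-H'} then certifies that the computed matching is feasible and stable in the original instance.

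The main obstacle is precisely this completeness step: it is not a priori clear that it is safe to force \emph{all} unit-lower-quota hospitals into the candidate open set, rather than only those open in the target stable matching. Overcoming this relies on the size-one blocking-coalition property of unit hospitals (for \HRLQ) and, for \HRLUQ, on combining this property with the Rural Hospitals Theorem to show that the extra candidate hospitals are automatically left closed by the resident-optimal matching.
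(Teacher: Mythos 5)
Your proposal is correct and takes essentially the same approach as the paper: enumerate all $2^{m_{\quota}}$ subsets of $H^{\quota}$, add every unit-lower-quota hospital to the candidate open set, compute a matching via the procedures of \Cref{ob:HRLQ-H'} and \Cref{pr:HRULQ-H'}, and check the result directly for feasibility and stability. Your completeness argument---that a closed unit-lower-quota hospital can never be preferred by any resident in a stable matching (a size-one blocking coalition), combined with the Rural Hospitals Theorem for \HRLUQ---supplies exactly the justification the paper leaves implicit for why forcing all unit hospitals into the candidate set is safe.
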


	Turning to \HRLUQI, despite the fact that it is NP-complete to 
	decide whether there exists a stable matching even if the set of open 
	hospitals is given,
	\HRLUQI parameterized by the number~$m$ of hospitals turns out to be
	fixed-parameter tractable. The algorithm utilizes that the number of 
	different resident types in a \HRLUQI instance can be bounded in a function 
of 
	$m$, as a resident is fully 
	characterized by her 
	preferences over hospitals. This observation can be used to construct an 
	integer linear program (ILP) where the
	number of variables is bounded in a function of $m$. 
	Employing
	Lenstra's algorithm 
	\cite{DBLP:journals/mor/Kannan87,DBLP:journals/mor/Lenstra83}
	shows that the problem is fixed-parameter tractable parameterized by
	the number of hospitals. 	
	\begin{proposition}
		\label{pr:HRLUQI-FPTM}
		Parameterized by the number $m$ of hospitals, \HALUQ is
		fixed-parameter 
		tractable.
	\end{proposition}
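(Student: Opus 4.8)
The plan is to follow the route already indicated before the statement: bound the number of distinct resident types by a function of $m$ and then reduce the search for a stable matching to solving a bounded-variable integer linear program via Lenstra's algorithm~\cite{DBLP:journals/mor/Kannan87,DBLP:journals/mor/Lenstra83}. Since in \HALUQ every hospital is indifferent among all residents, a resident is fully described by her strict preference list, which is an ordering of some accepted subset of the $m$ hospitals. Hence the number of resident \emph{types} is at most $\sum_{k=0}^{m}\binom{m}{k}k!$, which is bounded by a function of $m$ alone. I would group the $n$ residents into a set $T$ of types and, for each type $t\in T$, let $n_t$ denote the number of residents of type $t$; these are fixed input constants with $\sum_{t\in T}n_t=n$.

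First I would enumerate the ``shape'' of the desired matching. In a stable matching each hospital is either closed, open and full, or open and undersubscribed, so I iterate over all $3^m$ labelings partitioning $H$ into $H_{\text{closed}}$, $H_{\text{full}}$, and $H_{\text{under}}$. Distinguishing full from undersubscribed open hospitals is essential, because a blocking pair in \HALUQ can only involve an undersubscribed hospital. For a fixed labeling I set up an ILP whose variables are $x_{t,h}$, the number of residents of type $t$ assigned to hospital $h$, together with one variable per type counting its unmatched residents. The number of variables is at most $|T|\cdot(m+1)$, hence bounded in $m$, so Lenstra's algorithm decides feasibility in time $f(m)\cdot |\mathcal{I}|^{O(1)}$.

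The constraints encode feasibility and both stability conditions linearly. Type conservation requires $\sum_{h}x_{t,h}$ plus the unmatched count to equal $n_t$, and $x_{t,h}=0$ whenever $t$ does not accept $h$ or $h$ is labeled closed. The quota constraints read $\sum_{t}x_{t,h}=0$ for $h\in H_{\text{closed}}$, $\sum_{t}x_{t,h}=u(h)$ for $h\in H_{\text{full}}$, and $l(h)\le \sum_{t}x_{t,h}\le u(h)-1$ for $h\in H_{\text{under}}$. To forbid blocking pairs, for every $h\in H_{\text{under}}$ and every type $t$ accepting $h$ I require that the number of type-$t$ residents assigned to a hospital that $t$ ranks below $h$, or left unmatched, equals zero. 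To forbid blocking coalitions, for every $h\in H_{\text{closed}}$ I require that the total number of residents accepting $h$ who are assigned to a hospital worse than $h$ (or unmatched) is at most $l(h)-1$. All of these are linear (in)equalities in the $x_{t,h}$. The overall algorithm answers YES iff some labeling yields a feasible ILP, from which an actual matching is read off directly.

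The main obstacle is getting the linear encoding of stability exactly right, since the blocking-pair condition must be enforced precisely on the undersubscribed open hospitals (never on full ones) while the blocking-coalition condition applies to the closed hospitals; whether a hospital is undersubscribed is itself a property of the solution. Enumerating the closed/full/undersubscribed status of every hospital up front resolves this difficulty cleanly: within each of the $3^m$ ILPs the role of every hospital is fixed, so each event ``a resident prefers $h$ to her assignment'' becomes a fixed sum of the variables $x_{t,h}$, and stability reduces to linear constraints. Combining the $3^m$ enumeration with the FPT running time of Lenstra's algorithm in the number of variables yields the claimed fixed-parameter tractability with respect to $m$.
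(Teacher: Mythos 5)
Your proposal is correct and follows essentially the same route as the paper: both bound the number of resident types by a function of $m$ (exploiting that hospitals are indifferent among all residents) and then decide the existence of a stable matching via an integer linear program whose number of variables depends only on $m$, solved by Lenstra's algorithm. The only difference is bookkeeping—the paper uses a single ILP with binary indicators $o_h$, $y_h$ and big-$M$ terms to express whether a hospital is open or undersubscribed, whereas you enumerate the $3^m$ closed/full/undersubscribed labelings and solve one simpler ILP per labeling, which is the same guessing device the paper itself uses for \HRLUQT in \Cref{pr:HRT-XP}, and both variants give the claimed fixed-parameter tractability.
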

	\begin{proof}
		Let $(H,R)$ be a given \HALUQ instance.
		Note that as hospitals are indifferent among all residents, a
		resident $r\in R$ is fully characterized by her preferences over
		hospitals from~$H$. Thereby, the number of resident types is bounded by 
		the 
		number of ordered subsets of $m$ elements which is~$\mathcal{O}(m \cdot m!)$ (one can create each possible preference by taking the first $\ell$ hospitals from some permutation of the hospitals).
		Let
		$t_1$, \dots, $t_q$ be a list of all resident types and, for $i\in 
		[q]$, let $A(t_i)$ denote the set of hospitals which residents of type $t_i$ 
		accept. For two hospitals 
$h\neq h'\in H$, we write $h\succ_{t_i} h'$ if residents of type $t_i$ prefer 
$h$ 
to $h'$. For each~$i\in [q]$, let $n_i$ denote the number of residents in the given
		instance of type $t_i$. We design an ILP
		solving the given \HALUQ instance as follows.
		We introduce a variable
		$x_{i,h}$ for each hospital~$h\in H$ and
		each $i\in [q]$ representing the number of residents of type
		$t_i$ assigned to hospital~$h$. Moreover, for each hospital~$h\in H$, we introduce a binary
		variable~$o_h$ which is $1$ if $h$
		is
		open and $0$ otherwise.
		Furthermore, to prevent blocking pairs, we also use an additional 
		binary variable $y_h$ for each hospital $h\in H$ which shall be $1$ if 
		and only if $h$ is undersubscribed.
		In the following, we extend our notation by also writing~$h\succ_{t_i} 
		h'$ if $h'\notin A(t_i)$ and
		$h\in A(t_i)$ for a
		resident type~$t_i$. Using this notation, the problem can be
		solved using the
		following ILP:
		\begin{align}
			\sum_{i\in [q]} x_{i,h} +y_hn\geq o_hu(h), \qquad & \forall h \in H \label{cond:undersubscribed}\\
			x_{i,h}+\sum_{\substack{h'\in H \\ h'\succ_{t_i} h}} x_{i,h'} + (1 
			- y_h) n\geq
			n_i, \qquad & \forall i \in
			[q], h\in A(t_i) \label{ILP:no-bp}\\
			\sum_{\substack{i\in [q], h'\in H: \\ h\succ_{t_i} h'}} 
			x_{i,h'}\leq 
			l(h)+o_hn, \qquad & \forall h\in H \label{ILP:no-bc}\\
			o_h l(h)\leq \sum_{i\in [q]} x_{i,h}  \leq o_h u(h), \qquad& 
			\forall h\in 
			H \label{ILP:quotas}\\
			\sum_{h\in H} x_{i,h} \leq n_i, \qquad& \forall i\in [q] 			\label{ILP:num-residents}
			\\
			x_{i,h}  = 0, \qquad &\forall i \in [q], h\in H\setminus
			A(t_i) 
			\label{ILP:acceptablitiy}\\
			x_{i,h}\in\{0,1,\dots,n_i\}, \qquad  o_h \in \{0, 1\}, \qquad y_h \in \{0, 1\}, \qquad &\forall i\in [q],
			\forall h \in H 			
			\label{ILP:integrality}
    \end{align}
    Condition (\ref{cond:undersubscribed}) ensures $y_h = 1 $ holds for every undersubscribed hospital.
			Condition (\ref{ILP:no-bp}) ensures that no blocking pair between a 
			resident of type~$t_i$ and hospital~$h\in H$ exists:
			If $y_h  = 0$ (i.e.\, $h$ is closed or full), then $h$ is not part 
			of a blocking pair and the inequality is fulfilled.
			Otherwise the inequality
			enforces that for each resident type $t_i$ and hospital $h\in H$, 
			all residents of type~$t_i$ are assigned to $h$ or
			hospitals they prefer to $h$.
			Condition (\ref{ILP:no-bc}) ensures that no blocking coalition exists by
			enforcing that for all closed hospitals $h\in H$ the number of 
			residents that
			are assigned to hospitals they find worse than $h$ is below $l(h)$.
			Conditions (\ref{ILP:quotas})-(\ref{ILP:integrality}) ensure that 
			the matching encoded in the
			variables is a feasible assignment by checking whether the number 
			of resident that are assigned to a hospital is either zero or 
			between its lower and upper quota, by enforcing that for each 
			resident type the number of assigned residents of this type is 
			smaller or equal to the number of residents of this type in the 
			instance and by enforcing that no resident is 
			assigned to a hospital that she does not accept.
			
			Thus, the ILP admits a feasible solution if and only if the given 
\HALUQ instance admits a stable matching. As the number of variables used in 
the ILP lies in $\mathcal{O}(m^2 \cdot m!)$, it is
possible to apply Lenstra's algorithm
		\cite{DBLP:journals/mor/Kannan87,DBLP:journals/mor/Lenstra83}
		to solve the
		problem in $\mathcal{O}(f(m)\cdot n)$ time for some computable 
		function~$f$.
	\end{proof}	

	However, it is not
	possible to follow a similar approach to construct a fixed-parameter 
	tractable algorithm for the the number $m_{\text{quota}}$ of hospitals with 
	non-unit lower quota. In 
	fact,~\HRLUQI is NP-complete even for only three
	hospitals 
	with non-unit lower quota. 
	\begin{proposition}\label{pr:HRLUQI-NP-const}
	 \HRLUQI is NP-complete, even if only 
		three hospitals have 
		lower and 
		upper quota two and all other hospitals have upper quota one.
	\end{proposition}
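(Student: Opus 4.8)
The plan is to give a polynomial-time many-one reduction from an NP-hard problem to \HALUQ that produces an instance in which exactly three hospitals have lower and upper quota two and every other hospital has unit lower and upper quota. Membership in NP is immediate, since for a given matching one can check feasibility and the absence of blocking pairs and blocking coalitions in polynomial time. For hardness, the natural starting point is the base problem used in \Cref{th:HRLUQIH'}, namely \textsc{Independent Set} on cubic graphs (alternatively the restricted \textsc{Satisfiability} variant of \Cref{th::NP-compl}). The three quota-two hospitals are intended to form a single penalizing component in the sense of \Cref{ob:counter}: three hospitals $h_1,h_2,h_3$, each with quota two, together with residents whose preferences form the cyclic pattern of that observation, so that this sub-instance admits no stable matching unless a distinguished resident $s_*$ is matched outside the component, to a designated escape hospital.

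The remainder of the construction must encode the source instance using \emph{only} unit-quota hospitals and attach it to the penalizing component through the escape route of $s_*$. Each element of the source instance (a vertex, or a variable) would be represented by a short preference chain of unit-quota hospitals and residents, arranged in the spirit of the dummy-resident and color-resident gadgets of \Cref{th::NP-compl} and \Cref{th:HRLUQIH'}, so that the stable configurations of each chain correspond exactly to the admissible choices (a vertex selected, or a truth value assigned). The availability of $s_*$'s escape hospital is then made to depend on these choices in such a way that, by \Cref{ob:counter}, a stable matching can exist only if the encoded choices satisfy the global constraint of the source instance.

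The forward and backward directions would then be routine once the gadgets are fixed: from a solution of the source instance one reads off a feasible matching and argues separately that the penalizing component and the chains admit no blocking pair or coalition, while from any stable matching one uses the forcing property of \Cref{ob:counter} together with the chain structure to recover a valid solution of the source instance.

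The main obstacle, and where essentially all of the design effort lies, is that restricting to three quota-two hospitals removes nearly all structural leverage. If every hospital has unit lower and upper quota, the instance is an ordinary House Allocation instance and \emph{always} admits a stable matching: assigning residents one by one in any fixed order to their most-preferred still-open hospital gives a feasible matching in which every open hospital is full (so no blocking pair exists) and in which no resident prefers a closed hospital (so no blocking coalition exists). Consequently all of the hardness must be funnelled through the three quota-two hospitals, so these cannot serve merely as a local penalizing gadget but must couple \emph{globally} to the unit-quota scaffold, with their joint feasibility being exactly the NP-hard condition. In particular, the per-edge adjacency detection of \Cref{th:HRLUQIH'} crucially relied on one quota-two edge hospital per edge and is therefore unavailable here, so a genuinely different, global mechanism is required; engineering this coupling so that it certifies the combinatorial property across polynomially many unit-quota chains, while admitting no spurious stable matching, is the delicate heart of the proof.
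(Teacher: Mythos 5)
Your proposal has a genuine gap: it is a plan rather than a proof. Everything that makes the statement nontrivial---the concrete gadgets that encode the source instance using only unit-quota hospitals, and the way they are coupled to the three quota-two hospitals---is left unspecified; you say yourself that engineering this coupling is ``the delicate heart of the proof.'' What you do establish is correct but is the easy part: membership in NP, the idea of a penalizing component in the spirit of \Cref{ob:counter}, and the genuinely nice observation that an \HALUQ instance in which \emph{every} hospital has unit lower and upper quota always admits a stable matching (serial dictatorship: no open hospital can be undersubscribed, and no resident prefers a never-taken hospital to her assignment), so the three quota-two hospitals must carry all of the hardness.

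Beyond being incomplete, your diagnosis of the obstacle points away from the solution. You claim that per-edge adjacency detection is impossible with unit-quota hospitals and that the three quota-two hospitals must therefore ``couple globally'' and certify the NP-hard condition themselves. The paper's proof (a reduction from \textsc{Clique}) shows the opposite: the three quota-two hospitals $h_1,h_2,h_3$ form a purely local gadget whose only job is to pin a distinguished resident $r^*$ to $h_1$, which is \emph{last} in $r^*$'s preferences---this orientation (pinned inside at the bottom of a long list, rather than your ``escape'' orientation) is exactly what yields global leverage, because every hospital that $r^*$ ranks above $h_1$ (all $n$ vertex hospitals, each with quota $[1,1]$) must then be full: a closed hospital with lower quota one forms a blocking coalition with the \emph{single} resident $r^*$. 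Per-edge detection is then done entirely with unit-quota hospitals by inverting your logic: instead of two residents jointly opening a quota-two edge hospital, each closed quota-one edge hospital $h_e$ is a blocking threat against each endpoint resident individually, so stability forces both endpoints to be matched to vertex selection hospitals. A counting argument ($k$ filling residents and the full vertex hospitals force vertex residents into vertex or vertex selection hospitals; the $\binom{k}{2}$ edge selection hospitals absorb $\binom{k}{2}$ edge residents, leaving exactly that many edge hospitals closed) then forces the $k$ residents at vertex selection hospitals to correspond to a clique. So the ``genuinely different, global mechanism'' you call for is not needed; the missing insight is that lower quota one makes \emph{closed} hospitals powerful single-resident stability constraints.
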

    \begin{proof}
				We reduce from the
			NP-hard  
			\textsc{Clique} problem \cite{DBLP:books/fm/GareyJ79}, where given 
			a graph $G$ and an integer~$k$, the
			task is to decide whether there exists a subset of vertices of size 
			$k$ in $G$ that are all pairwise adjacent. Given an 
			instance~$((V=\{v_1,\dots v_n \},E),k)$ of
			\textsc{Clique}, we construct an
			\HRLUQI  instance as 
			follows. For each vertex $v\in V$, let $e^v_1,\dots, e^v_{p_v}$ be 
			a 
			list of all edges incident to $v$. 
			
			\textbf{Construction:} For each vertex $v\in V$, we add a \emph{vertex 
			hospital} 
			$h_v$ with 
			lower and 
			upper quota one. Similarly, for each edge $e\in E$, we add an 
			\emph{edge 
				hospital} with lower and upper quota one. Moreover, we add $k$ 
			\emph{vertex 
				selection hospitals} $h^\text{vert}_1,\dots, h^\text{vert}_{k}$ 
				and
			$\binom{k}{2}$ \emph{edge selection hospitals} 
			$h^\text{edge}_1,\dots,
			h^\text{edge}_{\binom{k}{2}}$, all with lower and upper quota one.
			
			Turning to the residents, for each vertex $v\in V$, we introduce a 
			\emph{vertex resident} $r_v$ with the following preferences:
			$$r_v : h^\text{vert}_1 \succ \dots \succ h^\text{vert}_{k}\succ 
			h_{e^v_1} \succ \dots \succ h_{e^v_{p_v}} \succ h_v.$$
			Moreover, we introduce an \emph{edge resident} $r_e$ for 
			each edge 
			$e\in E$ with the following preferences:
			$$h^\text{edge}_1\succ \dots \succ 
			h^\text{edge}_{\binom{k}{2}}\succ 
			h_e.$$
			Notably, in every stable matching, all vertex selection hospitals 
			will be filled with vertex residents and all edge selection 
			hospitals 
			will be 
			filled with edge residents. 
			In addition, we add for each~$i \in [k]$ a \emph{filling agent} 
			$r_i^\text{fill}$ with
			the 
			following 
			preferences:
			$$r_i^\text{fill} : h_{v_1}\succ \dots \succ h_{v_n} .$$			
			Finally, we introduce a penalizing component consisting of three 
			hospitals $h_1$, $h_2$, and~$h_3$ all with lower and upper quota 
			two and four residents $r^*$, 
			$r_1$, $r_2$, and $r_3$:
			$$r^*:h_{v_1}\succ \dots \succ h_{v_n}\succ h_1, \qquad 
			r_1:h_1\succ 
			h_2,
			\qquad r_2:h_2\succ h_3, 
			\qquad r_3:h_3\succ h_1.$$
			See \Cref{fig:hardness-few-hospitals} for an example of the reduction.
			The general idea behind the reduction is that the penalizing 
			component enforces that a vertex resident or a filling resident is 
			assigned to every vertex hospital.
			Thus, no vertex resident can be matched to an edge hospital, and therefore,
			$\binom{k}{2}$ 
            edge hospitals need to be closed in any stable matching.
            The vertex residents
corresponding to the
			end points of the closed edge hospitals
			need to be matched to a vertex selection hospital (since otherwise the edge hospital and vertex resident form a blocking pair), implying that
			the vertices matched to vertex selection hospitals form a 
			clique.
			We now show that there exists a clique of size~$k$ in the given
graph if and only if there exists a stable matching in the constructed \HALUQ 
instance.
			
			\begin{figure}
			  \begin{minipage}{0.2\textwidth}

			  \begin{center}
			    \begin{tikzpicture}
                 \node[vertex, label=180:$v_1$] (r1) at (0, 0) {};
			      \node[vertex, label=90:$v_2$] (r2) at ($(r1) + (1,2)$) {};
			      \node[vertex, label=0:$v_3$] (r3) at ($(r1) + (2, 0)$) {};
			      \node[vertex, label=270:$v_4$] (r5) at ($(r1) + (1, -2)$) {};

			      \draw (r1) -- (r2) -- (r3) -- (r1);
			      \draw (r2) -- (r3) -- (r5);
			      \draw (r5) -- (r1);

        \end{tikzpicture}
\end{center}
			  \end{minipage} \hspace*{0.6cm}\vline \hspace*{0.15cm}
			  \begin{minipage}{0.65\textwidth}
\begin{center}
			    \begin{tikzpicture}
                  \node (ver-dist) at (0, 1.5) {};
                  \node (hor-dist) at (1.5, 0) {};
                  \node (hd) at (3, 0) {};

			      \node[vertex, label=270:$r_{v_1}$] (r1) at (0, 0) {};
			      \node[vertex, label=270:$r_{v_2}$] (r2) at ($(r1) + (ver-dist)$) {};
			      \node[vertex, label={[yshift=0.15cm]90:$r_{v_3}$}] (r3) at 
			      ($(r2) + (ver-dist)$) {};
			      \node[vertex, label=90:$r_{v_4}$] (r4) at ($(r3) + (ver-dist)$) {};
			      
			      \node[vertex, 
			      label={[xshift=-0.15cm,yshift=-0.1cm]90:$h_{v_1}$}] (s1) at 
			      ($(r1) + ( hor-dist)$) {};
			      \node[vertex, label=90:$h_{v_2}$] (s2) at ($(s1) + (ver-dist)$) {};
			      \node[vertex, 
			      label={[xshift=0.15cm,yshift=-0.1cm]90:$h_{v_3}$}] (s3) at 
			      ($(s2) + (ver-dist)$) {};
			      \node[vertex, label={[yshift=-0.1cm]90:$h_{v_4}$}] (s4) at 
			      ($(s3) + (ver-dist)$) {};

                  \node[squared-vertex, label=0:$h_1^{\ver}$] (hv1) at ($(r4) + 
                  (hor-dist) + (ver-dist)$) {};
                  \node[squared-vertex, label=0:$h_2^{\ver}$] (hv2) at ($(r1) - 
                  (ver-dist) + (hor-dist)$) {};
        
        \draw (hv1) edge node[pos=0.92, fill=white, inner sep=2pt] {\scriptsize $1$} (r1);
        \draw (hv2) edge node[pos=0.76, fill=white, inner sep=2pt] {\scriptsize $2$} (r1);
        
        \draw (hv1) edge node[pos=0.92, fill=white, inner sep=2pt] {\scriptsize $1$} (r2);
        \draw (hv2) edge node[pos=0.76, fill=white, inner sep=2pt] {\scriptsize $2$} (r2);
        
        \draw (hv1) edge node[pos=0.76, fill=white, inner sep=2pt] {\scriptsize $1$} (r3);
        \draw (hv2) edge node[pos=0.92, fill=white, inner sep=2pt] {\scriptsize $2$} (r3);
        
        \draw (hv1) edge node[pos=0.76, fill=white, inner sep=2pt] {\scriptsize $1$} (r4);
        \draw (hv2) edge node[pos=0.92, fill=white, inner sep=2pt] {\scriptsize $2$} (r4);
                  
			      \node[squared-vertex, 
			      label={[xshift=-0.1cm,yshift=0.1cm]270:$h_{\{v_1, v_2\}}$}] 
			      (e12) at ($(r1) - (hor-dist)$) {};
			      \node[squared-vertex, label={[xshift=-0.175cm]270:$h_{\{v_1, 
			      v_3\}}$}] (e13) at ($(e12) + (ver-dist)$) {};
			      \node[squared-vertex, label={[xshift=-0.175cm]270:$h_{\{v_2, 
			      v_3\}}$}] (e23) at ($(e13) + (ver-dist)$) {};
			      \node[squared-vertex, 
			      label={[xshift=-0.1cm,yshift=-0.05cm]90:$h_{\{v_1, v_4\}}$}] 
			      (e14) at ($(e23) + (ver-dist)$) {};
			      \node[squared-vertex, 
			      label={[xshift=-0.1cm,yshift=-0.05cm]90:$h_{\{v_3, v_4\}}$}] 
			      (e34) at ($(e14)+ (ver-dist)$) {};
			      
			      \node[vertex, label={[yshift=0.05cm]270:$r_{\{v_1, v_2\}}$}] 
			      (r12) at ($(e12) - (hor-dist)$) {};
			      \node[vertex, label={[yshift=0.05cm]270:$r_{\{v_1, v_3\}}$}] 
			      (r13) at ($(e13) - (hor-dist)$) {};
			      \node[vertex, 
			      label={[xshift=0.1cm,yshift=-0.05cm]90:$r_{\{v_2, v_3\}}$}] 
			      (r23) at ($(e23) - (hor-dist)$) {};
			      \node[vertex, label={[yshift=-0.05cm]90:$r_{\{v_1, v_4\}}$}] 
			      (r14) at ($(e14) - (hor-dist)$) {};
			      \node[vertex, 
			      label={[xshift=0.1cm,yshift=-0.05cm]90:$r_{\{v_3, v_4\}}$}] 
			      (r34) at ($(e34) - (hor-dist)$) {};

                  \node[squared-vertex, 
                  label={[xshift=-0.1cm]90:$h_1^{\edge}$}] (he1) at ($(r23) - 
                  (hor-dist)$) {};
        
        \draw (he1) edge node[pos=0.76, fill=white, inner sep=2pt] {\scriptsize $1$} (r12);
        
        \draw (he1) edge node[pos=0.76, fill=white, inner sep=2pt] {\scriptsize $1$} (r13);
        
        \draw (he1) edge node[pos=0.76, fill=white, inner sep=2pt] {\scriptsize $1$} (r23);
        
        \draw (he1) edge node[pos=0.76, fill=white, inner sep=2pt] {\scriptsize $1$} (r34);
        
        \draw (he1) edge node[pos=0.76, fill=white, inner sep=2pt] {\scriptsize $1$} (r14);

        \draw (e12) edge node[pos=0.76, fill=white, inner sep=2pt] {\scriptsize 
        $2$} (r12);
        \draw (e13) edge node[pos=0.76, fill=white, inner sep=2pt] {\scriptsize 
        $2$} (r13);
        \draw (e23) edge node[pos=0.76, fill=white, inner sep=2pt] {\scriptsize 
        $2$} (r23);
        \draw (e14) edge node[pos=0.76, fill=white, inner sep=2pt] {\scriptsize 
        $2$} (r14);
        \draw (e34) edge node[pos=0.76, fill=white, inner sep=2pt] {\scriptsize 
        $2$} (r34);

        \node[vertex, label={[xshift=-0.1cm]45:$r^*$}] (rs) at ($(s3) + 0.5*(hor-dist) - 0.5*(ver-dist)$) {};
        \node[vertex, label=180:$r_3$] (re3) at ($(s3) + (hor-dist) + 0.5*(ver-dist)$) {};
        \node[vertex, label=90:$r_2$] (re2) at ($(re3) - (ver-dist)$) {};
        \node[vertex, label=90:$r_1$] (re1) at ($(re2) -(ver-dist)$) {};

        \node[blue, squared-vertex, label=90:$h_1$] (h1) at ($(re1)+ 
        (hor-dist)$) {};
        \node[blue, squared-vertex, label=90:$h_2$] (h2) at ($(re2)+ 
        (hor-dist)$) {};
        \node[blue, squared-vertex, label=90:$h_3$] (h3) at ($(re3)+ 
        (hor-dist)$) {};

        \draw (h1) edge node[pos=0.76, fill=white, inner sep=2pt] {\scriptsize $1$} (re1);
        \draw (h2) edge node[pos=0.76, fill=white, inner sep=2pt] {\scriptsize $2$} (re1);
        \draw (h2) edge node[pos=0.76, fill=white, inner sep=2pt] {\scriptsize $1$} (re2);
        \draw (h3) edge node[pos=0.76, fill=white, inner sep=2pt] {\scriptsize $2$} (re2);
        \draw (h3) edge node[pos=0.76, fill=white, inner sep=2pt] {\scriptsize $1$} (re3);
        \draw (h1) edge node[pos=0.9, fill=white, inner sep=2pt] {\scriptsize $2$} (re3);

        \draw (h1) edge node[pos=0.76, fill=white, inner sep=2pt] {\scriptsize $5$} (rs);
        
        \draw (s1) edge node[pos=0.76, fill=white, inner sep=2pt] {\scriptsize $1$} (rs);
        \draw (s2) edge node[pos=0.76, fill=white, inner sep=2pt] {\scriptsize $2$} (rs);
        \draw (s3) edge node[pos=0.76, fill=white, inner sep=2pt] {\scriptsize $3$} (rs);
        \draw (s4) edge node[pos=0.76, fill=white, inner sep=2pt] {\scriptsize $4$} (rs);
        
        \draw (e12) edge node[pos=0.76, fill=white, inner sep=2pt] {\scriptsize $3$} (r1);
        \draw (e12) edge node[pos=0.82, fill=white, inner sep=2pt] {\scriptsize 
        $3$} (r2);
        \draw (e23) edge node[pos=0.76, fill=white, inner sep=2pt] {\scriptsize $4$} (r2);
        \draw (e23) edge node[pos=0.76, fill=white, inner sep=2pt] {\scriptsize 
        $4$} (r3);
        \draw (e13) edge node[pos=0.76, fill=white, inner sep=2pt] {\scriptsize $4$} (r1);
        \draw (e13) edge node[pos=0.76, fill=white, inner sep=2pt] {\scriptsize 
        $3$} (r3);
        \draw (e34) edge node[pos=0.76, fill=white, inner sep=2pt] {\scriptsize $5$} (r3);
        \draw (e34) edge node[pos=0.76, fill=white, inner sep=2pt] {\scriptsize 
        $4$} (r4);
        \draw (e14) edge node[pos=0.9, fill=white, inner sep=2pt] {\scriptsize 
        $5$} (r1);
        \draw (e14) edge node[pos=0.76, fill=white, inner sep=2pt] {\scriptsize 
        $3$} (r4);
        
        \draw (s1) edge node[pos=0.85, fill=white, inner sep=2pt] {\scriptsize $6$} (r1);
        \draw (s2) edge node[pos=0.85, fill=white, inner sep=2pt] {\scriptsize $5$} (r2);
        \draw (s3) edge node[pos=0.85, fill=white, inner sep=2pt] {\scriptsize $6$} (r3);
        \draw (s4) edge node[pos=0.85, fill=white, inner sep=2pt] {\scriptsize $5$} (r4);
        \end{tikzpicture}
\end{center}

			  \end{minipage}

			  \caption{An example for the reduction showing NP-hardness of 
			  \HALUQ with four open hospitals from \Cref{pr:HRLUQI-NP-const}.
              The input instance is~$(T,2)$, where~$T$ is depicted in the left 
              picture.
              The output of the reduction is depicted in the right picture.
              Hospitals with lower (and upper) quota two are marked in blue, while all other hospitals have lower (and upper) quota one.
              }
              \label{fig:hardness-few-hospitals}
			\end{figure}
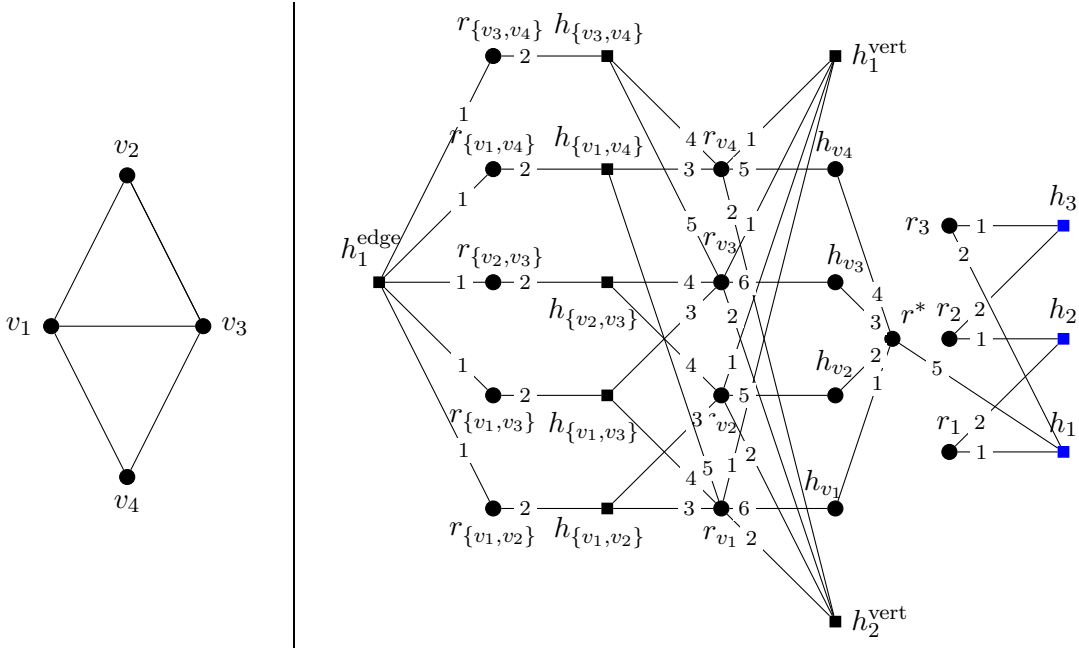

			{\bfseries ($\Rightarrow$)} Let $V'=\{v_{i_1},\dots 
			v_{i_k}\}\subseteq V$ be a 
			clique of 
			size $k$ in $(V,E)$ and $E'=\{e_{j_1}, \dots, 
			e_{j_{\binom{k}{2}}}\}$ 
			the set of all edges lying in 
			the clique. We claim that the following matching $M$ is stable:
			\begin{align*}
			M= & \{(h^\text{vert}_\ell,\{r_{v_{i_\ell}}\})\mid \ell\in 
			[k] \}\cup \{(h^\text{edge}_\ell,\{r_{e_{j_\ell}}\})\mid 
			\ell\in [{{k}\choose{2}}] \}\ \\
			& \cup  \{(h_v,\{r_v\})\mid v\in V\setminus V' \}  \cup 
			\{(h_e,\{r_e\})\mid
			e\in E\setminus E' \} \\
			& \cup \{(h_{v_{i_\ell}}, \{r_\ell^\text{fill}\}) \mid \ell \in 
			[k]\}
			\cup \{(h_1,\{r^*, r_1\})\} \cup \{(h_3,\{r_2, r_3\})\}
			\end{align*} 
			As no hospital is undersubscribed, only blocking coalitions can 
			block 
			$M$. Consequently, it is enough to iterate over all closed hospitals
			and 
			argue why there does not exist a blocking coalition to open them. 
			The 
			only closed hospitals are $h_2$ and the edge hospitals corresponding
			to 
			edges lying in the clique. As $r_1$ is matched to its most 
			preferred 
			hospital, there does not exist a blocking coalition to open $h_2$. 
			Moreover, as all edges~$e\in E$ such that $h_e$ is closed have both 
			their endpoints in~$V'$,
			both vertex residents that find $h_e$ acceptable are 
			matched to 
			vertex selection hospitals and therefore do not want to open this 
			hospital. 
			
			{\bfseries ($\Leftarrow$)} Let $M$ be a stable matching in the 
			constructed  
			\HALUQ
			instance. First of all note that $r^*$ needs to be matched to 
			$h_1$, as 
			otherwise no stable matching of the three 
			residents~$r_1$,~$r_2$, and~$r_3$ to the three hospitals $h_1$, 
$h_2$, 
			and $h_3$ 
			can exist (see \Cref{ob:counter}). Thus, all vertex hospitals need 
			to 
			be 
			full, which is only possible if all vertex residents are matched
			to 
			vertex hospitals or vertex selection hospitals. From this it 
			follows that 
			exactly  
			${k}\choose{2}$ edge hospitals need to be closed, as all edge 
			residents 
			prefer every edge selection hospital to their designated edge hospital
and no vertex resident can be matched to an edge hospital. 
			Because $M$ is stable, no vertex resident forms a blocking
			coalition 
			to open one of the ${k}\choose{2}$ closed edge hospitals.
			Therefore, all vertex residents corresponding to endpoints of the
			corresponding ${k}\choose{2}$ edges are assigned to one of
			the~$k$ 
			vertex selection hospitals. This implies that the vertices 
			corresponding to vertex residents assigned 
			to 
			vertex selection hospitals form a clique.
	\end{proof}
	As all hospitals from the \HALUQ instance constructed in the above 
	reduction have lower 
	quota at most two, this result also strengthens the NP-hardness of \HALUQ 
	from 
	\Cref{th::NP-compl} with respect to the lower and upper quota of the 
	hospitals. 

	\subsubsection{Number of open (or closed) hospitals in a stable matching} 
	\label{ss:oc}
	As there may exist stable matchings of different sizes in the studied 
	many-to-one 
	matching problems, one might want to find a matching with the 
	lowest/highest number of open hospitals. For instance, this could be 
	useful in applications where opening a hospital comes at some cost or 
	where organizers get money or receive points for each open hospital.
	The associated decision problem is
	to decide for some given number $m_{\open}$ whether
	there exists a stable matching in which exactly
	$m_{\open}$ hospitals are
	open. Obviously, it is also possible to ask the question for the dual
	parameter, i.e., the number~$m_{\text{closed}}$ of closed hospitals in a 
	stable matching. 
	
	For the \HRLUQI model, \Cref{th:HRLUQIH'} and the proof of 
	\Cref{pr:HRLUQI-NP-const}
	already imply parameterized hardness results for both parameters: By \Cref{th:HRLUQIH'}, it is NP-complete to decide whether there exists a stable matching opening four hospitals, implying paraNP-hardness for the parameter~$m_{\open}$. Moreover,
	W[1]-hardness for the parameter $m_{\text{closed}}$ follows from the 
	reduction from \Cref{pr:HRLUQI-NP-const} (as only ${k}\choose{2}$ edge 
	hospitals and one hospital from the penalizing component are closed in a 
	stable matching in the constructed instance).
	\begin{corollary} \label{c:HAco}
		Deciding whether there exists a stable matching with four open 
		hospitals in a \HRLUQI instance is NP-hard. Deciding whether there 
		exists a stable matching with~$m_{\text{closed}}$ closed hospitals in a 
		\HRLUQI instance is W[1]-hard parameterized by~$m_{\text{closed}}$, 
even 
		if one knows that otherwise no stable matching exists and only three 
		hospitals have non-unit upper quota.
	\end{corollary}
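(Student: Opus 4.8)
Both statements are corollaries of reductions that have already been carried out, so the plan is to reinterpret those reductions rather than to build anything new.

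For the first statement I would reuse the instance produced by the reduction behind \Cref{th:HRLUQIH'}. That reduction outputs a \HALUQ instance in which $|H_{\open}|=4$, together with the guarantee that if any stable matching exists then it opens exactly the hospitals in $H_{\open}$. Consequently, on these instances the question ``is there a stable matching opening exactly four hospitals?'' coincides with the question ``is there a stable matching at all?'': any stable matching opens precisely the four hospitals of $H_{\open}$, and conversely a stable matching with four open hospitals witnesses the existence of a stable matching. Since the latter question encodes \textsc{Independent Set} on $3$-regular graphs, it is NP-hard, and thus deciding the existence of a stable matching with $m_{\open}=4$ open hospitals is NP-hard, i.e.\ the problem is paraNP-hard for the parameter $m_{\open}$.

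For the second statement I would revisit the \textsc{Clique} reduction of \Cref{pr:HRLUQI-NP-const} and simply count the closed hospitals in a stable matching. In the forward direction the only closed hospitals are the $\binom{k}{2}$ edge hospitals corresponding to the edges inside the clique, together with the single penalizing hospital $h_2$. Hence, setting the target to $m_{\text{closed}}=\binom{k}{2}+1$, a value depending only on $k$, turns the reduction into a parameterized reduction from \textsc{Clique} parameterized by $k$ (which is W[1]-hard) to \HALUQ parameterized by $m_{\text{closed}}$, proving W[1]-hardness. Moreover, the only hospitals with non-unit upper quota in the construction are the three penalizing hospitals $h_1$, $h_2$, $h_3$, which gives the claimed restriction.

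The only genuine verification---and the one step I would be careful about---is confirming that the number of closed hospitals is invariant across all stable matchings of the \textsc{Clique} instance, so that the decision problem (with its promise that ``otherwise no stable matching exists'') is well defined. This amounts to re-reading the $(\Leftarrow)$ direction of \Cref{pr:HRLUQI-NP-const}: by the argument of \Cref{ob:counter}, $r^*$ must occupy $h_1$, which pins $h_2$ as the unique closed penalizing hospital; all $n$ vertex hospitals are forced open, all vertex-selection and edge-selection hospitals are open, and exactly $\binom{k}{2}$ edge hospitals are closed. Thus no alternative count is possible, and every stable matching of the constructed instance has precisely $\binom{k}{2}+1$ closed hospitals, which is exactly what makes the promise hold.
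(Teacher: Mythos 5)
Your proposal is correct and matches the paper's own derivation of this corollary: the first claim reuses the promise built into \Cref{th:HRLUQIH'} (so that ``some stable matching with four open hospitals'' coincides with ``some stable matching at all''), and the second turns the \textsc{Clique} reduction of \Cref{pr:HRLUQI-NP-const} into a parameterized reduction by fixing $m_{\text{closed}}=\binom{k}{2}+1$, a function of $k$ alone. One harmless imprecision in your verification step: the closed penalizing hospital need not be $h_2$ (a stable matching may instead contain $(h_1,\{r^*,r_3\})$ and $(h_2,\{r_1,r_2\})$, closing $h_3$), but since exactly one of $h_2$, $h_3$ is closed in every stable matching, the count $\binom{k}{2}+1$---and hence the promise that otherwise no stable matching exists---is unaffected.
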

	It remains open whether $\HALUQ$
	parameterized by $m_{\text{closed}}$ is in XP or paraNP-hard.

	For the other two models, the parameterized hardness for the parameter 
	$m_{\open}$ follows from
	\Cref{th:W-n} (which shows W[1]-hardness
	parameterized by the number of residents), as every stable matching
	can open at most one hospital per resident.
	
	\begin{corollary} \label{c:op}
		Deciding whether there exists a stable matching with $m_{\open}$
		open hospitals in a \HRLQ or \HRLUQ instance is W[1]-hard parameterized 
by~$m_{\text{open}}$.  
	\end{corollary}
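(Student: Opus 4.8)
The plan is to give a parameterized many-one reduction from \textsc{Multicolored Independent Set} by reusing verbatim the \HRLQ instance constructed in the proof of \Cref{th:W-n}, and then pinning down the exact number of hospitals that \emph{any} stable matching of that instance must open. The only new ingredient needed is the observation that, in that construction, every stable matching opens exactly $2k$ hospitals, where $k$ is the number of color classes; setting the target to $m_{\open}:=2k$ then turns the existence question into the ``exactly $m_{\open}$ open hospitals'' question while preserving correctness.

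Concretely, I would first recall that the instance from \Cref{th:W-n} has, besides the vertex and edge hospitals, a penalizing triple $h_1^c,h_2^c,h_3^c$ for each color $c\in[k]$. The key structural claim is that every stable matching opens precisely $2k$ hospitals, and this falls out of the bookkeeping already used in the $(\Leftarrow)$ direction of \Cref{th:W-n} together with the fact that for each hospital the number of accepting residents equals its lower quota. Indeed: each penalizing resident $s_*^c$ must be matched to a vertex hospital (otherwise $s_*^c,s_1^c,s_2^c$ cannot be stably assigned to $h_1^c,h_2^c,h_3^c$, cf.\ \Cref{ob:counter}); since a vertex hospital has lower quota three and is accepted by exactly $r_1^c,r_2^c,s_*^c$, this forces exactly one vertex hospital per color to be open, yielding $k$ open vertex hospitals. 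No edge hospital can be open, for an open edge hospital would commit all four of its accepting color residents to it, leaving the associated $s_*^c$ with no vertex hospital to join. Finally, with $s_*^c$ gone, neither $h_1^c$ nor $h_2^c$ (each accepted by only one remaining resident) can reach its lower quota of two, so $s_1^c$ and $s_2^c$ must both be matched to $h_3^c$, else they block to open it; hence exactly the $k$ hospitals $h_3^c$ are open among the penalizing ones. Summing gives $k+k=2k$ open hospitals in every stable matching.

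Given this, the reduction outputs the \HRLQ instance of \Cref{th:W-n} together with $m_{\open}:=2k$. Since the instance admits a stable matching if and only if it admits one opening exactly $2k$ hospitals, correctness is inherited directly from \Cref{th:W-n}. The parameter is $m_{\open}=2k=O(k)$, so this is a valid parameterized reduction and establishes W[1]-hardness for \HRLQ; because the same instance is simultaneously an \HRLUQ instance (set every upper quota to $n+1$), the claim for \HRLUQ follows identically. I expect the only genuine obstacle to be verifying the structural claim that every stable matching opens exactly $2k$ hospitals, but this is essentially the counting observation already exploited in \Cref{th:W-n}: each open hospital consumes at least its lower quota of residents, residents are not shared between hospitals, and in the construction $n=5k$, so the number of open hospitals is $O(k)$ in any case and is in fact forced to equal $2k$. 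Everything else is immediate.
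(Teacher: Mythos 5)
Your proposal is correct, and it reuses exactly the construction the paper relies on (the \HRLQ instance from \Cref{th:W-n}); the difference lies in how the reduction is finished. The paper's own justification for \Cref{c:op} is a single remark: hardness ``follows from \Cref{th:W-n}, as every stable matching can open at most one hospital per resident,'' i.e., it only records the bound $m_{\open}\leq n$. Taken literally, that bound alone supports a try-all-values argument (run the exact-count problem for every $m_{\open}\in\{0,1,\dots,n\}$), which is a Turing-style reduction rather than the many-one reductions the paper declares it uses exclusively. Your version closes this gap: you prove the stronger structural fact that \emph{every} stable matching of the \Cref{th:W-n} instance opens exactly $2k$ hospitals ($k$ vertex hospitals, forced because each $s^c_*$ must sit in a vertex hospital whose three acceptors are exactly $r^c_1,r^c_2,s^c_*$; plus the $k$ hospitals $h^c_3$, forced once $h^c_1,h^c_2$ are starved of $s^c_*$; and no edge hospital can ever be open). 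With that, setting $m_{\open}:=2k$ yields a genuine parameterized many-one reduction from \textsc{Multicolored Independent Set} with output parameter $2k=O(k)$, and the claim for \HRLUQ follows since \HRLQ is the special case with upper quotas $n+1$. So what your approach buys is precisely the exact-count analysis that makes the corollary a bona fide many-one reduction; what the paper's terser route buys is brevity, at the cost of leaving that analysis (or a switch to Turing reductions) implicit. Your structural claim itself checks out against the $(\Leftarrow)$ direction of \Cref{th:W-n}, so there is no gap in your argument.
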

	Since the created instance in the reduction in the proof of \Cref{th:W-n} 
	fulfills \Cref{ob:equiv}, these hardness results still
	hold
	even 
	if we relax the stability condition to allow
	blocking pairs in a stable matching (but no blocking coalitions).	
	Turning now to the dual parameter $m_{\text{closed}}$,
	none of the previous reductions has any implication on the complexity of
	\HRLQ and \HRLUQ with respect to this parameter, as we have always used the 
	closed hospitals to encode some 
	constraints on the solution. This may lead to the initial hypothesis 
	that restricting the 
	number of closed hospitals makes the problem tractable.
	Unfortunately, it turns out that this is not the case, as we show W[1]-hardness for this parameter by
	somewhat switching the roles of open and closed hospitals.

	\begin{proposition}
		\label{thm:ha-m-closed}
		Deciding whether there exists a 
		stable matching 
		with $m_{\text{closed}}$ closed hospitals in a \HRLQ or \HRLUQ
		instance 
		is 
		W[1]-hard parameterized by $m_{\text{closed}}$, even if one knows that 
		otherwise
		there does not exist any stable matching.
	\end{proposition}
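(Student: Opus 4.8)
The plan is to give a parameterized reduction from \textsc{Multicolored Independent Set} parameterized by the number $k$ of colors, exactly the source used in \Cref{th:W-n}, but to \emph{invert} the encoding so that the solution is carried by the \emph{closed} rather than the open hospitals. In the reduction behind \Cref{th:W-n} the penalizing components force exactly one vertex hospital per color to be \emph{open}, and all edge and remaining vertex hospitals end up closed, so a stable matching leaves $\Theta(|E| + kq)$ hospitals closed; that is precisely why it says nothing about the parameter $m_{\text{closed}}$. To switch the roles, I would instead build, for every color $c$, a \emph{closing gadget} on the vertex hospitals $h_{v_1^c},\dots,h_{v_q^c}$ that forces, in every stable matching, exactly one of them to be closed and all others to be open, the single closed hospital encoding the vertex chosen from color $c$. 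Assuming such a gadget, every stable matching closes exactly one vertex hospital per color together with a number of auxiliary hospitals bounded by a function of $k$, so the total number $m_{\text{closed}}$ of closed hospitals is bounded by $g(k)$; since \textsc{Multicolored Independent Set} is W[1]-hard in $k$ and the reduction is polynomial, this yields W[1]-hardness in $m_{\text{closed}}$.

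First I would construct the closing gadget for a single color. The goal is to make each $h_{v_j^c}$ openable on its own (lower quota two, with a small pool of almost-dedicated residents) while introducing a controlled deficiency of exactly one resident in the shared pool, and to attach a \Cref{ob:counter}-style penalizing triangle so that closing \emph{zero} hospitals of the color is infeasible (deficiency) and closing \emph{two or more} reproduces the unresolvable Condorcet cycle of \Cref{ob:counter} and leaves no stable matching. This is the mirror image of the single-\emph{open} selection gadget of \Cref{th:W-n}, where a single resident $s^c_*$ could be placed at only one vertex hospital; here the mediating resident must instead leave exactly one hospital starved, and the choice of which one must be free. I would set $m_{\text{closed}}$ to the forced value (one closure per color plus the fixed contribution of the triangles), so that any stable matching selects one vertex per color.

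Next I would add, for each edge $e=\{u,v\}$, a \emph{consistency gadget} making the instance unstable whenever both $h_u$ and $h_v$ are closed, i.e.\ whenever two adjacent vertices are both selected. The naive idea---residents who want $h_u$ exactly when $h_v$ is closed---fails, because any resident whose top choice is a vertex hospital destroys that hospital's closability; I would therefore route the consistency residents through per-vertex \emph{indicator} hospitals $P_v$ that are forced to mirror the open/closed state of $h_v$. A consistency resident for $e$ then has top choice $P_v$ (so it is absorbed while $v$ is unselected and never threatens $h_v$ itself) and second choice $h_u$, so that if $v$ is selected its indicator is closed, the resident is released, and it completes a blocking coalition reopening $h_u$---contradicting $c(u)$'s closing gadget and leaving no stable matching. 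Because there is only one indicator per vertex and it is closed only when its vertex is selected, the indicators add at most one closed hospital per selected vertex, i.e.\ at most $k$ further closed hospitals, keeping $m_{\text{closed}}\in O(k)$. For the converse, a multicolored independent set closes exactly the chosen vertex hospitals and their indicators, and one checks directly that the induced ``favourite-open-hospital'' assignment admits neither a blocking pair nor a blocking coalition, giving a stable matching with exactly $m_{\text{closed}}$ closed hospitals; conversely any stable matching yields an independent set, so a no-instance admits no stable matching at all, which is the promised strengthening. Finally, since \HRLQ is the special case of \HRLUQ in which all upper quotas equal $n+1$, and a \HRLQ instance has the same stable matchings when viewed as a \HRLUQ instance, the same construction establishes the claim for \HRLUQ; ideally the instance can also be arranged to fall under \Cref{ob:equiv} to obtain the blocking-coalition-only variant.

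The main obstacle is the closing gadget: forcing \emph{exactly} one closure per color in \emph{every} stable matching while still permitting that closure to be any of the $q$ hospitals is the genuine inversion of the single-open gadget, and it is delicate because a closed lower-quota-two hospital is stable only when at most one of its acceptors would rather be matched to it, so the shared residents must both create a one-unit shortage and rotate freely. The secondary, compositional obstacle is to verify that the closing gadgets, the indicator hospitals, and the consistency residents interact without spurious cross-gadget blocking coalitions, so that the stable matchings are \emph{exactly} those encoding multicolored independent sets and the closed-hospital count stays pinned at $g(k)$.
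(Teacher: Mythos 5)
Your high-level strategy is the same as the paper's: invert the encoding so that the \emph{closed} hospitals carry the solution of a W[1]-hard multicolored problem, keep $m_{\text{closed}}$ bounded by a function of the number $k$ of colors, and make unstable exactly the instances whose closed-hospital pattern violates the graph constraint (you reduce from \textsc{Multicolored Independent Set}, the paper from \textsc{Multicolored Clique}; this difference is immaterial, as the problems are equivalent under graph complementation). However, there is a genuine gap: the two constructions on which everything rests --- the per-color ``closing gadget'' forcing exactly one closure while letting it rotate freely among the $q$ vertex hospitals, and the per-vertex ``indicator hospitals $P_v$ forced to mirror the open/closed state of $h_v$'' --- are postulated by their desired behavior, never built. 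These are not routine details: the indicator idea is essentially circular (forcing $P_v$ to be closed if and only if $h_v$ is closed is a state-copying gadget of exactly the same difficulty as the consistency problem it is supposed to solve), and your reopening mechanism requires that every closed vertex hospital sit at \emph{exactly} one resident below its lower quota in \emph{every} stable matching, so that a single released consistency resident tips it into a blocking coalition; this requirement fights directly against the rotation freedom of the closing gadget and against the gadget's own residents being absorbed elsewhere, which is precisely the delicacy you flag but do not resolve.

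For comparison, the paper resolves both issues by counting arguments with engineered deficits rather than by state-mirroring. Each vertex hospital $h_v$, $v\in V^c$, gets lower quota $n^3|V^c|$ while only $n^3(|V^c|-1)$ residents rank it first; hence two closed vertex hospitals of one color yield a blocking coalition (at most one closed per color), while the $n^3$ multiplicity guarantees that the at most $(n-1)n^2<n^3$ edge residents that could ``spill over'' can never complete a coalition to reopen the one closed hospital. Edge hospitals $h_e$, $e\in E^{c,d}$, get lower quota $|E^{c,d}|+1$ against only $|E^{c,d}|$ first-choice residents, so per color pair at least one must close; a global budget hospital $h^*$ with lower quota $\binom{k}{2}+2$, kept closed by the \Cref{ob:counter}-triangle, caps the number of closed edge hospitals at $\binom{k}{2}$, pinning it at exactly one per pair. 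Consistency is then enforced directly through the interleaved preferences $r_{e,e'}: h_e \succ h_u \succ h_v \succ h_{e'}$ of the edge residents --- a closed $h_e$ with an open endpoint immediately yields a blocking pair --- with no auxiliary indicator hospitals at all. Until you exhibit concrete quotas and resident pools achieving your two postulated gadgets (and verify their interaction), the proposal is a plan rather than a proof.
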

	
	\begin{proof}
		We reduce from \textsc{Multicolored Clique}. Let $G=(V,E)$ be an 
		undirected graph with a partitioning of the vertices in $k$ different 
		colors $(V^1,\dots, V^k)$. \textsc{Multicolored Clique} asks whether there exists a clique
		$V'\subseteq V$ of size $k$ containing one vertex from each color. 
		Without loss of generality, we assume that 
		there do not exist any edges between vertices of the same color.  
		Moreover, for each $c\neq d\in [k]$, let
		$E^{c,d}$ denote the set of edges with one endpoint colored
		in $c$ and the other endpoint colored in $d$, i.e., $e=\{u,v\}\in 
		E^{c,d}$ if~$(u\in V^c\wedge v\in V^{d})\vee (v\in V^c\wedge u\in
		V^{d})$.
		Parameterized by $k$, \textsc{Multicolored Clique} is 
		W[1]-hard~\cite{Pietrzak03}. Given an instance of \textsc{Multicolored 
		Clique}, we now construct an instance of \HRLQ as follows.
		
		\textbf{Construction:} We start by introducing a penalizing
		component consisting of four hospitals: $h^*$ with lower quota 
		$\binom{k}{2}+2$
		and $h_1$, $h_2$, and $h_3$ each with lower quota two. Moreover, we 
		introduce four penalizing residents $r^*$, $r_1$, $r_2$, and $r_3$ with 
		the following preferences:
		$$r^*: h^*\succ h_1, \quad r_1:h_1\succ h_2, \quad r_2: h_2\succ h_3,\quad
		r_3:h_3\succ h_1.$$
		 For each color $c\in [k]$, we insert a vertex
		hospital~$h_v$ with lower quota $n^3|V^c|$ for each vertex of this 
color $v\in
		 V^c$.
        For each pair~$(v, v')$ of different vertices of the same color (i.e., $v \neq v'$ and $v, v'\in V^c$ for some $c\in [k]$), we introduce $n^3$ residents~$r_{v,v'}^i$ ($i\in [n^3]$) with the following preferences:
		$$r_{v,v'}^i: h_v\succ h_{v'}, \qquad \forall i\in [n^3].$$
		Note that for every vertex hospital $h_v$ for $v\in V^c$, there exist 
		$n^3(|V^c|-1)$ residents with $h_v$ as their top-choice.
		In addition, for each edge
		$e=\{u,v\}\in 
		E^{c,d}$ for some $c<d\in [k]$, we introduce an edge hospital $h_e$ 
		with lower quota
		$|E^{c, d}|+1$. We add a penalizing resident~$r^*_e$ with
		the following preferences: 
		$$r_e^*:h_e\succ h^*.$$
		The penalizing component ensures that at most $\binom{k}{2}$ edge 
		hospitals are closed in a stable matching.
		Moreover,  for each edge
		$e=\{u,v\}\in 
		E^{c,d}$ for some $c<d\in [k]$ and each $e'\in 
		E^{c,d}\setminus\{e\}$,  
		we introduce an edge resident
		$r_{e,e'}$ with the following preferences: $$r_{e,e'}: h_e\succ 
		h_u\succ 
		h_v\succ h_{e'}.$$ 
		Hence, for every edge hospital $h_e$ with $e\in E^{c,d}$, there exist 
		$|E^{c,d}|-1$ residents with $h_e$ as top-choice.
		See \Cref{fig:closed-hospitals} for an example.
		We set $m_{\text{closed}}:=k+\binom{k}{2}+1$.
		In a stable matching, all but one 
		hospital corresponding to vertices of one color should be open. The 
		closed hospital 
		corresponds to the selected vertex from this color for the clique. 
		Similarly, for each color combination $c < d\in [k]$, we
		introduced 
		a gadget consisting of one hospital for each edge with this color 
		combination. In a stable matching, all but one hospital in this gadget 
		should be open and the closed hospital corresponds to the edge with 
		this color combination lying in the constructed clique. 
		We show that there exists a multicolored clique in the given graph if 
and only if there exists a stable matching with $m_{\text{closed}}$ hospitals 
in the constructed~\HRLQ instance.

\begin{figure}
	\begin{center}
		\begin{tikzpicture}
		\node[star, draw, label=180:$r_{v,w}$] (rvw) at (-4, 0) {};
		\node[star, draw, label=180:$r_{w, v}$] (rwv) at ($(rvw) - (0, 2)$) {};
		\node[star, draw, label=0:$r_{x, y}$] (rxy) at (6, 0) {};
		\node[star, draw, label=0:$r_{y, x}$] (ryx) at ($(rxy) + (0 , -2)$) {};
		
		\node[squared-vertex, label=90:$h_{v}$,label={[yshift=0.35cm]90:{$[128, 
				\infty]$}}] (hv) at ($(rvw) + (1, 0)$) {};
		\node[squared-vertex, 
		label=270:$h_{w}$,label={[yshift=-0.35cm,xshift=-0.45cm]270:{$[128, 
				\infty]$}}] (hw) at ($(hv) + (0, -2)$) {};
		\node[squared-vertex, 
		label=90:$h_{x}$,label={[yshift=0.35cm,xshift=0.1cm]90:{$[128, 
				\infty]$}}] (hx) at ($(rxy) - (1, 0)$) {};
		\node[squared-vertex, 
		label=270:$h_{y}$,label={[yshift=-0.45cm]270:{$[128, 
				\infty]$}}] (hy) at ($(hx) + (0, -2)$) {};
		
		\node[squared-vertex, label=90:$h_{\{v, 
			x\}}$,label={[yshift=0.45cm]90:{$[4, 
				\infty]$}}] (hvx) at ($(rvw) + (3, 1)$) {};
		\node[squared-vertex, label=180:$h_{\{v, 
			y\}}$,label={[yshift=0.05cm,xshift=-0.1cm]90:{$[4, 
				\infty]$}}] (hvy) at ($(hvx) - (0, 2)$) {};
		\node[squared-vertex, label=315:$h_{\{w, 
			y\}}$,label={[yshift=.cm,xshift=1.65cm]270:{$[4, 
				\infty]$}}] (hwy) at ($(hvy) - (0, 2)$) {};
		
		\node[vertex, label=90:$r_{e_1, e_2}$] (rvxvy) at (3, 2) {};
		\node[vertex, label=90:$r_{e_1, e_3}$] (rvxwy) at ($(rvxvy) - (0, 1)$) 
		{};
		\node[vertex, label=270:$r_{e_2, e_1}$] (rvyvx) at ($(rvxwy) - (0, 1)$) 
		{};
		\node[vertex, label=270:$r_{e_2, e_3}$] (rvywy) at ($(rvyvx) - (0, 1)$) 
		{};
		\node[vertex, label=270:$r_{e_3, e_1}$] (rwyvx) at ($(rvywy) - (0, 1)$) 
		{};
		\node[vertex, label=270:$r_{e_3, e_2}$] (rwyvy) at ($(rwyvx) - (0, 1)$) 
		{};
		
		\node[vertex, label={[xshift=0.18cm]180:$r_{\{v, x\}}^*$}] (rvx) at 
		($(hvx) + (-1.5, -2)$) {};
		\node[vertex, label=0:$r_{\{v, y\}}^*$] (rvy) at ($(hvy) + (-1.5, -2)$) 
		{};
		\node[vertex, label=270:$r_{\{w, y\}}^*$] (rwy) at ($(hwy) + (0, -1)$) 
		{};
		
		\draw (hv) edge node[pos=0.76, fill=white, inner sep=2pt] {\scriptsize 
		$1$} (rvw);
		\draw (hw) edge node[pos=0.76, fill=white, inner sep=2pt] {\scriptsize 
		$2$} (rvw);
		\draw (hv) edge node[pos=0.76, fill=white, inner sep=2pt] {\scriptsize 
		$2$} (rwv);
		\draw (hw) edge node[pos=0.76, fill=white, inner sep=2pt] {\scriptsize 
		$1$} (rwv);
		
		\draw (hx) edge node[pos=0.76, fill=white, inner sep=2pt] {\scriptsize 
		$1$} (rxy);
		\draw (hy) edge node[pos=0.76, fill=white, inner sep=2pt] {\scriptsize 
		$2$} (rxy);
		\draw (hx) edge node[pos=0.76, fill=white, inner sep=2pt] {\scriptsize 
		$2$} (ryx);
		\draw (hy) edge node[pos=0.76, fill=white, inner sep=2pt] {\scriptsize 
		$1$} (ryx);
		
		\draw (hvx) edge node[pos=0.67, fill=white, inner sep=2pt] {\scriptsize 
		$1$} (rvxvy);
		\draw (hvx) edge node[pos=0.6, fill=white, inner sep=2pt] {\scriptsize 
		$1$} (rvxwy);
		\draw (hvx) edge node[pos=0.7, fill=white, inner sep=2pt] {\scriptsize 
		$4$} (rvyvx);
		\draw (hvx) edge node[pos=0.9, fill=white, inner sep=2pt] {\scriptsize 
		$4$} (rwyvx);
		
		\draw (hvy) edge node[pos=0.76, fill=white, inner sep=2pt] {\scriptsize 
		$1$} (rvyvx);
		\draw (hvy) edge node[pos=0.8, fill=white, inner sep=2pt] {\scriptsize 
		$1$} (rvywy);
		\draw (hvy) edge node[pos=0.8, fill=white, inner sep=2pt] {\scriptsize 
		$4$} (rvxvy);
		\draw (hvy) edge node[pos=0.8, fill=white, inner sep=2pt] {\scriptsize 
		$4$} (rwyvy);
		
		\draw (hwy) edge node[pos=0.76, fill=white, inner sep=2pt] {\scriptsize 
		$1$} (rwyvx);
		\draw (hwy) edge node[pos=0.76, fill=white, inner sep=2pt] {\scriptsize 
		$1$} (rwyvy);
		\draw (hwy) edge node[pos=0.9, fill=white, inner sep=2pt] {\scriptsize 
		$4$} (rvxwy);
		\draw (hwy) edge node[pos=0.7, fill=white, inner sep=2pt] {\scriptsize 
		$4$} (rvywy);
		
		\draw (hvx) edge node[pos=0.76, fill=white, inner sep=2pt] {\scriptsize 
		$1$} (rvx);
		\draw (hvy) edge node[pos=0.76, fill=white, inner sep=2pt] {\scriptsize 
		$1$} (rvy);
		\draw (hwy) edge node[pos=0.76, fill=white, inner sep=2pt] {\scriptsize 
		$1$} (rwy);
		
		\draw (hv) edge node[pos=0.6, fill=white, inner sep=2pt] {\scriptsize 
			$2$} (rvxvy);
		\draw (hx) edge node[pos=0.76, fill=white, inner sep=2pt] {\scriptsize 
		$3$} (rvxvy);
		\draw (hv) edge node[pos=0.83, fill=white, inner sep=2pt] {\scriptsize 
		$2$} (rvxwy);
		\draw (hx) edge node[pos=0.76, fill=white, inner sep=2pt] {\scriptsize 
		$3$} (rvxwy);
		
		\draw (hv) edge node[pos=0.76, fill=white, inner sep=2pt] {\scriptsize 
		$2$} (rvyvx);
		\draw (hy) edge node[pos=0.76, fill=white, inner sep=2pt] {\scriptsize 
		$3$} (rvyvx);
		\draw (hv) edge node[pos=0.8, fill=white, inner sep=2pt] {\scriptsize 
		$2$} (rvywy);
		\draw (hy) edge node[pos=0.76, fill=white, inner sep=2pt] {\scriptsize 
		$3$} (rvywy);
		
		\draw (hw) edge node[pos=0.76, fill=white, inner sep=2pt] {\scriptsize 
		$2$} (rwyvx);
		\draw (hy) edge node[pos=0.76, fill=white, inner sep=2pt] {\scriptsize 
		$3$} (rwyvx);
		\draw (hw) edge node[pos=0.76, fill=white, inner sep=2pt] {\scriptsize 
		$2$} (rwyvy);
		\draw (hy) edge node[pos=0.76, fill=white, inner sep=2pt] {\scriptsize 
		$3$} (rwyvy);
		
		\node[squared-vertex, 
		label=270:$h^*$,label={[yshift=.cm,xshift=0.6cm]270:{$[3, 
				\infty]$}}] (hs) at ($(hw) - (0, 2)$) {};
		\node[squared-vertex, label=0:$h_1$] (h1) at ($(hs) - (0, 1)$) {};
		\node[squared-vertex, label=0:$h_2$] (h2) at ($(h1) - (0, 1)$) {};
		\node[squared-vertex, label=0:$h_3$] (h3) at ($(h2) - (0, 1)$) {};
		\draw (hs) edge node[pos=0.8, fill=white, inner sep=2pt] {\scriptsize 
		$2$} (rvx);
		\draw (hs) edge node[pos=0.76, fill=white, inner sep=2pt] {\scriptsize 
		$2$} (rvy);
		\draw (hs) edge node[pos=0.76, fill=white, inner sep=2pt] {\scriptsize 
		$2$} (rwy);
		
		\node[vertex, label=180:$r^*$] (rs) at ($(hs) - (1, 0)$) {};
		\node[vertex, label=180:$r_1$] (r1) at ($(rs) - (0, 1)$) {};
		\node[vertex, label=180:$r_2$] (r2) at ($(r1) - (0, 1)$) {};
		\node[vertex, label=180:$r_3$] (r3) at ($(r2) - (0, 1)$) {};
		
		\draw (hs) edge node[pos=0.76, fill=white, inner sep=2pt] {\scriptsize 
		$1$} (rs);
		\draw (h1) edge node[pos=0.76, fill=white, inner sep=2pt] {\scriptsize 
		$2$} (rs);
		
		\draw (h1) edge node[pos=0.76, fill=white, inner sep=2pt] {\scriptsize 
		$1$} (r1);
		\draw (h2) edge node[pos=0.76, fill=white, inner sep=2pt] {\scriptsize 
		$2$} (r1);
		
		\draw (h2) edge node[pos=0.76, fill=white, inner sep=2pt] {\scriptsize 
		$1$} (r2);
		\draw (h3) edge node[pos=0.8, fill=white, inner sep=2pt] {\scriptsize 
			$2$} (r2);
		
		\draw (h3) edge node[pos=0.76, fill=white, inner sep=2pt] {\scriptsize 
		$1$} (r3);
		\draw (h1) edge node[pos=0.86, fill=white, inner sep=2pt] {\scriptsize 
			$2$} (r3);
		\end{tikzpicture}
		
	\end{center}
	\caption{An example of the reduction in the proof of \Cref{thm:ha-m-closed} 
		with input graph $(\{v, w, x, y\}, \{e_1 = \{v, x\}, e_2= \{v, y\}, e_3 
		= 
		\{w, y\}\})$ and color distribution $(V^c=\{v,w\},V^d=\{x,y\})$.
		Residents depicted by a star represent $4^3\cdot 1 = 64$ residents with 
		these 
		preferences.}\label{fig:closed-hospitals}
\end{figure}
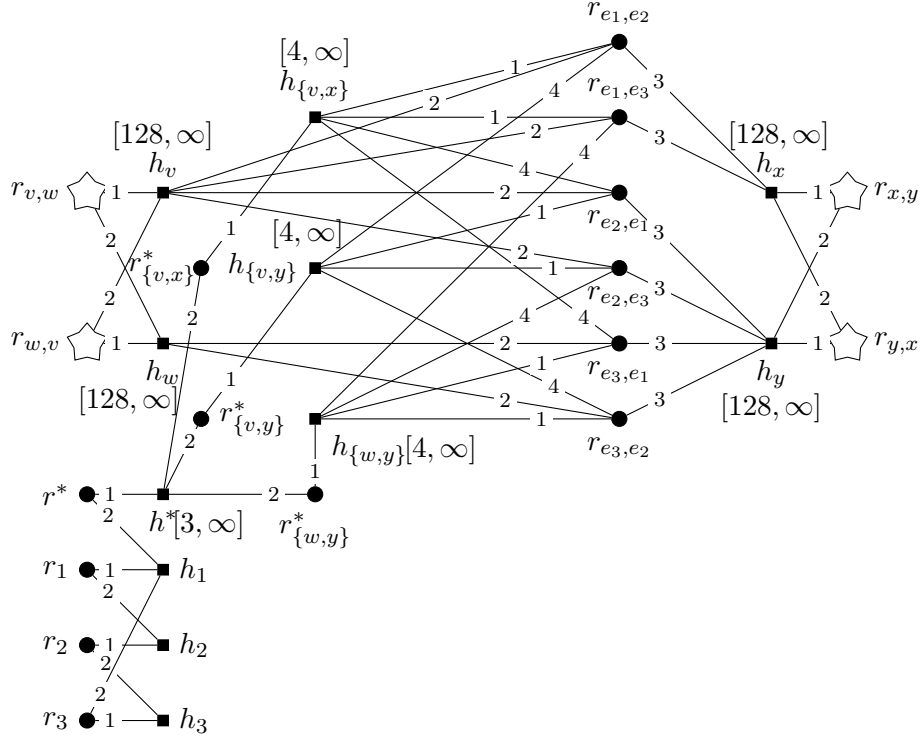

		{\bfseries ($\Rightarrow$)} Let $V'\subseteq V$ be a clique in $G$ with 
		$|V'\cap 
		V^c|=1$ for all $c\in [k]$.
		We denote by $v^c_*$ the vertex from~$V' \cap V^c$.
		Let $E'$ denote the set of all edges
		lying in the clique, i.e., all edges where both endpoints belong to 
		$V'$. Note that for each two different colors $c< d\in [k]$ it holds 
		that
		$|E'\cap 
		E^{c,d}|=1$. We construct a stable matching
		$M$ as follows. For each $c\in [k]$ and each~$v\in
		V^c\setminus V'$, we match to $h_v$ all residents with $h_v$ as top 
		choice, i.e., $\bigcup_{v'\neq v\in
		V^c\wedge i\in [n^3]} \{r_{v,v'}^i\}$, and  all residents which have 
		$h_{v_*^c}$ as their top-choice and $h_v$ as their second choice, i.e.,
		$\bigcup_{ i\in [n^3]} \{r_{v_*^c,v}^i\}$. Moreover, for each $c<d\in 
		[k]$ and
			each edge $e\in E^{c, d}\setminus E'$, we 
			match to $h_e$ all edge
			residents~$r_{e,e'}$ with $h_e$ as top choice, i.e.,
			$\bigcup_{e'\neq e\in E^{c, d}}
			\{r_{e,e'}\}$, the penalizing resident $r^*_e$, and the 
			resident~$r_{e^*, e}$ where~$\{e^*\}=E^{c, d}\cap E'$.
	Finally, we match $r^*$ and $r_1$ to hospital $h_1$ and $r_2$ and $r_3$ to 
	hospital $h_3$.
	Since there are exactly $n^3 |V^c|$ residents matched to each open vertex hospital $h_v$ with $v\in V^c$, exactly $|E^{c, d}| + 1$ residents matched to each open edge hospital $h_e$ with $e\in E^{c, d}$, and two residents matched to $h_1$ and $h_3$, it follows that $M$ is feasible.
	We now argue that $M$ is stable.

	Since all residents are matched to their most-preferred open hospital, it follows that there is no blocking pair.
	Next, we show that there exists no blocking coalition to open a closed 
	vertex hospital~$h_v$ for a vertex $v\in V^c$ for some $c\in [k]$.
	Since $h_v$ is closed, it follows that $v\in V'$.
	The only vertex
	residents which prefer to be matched to $h_v$ to their hospital in $M$ are 
	the $n^3 (|V^c| - 1)$ vertex residents of the form $r^i_{v, v'}$ for $i\in 
	[n^3]$ and $v' \in V^c\setminus \{v\}$.
	Moreover, there exist at most
	$(n-1)n^2$ edge residents that find $h_v$ acceptable (number of edges 
	incident to $v$ times two). As $(n-1)n^2 <
	n^3$, there
	cannot exist a blocking coalition of $n^3|V^c|$ residents to open~$h_v$.
	We now turn to a closed edge hospital~$h_{e^*}$.
	Since $h_{e^*}$ is closed, it follows that both endpoints of~${e^*}$ are contained in $V'$.
	Let ${e^*}\in E^{c, d}$ for some $c< d\in [k]$.
	The residents which prefer~$h_{e^*}$ to their hospital in~$M$ are the
	edge residents~$r_{e^*, e}$ for $e\in E^{c, d}\setminus\{e^*\}$ and $r^*_{e^*}$. However,
	these are only $|E^{c, d}| < l (h_{e^*})$ residents.
	Thus, there
	does not exist a blocking coalition to open $h_{e^*}$.
	Finally, 
	there exist $\binom{k}{2}$ penalizing residents~$r_e^*$ that are
	unmatched. However, 
	as $h^*$ has lower quota~$\binom{k}{2}+2$ and is only preferred by those $\binom{k}{2}$
	unmatched 
	penalizing residents and $r^*$, there does not exist a blocking coalition 
	to open~$h^*$.
	 
	 {\bfseries ($\Leftarrow$)} Assume that there exists a stable matching $M$ 
	 in the 
	 constructed \HRLQ instance. Note first of all that every stable matching 
	 matches
	 $r^*$ to $h_1$, as otherwise two residents from~$r_1$, $r_2$, and~$r_3$ 
	 form a blocking coalition to open one of $h_1$, $h_2$, or $h_3$ (see 
	 \Cref{ob:counter}). This implies that~$h^*$ needs to be closed
	 which in turn implies that at most $\binom{k}{2}$ edge hospitals can be 
closed (because otherwise $r^*$ together with the at least $\binom{k}{2}$ 
residents $r_e^*$ such that $h_e$ is closed form a blocking coalition to open 
$h^*$). For each of the $\binom{k}{2}$ color combinations $c< d\in [k]$,
     there are $|E^{c, d}|^2$ residents that find at least one edge 
     hospital~$h_e$ with $e\in E^{c,d}$ acceptable.
	 Thus, the
	 residents only suffice to open $|E^{c, d}|-1$ of the edge hospitals 
	 corresponding to edges from $E^{c,d}$. Note further that it is 
	 only possible to open $|E^{c, d}|-1$ edge hospitals if for all $e\in E^{c, d}$ edge resident $r_e$ is matched to an edge hospitals $h_{e'}$ for some $e' \in E^{c, d}$.
	 This implies that
	 all edge residents need to be matched to edge hospitals.
	 Note further that for each color~$c\in [k]$, at least $|V^c|-1$ vertex 
	  hospitals corresponding to vertices from $V^c$ are open (if there are two 
	  closed vertex hospitals~$h_v$ and $h_{v'}$ for $v\neq v'\in V^c$, then 
	  $\{r_{v, w}^i : i\in [n^3], w \in V^c \setminus \{v\}\} \cup 
	  \{r_{v', v}^i : i \in [n^3]\}$ form a blocking coalition to open 
	  $h_{v}$. Consequently, at most one vertex
	  hospitals from each color is closed in every stable matching.
	  If less than $k$ vertex hospitals are closed or the $k$ closed vertex 
	  hospitals do not form a clique, then there 
	  exists an edge~$e = \{v, w\}\in E^{c, d}$ for some~$c< d\in [k]$ such
	  that $h_e$ is closed and $h_v$ or $h_w$ is open in~$M$ (as for each 
	  $c<d\in [k]$ there exists a hospital $h_e$ with $e\in E^{c,d}$ that is 
	  closed). We assume without loss of generality that $h_v$ is open in~$M$.
	  As $h_e$ is
	  closed and all edge residents are matched to edge hospitals, for every edge $e' \in E^{c, d}\setminus \{e\}$, resident $r_{e,e'}$ together with $h_v$
	  forms a blocking pair, as
	  $r_{e, e'}$ prefers being matched to $h_v$ to being matched to~$M(r_{e,e'})=h_{e'}$.
	  Thus, the $k$ vertices corresponding to the closed
	  vertex hospitals form a clique.
	\end{proof}
	While we have seen in this section that for both \HRLQ and \HRLUQ deciding 
	whether a stable matching with $m_{\open}$ open ($m_{\text{closed}}$ 
	closed) 
	hospitals exists is W[1]-hard parameterized by $m_{\open}$ 
	($m_{\text{closed}}$), 
	the two problems still lie in XP. For both parameters, we can iterate over 
	all possible subsets of open hospitals $H_{\open}\subseteq H$ of allowed 
	size and 
	subsequently employ the algorithms from \Cref{ob:HRLQ-H'} and 
	\Cref{pr:HRULQ-H'}.

	\section{A Restricted Case: Lower Quota Two} \label{sec:q2}

	In this section, we consider the special case of our models where all 
	hospitals have lower quota at most two. For \HALUQ, we have already seen in 
	\Cref{pr:HRLUQI-NP-const} that the problem remains NP-hard even in this 
	case. However, for the other two models, all our hardness proofs
	used hospitals with lower quota three.
	
	In the following, let \HRLUQtwo denote 	
	the restriction of \HRLUQ to instances in which all hospitals
	have 
	lower quota one or two.
	We present a 
	polynomial-time algorithm for \HRLUQtwo, which computes a stable matching
	if one
	exists. As \HRLQ is a special
	case 
	of \HRLUQ, this algorithm also applies to \HRLQ instances where all
	lower quotas are at most two. The main result of this section is the 
following:
    \begin{restatable}[]{theorem}{qq}
		\label{t:q2}
		If the lower quota of each hospital is at most two, then \HRLUQ (and
		thereby also
		\HRLQ)
		is solvable in $\mathcal{O}(n^3 m)$ time.
	\end{restatable}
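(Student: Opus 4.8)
The plan is to turn the search for a stable matching into the problem of deciding which hospitals to open, and then to resolve the lower-quota constraints through a sequence of deferred-acceptance rejection chains. By \Cref{pr:HRULQ-H'} the set $H_{\open}$ of open hospitals already determines a unique candidate matching (the resident-optimal stable matching on $H_{\open}$ ignoring lower quotas), so the whole task is to find a feasible open set. I would start by running the resident-proposing Gale--Shapley algorithm \cite{GaleS62} on the instance with all lower quotas ignored, obtaining the resident-optimal matching $M$. Every hospital left empty by $M$ may be declared closed for good: since $M$ admits no blocking pair, no acceptable resident prefers such a hospital to her match, so it can lie in no blocking coalition. The only feasibility defects are the \emph{deficient} hospitals---those with $l(h)=2$ that are open but receive a single resident $r^\ast$; for such an $h$, every acceptable resident other than $r^\ast$ strictly prefers her own match to $h$.

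The decisive use of the assumption $l(h)\le 2$ enters when repairing a deficient hospital $h$. There are only two options: close $h$ (rejecting $r^\ast$, who then proposes further down her list), or keep $h$ open and route a second resident to it. I would first show that the second option is never needed: any resident pulled into $h$ is made strictly worse off and, by resident-optimality of $M$, immediately recreates a blocking pair at the hospital she left, so ``filling'' only propagates the defect along a chain that must itself terminate in a closure. Hence every repair can be realized by closing a deficient hospital and running the induced rejection chain, i.e.\ one more Gale--Shapley pass costing $\mathcal{O}(nm)$. A closure can, however, cascade: the rejected $r^\ast$ may complete another deficient hospital, create a fresh deficient hospital, or be pushed strictly below her ranking of $h$.

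The main obstacle is coordinating these closures so that no blocking coalition appears. Here the bound $l(h)\le 2$ is essential: a blocking coalition for a closed quota-two hospital $g$ consists of exactly two residents who both prefer $g$ to their matches, so the forbidden configurations are \emph{pairwise}---at most one acceptable resident of $g$ may end up strictly below $g$---whereas a quota of three would impose an ``at most two of three'' constraint, which is exactly the source of the hardness exploited in \Cref{th::NP-compl}. I would therefore introduce a Boolean variable for each contested hospital (open versus closed) and encode feasibility, absence of blocking pairs, and the pairwise coalition constraints as implications, yielding a $2$-SAT--type instance, of which the cyclic unsatisfiable pattern of \Cref{ob:counter} is precisely an odd implication cycle. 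The crux of the correctness proof is an exchange lemma showing that the canonical resident-optimal repair opens a deficient hospital whenever \emph{some} stable matching does, so that the implications faithfully represent the stable matchings and a satisfying assignment can be read off as an open set, while unsatisfiability certifies that none exists. For the running time, there are $\mathcal{O}(n)$ deficient hospitals, resolving each triggers a rejection chain handled by one $\mathcal{O}(nm)$ deferred-acceptance pass, and re-establishing the invariants (deficiency status and the ``below $g$'' counts feeding the pairwise constraints) after each chain costs $\mathcal{O}(n^2)$ bookkeeping; together this yields the claimed $\mathcal{O}(n^3 m)$ bound. I expect the interplay between cascading closures and the pairwise coalition constraints---rather than any single step---to be the technically delicate part.
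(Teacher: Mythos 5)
Your reduction of the problem to ``which hospitals to open'' via \Cref{pr:HRULQ-H'} is a sensible start, but two steps are genuinely broken. First, the claim that every hospital left empty by the initial Gale--Shapley run ``may be declared closed for good'' is false. Take residents $r_1,r_2$ and hospitals $h_a$ with $l(h_a)=u(h_a)=2$ accepting both residents, and $h_b,h_c$ with lower and upper quota one, where $r_1: h_a \succ h_b$ and $r_2: h_c \succ h_a$. Ignoring lower quotas, Gale--Shapley matches $r_1$ to $h_a$ and $r_2$ to $h_c$, leaving $h_b$ empty; yet the unique stable matching is $\{(h_b,\{r_1\}),(h_c,\{r_2\})\}$, which \emph{opens}~$h_b$. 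Your own repair mechanism (close the deficient $h_a$, let $r_1$ propose onward) finds this matching only if $h_b$ is still present; having deleted it, the algorithm outputs the unstable matching $\{(h_c,\{r_2\})\}$. The justification you give is valid only for the current matching $M$: once closures push residents down their lists, hospitals nobody wanted before are exactly where new blocking coalitions (and new matches) arise.

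Second, and more fundamentally, the 2-SAT encoding that carries your entire correctness argument does not exist in the form described. The coalition constraint for a closed quota-two hospital $g$ is pairwise \emph{in the residents}, not in your Boolean hospital variables: it says that at least one of the two residents accepting $g$ is matched to an open hospital she weakly prefers to $g$, and expanding ``is matched to'' in terms of open/closed decisions gives a disjunction over all hospitals above $g$ in her list, filtered by whether she would actually be admitted there---which depends, through upper quotas and hospital preferences, on where the other residents go, hence on the status of many hospitals at once. Already in the \textsc{Stable Roommates} special case of \HRLUQtwo (\Cref{sec:irving}) these constraints are wide clauses; the known 2-SAT connections for \textsc{Stable Roommates} are built on top of the rotation poset, not read off the instance, which is precisely why Irving's algorithm needs its rotation machinery. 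Consequently your ``exchange lemma''---that the canonical repair opens a deficient hospital whenever some stable matching does---is not a deferrable technicality; it \emph{is} the theorem, and it is what the paper's algorithm spends its effort on via the propose-and-reject phase, the hospital-splitting phase, and the elimination of generalized rotations (\Cref{lem:rotation,lem:forward}). A small symptom of the encoding not being worked out: \Cref{ob:counter} is not an ``odd implication cycle'' in the 2-SAT sense (2-SAT unsatisfiability is about a variable and its negation sharing a strongly connected component, not about odd cycles); it is the odd-cycle obstruction of \textsc{Stable Roommates}, which pairwise implications between open/close variables cannot express.
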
 We split the proof of this result in two sections. In the 
first section, we give a description of our algorithm. In the second section, 
we prove its correctness.
	
	At the end of the section, we also derive a weakened version of the Rural 
	Hospitals 
	Theorem~\cite{RePEc:ucp:jpolec:v:92:y:1984:i:6:p:991-1016,10.2307/1913160} 
	for \HRLUQtwo, showing that every stable matching in a \HRLUQtwo instance  
	matches the same set of 
	resident and opens the same number of hospitals.
	
	In the following, we refer to all 
	hospitals with lower quota one as \emph{quota-one hospitals} and to all 
	hospitals 
	with lower quota two as \emph{quota-two hospitals}.
	
	\subsection{Warm-up: Irving's Algorithm for Stable Roommates}
	\label{sec:irving}
	
	As a warm-up, we first consider the case in which each hospital with lower 
	quota two finds at most two residents acceptable and each quota-one 
	hospital also has upper quota one.
	Such an instance of \HRLUQtwo can be reduced to a \textsc{Stable Roommates} 
	instance.
	The \textsc{Stable Roommates} problem takes as input a set of agents, where 
	each agent has preferences over an acceptable subset of the other agents, 
	and the goal is to find a stable matching where each agent is matched to at 
	most one other agent and stability is defined as the absence of a pair of 
	agents preferring each other over their partner in the matching.
	The reduction works as follows.
	We insert an agent~$a_r$ for each resident $r\in R$
	 and an agent $a_h$ for each quota-one hospital $h\in H$. For resident~$r\in R$, we construct the preferences of $a_r$ from the preferences of
	 $r$ by replacing each quota-one hospital $h\in H$ by $a_h$ and each 
	 quota-two hospital $h\in H$ which accepts $r$ and one other resident~$r'$
	 by the agent $a_{r'}$ corresponding to resident $r'$. Moreover, for each 
	 quota-one hospital~$h\in H$, we construct the preferences of $a_h$ from
	 the preferences of $h$ by replacing each resident~$r\in R$ by the
	 corresponding agent $a_r$ (note that parts of this reduction are 
	 ``inverse'' to 
	 the one sketched at the beginning of \Cref{sec:NP-hard}).
	 It is easy to see that any stable matching for the \HRLUQtwo instance 
	 induces a stable matching for the \textsc{Stable Roommates} instance and 
	 vice versa.
	Thus, \HRLUQtwo is a generalization of \textsc{Stable Roommates}, and our algorithm to solve \HRLUQtwo will be a generalization of Irving's algorithm~\cite{DBLP:journals/jal/Irving85} to solve \textsc{Stable Roommates}.
	In the following, we give a brief description of Irving's algorithm as a warm-up.
	The algorithm consists of two phases:
	
	In the first phase of Irving's algorithm, every agent proposes to the first 
	agent on its preference list.
	Whenever an agent receives multiple proposals, it rejects all proposals but the best.
	If agent $a$ rejects the proposal of $a'$, then the acceptability of $a$ 
	and $a'$ is deleted, and $a'$ proposes to its most preferred agent that is 
	still contained in its preferences unless its preferences got empty.
	This phase ends when every agent either holds and receives exactly one proposal or has empty preferences.

    In the second phase of Irving's algorithm, the algorithm searches for a 
    certain substructure called \emph{rotation}.
    A rotation is a sequence of pairs of agents $(a_1, b_1), \dots, (a_k , 
    b_k)$ such that $a_i$ is the last agent on $b_i$'s preferences, and 
    $b_{i+1}$ is the second agent on $a_i$'s preferences (where all indices are 
    taken modulo~$k$).
    Note that, by the definition of Phase 1, one can show that $b_i$ is the 
    first agent in $a_i$'s preferences.
    Whenever such a rotation is found, it is \emph{eliminated}, meaning that $b_i$ deletes the mutual acceptability to all agents which are after~$a_{i-1}$ in its preferences (note that this always includes $a_i$).
    The algorithm successively eliminates rotations.
    If the preferences of an agent become empty through the elimination of a rotation, then the algorithm concludes that there does not exist a stable matching.
    Otherwise, at some point the preferences of each agent contain at most one 
    agent inducing a stable matching.

    Note that the elimination of a rotation can also be interpreted as deleting 
    the mutual acceptabilities of $a_i$ and $b_i$ for each $i\in [k]$ and 
    restarting Phase~1 afterwards.
    This interpretation is closer to our algorithm for \HRLUQtwo.
    
    Our algorithm generalizes Irving's algorithm as follows.
    In the first phase, residents propose and reject proposals as in Irving's algorithm.
    However, a hospital~$h$ proposes to the first $\uq (h)$ hospitals on its 
    preferences, and only rejects a proposal if it already received $\uq(h)$ 
    proposals.
    Furthermore, quota-two hospitals only start proposing after they received a proposal.

    In the second phase, we search for some substructures (which are 
    generalizations of rotations).
    However, we have to take into account that quota-two hospitals are not 
    necessarily open if they received only one proposal.
    Therefore, we avoid that possibly closed quota-two hospitals are
    contained in 
    a rotation by replacing quota-two 
    hospitals occurring in a rotation by the first or second resident on their 
    preferences.

    Between the first and second phase (we call this later Phase~1b), we replace each quota-two hospital which has to be open (i.e., which received at least two proposals) by several quota-one hospitals to simplify the presentation of Phase~2.

    We now describe our algorithm for \HRLUQtwo\ in detail.

	\begin{algorithm}[t]
		\caption{Algorithm for \HRLUQtwo (high-level 
		description)}\label{alg:euclid}
		\begin{algorithmic}[1]
			\Input{An \HRLUQtwo instance $\mathcal{I}$}
			\Output{A stable matching in $\mathcal{I}$ or NO if $\mathcal{I}$ 
			does not 
				admit a stable matching.}
			\State Apply \textbf{Phase 1a - Propose\&Reject}
			
			\State $S \gets \{\text{residents with non-empty 
				preferences}\}$\Comment{Initialization}
			
			\While{there exists some resident $r$ with at least two hospitals 
			on her 
				preferences}
			\State Apply \textbf{Phase 1a - Propose\&Reject}
			\While{a hospital holding at least two proposals 
				exists}
			\For{\textbf{each} hospital $h$ holding at least two proposals}
			\State Split $h$ into 
			$\uq(h)$ hospitals $h^1$, \dots, $h^{\uq(h)}$ \Comment{Phase 1b}
			\EndFor
			\State Apply \textbf{Phase 1a - Propose\&Reject}
			\EndWhile
			
			\If{there exists some resident $r$ with at least two hospitals on 
			her 
				preferences}
			\State Find a generalized rotation $R$. \Comment{Phase 2}
			\State Eliminate $R$.
			\EndIf
			\EndWhile
			
			\If{all residents from $S$ have exactly one hospital on their 
			preferences 
				left}
			\State \textbf{return} matching $M$ that matches every resident $r$ 
			with 
			non-empty preferences to the 
			hospital from $\mathcal{I}$ corresponding to the remaining hospital 
			on her preferences.
			\EndIf
			\State \textbf{return} NO 
		\end{algorithmic}
		\label{alg} 
	\end{algorithm}
    \subsection{Description of the Algorithm for 
\HRLUQtwo
    }
	
    Similar to Irving's algorithm, our algorithm is split into two phases, 
    where 
    the first 
	phase is again split into Phase 1a and Phase 1b.
	Phase 1a identifies hospital-resident pairs which cannot be part of a 
	stable matching using a propose-and-reject approach.
	Subsequently, for each such hospital-resident pair $(r,h)$,  hospital $h$ is deleted
	from the preferences of $r$ and vice versa.
	Furthermore, Phase 1a identifies some quota-two hospitals which 
	are open in every stable matching.
	Phase 1b further simplifies the instance by replacing quota-two hospitals 
	that 
	are open in every stable matching by multiple copies of this hospital with lower quota one.
	Phase~1a and Phase 1b are applied repeatedly until no hospital from which 
	we know that it is open in every stable matching exists. After that, in 
	Phase 2, we identify 
	substructures which we call ``generalized 
	rotations'' and subsequently eliminate them by deleting the acceptability 
	of some hospital-resident pairs. 
    \Cref{alg} gives a high-level description of our algorithm.
	While Phase~1 keeps the number of stable matchings identical, 
Phase 2 may 
	reduce the number of stable matchings in the instance, but still guarantees 
	that at least one stable matching survives (if there exists one in the
	original instance).

	The algorithm applies Phase 1 and Phase 2 alternately
	until every resident has at most one hospital on her preferences.
	The algorithm returns NO if, after the initialization, the 
preferences of a resident got empty, as one can show that
all residents having non-empty preferences after the initialization are matched 
in every 
stable 
	matching. Otherwise, the algorithm
	constructs a stable matching where all
	residents with empty preferences are unmatched and all residents with
	non-empty preferences are matched to the hospital on their preference
	list (if this hospital was created by splitting a hospital $h$, then the resident is matched to $h$).

	We now describe Phases 1a, 1b, and 2 in more detail.
	We start with an
	observation that allows us to assume that quota-one hospitals have also upper
	quota one: 
	\begin{observation}
		Given an \HRLUQ instance, replacing all quota-one
		hospitals~$h\in H$ by $u(h)$ copies
		$h^1,\dots, h^{u(h)}$ each with lower and upper quota one and with 
		the same preferences as $h$ and replacing $h$ by $h^1\succ \dots \succ 
		h^{u(h)}$ in the preferences of all residents results in an equivalent 
		instance.
	\end{observation}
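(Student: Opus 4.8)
The plan is to prove the equivalence by establishing, for a single quota-one hospital $h$, a correspondence between the stable matchings of the original \HRLUQ instance $\mathcal{I}$ and those of the instance $\mathcal{I}'$ obtained by replacing $h$ with the copies $h^1, \dots, h^{u(h)}$; the full statement then follows by applying this step to each quota-one hospital in turn. In both directions one must track blocking pairs \emph{and} blocking coalitions, and the single point needing care is that an \emph{undersubscribed} $h$ in $\mathcal{I}$ corresponds in $\mathcal{I}'$ to some open copies together with \emph{closed} copies of lower quota one; hence a would-be blocking pair with an undersubscribed $h$ on the $\mathcal{I}$-side must be matched against a blocking coalition opening a closed copy on the $\mathcal{I}'$-side.

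For the direction from $\mathcal{I}$ to $\mathcal{I}'$, given a stable matching $M$ with $M(h) = \{r_1, \dots, r_k\}$ where $r_1 \succ_h \dots \succ_h r_k$, I match $r_i$ to the copy $h^i$, leave $h^{k+1}, \dots, h^{u(h)}$ closed, and keep every other pair unchanged; the resulting $M'$ is feasible since each copy receives at most one resident. I then trace every potential instability in $\mathcal{I}'$ back to one in $\mathcal{I}$: the sorted assignment rules out any block between the copies and the residents of $M(h)$; a singleton blocking coalition opening a closed copy $h^j$ with $j>k$ would require a resident preferring $h$ to her partner while $h$ is undersubscribed, i.e.\ a blocking pair in $\mathcal{I}$; and a blocking pair with an open copy $h^j$ would require a resident whom $h$ ranks above $r_j$, again a blocking pair in $\mathcal{I}$. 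As $M$ is stable, none exist, so $M'$ is stable.

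For the converse, I collapse a stable $M'$ by matching to $h$ exactly those residents assigned to an open copy; feasibility is immediate since at most $u(h)$ copies are open and each has lower quota one. I establish stability contrapositively: a blocking pair $(r,h)$ in $\mathcal{I}$ is witnessed in $\mathcal{I}'$ either by a blocking coalition opening a closed copy (when $h$ is undersubscribed) or by a blocking pair with the open copy whose resident $h$ ranks below $r$ (when $h$ is full), while a blocking coalition opening a closed $h$ yields a blocking coalition opening a closed copy. The structural fact making every remaining case immediate is that the block $h^1 \succ \dots \succ h^{u(h)}$ occupies exactly the single rank of $h$ in each resident's list, so every hospital other than these copies lies entirely above or entirely below the block; consequently blocking pairs and coalitions involving any other hospital transfer verbatim between $\mathcal{I}$ and $\mathcal{I}'$. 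The only real obstacle is the symmetric bookkeeping between the open/closed status of the copies and the undersubscribed/full status of $h$, together with the lower-quota-one coalitions; once the sorted assignment is fixed in the forward direction and the collapse is analyzed contrapositively in the backward direction, the remaining verification is routine.
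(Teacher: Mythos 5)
Your proof is correct. Note that the paper itself states this observation \emph{without} proof, so there is no official argument to compare against line by line; the closest analogue is \Cref{lem:splitting} (Phase~1b), where quota-two hospitals holding two proposals are split into quota-one copies, and your argument is essentially that proof transplanted to the simpler quota-one setting: map a stable matching $M$ with $M(h)=\{r_1,\dots,r_k\}$, sorted by $h$'s preferences, to the matching assigning $r_i$ to $h^i$, and collapse in the other direction. Both arguments hinge on exactly the two facts you isolate — the sorted assignment excludes blocking pairs within the block of copies, and since the copies occupy the single rank of $h$ in every resident's list, all blocking structure involving other hospitals transfers verbatim. One small case you gloss over: in the forward direction, a singleton coalition opening a closed copy $h^j$ traces back to a blocking \emph{pair} in $\mathcal{I}$ only when $h$ is open and undersubscribed in $M$; when $h$ is closed (i.e., $k=0$) it traces back to a blocking \emph{coalition} $(h,\{r\})$ instead. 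This is harmless — the paper itself adopts the convention of calling such singleton coalitions at quota-one hospitals blocking pairs — but a fully self-contained write-up should include that case.
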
 	
	
	\subsubsection{Phase 1a - Propose\&Reject}
	In Phase 1a, residents and hospitals propose
	to one another. Residents always propose to hospitals and hospitals always 
	to residents. If a resident $r\in R$ proposes to a hospital $h\in H$, then $h$ can either \emph{accept} or \emph{reject} the proposal from~$r$. We
	say that a hospital~$h$ \emph{holds} a proposal $r$ if~$r$ proposed to $h$
	and $h$ 
	did not reject the proposal (until now). We say that a resident~$r$
	(currently)
	\emph{issues} a proposal if there exists a hospital~$h$ that holds 
	the 
	proposal 
	$r$. The notation also applies if the roles of residents and hospitals are 
	swapped. Considering quota-two hospitals, we distinguish between
	\emph{activated} and \emph{deactivated} hospitals.
	Activated hospitals propose to residents, while deactivated hospitals do not.
	Initially, all quota-two
	hospitals
	are deactivated. 
	
	\paragraph{Algorithm (Phase 1a).} We proceed in multiple rounds. In 
	each
	round, a resident or quota-one hospital with non-empty preferences that
	does not 
	currently issue a proposal or an activated quota-two hospital is selected.
  If a resident or quota-one hospital is selected, then it
	proposes to the first hospital or resident on its preference list. If an 
activated quota-two hospital~$h$ is selected, then the
	hospital
	proposes to the first $u(h)$ residents on its preference list
	unless $h$ received exactly one proposal and this one proposal comes from a
	resident~$r$ which is among the first $u(h)$ residents in $h$'s
	preferences: In this case,~$h$ only proposes to the first $u(h)-1$ 
	residents that are not~$r$.
	
	If a resident or a quota-one hospital receives a proposal, then it
	\emph{accepts} the proposal if it does not hold a proposal or if it prefers the new
	proposal to the one it currently holds. Similarly, it \emph{rejects} a
	proposal if it
	either already holds or later receives a better proposal. A quota-two
	hospital $h$ \emph{accepts} a proposal $r$ if it does not hold 
	$u(h)$ proposals it prefers to $r$. It \emph{rejects} a proposal $r$ if it  
	holds or at some point receives $u(h)$ proposals it prefers to $r$, or if
	the hospital has been rejected by all but one resident on its preference 
	list. If 
	an agent $a$ proposes to an agent $a'$ and $a'$ rejects the proposal, then we
	delete $a'$ from the preference list of $a$ and $a$ from the 
	preference list of $a'$. 
	
	A quota-two hospital $h$ gets activated if it receives a proposal or if one 
of 
	its 
	proposals gets rejected. If $h$ currently 
	holds exactly one proposal by one of its~$u(h)$ most preferred residents 
	$r$, then it gets deactivated if it currently issues 
	$u(h)-1$ proposals
	or has proposed to all residents on its preference list except $r$. 
	Otherwise, it gets deactivated if it currently issues $u(h)$ proposals or 
	has proposed to all residents on its preference list.
	
	At the end of Phase 1a, we delete from the 
	preferences of all 
	quota-one hospitals and residents holding 
	a proposal all agents to which they 
	prefer the held proposal. Subsequently, we restore the mutual acceptability 
	of agents
	by deleting for each agent $a$ an agent $a'$ from its preference list
	if $a$ does not appear on the preference list of $a'$. Finally, we delete 
	all
	quota-two hospitals with at most one resident on their preference list from 
	the instance and the preferences of all~agents. 

	\medskip

	The intuitive
	reasoning behind Phase 1a is the following. If an agent 
	rejects the proposal of another agent, then the two can never be matched to
	each other in a stable matching. Thereby, no agent can be matched better 
	than the agent it proposes to.
	Thus, any agent receiving a proposal can be sure that it does not end up 
	worse than the proposal it currently holds in a stable matching, since it 
	forms a blocking pair with the agent issuing its proposal otherwise. 
	After Phase 1a, each resident and quota-one hospital issues a 
	proposal to the first agent on its preference list and holds a proposal 
	from the last agent on its preference list. 
	The formal correctness proof of this phase presented in \Cref{a:p1a}
	consists of proving that, in a stable matching, no agent can be matched to 
an agent which was 
	deleted from its preferences during Phase 1a and that 
	all matchings that are stable after applying Phase 1a are also stable 
	before applying Phase 1a. 
	
    \begin{figure}[t]
	\begin{minipage}{0.49\textwidth}
	\begin{tikzpicture}
\node (I) at (0, 0) {\textbf{I}};
      \node (xshift) at (2.5,0) {};
      \node (yshift) at (0, -0.5) {};
      \node (anchor1) at (0.5,-0.) {};
      \node (anchor2) at ($(anchor1) + (yshift)$) {};
      \node (anchor3) at ($(anchor2) + (yshift)$) {};
	  \node[anchor=west] (h1) at (anchor1) {$h_1 : r_3 \succ r_1$};
	  \node[anchor=west] (r1) at ($(anchor1) + (xshift)$) {$r_1 : h_1 \succ h_2$};
	  \node[anchor=west] (h2) at ($(anchor1) + (yshift)$) {$h_2 : r_1 \succ r_2$};
	  \node[anchor=west] (r2) at ($(anchor2) + (xshift)$) {$r_2 : h_4 \succ h_2 \succ h_3$};
	  \node[anchor=west] (h3) at ($(anchor2) + (yshift)$) {$h_3 : r_2 \succ r_3$};
	  \node[anchor=west] (r3) at ($(anchor3) + (xshift)$) {$r_3 : h_3 \succ h_1 \succ h_4$};
	  \node[anchor=west] (h4) at ($(anchor3) + (yshift)$) {$h_4  : r_2 \succ r_3$};
	 \end{tikzpicture}
	\end{minipage}
	\hfill
	\begin{minipage}{0.49\textwidth}
	\begin{tikzpicture}
\node (I) at (0, 0) {\textbf{II}};
      \node (xshift) at (2.5,0) {};
      \node (yshift) at (0, -0.5) {};
      \node (anchor1) at (0.5,-0.) {};
      \node (anchor2) at ($(anchor1) + (yshift)$) {};
      \node (anchor3) at ($(anchor2) + (yshift)$) {};
	  \node[anchor=west] (h1) at (anchor1) {$h_1 : r_3 \succ r_1$};
	  \node[anchor=west] (r1) at ($(anchor1) + (xshift)$) {$r_1 : h_1 \succ h_2$};
	  \node[anchor=west] (h2) at ($(anchor1) + (yshift)$) {$h_2 : r_1 \succ r_2$};
	  \node[anchor=west] (r2) at ($(anchor2) + (xshift)$) {$r_2 : h_2 \succ h_3$};
	  \node[anchor=west] (h3) at ($(anchor2) + (yshift)$) {$h_3 : r_2 \succ r_3 $};
	  \node[anchor=west] (r3) at ($(anchor3) + (xshift)$) {$r_3 : h_3 \succ h_1$};
	  \node[anchor=west] (h4) at ($(anchor3) + (yshift)$) {$h_4 :$};
\end{tikzpicture}
	\end{minipage}

	 \caption{An example for Phase 1.
	 Hospital $h_1 $ is a quota-one hospital, while the other three 
	 hospitals are quota-two hospitals with upper quota two.
	 In the beginning (see instance I), each resident and $h_1$ propose to the 
	 first agent on their preferences. All agents accept their received 
	 proposal.
	 Since $h_4$ receives a proposal from $r_2$ and $h_3$ receives a proposal 
	 from $r_3$, they get activated and 
	 propose to $r_3$ respectively $r_2$.
	 Resident $r_3$ rejects the proposal from $h_4$, as she holds the proposal 
	 of $h_1$, while $r_2$ accepts the proposal of $h_3$.
	 Then, the preferences of $h_4$ contain only one resident, and thus $h_4 $ 
	 rejects the proposal from $r_2$.
	 Consequently, $r_2$ proposes to $h_2$, which activates $h_2$.
	 Subsequently, $h_2$ proposes to $r_1$, who accepts the proposal. As no 
	 quota-two
	 hospital received two proposals, no hospital gets split and Phase 1 ends.
	 The resulting instance is depicted as instance II.
	 }
	 \label{fig:example-p1-small}
	\end{figure}

	\subsubsection{Phase 1b - Split Hospitals} In this phase, we identify 
	quota-two hospitals that are open in every stable matching 
	and replace them by quota-one hospitals: 
	
	\paragraph{Algorithm (Phase 1b).} We replace each quota-two hospital $h$ 
	holding at least two proposals by~$u(h)$ hospitals $\hone$,
	\dots, $h^{\uq (h)}$ with lower and upper quota one 
	with the same preferences
	as~$h$. In the preferences of all residents, $h$ is replaced by $h^1 \succ 
	\dots \succ h^{\uq (h)}$.
	
	\medskip 
	Phase~1b allows us to replace quota-two hospitals with at least two 
	proposals by quota-one hospitals.
	This makes Phase~2 simpler, as we do not have to distinguish between quota-two hospitals receiving at most one proposal and quota-two hospitals receiving at least two proposals.
	
	We show in \Cref{a:p1b} that there exists
    a one-to-one correspondence between stable matchings before and after the
    application of Phase 1b.
    To summarize,  Phase 
	1 consists of 
	applying Phase 1a and Phase 1b as long as at least one hospital was split 
	in the last execution of Phase 1b.
	An example for the execution of Phase 1 can be found in 
	\Cref{fig:example-p1-small}.
	
	\subsubsection{Phase 2 - Eliminate Generalized Rotations}
	We introduce some notation for the definition of a generalized rotation. We
	call a quota-two hospital with more than two residents on its preferences 
	\emph{\fl}
	and all other 
	quota-two hospitals 
	\emph{\infl}.
	Note that while we already know which
	residents will be assigned to an open \infl hospital (as the number of 
	residents on its preferences is equal to its lower quota), this is not 
clear for 
	open \fl hospitals.
	Given a resident $r$, we denote by~$h(r) $ the first hospital on $r$'s
	preferences.
	If $h(r)$ is flexible, then we define $g(r):= h(r)$. 
	Otherwise, we define $g(r)$ to be the second hospital on $r$'s preference 
	list.
	
  A \emph{generalized rotation} is a sequence
	$(a_1, b_1), \dots, 
	(a_k, b_k)$ consisting of residents and quota-one hospitals with $a_i\neq 
	a_j$ for all $i\neq j$ 
	such that (all following indices are taken modulo $k$):
	
	\medskip
	\noindent \textbf{Relationship between $a_i$ and $b_{i+1}$}: 
	\begin{description}
		  \setlength\itemsep{-0.3em}
		\item[\ab{1}] If $a_i$ is a quota-one hospital, then $b_{i+ 1}$ is 
		the 
		second resident on $a_i$'s preferences. 
		\item[\ab{2}] If $a_i$ is a resident and $h(a_i)$ is a \fl hospital, 
		then 
		$b_{i+1}$ is the second-most preferred
		\item[] \hspace*{1.4cm} resident on $h (a_i)$'s preferences who is not $a_i$. \smallskip
		\item[]\hspace*{-0.3cm} If $a_i$ is a resident and $h(a_i)$ is an
\infl 
		hospital or a quota-one hospital and \smallskip
			\item[\ab{3a}] \hspace*{0.5cm} if
			$g(a_i)$ is a quota-one hospital, then $b_{i+1}:=g(a_i)$.
			\item[\ab{3b(i)}] \hspace*{0.000cm} if $g(a_i)$ is a quota-two 
			hospital holding a 
			proposal $r$, then $b_{i+1} :=r$.
			\item[\ab{3b(ii)}] if $g(a_i)$ is a quota-two 
			hospital which does 
			not hold a proposal, then $b_{i+1}$ is~$g(a_i)$'s
			\item[] \hspace*{2.25cm}  most preferred
			resident who is not $a_i$.
	\end{description}

	\noindent \textbf{Relationship between $b_i$ and $a_i$}:
	\begin{description}
		\setlength\itemsep{-0.3em}
		\item[\ba{1}] If $b_i$ is a quota-one hospital, then $a_{i}$ is the 
		last resident on $b_i$'s preferences.
		\item[\ba{2}] If $b_i$ is a resident and the last hospital $h$ on 
		$b_i$'s 
		preferences is of quota one, then $a_{i} := h$.
		\item[\ba{3}] If $b_i$ is a resident and the last hospital $h$ on 
		$b_i$'s 
		preferences is of quota two, then $a_{i}$ is
		\item[]\hspace*{1.1cm} the resident with $h$ as top-choice,
		i.e., the resident proposing to 
		$h$.
	\end{description}
	
	Note that residents and quota-one hospitals are treated the same as ordinary agents in 
	Irving's algorithm, while \infl hospitals are treated as ``edges'' between 
	two residents.

	\begin{figure}[t]
	\begin{minipage}{0.49\textwidth}
	\begin{tikzpicture}
\node (I) at (0, 0) {\textbf{I}};
      \node (xshift) at (2.5,0) {};
      \node (yshift) at (0, -0.5) {};
      \node (anchor1) at (0.5,-0.) {};
      \node (anchor2) at ($(anchor1) + (yshift)$) {};
      \node (anchor3) at ($(anchor2) + (yshift)$) {};
      \node[anchor=west] (h1) at (anchor1) {$h_1  : r_3 \succ r_1$};
      \node[anchor=west] (r1) at ($(anchor1) + (xshift)$) {$r_1 : h_1 \succ h_2$};
	  \node[anchor=west] (h2) at ($(anchor1) + (yshift)$) {$h_2 : r_1 \succ r_2$};
	  \node[anchor=west] (r2) at ($(anchor2) + (xshift)$) {$r_2 : h_2 \succ h_3$};
	  \node[anchor=west] (h3) at ($(anchor2) + (yshift)$) {$h_3 : r_2 \succ r_3$};
	  \node[anchor=west] (r3) at ($(anchor3) + (xshift)$) {$r_3 : h_3 \succ h_1$};
	 \end{tikzpicture}
	\end{minipage}
	\hfill
	\begin{minipage}{0.49\textwidth}
	\begin{tikzpicture}
\node (I) at (0, 0) {\textbf{II}};
      \node (xshift) at (2.5,0) {};
      \node (yshift) at (0, -0.5) {};
      \node (anchor1) at (0.5,-0.) {};
      \node (anchor2) at ($(anchor1) + (yshift)$) {};
      \node (anchor3) at ($(anchor2) + (yshift)$) {};
	  \node[anchor=west] (h1) at (anchor1) {$h_1 : r_3$};
	  \node[anchor=west] (r1) at ($(anchor1) + (xshift)$) {$r_1 :  h_2$};
	  \node[anchor=west] (h2) at ($(anchor1) + (yshift)$) {$h_2 : r_1 \succ r_2$};
	  \node[anchor=west] (r2) at ($(anchor2) + (xshift)$) {$r_2 :  h_2$};
	  \node[anchor=west] (h3) at ($(anchor2) + (yshift)$) {$h_3 : r_3 $};
	  \node[anchor=west] (r3) at ($(anchor3) + (xshift)$) {$r_3 : h_3 \succ h_1$};
	 \end{tikzpicture}
	\end{minipage}

	 \caption{An example for Phase 2.
	 Hospital $h_1 $ is a quota-one hospital, while the other two hospitals are 
	 quota-two hospitals with upper quota two. Initially (see 
	 instance I), $h_1$ holds the proposal of $r_1$, $h_2$ the proposal
	 of $r_2$, and~$h_3$ the proposal of $r_3$.
	 The instance admits the following generalized rotation:
	 $(r_1, h_1), (r_3, r_2)$. Note that for $b_1=h_1$ case BA-1 applies, for
	 $a_1=r_1$ case AB$^+$-3b(i) applies, for $b_2=r_2$ case BA-3 applies, and 
	 for~$a_2=r_3$
	 case AB$^+$-3a applies.
	 Eliminating this generalized rotation results in the instance II.
	 Afterwards, Phase~1 is applied, and $h_3 $ rejects $r_3$ because $h_1$ has only $r_3$ in its preferences.
	 The algorithm ends with the stable matching $\{(h_1 , r_3), (h_2, \{r_1, r_2\})\}$.
	 }
	 \label{fig:example-p2-small}
	\end{figure}
	
	\paragraph{Algorithm (Phase 2).}
	If the preference list of every resident contains at most one agent, then these preference lists correspond to a stable matching.
	Otherwise, Phase 2 computes a generalized rotation by starting with an 
	arbitrary
	resident whose
	preference list has length at least two as $a_1$ and subsequently applying 
	the 
	relationships depicted above to find $b_2$, $a_2$, \dots~until this
	procedure cycles and a generalized rotation has been found.
	Subsequently, we \emph{eliminate} the found rotation by deleting, for
	all $i\in
	[k]$, the mutual 
	acceptability of $a_i$ and $b_i$ if one of them is a hospital, and 
	otherwise the mutual acceptability of hospital $h(a_i)$ and $b_i$.
	After that, Phase 1 is applied again to the resulting instance.
	\medskip
	
	An
	example for 
	Phase 2 can
	be found in \Cref{fig:example-p2-small}.
    The correctness proof of this phase presented in \Cref{a:p2} starts with 
    showing that if there exists a stable matching in
	the instance before the elimination of a generalized rotation, then there also exists a
	stable matching after its elimination (\Cref{lem:rotation}). For the other 
	direction, we show 
	that each matching that is stable after the elimination of a generalized 
	rotation, and matches all residents with non-empty preferences (for whom 
	it can be proven that they have to be matched in any stable matching), is
	also stable before its elimination (\Cref{lem:forward}). 

	In a ``classical'' rotation $(a_1, b_1), \dots,
	(a_k, b_k)$ for \textsc{Stable 
	Roommates}~\cite{DBLP:journals/jal/Irving85}, for all $i\in [k]$, 
	agent~$b_{i+1}$ is
	the second 
	agent on the preference list of~$a_{i}$ and~$a_i$ is the last agent on 
	the preference list of~$b_i$, which implies that $b_i$ is the top-choice of 
	$a_i$. Here, eliminating a rotation consists of deleting the mutual 
	acceptability of $a_i$ and~$b_i$ for all $i\in [k]$ and results in an 
	instance that admits a stable matching if the original instance admits one. 
	Part of the reason for this is that if we assume 
	that 
	there exists a stable matching $M$ which contains the pairs $(a_i, b_i)$ 
	for all 
	$i 
	\in [ k]$, then the matching~$M'$ arising by replacing these pairs by 
the pairs $(a_i,
	b_{i+1})$ is also stable:
	Each agent~$b_i$ prefers~$M'$ to~$M$, and agent $a_i$ can only form a 
	blocking pair with $b_i$, implying that no blocking pair has been 
	introduced.
	However, applying this classical definition to \HRLUQtwo, the 
	observation from above does not longer hold (for example, if a \infl 
	quota-two hospital would appear as $a_i$, then a feasible matching must 
	match it to both $b_i$ and $b_{i+1}$ or close it).
	Therefore, we generalize the definition of a rotation in a way such that no 
	quota-two 
	hospital appears in a generalized rotation, while keeping the intuition 
	that $a_i$ still corresponds to $b_i$'s ``least preferred option'' and 
	$b_{i+1}$ to $a_i$'s ``second-most preferred option'':
	For~\ba{1} and~\ba{2}, $a_i$ is still $b_i$'s least preferred agent, while 
	for~\ba{3} following the classical 
	definition, it would be necessary to set $a_i$ to a quota-two hospital~$h$. 
	Instead, we set $a_i$ to be the resident proposing to~$h$, which can be 
	interpreted as matching $b_i$ to~$h$ together with $a_i$. For the
	relationship between~$a_i$ and $b_{i+1}$, for \ab{1}, $b_{i+1}$ is still 
	the second agent in the preferences of $a_i$. For \ab{2}, it 
	is necessary to recall that for a 
	\fl hospital $h$ there exist multiple possibilities which residents
	are matched to~$h$ in a stable matching. The ``most preferred option'' of 
	a resident $r$ is to be 
	matched to $h(r)$ together with $h(r)$'s most preferred remaining resident, 
	while 
	her ``second-most preferred option'' is to be matched to $h(r)$ with 
	$h(r)$'s
	second-most preferred remaining resident. If~$h(r)$ is \infl, then 
	there
	exists only 
	one possibility for a resident $r$ to be matched to~$h(r)$ so her 
	second-most
	preferred alternative is to be matched to $g(r)$ with the necessary case 
	distinctions made in \ab{3a}, \ab{3b(i)}, and \ab{3b(ii)}, which 
	corresponds to treating~$h(r)$ like an edge between $r$ and the other 
	resident acceptable to $h(r)$.
	\subsection{Proof of Correctness of the Algorithm}
	In the following section, we prove the correctness of the presented 
	algorithm. We start by proving the correctness for each phase separately, meaning
	that there exists a stable
	matching in the instance after the execution of a phase if and only if
	there exists a stable matching before the execution of the phase.
	The correctness of both phases together then proves the correctness of
	the full algorithm.
		\subsubsection{Phase 1a}\label{a:p1a}
	We start proving the correctness of Phase 1a by showing that if an agent deletes another agent from its
	preference list, then they cannot be matched in a stable matching. To do 
	so, we iterate over all possible cases in Phase 1a where the 
	preferences of an agent get modified.

	In the following, we slightly abuse notation by calling a blocking 
	coalition $(h, \{r\})$ to open a quota-one hospital $h$ also a blocking 
	pair.
	\begin{restatable}{lemma}{firstphase}
		\label{lem:first-phase}
			If an agent $a^*$ rejects the proposal of another agent $a'$, then 
		$a^*$ cannot be matched to $a'$ in any stable matching.
	\end{restatable}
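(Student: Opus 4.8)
The plan is to prove the statement by induction on the order in which rejections occur during Phase~1a, showing that every rejection is \emph{sound}: if $a^*$ rejects $a'$, then no stable matching pairs them. I assume all earlier rejections are sound and consider the current one, supposing for contradiction that some stable matching $M$ matches $a^*$ and $a'$. Every case is driven by one consequence of the induction hypothesis: whenever an agent $b$ proposes to an agent $c$, agent $c$ is the most preferred agent still on $b$'s list, so every agent $b$ strictly prefers to $c$ has already (soundly) rejected $b$; hence $M$ matches $b$ to $c$ or to something $b$ likes no better. I then perform a case analysis on the type of the rejector $a^*$ and on the reason for the rejection.

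First I would treat the case where $a^*$ is a resident or a quota-one hospital. Then $a^*$ rejects $a'$ only because it holds or just received a proposal from some agent $a''$ with $a''\succ_{a^*}a'$. If $a''$ is itself a resident or quota-one hospital, the fact above gives that $M$ matches $a''$ no better than $a^*$; since $M(a^*)=a'\neq a''$, we get $a''\notin M(a^*)$, so $a''$ strictly prefers $a^*$ to $M(a'')$ while $a^*$ strictly prefers $a''$ to $a'$. This makes $(a^*,a'')$ a blocking pair (or, if the hospital among them is closed in $M$, a blocking coalition of size one, which the paper also calls a blocking pair), contradicting stability.

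Next, the quota-two rejector cases. When $a^*$ is a quota-two hospital $h$ rejecting a resident $a'=r$, either (i) $h$ holds $u(h)$ proposals from residents it prefers to $r$, or (ii) $h$ has been rejected by all but one of its residents. In (ii) those resident rejections are sound by induction, so at most one resident can be matched to $h$ in $M$; as $l(h)=2$ this forces $h$ closed, contradicting $r\in M(h)$. In (i), assuming $r\in M(h)$ the hospital $h$ is open, and the $u(h)$ preferred proposers cannot all lie in $M(h)$ (together with $r$ they would exceed $u(h)$), so some preferred proposer $r_j\notin M(h)$ strictly prefers $h$ to $M(r_j)$ and forms a blocking pair with $h$.

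Finally --- the step I expect to be the main obstacle --- $a^*$ is a resident whose better proposer $a''=h$ is a quota-two hospital, so $a^*$ lies among $h$'s top residents and prefers $h$ to $M(a^*)=a'$. If $h$ is open in $M$, it is either undersubscribed (then $(a^*,h)$ blocks immediately) or full; in the full case at most $u(h)-1$ of $h$'s current residents are preferred to $a^*$, so some resident of $M(h)$ is ranked below $a^*$ and $(a^*,h)$ again blocks. The delicate subcase is $h$ closed in $M$, where only a blocking \emph{coalition} of $l(h)=2$ residents can violate stability, so I must exhibit a second resident besides $a^*$ that strictly prefers $h$. This is furnished by the proposal mechanism: $h$ proposes only after being activated by a resident's proposal, and any resident whose proposal $h$ holds is automatically distinct from $a^*$ (since here $h$ proposes \emph{to} $a^*$ rather than holding $a^*$'s proposal); such a held proposer prefers $h$ to its $M$-partner whenever $h$ is closed. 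Hence $a^*$ together with this proposer constitutes a blocking coalition opening $h$, the final contradiction. The base case (the first rejection) is identical, with the induction hypothesis applied vacuously.
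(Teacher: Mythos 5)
Your proof is correct and follows the paper's argument essentially case for case: the induction on the order of rejections is the same device as the paper's choice of a \emph{first} rejected-yet-matched pair, and your three cases (resident/quota-one rejector with a better proposer, quota-two rejector analyzed via its two rejection rules, and a quota-two better proposer where the closed subcase is resolved by a blocking coalition built from the hospital's activating proposer) correspond exactly to the paper's Cases 1.1/2.1, 2.2, and 1.2, with your feasibility argument for the ``rejected by all but one'' rule being an equivalent substitute for the paper's minimality contradiction. The only imprecision is the parenthetical claim that any resident whose proposal $h$ holds is automatically distinct from $a^*$: if $h$ holds $a^*$'s proposal together with at least one other, the general proposal rule can make $h$ propose to $a^*$ as well, but in that event $h$ necessarily holds a proposal from some $r' \neq a^*$ anyway (the paper sidesteps this by only asserting that $h$ \emph{received} a proposal from some resident other than $a^*$ before proposing to $a^*$), so the blocking coalition you need still exists and the proof stands.
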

	
	\begin{proof}
		Assume that the lemma does not hold, and let $r$ be the first resident 
		and $h$ the first hospital such that one of them rejected the other, 
		but $M(r) = h$ in some a stable matching~$M$. There are 
		two possible reasons for this deletion: Either $r$ rejected a proposal 
		from $h$ or $h$ rejected a proposal from $r$. 
		
		{\bfseries Case 1: } $r$ rejected $h$.
		
		Then $r$ received a proposal from a hospital $h'$ it prefers to $h$.
		
		{\bfseries Case 1.1: } $h'$ is a quota-one hospital.
		
		Then $h'$ has been rejected from all residents it prefers to $r$.
		By the choice of $r$ and~$h$, it follows that $h'$ cannot be matched to 
		a resident it prefers to $r$ in $M$.
		Thus, $(r, h')$ is a blocking pair for $M$, a contradiction.
		
		{\bfseries Case 1.2: } $h'$ is a quota-two hospital.
		
		Then $h'$ has received a proposal from some resident $r'\neq r$ before 
		proposing to $r$.
		By the choice of~$h$ and~$r$, it follows that $r'$ cannot be matched 
		to a hospital it prefers to $h'$.
		It follows that if $h'$ is closed in $M$, then $(h',\{r,r'\})$ is a
		blocking coalition in $M$, a contradiction.
		Furthermore,  as $r$ received a proposal from $h'$, $r$ is one of the 
		first $\uq (h')$ residents on the 
		preferences of $h'$ which did not reject $h'$.
		By the choice of~$h$ and $r$, it follows that $h'$ is not matched to 
		$\uq (h')$ resident it prefers over $r$.
		Thus, if~$h'$ is open, then $(h',r)$ is a blocking pair in $M$, a 
		contradiction.
		
		{\bfseries Case 2: } $h$ rejected $r$.
		
		{\bfseries Case 2.1: } $h$ is a quota-one hospital.
		
		Then before rejecting $r$, hospital $h$ received a proposal from a resident $r'$
		it prefers over $r$.
		By the choice of $h$ and $r$, it follows that $r'$ prefers $h$ to 
		$M(r')$.
		Consequently,~$(h,r')$ is blocking pair in $M$, a contradiction.
		
		{\bfseries Case 2.2: } $h$ is a quota-two hospital.
		
		The hospital $h$ cannot reject $r$ because it has been rejected by all 
		other residents on its preference list, as otherwise $r'$ with $r\neq 
		r'\in M(h)$ has been rejected by~$h$ before $h$ rejected~$r$, 
		contradicting the choice of $h$ and $r$.
		Thus before rejecting $r$, hospital $h$ received proposals from~$\uq (h)$
		residents $r_1$, \dots, 
		$r_{\uq (h)}$ which $h$ prefers over $r$.
		By the choice of $h$ and~$r$, it follows that $r_i$ cannot be matched 
		to a hospitals it prefers over $h$ for all~$i \in [\uq (h)]$. As $r$ is 
		matched to~$h$ in $M$, there needs to exists at least one $j\in [\uq (h)]$
with~$M(r_j)\neq h$. Then $(h,r_j)$ forms a blocking pair in $M$, a
		contradiction. 
	\end{proof}
	Recalling that agents propose to other agents in order of their 
preference list, the
	following observation directly follows from \Cref{lem:first-phase}.
	\begin{observation}\label{ob:ma}
		No resident or quota-one hospital $a^*$ which issues a proposal to an  
agent~$a'$ can be
		matched to an agent $\widetilde{a}$ which she prefers to $a'$ in a 
stable matching. No 
		quota-two hospital~$h$ which issues a proposal to some residents 
		$r_1,\dots r_{u(h)}$ can be matched to a resident which it prefers to 
$r_i$
		for all $i\in [u(h)]$ and which does not propose to $h$ in a stable 
matching.
	\end{observation}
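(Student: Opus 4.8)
The plan is to derive both statements directly from \Cref{lem:first-phase} by exploiting that every agent proposes to the agents on its preference list strictly in order of preference, so that a more-preferred partner is absent from the current list precisely because a rejection (in one direction or the other) has already removed the corresponding mutual acceptability.

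First I would treat a resident or quota-one hospital $a^*$ that currently issues a proposal to $a'$. By the rules of Phase 1a, such an agent proposes to the first agent remaining on its preference list, so $a'$ is this first agent, and every agent $\widetilde a$ with $\widetilde a \succ_{a^*} a'$ has already been deleted from $a^*$'s list. Such a deletion happens only when $a^*$ proposed to $\widetilde a$ and was rejected, or $\widetilde a$ proposed to $a^*$ and was rejected; in either case one of the two rejected the other. Then \Cref{lem:first-phase} gives that $a^*$ and $\widetilde a$ are not matched in any stable matching, which is exactly the first claim.

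Then I would handle a quota-two hospital $h$ that issues proposals to $r_1,\dots,r_{u(h)}$. By the proposing rule these are the $u(h)$ most preferred residents still on $h$'s list. Let $r$ be a resident with $r\succ_h r_i$ for all $i\in[u(h)]$ that does not propose to $h$. If $r$ were still on $h$'s list, it would belong to the top $u(h)$ residents and hence be a proposal target, unless $h$ were in the special situation of holding a single proposal from one of its $u(h)$ most preferred residents and therefore skipping it; but that situation is excluded precisely because $r$ does not propose to $h$ (and would in any case leave $h$ issuing only $u(h)-1$ proposals, contradicting the hypothesis). Hence $r$ has been deleted from $h$'s list, which again means a rejection between $h$ and $r$ occurred, and \Cref{lem:first-phase} yields that $h$ cannot be matched to $r$ in any stable matching.

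The one point requiring care, and the main obstacle, is the second statement's hypothesis ``does not propose to $h$'': it is exactly what rules out the special proposing rule for quota-two hospitals, under which a held top proposal $r$ makes $h$ propose to only $u(h)-1$ other residents, so that every resident preferred to all of $r_1,\dots,r_{u(h)}$ must genuinely have been rejected rather than merely skipped while being held. Once this case distinction is made, both statements follow immediately from \Cref{lem:first-phase}.
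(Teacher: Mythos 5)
Your proof is correct and takes essentially the same route as the paper: the paper justifies \Cref{ob:ma} in a single sentence, noting that agents propose in order of their preference lists so that any better partner must already have been removed by a rejection, and then invoking \Cref{lem:first-phase} --- which is exactly the argument you spell out. Your explicit treatment of the quota-two proposing rule (that the only way a preferred resident can be skipped rather than rejected is when she is the single held proposer, which the hypothesis ``does not propose to $h$'' excludes) just fills in detail the paper leaves implicit.
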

	Using this observation, it is possible to prove that an agent cannot be 
	matched worse than the proposal it holds in a stable matching.
	\begin{lemma} \label{le:deletion2}
		If a resident or a quota-one hospital~$a^*$ holds the proposal of 
		another 
		agent~$a'$, then there
		cannot exist a stable matching in which $a^*$ is matched to an agent 
		$\widetilde{a}$ to 
		which it prefers $a'$. 
	\end{lemma}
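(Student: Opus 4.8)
The plan is to argue by contradiction. Suppose there is a stable matching $M$ in which $a^*$ is matched to some agent $\widetilde a$ with $a' \succ_{a^*} \widetilde a$. I will exhibit a blocking pair or a blocking coalition and thereby contradict stability. Throughout, the key tool is \Cref{ob:ma} applied to $a'$: because $a^*$ holds the proposal of $a'$, it was $a'$ that issued this proposal, so $a'$ cannot be matched in $M$ to anything strictly better than $a^*$. Since $a^*$ is a resident or a quota-one hospital, the candidate blocking pair in every case is the hospital--resident pair $\{a^*, a'\}$, for which the $a^*$-side already strictly prefers $a'$ to its partner $\widetilde a$ by assumption; the work is to verify the condition on the $a'$-side.

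First I would dispose of the case where $a^* = h$ is a quota-one hospital and hence $a' = r$ is a resident proposing to $h$. Being matched to $\widetilde a$, hospital $h$ is open, and it prefers $r$ to its matched resident $\widetilde a \neq r$; moreover \Cref{ob:ma} applied to $r$ gives $M(r) \preceq_r h$, and since $r$ is not matched to $h$ (as $M(h)=\widetilde a\neq r$), this preference is strict, i.e., $r$ prefers $h$ to $M(r)$. Hence $(r,h)$ is a blocking pair. The remaining, symmetric situation is $a^* = r$ a resident and $a' = h$ a hospital that proposed to $r$, with $r$ preferring $h$ to $M(r) = \widetilde a$, and here I split on the status of $h$ in $M$.

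If $h$ is open and undersubscribed, $(r,h)$ blocks at once. If $h$ is a closed quota-one hospital, then $(h,\{r\})$ is a blocking coalition, which by our convention counts as a blocking pair. If $h$ is open and full with quota $u(h)$, I use that $h$ proposed to $r$ as one of its top $u(h)$ residents on its current list, so at most $u(h)-1$ residents precede $r$ there; as \Cref{lem:first-phase} guarantees that every resident of $M(h)$ is still acceptable to $h$, the $u(h)$ residents in $M(h)$ cannot all be preferred to $r$ by $h$, so some resident of $M(h)$ is worse for $h$ than $r$, and $(r,h)$ blocks.

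The case I expect to be the main obstacle is that of a closed quota-two hospital $h$, where the lower quota two forces me to produce a second dissatisfied resident $r'$ to form a blocking coalition $(h,\{r,r'\})$. Here I would rely on the Phase 1a bookkeeping of quota-two hospitals: an active quota-two hospital that has proposed to $r$ must itself hold at least one proposal issued by some resident $r'$, and the proposing rule ensures this $r'$ is distinct from $r$ (in the single-held-proposal case $h$ proposes only to residents other than the one it holds, and if $h$ holds two or more proposals then at least one holder differs from $r$). Applying \Cref{ob:ma} to $r'$ shows $r'$ cannot be matched better than $h$; since $h$ is closed, $r'$ is not matched to $h$ and therefore strictly prefers $h$ to $M(r')$. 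Thus $\{r,r'\}$ is a blocking coalition opening $h$, the desired contradiction. Collecting the cases establishes the lemma.
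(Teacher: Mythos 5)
Your proof is correct and follows essentially the same route as the paper's: a proof by contradiction that applies \Cref{ob:ma} to the agent issuing the held proposal and then exhibits a blocking pair, or---for a closed quota-two hospital---a blocking coalition built from a second resident whose proposal that hospital holds. Your case split (organized by the hospital's status in $M$ rather than by its quota first) and your explicit verification that the second resident $r'$ is distinct from $r$ are only minor reorganizations and refinements of the paper's argument.
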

	\begin{proof}
		For the sake of contradiction, let us assume that there exists a stable 
		matching~$M$ in which a resident or quota-one hospital~$a^*$ is matched 
		to an agent~$\widetilde{a}$ while $a^*$ holds a
		proposal from an agent $a'$ it prefers to $\widetilde{a}$. 
		
		If $a^*$ is a quota-one hospital, then $\widetilde{a}$ and
		$a'$ are both residents and by \Cref{ob:ma}, resident~$a'$ needs to 
		prefer~$a^*$ to $M(a')$. Thus, $(a^*,a')$ forms a blocking pair.
		
		If $a^*$ is a resident and $a'$ is a quota-one hospital,
		then by \Cref{ob:ma}, hospital~$a'$ needs to prefer~$a^*$ to~$M(a')$. 
		Thus,
		$(a',a^*)$ forms a blocking pair. 
		
		If $a^*$ is a resident and $a'$ is a quota-two hospital
		which is closed in $M$, then as $a'$ issues a proposal there 
		needs to exists a resident $r$ issuing a proposal to $a'$. By 
		\Cref{ob:ma}, resident~$r$ needs to prefer $a'$ to~$M(r)$. Thus, $a^*$ 
		and $r$
		form together a blocking coalition to open~$a'$.
		
		If $a^*$ is a resident and $a'$ is a quota-two hospital
		which is open in $M$, then by \Cref{ob:ma}, hospital~$a'$ cannot be matched to
		$u(h)$ residents which $a'$ all prefers to $a^*$. Thus, $(a',a^*)$ is a 
		blocking pair for~$M$.
	\end{proof}

	Using the two proceeding lemmas, we are now able to prove that all changes 
	made to the 
	preferences during Phase 1a do not delete any stable matchings, i.e., the 
	mutual acceptability of no hospital-residents pair occurring in any stable 
	matching is deleted in Phase 1a. 
	\begin{restatable}{lemma}{firstphasee}
		\label{lem:first-phase2}
		If an agent $a^*$ deletes another agent $a'$ from its preference list, 
		then 
		$a^*$ cannot be matched to $a'$ in any stable matching.
	\end{restatable}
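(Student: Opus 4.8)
The plan is to classify every preference-list modification performed during Phase 1a and to dispatch each type to one of the lemmas already established. Inspecting the description of Phase 1a, an agent $a^*$ loses $a'$ from its list in exactly one of four ways: (i) a rejection occurred between $a^*$ and $a'$ during the proposal rounds; (ii) at the end of Phase 1a, $a^*$ is a resident or quota-one hospital holding a proposal and $a'$ is ranked below that held proposal; (iii) $a'$ is removed while restoring symmetry, because $a^*$ had already been removed from $a'$'s list; or (iv) $a'$ is a quota-two hospital deleted from the whole instance because at most one resident remains on its list. I would treat these four cases in turn; the conclusion ``$a^*$ is not matched to $a'$ in any stable matching'' is symmetric in the two agents, so it suffices to argue it once per case.

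For case (i), the statement is immediate from \Cref{lem:first-phase}, which already asserts that a rejected pair is never matched in a stable matching. For case (ii), by assumption $a^*$ is a resident or quota-one hospital holding the proposal of some agent $a''$ with $a'' \succ_{a^*} a'$, since the end-of-phase trimming deletes exactly the agents to which $a^*$ prefers its held proposal. Then \Cref{le:deletion2} tells us that $a^*$ cannot be matched to any agent it ranks below $a''$, and in particular not to $a'$.

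Case (iii) reduces to the previous two. If $a'$ is removed from $a^*$'s list only because $a^*$ no longer appears on $a'$'s list, then the acceptability of $\{a^*, a'\}$ was already severed from $a'$'s side by a rule of type (i) or (ii); this is well defined because a symmetry deletion is only triggered by an asymmetry that a rejection or a held-proposal trim has already created, so it cannot itself be the source of the severing. Hence the desired conclusion already holds by the corresponding case. For case (iv), $a'$ is a quota-two hospital, and feasibility forces any open hospital to receive at least its lower quota of residents, i.e.\ at least two; but $a'$ has at most one acceptable resident remaining, so $a'$ is closed in every feasible---hence every stable---matching, and no resident (in particular $a^*$) is matched to it.

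The only genuinely delicate point is the bookkeeping: one must verify that the four rules above exhaust all changes Phase 1a makes to preference lists, and, for case (iii), that the earlier severing really falls under case (i) or (ii) rather than being a further symmetry deletion. Granting this, combining the four cases shows that no agent ever deletes a partner with which it could be matched in a stable matching, which is precisely the statement of the lemma.
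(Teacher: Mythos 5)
Your decomposition into four deletion types and the dispatching of the first three is exactly the paper's proof: rejections are handled by \Cref{lem:first-phase}, end-of-phase trimming by \Cref{le:deletion2}, and symmetry restoration reduces to those two. The problem is your case (iv). You argue that because the quota-two hospital has at most one resident left on its list, ``feasibility forces'' it to be closed in every feasible matching. But feasibility and stability here are judged with respect to the instance whose stable matchings the lemma is about, i.e.\ with the hospital's \emph{original} acceptabilities, not its trimmed working list. A stable matching of that instance could a priori open this hospital by assigning to it two residents that Phase~1a has already struck from its list --- nothing about feasibility alone rules this out, so your inference ``at most one resident remaining, hence closed'' does not follow as stated.

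What closes the case is precisely the content of the earlier cases: every resident that was deleted from the hospital's list cannot be matched to it in \emph{any} stable matching (by \Cref{lem:first-phase} and \Cref{le:deletion2}, applied to those earlier deletions). Only then can one conclude that in any stable matching the hospital receives at most the single remaining resident, so its lower quota of two forces it to be closed, and hence the remaining resident is never matched to it either. This is exactly the step the paper takes (``the proceeding three observations imply that $a^*$ cannot be open in any stable matching''); your write-up replaces it with an appeal to feasibility that conflates the trimmed instance with the instance under consideration. The fix is a single sentence, but as written the step fails.
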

\begin{proof}
		There are four types of modifications applied to the agents' 
		preferences in Phase~1a. 
		First, an agent might delete another agent from its preferences
		because one of them rejected the proposal of the other. Here, the 
		correctness follows directly from \Cref{lem:first-phase}.
		 
		Second, a resident or quota-one hospital might delete an agent 
		from its preference list because it holds a proposal it prefers 
		to the deleted agent at the end of Phase~1a. Here, the correctness follows directly from 
		\Cref{le:deletion2}.
		
		Third, $a^*$ might delete $a'$ from its preference list because it does 
		not appear on the preference list of~$a'$. This implies that $a'$ has
		deleted $a^*$ from its preferences because of one of the two 
		proceeding cases. By \Cref{lem:first-phase} and \Cref{le:deletion2}, 
		this implies that $a^*$ and $a'$ cannot be matched in any stable 
		matching.
		
		Fourth, a quota-two hospital~$a^*$ is deleted from the instance and 
		thereby also from the the preferences of $a'$, if $a'$ is the only 
		remaining agent the preference list 
		of~$a^*$.
		As $a^*$ has only $a'$ on its preference list left, the
		proceeding three observation imply that $a^*$ cannot be open in
		any stable matching and thereby that $a'$ can never be matched to $a^*$
		in a stable matching.
\end{proof}

	Now, we show that the modifications of the preferences made in Phase 1a do 
	not 
	delete any blocking coalitions or pairs, i.e., each matching that is 
	stable after applying Phase~1a is also stable before applying Phase 1a:
	
	\begin{restatable}{lemma}{firstphaseback}
		\label{lem:first-phase-backward}
		Any matching $M$ that is stable in the instance $\iafter$ after 
		applying Phase~1a is also stable in the instance $\ibefore$ before 
applying 
		Phase~1a.
	\end{restatable}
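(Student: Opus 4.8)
The plan is to argue by contradiction, exploiting that $\iafter$ arises from $\ibefore$ \emph{only} by deleting mutual acceptabilities and by removing some quota-two hospitals that retain at most one resident. In particular, the lower and upper quotas are unchanged, and every pair used by $M$ is acceptable in $\iafter$ and hence also in $\ibefore$; so $M$ is feasible in $\ibefore$, and it remains to rule out blocking pairs and blocking coalitions in $\ibefore$.

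The first step I would carry out is a reduction to \emph{deleted} pairs. For any pair $(r,h)$ acceptable in both instances, $r$'s preference between $h$ and $M(r)$, and $h$'s preference between $r$ and the residents of $M(h)$, agree in $\ibefore$ and $\iafter$ (deletions never reorder the surviving agents), and $M(r)$ and $M(h)$ are identical. Consequently, a blocking pair using a surviving acceptability, or a blocking coalition all of whose pairs survive and whose hospital survives, would also block $M$ in $\iafter$, contradicting stability. Hence any blocking structure in $\ibefore$ must involve a pair $(r,h)$ with $r$ preferring $h$ to $M(r)$ whose mutual acceptability was deleted during Phase 1a (or a quota-two hospital removed from the instance).

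The second step is a case analysis on why such a pair was deleted. If $h$ left $r$'s list through an action of $r$ --- either $r$ rejected a proposal of $h$, or $r$ trimmed $h$ as worse than its held proposal at the end of Phase 1a --- then every agent remaining on $r$'s list in $\iafter$, and in particular $M(r)$, is weakly preferred by $r$ to its held proposal, which $r$ in turn prefers to $h$; thus $r$ does not prefer $h$ to $M(r)$, a contradiction. Hence $r$ must have been deleted from $h$'s list because $h$ rejected $r$. In that situation $h$ retains on its trimmed list only residents it prefers to $r$: one such resident if $h$ has quota one, and $\uq(h)$ such residents if $h$ has quota two. Since $M$ matches $h$ only to residents remaining on its list in $\iafter$ (mirroring \Cref{ob:ma} and \Cref{le:deletion2}), every resident of $M(h)$ is preferred by $h$ to $r$, so whenever $h$ is open and full, $(r,h)$ is not a blocking pair, and $r$ cannot help fill $h$ beyond residents $h$ already prefers to $r$.

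I expect the main obstacle to be the blocking-coalition case, and especially quota-two hospitals that are \emph{closed} in $M$: one must exclude that a deleted pair $(r,h)$ lets a set of residents reopen such an $h$. The intended tool is Phase 1b together with the removal rule of Phase 1a. A quota-two hospital that ever rejected a resident has received at least two proposals and is therefore split into quota-one copies, so it is not a closed quota-two hospital in $\iafter$ and the reconstructed matching keeps it open; whereas a quota-two hospital deleted from the instance has at most one resident left on its list. For the latter, every resident who lost $h$ did so by its own rejection and hence (by the resident-side argument above) does not prefer $h$ to its $M$-partner, so strictly fewer than $l(h)=2$ residents can prefer $h$ through surviving acceptabilities. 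Combining the reduction to deleted pairs, the resident-side case distinction, and these two invariants rules out every blocking pair and blocking coalition in $\ibefore$, establishing that $M$ is stable in $\ibefore$ as claimed.
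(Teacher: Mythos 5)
Your proposal follows the same skeleton as the paper's proof: feasibility is immediate, surviving acceptabilities cannot newly block, so any blocking structure in $\ibefore$ must involve a deleted pair or a removed hospital, and one then does a case analysis on the reason for the deletion. However, it omits the one mechanism the paper's case analysis actually runs on: in every problematic case, the contradiction is obtained by exhibiting a blocking pair or coalition for $M$ \emph{inside} $\iafter$, built from the proposal-holding structure at the end of Phase 1a (an agent issuing a proposal has the recipient as first choice on its trimmed list, so any holder that ends up matched worse than its held proposal --- or unmatched --- yields a blocking pair or coalition in $\iafter$). Because you avoid this tool, your resident-side case has a real hole: you conclude that ``$M(r)$ is weakly preferred by $r$ to its held proposal,'' which presupposes that $r$ is matched in $M$. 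If $r$ is unmatched, then $r$ does prefer $h$ to $M(r)=\square$ and no contradiction follows. The paper closes exactly this hole by showing that the hospital $h''$ whose proposal $r$ holds would then block $M$ in $\iafter$ (as a pair if $h''$ has quota one, as a coalition together with another proposer if it has quota two). You also cannot instead cite the fact that proposal-holding residents are matched in every stable matching: that is \Cref{cor:first-phase}, which appears after this lemma and is proved by this very argument.

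Two further steps fail for the same reason. First, for a quota-two hospital $h$ that rejected $r$ after holding $u(h)$ better proposals, you exclude blocking only when $h$ is ``open and full''; but if $u(h)>2$, then $h$ can be open and undersubscribed in $M$, in which case $(r,h)$ blocks in $\ibefore$ no matter how $h$ ranks $M(h)$ --- the paper excludes this by noting that some proposer $s_j$ is then not matched to $h$, so $(h,s_j)$ blocks $M$ in $\iafter$ (and two proposers form a blocking coalition in $\iafter$ if $h$ is closed). Second, your appeal to Phase 1b is out of scope and partly incorrect: the lemma concerns Phase 1a alone, so $\iafter$ is the instance \emph{before} any splitting, and a quota-two hospital that rejected residents is still a quota-two hospital there; moreover, a quota-two hospital can also reject a proposal under the ``all but one resident rejected it'' rule without ever holding two proposals, so ``ever rejected a resident $\Rightarrow$ split'' is false even once Phase 1b is applied. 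The in-scope repair for the closed-hospital case is, once more, the paper's tool: if such an $h$ is closed in $M$, its own proposers form a blocking coalition to open $h$ in $\iafter$, contradicting the stability of $M$ there.
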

	\begin{proof}
		Let $h$ and $r$ be the first hospital and resident pair such 
		that the 
		mutual acceptability of~$h$ and $r$ is deleted and both $h$ and $r$ 
		appear in a blocking coalition $(h, \{r,r_1\})$ for
		some $r_1\in R$ or
		as a 
		blocking pair $(h,r)$ for some 
		matching $M$ in $\ibefore$ that is stable in $\iafter$.
		We now iterate over all possible cases in which the mutual 
		acceptability of two agents is deleted.
		Note that all cases which lead to a deletion of a mutual acceptability 
		reduce 
		to only four possibilities that trigger a deletion of a mutual 
		acceptability, namely 
		\begin{itemize}
			 \setlength\itemsep{0em}
			\item  that $r$ received a better proposal than $h$ (which may lead 
			to a 
			deletion because $r$ rejects the proposal from $h$ or at the end of 
			Phase 1a because $r$ holds a proposal it prefers to $h$), or 
			\item that $h$ is a quota-one hospital and received a better 
			proposal than $r$ (which may lead to a deletion because $h$ rejects 
			the proposal from $r$ or at the end of Phase 1a because~$h$ holds a 
			proposal it prefers to $r$), or
			\item that $h$ is a quota-two hospital and received $u(h)$ 
			proposals it prefers to $r$ (which may lead to a deletion because 
			$h$ rejects the proposal of $r$), or 
			\item that $h$ is a quota-two hospital and has only $r$ left on 
			its preferences (which leads to a deletion because $h$ rejects 
			the proposal of $r$ or because $h$ is deleted from the instance).
		\end{itemize}

		{\bfseries Case 1: } $r $ received a proposal $h'$ it prefers to $h$.
		
		Thus, at the end of Phase 1a, resident~$r$ holds a proposal from a hospital
		$h''$ it prefers to $h$.
		However, as $r$ forms a blocking pair or coalition with $h$, resident $r$
		needs to be matched worse than~$h$ in $M$ and therefore also worse than
		$h''$.
		
		{\bfseries Case 1.1: } $h''$ is a quota-one hospital.
		
		Since $r$ is the first resident on the preferences of $h''$ in 
		$\iafter$, hospital $h''$ is matched worse than~$r$ or is not matched 
		at all in $M$, implying
		that $(h'', r)$ is
		a blocking pair or $(h'',\{r\})$ a blocking coalition for $M$ 
		in~$\iafter$
		(since $r$ holds a proposal of 
		$h''$, the mutual acceptability has not been deleted and they both 
		accept each other in $\iafter$.
		
		{\bfseries Case 1.2: } $h''$ is a quota-two hospital.
		
		Before proposing to $r$, hospital $h''$ received a proposal from a
		resident $r' \neq r$.
		Since $h''$ did not reject~$r$, it still holds at least one proposal
		from some resident~$r''$.
		Therefore $r''$ cannot be matched better
		than 
		$h''$ in $M$ (as $h''$ is the first hospital in the preferences of 
		$r''$ in \iafter), and as $h''$ proposes to~$r$, hospital $h''$ can 
		have at most $\uq (h'') -1$ agents it prefers to $r$ on its 
		preferences in \iafter.
		If~$h''$ is closed in $M$, then $(h'', \{r, r''\})$ is a blocking
		coalition for $M$ in~$\iafter$. 
		Otherwise, $(h'', r)$ is a blocking
		pair for $M$ in $\iafter$.
		
		{\bfseries Case 2: } $h$ is a quota-one hospital and received a 
		proposal it prefers to $r$.
		
		Then, $h$ holds a proposal from a resident $r'$ it prefers to $r$ at
		the end of Phase~1a.
		Moreover, hospital~$h$ is the first hospital in the preferences of $r'$ in
		\iafter.
		Thus, as $(h,r)$ is a blocking pair for~$M$ in~$\ibefore$, hospital $h$ cannot
		be matched to $r'$. Thus, $(h,
		r')$ is a blocking pair for $M$ in~$\iafter$, a contradiction.
		
		{\bfseries Case 3: } $h$ is a quota-two hospital and received $\uq (h)$ 
		proposals from residents it 
		prefers to~$r$.
		
		Let $s_1, \dots, s_{\uq (h)}$ be the residents whose proposal $h$ holds 
		at the end of Phase 1a.
		Since~$h$ is the first hospital in the preferences of $s_i$ for all 
		$i\in [\uq (h)]$, it follows that none of~$s_1, \dots, s_{\uq
			(h)}$ prefers~$M ( s_i)$ 
		to $h$. However, as $(h,r)$ is a blocking
		pair for $M$ in~$\ibefore$, hospital~$h$ is undersubscribed or there exist one
		resident that is 
		matched to~$h$ to which~$h$ prefers~$r$. Thereby, for at least one 
		$j\in [\uq 
		(h)]$, resident~$s_j$ is not matched to $h$ in $M$ which implies that 
		$(h,s_j)$
		is a blocking pair for $M$ in $\iafter$, a contradiction. 
		
		{\bfseries Case 4: } $h$ is a quota-two hospital with only $r$ on its 
		preference list.

		Then $h$ needs to be closed in $M$ (as it has been deleted from the 
		instance $\iafter$/ its preferences are empty in $\iafter$).
		Thus, by our assumption, $(h, \{r, r_1\})$ is a blocking coalition to 
		open~$h$ in $\ibefore$.
		However, for $h$ to have only $r$ on its preference list left, $r_1$ 
		needs to reject $h$. Then, we are again in Case 1 and can thereby
		conclude that $r_1$ and $h$ together cannot be part of a blocking coalition in~$M$.
	\end{proof}
	
	We finish the description and correctness proof of Phase 1a by drawing 
	two 
	conclusions about the set of agents matched in a stable matching: 

\begin{corollary}
	\label{cor:first-phase}
	After applying Phase 1a, 
	\begin{itemize}
	\setlength\itemsep{0em}
		\item[1.] all residents holding a proposal are
		matched in all
		stable 
		matchings, and
		\item[2.] if a hospital $h$
		holds
		$\ell (h)$ proposals, then $h$ is open in all stable matchings.
	\end{itemize}
\end{corollary}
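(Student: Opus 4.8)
The plan is to prove both items by contradiction, extracting in each case a blocking pair or blocking coalition directly from the proposals that survive Phase~1a and invoking \Cref{ob:ma} and \Cref{le:deletion2}.

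For the first item, I would fix a resident $r$ that holds a proposal from some hospital $h'$ at the end of Phase~1a and assume, towards a contradiction, that $r$ is unmatched (i.e.\ $M(r)=\square$) in some stable matching $M$. Because $r$ holds $h'$'s proposal, the mutual acceptability of $r$ and $h'$ was never deleted, so $h'$ is acceptable to $r$; and since every resident prefers each acceptable hospital to being unmatched, $r$ prefers $h'$ to $M(r)=\square$. This is precisely the situation analysed in the proof of \Cref{le:deletion2}, where the only property used about the assignment of $a^*=r$ is that $r$ prefers the held proposal to it. I would therefore rerun that case distinction verbatim (according to whether $h'$ is a quota-one hospital, an open quota-two hospital, or a closed quota-two hospital), substituting $\square$ for the assignment, and in each case obtain a blocking pair or blocking coalition involving $h'$ that contradicts the stability of $M$. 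Hence $r$ is matched in every stable matching.

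For the second item, I would fix a hospital $h$ holding $\ell(h)$ proposals; these necessarily come from $\ell(h)$ distinct residents $r_1,\dots,r_{\ell(h)}$, each of which issues a proposal to $h$. Assume towards a contradiction that $h$ is closed in some stable matching $M$. By \Cref{ob:ma}, no $r_i$ can be matched to a hospital it prefers to $h$, and since $h$ is closed we have $M(r_i)\neq h$; together these force $r_i$ to prefer $h$ to $M(r_i)$ for every $i$. As $h$ is closed and all $\ell(h)=l(h)$ residents $r_1,\dots,r_{\ell(h)}$ prefer $h$ to their current assignment, $(h,\{r_1,\dots,r_{\ell(h)}\})$ is a blocking coalition to open $h$ (for a quota-one hospital this is the single-resident coalition that we also call a blocking pair), contradicting stability; thus $h$ is open in every stable matching.

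The more delicate point is the first item: \Cref{le:deletion2} is phrased only for a resident matched to a genuine agent, so the main obstacle is to justify applying it (or rather its proof) to an unmatched resident. I expect this to be routine once one checks that every case of that proof uses the assignment of $a^*$ solely through the comparison ``$r$ prefers the held proposal to its assignment'', which remains valid with the assignment $\square$; the second item is then essentially immediate, since the surviving proposals hand us exactly $l(h)$ residents who would rather be matched to $h$.
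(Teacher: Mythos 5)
Your proposal is correct and takes essentially the same approach as the paper: the paper proves item 1 by exactly the case analysis (quota-one hospital, closed quota-two hospital, open quota-two hospital) from the proof of \Cref{le:deletion2}, instantiated with the unmatched resident and driven by \Cref{ob:ma}, and proves item 2 by observing that the $\ell(h)$ proposing residents cannot be matched better than $h$ and hence form a blocking coalition. The only cosmetic difference is that the paper re-runs the case analysis inline (using the resident that activated $h$ as the second coalition member) rather than citing \Cref{le:deletion2} as you do.
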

	\begin{proof}
		1. Assume for the sake of contradiction that there exists a stable 
		matching $M$ in which a resident $r$ currently holding proposal $h$ is 
		unmatched. If $h$ is a quota-one hospital, then we know by 
		\Cref{ob:ma} that $h$ cannot be matched to a resident it 
		prefers to $r$. Thus, $(h,r)$ blocks~$M$. If $h$ is a quota-two
		hospital, let $r'\neq r$ be a resident that activated~$h$ by proposing 
		to it. By \Cref{ob:ma}, resident~$r'$ cannot be matched to a hospital
		she prefers to $h$. Thus, if $h$ is closed in $M$, then $(h, \{r,r'\})$ forms a
		blocking coalition to open~$h$. Now assume that $h$ is open. As $h$ 
		proposed to $r$, by 
		\Cref{ob:ma}, there cannot 
		exist $u(h)$ residents that $h$ prefers to~$r$ and that are matched 
		to $h$ in a stable matching. Consequently, if $h$ is open, $(h,r)$ is a 
		blocking pair.   
		
		2. Assume for the sake of contradiction that there exists a stable 
		matching $M$ in which a hospital~$h$ which received $\ell (h)$
		proposals from residents $r_1, \dots, r_{\ell (h)}$ is closed. From
		\Cref{ob:ma} it follows that $r_1, \dots, r_{\ell(h)}$ cannot be
		matched to a hospital they prefer to $h$. Thus, $\{r_1,\dots,r_{\ell (h)}\}$ form a
		blocking coalition to open $h$. 
	\end{proof} 
	
	\subsubsection{Phase 1b} \label{a:p1b}
	Recall that in Phase 1b, we replace each quota-two hospital $h$ 
	holding at least two proposals by~$u(h)$ hospitals $\hone$,
	\dots, $h^{\uq (h)}$ with lower and upper quota one 
	with the same preferences
	as~$h$. In the preferences of all residents, $h$ is replaced by $h^1 \succ 
	\dots \succ h^{\uq (h)}$.
	 We now prove that there exists a one-to-one 
	mapping between the set of stable matchings before and after the 
application 
	of Phase 1b:

	\begin{restatable}{lemma}{splitting}
		\label{lem:splitting}
		There exists a one-to-one mapping between the stable matchings in the 
		instance~$\ibefore$ before applying Phase 1b and in the instance 
		$\iafter$ 
		after applying Phase 1b. 
	\end{restatable}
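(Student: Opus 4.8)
The plan is to exhibit two explicit maps between the stable matchings of $\ibefore$ and $\iafter$ and to argue that they are mutually inverse. Given a stable matching $M$ of $\ibefore$ and a split hospital $h$ with $M(h)=\{r_1,\dots,r_j\}$ ordered so that $r_1\succ_h\cdots\succ_h r_j$, I define $\Phi(M)$ to match $r_i$ to the copy $h^i$ for each $i\in[j]$ (leaving $h^{j+1},\dots,h^{\uq(h)}$ closed) and to agree with $M$ everywhere else; conversely, given a stable matching $M'$ of $\iafter$, I define $\Psi(M')$ to match to $h$ exactly those residents assigned to some copy of $h$, and to agree with $M'$ elsewhere. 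Since merging the copies undoes the split, $\Psi\circ\Phi$ is the identity by construction, so the content is to show that $\Phi$ and $\Psi$ land in the respective sets of stable matchings and that $\Phi\circ\Psi$ is the identity as well.

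The crucial structural fact is that in every stable matching $M'$ of $\iafter$ the open copies of a split hospital $h$ form a prefix $h^1,\dots,h^j$ carrying residents in strictly decreasing $h$-preference, i.e.\ $M'(h^1)\succ_h\cdots\succ_h M'(h^j)$. Both parts follow from short blocking arguments using that every copy inherits $h$'s preferences and that residents rank $h^1\succ\cdots\succ h^{\uq(h)}$ consecutively at the position formerly occupied by $h$: if some open $h^{i'}$ held a resident preferred by $h$ to the resident at an earlier open $h^i$ (with $i<i'$), then $(h^i,M'(h^{i'}))$ would be a blocking pair; and if a later copy were open while an earlier one were closed, the resident at the later copy would form a singleton blocking coalition opening the earlier (quota-one) copy. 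This canonical shape shows that re-splitting $\Psi(M')$ in decreasing preference order returns exactly $M'$, giving $\Phi\circ\Psi=\mathrm{id}$.

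Next I would verify feasibility, which amounts to ruling out that exactly zero or exactly one copy of $h$ is open in a stable $M'$ (the merged hospital must receive $0$ or at least $\lowerq(h)=2$ residents). Here I use that $h$ held at least two proposals after Phase 1a, so at least two residents have $h^1$ as their top choice in $\iafter$. If no copy were open, any such resident would open $h^1$ through a singleton blocking coalition; if only $h^1$ were open, I pick such a resident $r_i$ different from $M'(h^1)$, note that $r_i$ ranks only $h^1$ above the closed copy $h^2$ and hence prefers $h^2$ to its partner, so $(h^2,\{r_i\})$ blocks. Thus at least two copies are open and $\Psi(M')$ is feasible in $\ibefore$; the feasibility of $\Phi(M)$ is immediate, since an open $h$ in $M$ already carries between $2$ and $\uq(h)$ residents.

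The main obstacle, and the last step, is the exact correspondence of blocking pairs and coalitions. Blocks avoiding all copies transfer verbatim, because the two instances and the two matchings coincide away from $h$. For blocks touching $h$ I would translate using the prefix/decreasing structure: a resident $r$ blocking into an open copy $h^i$ witnesses $r\succ_h M'(h^i)=r_i\succeq_h r_j$, so $r$ beats the worst occupant $r_j$ of $h$ and $(h,r)$ blocks $M$ in $\ibefore$; a singleton coalition opening a closed copy $h^{i'}$ can only occur when $j<\uq(h)$, i.e.\ when $h$ is undersubscribed in $M$, and then yields a blocking pair $(h,r)$ in $\ibefore$; both directions run symmetrically, invoking \Cref{ob:ma} and \Cref{cor:first-phase} where the occupancy status of $h$ is needed. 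Checking that every block on one side produces a block on the other---and that no spurious intra-$h$ block is created, which is exactly what the decreasing-order assignment prevents---is the routine but case-heavy part. Once it is done, $\Phi$ and $\Psi$ are inverse bijections between the stable matchings of the two instances, which is the claim.
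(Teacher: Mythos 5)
Your proposal is correct and follows essentially the same route as the paper's proof: the paper likewise defines the split map $\sigma$ (your $\Phi$), shows it sends stable matchings of $\ibefore$ to stable matchings of $\iafter$ by translating any block back to a block on the merged hospital, and establishes surjectivity via exactly your structural fact that in any stable matching of $\iafter$ the open copies form a prefix occupied in decreasing $h$-preference (using that $h^1$ and $h^2$ must be matched since $h$ held two proposals). The case analysis you defer is the same one the paper carries out, and your sketched translations of blocking pairs and coalitions are the correct ones.
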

	\begin{proof}
		Let $h$ be a quota-two hospital holding two proposals, and let 
		$\ibefore$ be the instance before splitting hospital~$h$ and $\iafter$ the
		instance after splitting $h$.
		We define a function $\sigma$ by mapping a matching~$M$ with $M(h) = \{r_1,
		\dots, r_k\}$ and $h$ preferring $r_i$ to $r_{i+1}$ for all $i\in [k]$ 
		in~$\ibefore$ to 
		the matching~$\sigma (M) := (M\setminus \{(h , \{r_1, \dots, r_k\})\})
		\cup \{ (h^1, r_1), \dots, (h^k, r_k) \}$ in $\iafter$.
		We now show that $\sigma$
		is a bijection
from the set 
		of stable matchings in $\ibefore$ to the stable matchings in $\iafter$, proving the lemma.
		
		Let $M$ be a stable matching in \ibefore.
		By \Cref{cor:first-phase}, $h$ is open in $M$.
    We claim that~$M' :=\sigma (M)$ is a stable matching in $\iafter$.
		Any blocking coalition or pair must contain one resident from~$r_1, r_2,
		\dots, r_k$ or a hospital from \hone, \dots, $h^{\uq (h)}$, as all other agents are matched the same in~$M$ and~$M'$.
		If a blocking coalition or pair involves $r_i$ for some $i\in [k]$, 
		then it must also involve a
		hospital $h^j$ for some~$j\in [\uq (h)]$, as there does not exist a hospital
		that $r_i$ prefers 
		to~$h^1$, \dots, $h^k$ which she does not prefer to $h$.
		However, since $h^\ell$ and $r_\ell$ prefer each other to $r_p$ and
		$h^p$ for $\ell < p$, there is no such blocking coalition or pair.
		Next, assuming that for some $r\in R\setminus\{r_1,\dots, r_k\}$ 
		and~$i\in 
		[\uq (h)]$, pair~$(h^i,r)$ is a blocking pair for $M'$, then $r$ 
		prefers $h^i$ to $M'(r)$ and therefore also $h$ to $M(r)$.
		If $ i \le k$, then, as $(h^i,r)$ blocks $M'$, $h^i$ (and thus $h$)
		prefers $r$ to $r_i $.
		If $i > k$, which implies that $h^i$ is closed in~$M'$, then $h$ is 
		undersubscribed in~$M$.
		In both cases, $(h,r)$
		is a blocking pair for $M$,
		a contradiction.
		Note that $\sigma$ is obviously injective.
		
		Vice versa, let $M'$ be a stable matching in \iafter.
		Note that by our initial assumption, $h$ holds at least two proposals 
		in~$\ibefore$ and that $h^1$
and~$h^2$ are the first choices of the two residents proposing to $h$.
		Therefore, hospitals~$\hone$ and~$\htwo$ are both matched in $M'$.
		Note that due to the stability of $M'$,
		there exists some $2\le k\le \uq (h)$ such that $h^i$ is matched to an agent $r_i$ for $i \le k$ and $h^i$ is unmatched for all $i  > k$.
		We claim that $M := \sigma^{-1} (M') =\bigl(M' \setminus \{(\{h^i, \{r_i\}) : i\in
		[k]\}\bigr) \cup \{(h, \{r_1, \dots, r_k\})\}$ is a stable matching in~\ibefore.
		Any blocking coalition or pair for $M$ in $\ibefore$ must involve $h$ 
		and a resident from~$R\setminus\{r_1, \dots, r_k\}$. As $h$ is open, 
		there
		cannot exist a blocking coalition in $M'$. 
		If there exists a blocking pair $(h, s)$ for some resident~$s\in
		R\setminus\{r_1, \dots, r_k\}$, then $s$ 
		prefers $h$ to $M(s)$ and $h$ is undersubscribed or for some $j\in 
		[k]$, hospital~$h$ prefers $s$ to $r_j$.
		In the first case, there exists some $h^j$ such that $h^j$ is unmatched, 
implying that $(h^j, \{s\})$ is a blocking coalition for $M'$ in~$\iafter$, 
contradicting the stability of $M'$.
		In the later case, $(h^j, s)$ is a blocking pair for $M'$ in $\iafter$, 
contradicting the stability 
		of $M'$.
		Note further that $\sigma(M)=M'$ which implies that $\sigma^{-1}$ is a 
		right inverse of $\sigma$ and thus that $\sigma$ is surjective. To see 
		this, recall that all hospitals $h^i$ for $i\in [u(h)]$ rank all 
		residents in the same order and that all residents accepting them prefer
		$h^i$ to $ h^{i+1}$ for all $i\in [u(h)-1]$. Thus, by the stability of 
		$M'$ 
		it needs to hold that $h$ prefers $r_i$ to $r_{i+1}$ for all $i\in 
		[k-1]$ 
		and thereby that $\sigma(M)=M'$.
	\end{proof} 
	
	We conclude with several observations about the situation after the 
	excessive 
	application of Phase 1 that we will use for the definition of a generalized rotation
	in Phase~2:
	\begin{lemma}
		\label{lem:first-phase-rotation}
		After the application of Phase 1,
		\begin{enumerate}
		\setlength\itemsep{0em}
			\item[(1)] each agent either holds and issues exactly 
			one proposal or neither holds nor issues a proposal.
			\item[(2)] each quota-two hospital 
			\begin{itemize}
			\setlength\itemsep{-0.1em}
				\item has upper quota two, and holds the proposal
				of one of its first two residents,
				\item holds a proposal and has only two agents left on its 
				preferences, or
				\item holds no proposal.
			\end{itemize}
			\item[(3)] if an agent $a^*$ appears on the preference list of 
			another 
			agent $a'$ and $a'$ has more than one agent on its preference 
			list, then 
			$a^*$ also has more than one agent on its preference list.
			
		\end{enumerate}
	\end{lemma}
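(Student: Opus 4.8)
The plan is to read off all three properties from the terminal configuration of Phase~1, i.e.\ the state reached after the last execution of Phase~1a in the loop that also performs the splitting of Phase~1b. Two facts drive everything. First, by the end-processing of Phase~1a, every resident and every quota-one hospital with a non-empty preference list issues a proposal to the first agent on its (trimmed) list and holds a proposal from the last agent on its list, where the held proposal is the unique one it retains; an agent with empty preferences neither issues nor holds. Second, the inner while-loop terminates only once no hospital holds at least two proposals, so after Phase~1 every surviving quota-two hospital holds at most one proposal (otherwise Phase~1b would have split it and Phase~1a would have been rerun).

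For property~(1) I would argue agent by agent. A resident or quota-one hospital with a non-empty list holds exactly one proposal (it rejects all but its best) and issues exactly one (it proposes to the top of its list and, by \Cref{lem:first-phase}, is never rejected there); with an empty list it does neither. For a quota-two hospital the two directions must be reconciled with the activation rules: a deactivated quota-two hospital that never received a proposal neither holds nor issues, whereas an activated one holding its single proposal issues exactly one proposal, which follows from the special proposing rule together with the deactivation condition. The delicate point is that a quota-two hospital that simultaneously issues and holds a proposal is ``tentatively matched'' to two residents, and I would use this to pin down the issue count.

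Property~(2) is the technical heart and the main obstacle. Having established that every surviving quota-two hospital holds at most one proposal, the case ``holds no proposal'' is immediate, and the case of an \infl hospital (at most two residents on its list) gives the ``only two agents left'' alternative, since a quota-two hospital needs at least two acceptable residents to survive the final deletion step of Phase~1a. The hard part is a \fl quota-two hospital (more than two residents on its list) that holds a proposal: here I must show its upper quota equals two and that the held proposal comes from one of its first two residents. The key is to track the deactivation condition for a hospital holding a single proposal from~$r$, namely that it stops proposing only after issuing $\uq(h)-1$ proposals (or exhausting its list except~$r$). I would show that if $\uq(h)\ge 3$, then either $h$ ends up holding a second proposal (contradicting survival, as it would be split in Phase~1b) or its list is trimmed down to two residents (making it \infl), by combining the proposing and rejection rules with \Cref{lem:first-phase} and the feasibility constraint that at most $\uq(h)$ residents can be committed to~$h$. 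Ruling out the remaining configurations then forces $\uq(h)=2$ and places~$r$ among the first two residents of~$h$.

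For property~(3) I would argue by contraposition: suppose an agent $a^*$ has exactly one agent, say $a'$, on its preference list and $a^*$ appears on $a'$'s list; I show that $a'$ then also has only $a^*$ on its list. Because mutual acceptability is restored at the end of Phase~1a, $a'$ indeed has $a^*$ on its list. If $a^*$'s list consists solely of $a'$, then $a^*$ both issues and holds the proposal with $a'$; using property~(1) and the fact that a quota-one agent proposes to the top of its list and holds from the bottom, the agent $a'$ must have $a^*$ simultaneously as its proposing target and as its held proposal, which forces $a'$'s list to consist of $a^*$ alone. The subcases where $a^*$ or $a'$ is a quota-two hospital are handled analogously using the terminal structure from property~(2). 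Assembling the three arguments yields the lemma.
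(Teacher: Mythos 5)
The missing idea is the paper's \emph{global counting argument}, which is how statement~(1) is actually proven. Every issued proposal is held by exactly one agent, so the total number of issued proposals equals the total number of held proposals. After Phase~1, every agent holds at most one proposal (any hospital holding two or more was split in Phase~1b and Phase~1a was rerun), and every agent that holds a proposal also issues at least one --- for a quota-two hospital this uses the rejection rule that a hospital rejected by all but one resident on its list rejects that last resident's proposal. Equality in the resulting chain of inequalities then forces every holding agent to issue \emph{exactly} one proposal and every non-holding agent to issue none; that is (1). Statement~(2) then falls out \emph{locally} from (1) via the proposing rule (a quota-two hospital issuing exactly one proposal either has $\uq(h)=2$ and holds a proposal from one of its first two residents, or has proposed to everyone on its list except the proposer, i.e.\ has only two residents left), and (3) follows from (1). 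Your plan runs in the opposite direction --- prove (1) locally, agent by agent, with (2) as the ``technical heart'' --- and the two local steps this requires both fail.

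Concretely: (i) for residents and quota-one hospitals you justify ``holds exactly one proposal'' by ``it rejects all but its best'', which only bounds the number of held proposals from \emph{above}; that an agent with a non-empty list holds \emph{at least} one proposal is precisely the nontrivial content of (1), and it has no local proof (the first hospital on a resident's list holds her proposal, but that hospital proposes to its own most preferred residents, which need not include her). (ii) For a \fl quota-two hospital with $\uq(h)\ge 3$, your dichotomy ``either $h$ ends up holding a second proposal or its list is trimmed down to two residents'' is asserted, not proven, and it is false as a local claim: $h$ may hold one proposal from $r$ and issue $\uq(h)-1\ge 2$ proposals accepted by residents $r_a, r_b$ who each still have a hospital they prefer to $h$ on their lists; then $r_a$ and $r_b$ propose elsewhere, $h$ never receives a second proposal, and no rule shortens $h$'s list. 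Nothing in the proposing, rejection, or deactivation rules excludes this configuration locally --- it is excluded only because, globally, it would make the number of issued proposals exceed the number of held ones. Since your proof of (1) for quota-two hospitals implicitly leans on exactly this part of (2), the argument is circular without the counting step. (Your part (3) essentially reproduces the paper's argument and would go through once (1) and (2) are in place; two minor slips there: $a^*$ cannot be a quota-two hospital simply because quota-two hospitals with at most one resident on their preferences are deleted at the end of Phase~1a, and \Cref{lem:first-phase} is a statement about stable matchings, so it cannot be invoked to claim a proposal to the top of a list ``is never rejected''.)
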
	
	\begin{proof}
		(1) No quota-two hospital holds more than one proposal, as a hospital
		holding two proposals gets split into multiple quota-one hospitals 
in Phase 
		1b. Thereby, every agent can hold at most one proposal.
		
		We claim that every agent that holds a proposal also issues a proposal. 
		If a resident or quota-one hospital does not issue a proposal, then its preferences are empty.
		Consequently, it cannot hold a proposal.
		For quota-two hospitals
		$h$ holding a proposal from a resident $r$, the claim holds, as in the 
		case where all residents except $r$ reject the proposal $h$ and thereby 
        $h$ does not 
		issue a proposal, $h$ rejects the proposal~$r$.
		
		Consequently, as no agent can hold more than one proposal and each 
		agent that holds a proposal also issues one, every agent issues exactly 
		one proposal and holds exactly one proposal or neither holds nor issues 
		a proposal. 
		
		(2) The second statement directly follows from the first one by
observing that a quota-two hospital only issues one 
		proposal in the case where its upper quota is two and it holds the 
		proposal of one of its first two residents or if it only has two 
		residents on its preferences left.
		
		(3)
		If the preferences of $a^*$ only contain $a'$, then $a^*$ is not a 
		quota-two hospital, $a^*$ proposes to $a'$ and by (1) $a^*$ also 
		receives a 
		proposal from $a'$.
		Thus, $a'$ is not a quota-two hospital (as any quota-two hospital receiving exactly one proposal~$r$ does not propose to~$r$, and there are no agents receiving multiple proposals by (1)).
		As $a'$ proposes to~$a^*$, agent~$a^*$ is the first agent in the 
		preferences of $a'$, and as $a'$ receives a proposal from $a^*$, all 
		agents after $a^*$ are deleted from the preferences of $a'$.
		Thus, the preferences of $a'$ only contain $a^*$.
	\end{proof} 
	
	\subsubsection{Phase 2}\label{a:p2}
		We start by proving that as 
	long as there exists a resident 
	with more 
	than one hospital on her preference list, a generalized rotation is 
	guaranteed to exist and can be found in linear time.
	
	\begin{lemma}
		\label{lem:one-proposal}
		Unless the preference list of every resident contains at most one
		hospital, a generalized rotation always exists and can be found in
		$\mathcal{O} (n)$ time, where $n$ is the number of residents.
	\end{lemma}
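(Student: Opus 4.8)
The plan is to construct a generalized rotation by a walk that directly generalizes the rotation-finding procedure in Irving's algorithm. By assumption some resident has at least two hospitals on her preference list, so I would pick such a resident as $a_1$ and then build a sequence $a_1, b_2, a_2, b_3, a_3, \dots$ by alternately applying the rules determining $b_{i+1}$ from $a_i$ (cases \ab{1}, \ab{2}, \ab{3a}, \ab{3b(i)}, \ab{3b(ii)}) and the rules determining $a_{i+1}$ from $b_{i+1}$ (cases \ba{1}, \ba{2}, \ba{3}). The map sending $a_i$ to $a_{i+1}$ obtained this way is a deterministic function on the finite set of residents and quota-one hospitals, so the sequence $a_1, a_2, \dots$ must eventually revisit an element; taking the first repetition $a_j = a_i$ yields a cycle $a_i, a_{i+1}, \dots, a_{j-1}$, and this cycle together with the associated $b$'s is a generalized rotation (the required distinctness $a_s \neq a_t$ is immediate from passing to the cyclic part, and closing the cycle supplies the missing $b$ via the indices-mod-$k$ convention).

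The first key step is to verify that the rules are well-posed as a function. Each family is exhaustive and mutually exclusive on the relevant case: an agent is a resident or a quota-one hospital, and for a resident $a_i$ the hospital $h(a_i)$ is either \fl (case \ab{2}) or \infl/quota-one (case \ab{3}), with $g(a_i)$ being quota-one or a quota-two hospital that does or does not hold a proposal; symmetrically for the \ba{} cases the last hospital on a resident's list is quota-one or quota-two. I would also record that in every case $b_{i+1}$ and $a_{i+1}$ are again residents or quota-one hospitals, never quota-two hospitals, so the walk stays inside the domain of the rules.

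The second and most important step is the invariant that every $a_i$ and every $b_i$ produced has at least two agents on its preference list; this is exactly what guarantees the rules never ``run off the end'' (that a second resident, a second hospital, the resident proposing to a hospital, etc., always exists). I would maintain it using \Cref{lem:first-phase-rotation}(3), which propagates ``list length at least two'' from any agent to every agent on its preference list. Given that $a_i$ has length at least two, in each \ab{} case $b_{i+1}$ lies on the list of $a_i$ or of a hospital reachable from $a_i$ that itself has length at least two by \Cref{lem:first-phase-rotation}(3) (for instance in \ab{3b(i)} and \ab{3b(ii)}, $g(a_i)$ is on $a_i$'s list, hence has length at least two, and $b_{i+1}$ is on $g(a_i)$'s list), so a further application of \Cref{lem:first-phase-rotation}(3) gives that $b_{i+1}$ has length at least two. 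The delicate case for $a_{i+1}$ is \ba{3}, where the last hospital $h$ on $b_{i+1}$'s list is quota-two and $a_{i+1}$ is the resident proposing to $h$: here $h$ has length at least two since it is on $b_{i+1}$'s list, by \Cref{lem:first-phase-rotation}(1) it holds exactly one proposal from a unique resident $s$ (with $h=h(s)$), and since $s$ lies on $h$'s list, two applications of \Cref{lem:first-phase-rotation}(3) yield that $a_{i+1}=s$ has length at least two. I expect tracking these quota-two cases, namely \ab{3b(i)}, \ab{3b(ii)}, and \ba{3}, to be the main obstacle, as this is precisely where the argument departs from Irving's setting.

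Finally, for the running time I would maintain the head and tail of each preference list together with, for each quota-two hospital, the resident whose proposal it holds, so that every rule application costs constant time. Since the $a_i$ are pairwise distinct until the cycle closes, and every quota-one hospital appearing as some $a_i$ arises through \ba{2} and can therefore be charged injectively to the distinct resident $b_i$ preceding it, the walk visits $\mathcal{O}(n)$ agents before repeating, which gives the claimed $\mathcal{O}(n)$ bound.
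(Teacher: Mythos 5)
Your proposal is correct and follows essentially the same route as the paper's own proof: start the walk at a resident with at least two hospitals, alternate the \ab{} and \ba{} rules until the sequence cycles, use \Cref{lem:first-phase-rotation} to establish well-definedness (with \ba{3} as the only delicate case) and to propagate the ``list length at least two'' invariant, and bound the running time by charging the walk to residents. Your extra care in extracting the cyclic part of the walk and the explicit charging argument for quota-one hospitals are minor refinements of the same argument, not a different approach.
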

	\begin{proof}
        If there exists a resident $r$ with at least two hospitals on her 
preference list, it is always possible to find a generalized rotation by starting with
setting $a_1:=r$ and then use \ab{1} to \ab{3b(ii)}
as defined above to find $b_2$ and subsequently \ba{1} to \ba{3} to find $a_2$. 
We continue doing so until the process cycles, i.e., we have found some $i\neq 
j$
with $a_i=a_j$. Note that it is also enough to find some $i\neq j$ with 
$b_i=b_j$ as this directly implies that $a_i=a_j$. 
As each computed pair~$(a_i, b_i)$ contains at least one resident, it follows
that the running
time for finding a generalized rotation lies in $\mathcal{O}(n)$.

It remains to argue
that for residents and quota-one hospitals $s$ with at least two agents on 
their preference list, if we set $a_i=s$, then $b_{i+1}$ always exists, has 
at least two agents on its preference list and is unique. Moreover, we need to 
prove the same for $a_i$ if we set $b_{i}=s$. 
Note that for all \ab{1} to \ab{3b(ii)} and \ba{1} to \ba{2}, the successor 
clearly always exists and is unique. For \ba{3}, it needs to holds that resident
$b_i$ holds a proposal of the last hospital $h$ on her preference 
list. By \Cref{lem:first-phase-rotation}, this implies that $h$ also receives 
exactly one proposal. Thus $a_i$ is well-defined and unique.
Furthermore, all computed successors appear on the 
preferences of agents whose preference list has length at least two and, by 
\Cref{lem:first-phase-rotation},
also have at least two agents on their preference list. 
	\end{proof}

	Given an agent $a_i$ appearing in a generalized rotation, we say that 
\emph{\ab{$x$} 
applies to $a_i$} for~$x\in \{1, 2, \text{3a}, \text{3b(i)}, \text{3b(ii)}\}$ 
if 
case \ab{$x$} needs to be applied to compute $b_{i+1}$ from~$a_i$.
	Similarly, for an agent $b_i$ appearing in a generalized rotation, we say 
that 
\emph{\ba{$x$} applies to $b_i$} for~$x\in \{1,2,3\}$ if case \ba{$x$} needs to 
be applied to compute $a_{i}$ from $b_i$. Moreover, in the following, for two 
agents $a,a'$ of which one is a resident and the other is a quota-one hospital, 
we write $\{a,a'\}\in M$ to denote that $a$ and $a'$ are matched to each other 
in~$M$. We now show a list of statements needed to prove the correctness of 
Phase~2. 
We start by considering whether and where agents might repeatedly appear in a 
single generalized rotation: 

		\begin{observation}\label{obs:b}
			In any generalized rotation, $b_i \neq b_j$ holds for $i \neq j$.
		\end{observation}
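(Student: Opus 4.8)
The plan is to exploit that the ``\textbf{Relationship between $b_i$ and $a_i$}'' does not merely constrain the pair $(a_i, b_i)$ but in fact determines $a_i$ uniquely from $b_i$: reading \ba{1}, \ba{2}, and \ba{3} as computation rules, the agent $a_i$ is a deterministic function of $b_i$. Since the definition of a generalized rotation already demands $a_i \neq a_j$ for $i \neq j$, any collision $b_i = b_j$ with $i \neq j$ would feed the same input into this function and hence force $a_i = a_j$, a contradiction.

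First I would make precise that $a_i$ depends only on $b_i$. The type of $b_i$ selects the applicable rule: if $b_i$ is a quota-one hospital, \ba{1} applies and $a_i$ is the last resident on $b_i$'s (fixed) preference list; if $b_i$ is a resident, the quota of the last hospital $h$ on $b_i$'s preference list selects between \ba{2} (then $a_i = h$) and \ba{3} (then $a_i$ is the unique resident proposing to $h$). In each case the output is uniquely determined by $b_i$, because the preference lists are frozen during the search for a single rotation (deletions happen only upon elimination). The only case needing care is \ba{3}, where I must know that $h$ receives exactly one proposal so that ``the resident proposing to $h$'' is well-defined; this is precisely the well-definedness already established in the proof of \Cref{lem:one-proposal}, which in turn relies on \Cref{lem:first-phase-rotation}(1).

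With this in hand the observation is immediate. Suppose for contradiction that $b_i = b_j$ for some $i \neq j$. Applying the rule selected by the common agent $b_i = b_j$ produces the same successor, so $a_i = a_j$. This contradicts the requirement $a_i \neq a_j$ built into the definition of a generalized rotation, completing the proof. I expect no real obstacle beyond carefully citing the well-definedness of \ba{3}; the statement is a direct consequence of $a_i$ being a function of $b_i$ together with the distinctness of the $a_i$ that is already part of the definition.
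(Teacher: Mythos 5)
Your proposal is correct and follows exactly the paper's own argument: the paper's one-line proof is precisely that $b_i = b_j$ forces $a_i = a_j$ (since \ba{1}--\ba{3} determine $a_i$ from $b_i$), which contradicts the distinctness of the $a_i$'s built into the definition of a generalized rotation. Your additional care about the well-definedness of \ba{3} is sound and is indeed handled in the paper within the proof of \Cref{lem:one-proposal} via \Cref{lem:first-phase-rotation}.
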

		
		\begin{proof}
			If $b_i = b_j$ holds, then also $a_{i} = a_{j}$ holds, which 
			implies $i =  
			j$.
		\end{proof} 
		
		\begin{lemma}\label{lem:no-same-g}
			Let $(a_1, b_1)$, \dots, $(a_k, b_k)$ be a generalized rotation.
			If a stable matching $M$ contains $\{a_i, b_i\}$ or $(h (a_i), 
			\{a_i, b_i\})$ for all $i \in [k]$, then it holds for all $i,j \in 
[k]$ that
\begin{itemize}
	  \setlength\itemsep{0em}
  \item[(i)] $a_i \neq b_j$,
  \item[(ii)] $g (a_i)\neq g (a_j)$ for $i\neq j$ whenever $a_i$ and $a_j$ are 
  residents, and
  \item[(iii)] $h(a_i) \neq g(a_j)$ for $i\neq j$ whenever $a_i$ and $a_j$ are 
  residents.
\end{itemize}
		\end{lemma}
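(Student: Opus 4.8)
The plan is to read off from the realisation hypothesis where each agent of the rotation sits in $M$, and then to turn every claimed coincidence into a clash that is forbidden by the functionality of $M$, by the capacity two of a quota-two hospital, or by the distinctness facts already at hand. Throughout I use that, by the hypothesis, each $a_i$ is matched in $M$ to its \emph{best} remaining option (to $b_i$ if $b_i$ is a quota-one hospital, and to $h(a_i)$ if both $a_i,b_i$ are residents), while via the $b_i\mapsto a_i$ rules each $b_i$ is matched to its \emph{worst} remaining option and $b_{i+1}$ encodes $a_i$'s \emph{second} option. I also use \Cref{obs:b} (the $b_i$ are pairwise distinct), the hypothesis that the $a_i$ are pairwise distinct, and \Cref{lem:first-phase-rotation}, which guarantees that every agent occurring in the rotation has at least two agents on its preference list.

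First I would prove (i). Suppose $a_i=b_j=:x$. Reading $x$ as $a_i$ places $x$ at its best remaining option, whereas reading $x$ as $b_j$ and applying \ba{1}--\ba{3} places $x$ at its worst remaining option: if $x$ is a quota-one hospital it is matched to its first remaining resident via \ab{1} but to its last remaining resident via \ba{1}; if $x$ is a resident it is matched to $h(x)$ but also to the last hospital on its list via \ba{2}/\ba{3}. Since $x$ has at least two agents on its list, its first and last remaining options differ, so these two matches are incompatible with $x$ having a single partner in $M$, a contradiction.

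For (ii) and (iii) I would argue by contradiction, splitting according to whether the common hospital $G$ (equal to $g(a_i)=g(a_j)$, respectively $h(a_i)=g(a_j)$) is \fl, an \infl quota-two hospital, or a quota-one hospital, and whether it is the first or the second hospital of $a_i$ and of $a_j$. The favourable cases are those in which the equality forces the two distinct residents $a_i\neq a_j$ to be matched to the same quota-two hospital $G$ in $M$ (this happens precisely when $h(a_i)=h(a_j)=G$ is \fl): since $|M(G)|\le 2$, the realisation tuples $(G,\{a_i,b_i\})$ and $(G,\{a_j,b_j\})$ then give $M(G)=\{a_i,a_j\}$, whence $b_i\in\{a_i,a_j\}$ and so $b_i=a_j$, contradicting~(i). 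When instead $G$ is only the \emph{second} hospital of some $a_\cdot$, I would inspect the induced successor: if $G$ holds a proposal, then \ab{3b(i)} (or the \fl analysis via \ab{2}) makes both successors equal to the same resident, forcing $b_{i+1}=b_{j+1}$ and hence $i=j$ by \Cref{obs:b}; and if $G$ holds no proposal, then \ab{3b(ii)} makes the two successors coincide unless one of $a_i,a_j$ is $G$'s most preferred resident, in which case the successor of the other equals that $a_\cdot$, again contradicting~(i).

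The step I expect to be the main obstacle is the mixed case in which $G$ is a first choice of one resident (so that this resident is genuinely matched to $G$ in $M$) and a second choice of the other (so that the other resident only ``wants to move into'' $G$), in particular when $G$ is a quota-one or an \infl hospital. Here capacity and distinctness alone do not immediately close the argument, and the $b$-side bookkeeping only yields adjacency relations such as $i=j+1$ rather than an outright collision. I would therefore fall back on the propose-and-reject structure of Phase~1, using \Cref{ob:ma} and \Cref{le:deletion2} to pin down which resident $G$ can hold and be matched to, and \Cref{cor:first-phase} to locate the open hospitals, so as to exhibit a blocking pair or a blocking coalition contradicting the stability of $M$. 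Checking that such a blocking configuration really arises in each remaining sub-case, and that the held/issued proposals are consistent with the assumed realisation, is the most delicate part and is where the full $\ab{}$/$\ba{}$ analysis has to be carried out.
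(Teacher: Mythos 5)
Your part (i) is correct and complete, and in fact cleaner than the paper's own argument: reading $x=a_i=b_j$ once as an $a$-element (matched in $M$ to the \emph{first} agent on its list) and once as a $b$-element (matched to the \emph{last} agent on its list) forces $x$'s preference list to have length one, contradicting the fact that rotation members have at least two agents on their lists; the paper instead runs a case distinction over the types of $a_i$ and $h(a_i)$ and invokes the Phase-1b splitting. Your part (ii) has all the needed ingredients, with one imprecision: in the mixed case where $G=g(a_i)=g(a_j)$ is \fl, so that $G=h(a_i)$ holds $a_i$'s proposal while being only $a_j$'s \emph{second} hospital, the two successors do \emph{not} coincide; rather \ab{3b(i)} applied to $a_j$ gives $b_{j+1}=a_i$ outright, which contradicts (i), not \Cref{obs:b}. (The paper's proof of (ii) covers this uniformly via the observation that $b_{\ell+1}=r$ if and only if $a_\ell\neq r$, where $r$ is the proposal held by $G$.)

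The genuine gap is in part (iii), exactly the case you flagged, but your fallback plan cannot repair it. Half of your ``obstacle'' dissolves with no stability argument at all: if $G=h(a_i)=g(a_j)$ is \infl, then $G$ is a quota-two hospital holding $a_i$'s proposal and $h(a_j)\neq g(a_j)$ (otherwise $h(a_j)$ would be \fl), so \ab{3b(i)} applies to $a_j$ and yields $b_{j+1}=a_i$, contradicting (i). The other half, $G$ of quota one, cannot be closed by exhibiting a blocking pair or coalition, because part (iii) is \emph{false} there. Take residents $r_1: p_1\succ p_2$ and $r_2: p_2\succ p_1$ and quota-one hospitals $p_1: r_2\succ r_1$ and $p_2: r_1\succ r_2$. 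Phase 1 deletes nothing, and $(a_1,b_1)=(r_1,p_1)$, $(a_2,b_2)=(r_2,p_2)$ is a generalized rotation (\ab{3a} and \ba{1} throughout). The matching $M=\{(r_1,p_1),(r_2,p_2)\}$ is stable and contains $\{a_i,b_i\}$ for both $i$, yet $h(a_1)=p_1=g(a_2)$. This situation is generic rather than exotic: whenever \ab{3a} is used at index $j$, the hospital $b_{j+1}=g(a_j)$ is quota-one, and \ba{1} then makes $a_{j+1}$ the resident proposing to it, so $h(a_{j+1})=g(a_j)$ holds automatically --- the ``adjacency relation $i=j+1$'' you anticipated is exactly what materializes, and it is compatible with the realization hypothesis.

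You are in good company: the paper's own proof has the same hole, concluding from $h(a_i)\neq g(a_i)$ that ``$h(a_i)$ is \infl{}'' and thereby silently discarding the quota-one possibility. The statement should be weakened so that (iii) asserts $h(a_i)\neq g(a_j)$ only when this common hospital has lower quota two; that restricted version is all that \Cref{lem:rotation} actually uses (there the hospital in question is always a quota-two hospital), and it follows from the \ab{3b(i)} argument above together with your parts (i) and (ii).
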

		
		\begin{proof}
            To prove (i), for the sake of contradicting, let us assume that 
$a_i = b_j$ for some~$i,j \in [k]$.
            By \ba{1} to \ba{3} it follows that $i \neq j$ (for \ba{3}, note 
that since $b_j$ is part of a rotation, it follows that $b_j$ has at least two 
hospitals in its preferences, and thus does not propose to $h$, the last 
hospital on its preferences).
            If $a_i$ is a quota-one hospital or $a_i$ is a resident and 
            $h(a_i)$ is a quota-one hospital, then it also holds that $b_i = 
a_j$, since $M$ contains $\{a_\ell, b_\ell\}$ for $\ell \in \{i,j\}$.
            Thus, $a_i$ and $b_i$ propose to each other, implying that their
            preference lists contain only one agent.
            It follows that they cannot be part of a generalized rotation.
            Otherwise $a_i$ is a resident and $h(a_i)$ is a quota-two hospital.
            Thus, $M$ contains both $(h(a_i), \{a_i, b_{i}\})$ and $(h(a_{j}),
            \{a_{j}, a_i\})$, implying that $h(a_i) = h(a_{j})$.
            It follows that both $a_i$ and $a_{j}$ issue a proposal to
            $h(a_i)$.
            Since every hospital with at least two proposals got split, it
            follows that
            $a_i = a_{j}$, a contradiction.
            Thus, $a_i \neq b_j$ for all $i,j \in [k]$ holds.
            
            To prove (ii), assume that $g:= g (a_i) = g (a_j)$ for some $i \neq 
j\in [k]$ such that $a_i$ and $a_j$ are residents.
            Note that $g$ cannot be a quota-one hospital, as from this it would 
            follow that \ab{3a} applies for both~$a_i$ and~$a_j$ which would 
            imply 
            that $b_{i+1} = b_{j+1}$, contradicting \Cref{obs:b}. Thus, we
            assume that 
            $g$ is a quota-two hospital and distinguish between the case where 
            $g$ received 
            a proposal and the case where $g$ did not receive a proposal.
            We will find for both
            cases a 
            resident $r$ appearing both as $a_i$ and $b_{j+1}$, contradicting 
            (i).
            
            {\bfseries Case 1: } $g$ did not receive a proposal.
            
            Then $g$ cannot appear as $h (a_p)$ for any $p\in [k]$, and
            \ab{3b(ii)} applies for both $a_i$ and~$a_j$.
            Let $r $ be the first resident on $g$'s preference list.
            Then $b_{\ell+1} = r$ if and only if $a_\ell \neq r$ for $\ell \in 
            \{i,j\}$.
            Since $a_i \neq a_j$ by the definition of a generalized rotation and $b_{i  +
            	1} \neq 
            b_{j+1}$ by \Cref{obs:b}, it follows without loss of generality
            that  $r = a_i$ and~$r\neq a_j$.
            Since $r\neq a_j$, it follows that $b_{j+1} = r = a_i$.
            
            {\bfseries Case 2: } $g$ received a proposal from a resident $r$.
            
            Then $b_{\ell+1} = r$ if and only if $a_\ell \neq r$ for $\ell \in 
            \{i,j\}$ (note that \ab{2} only applies if $a_\ell = r$).
            It follows by \Cref{obs:b} that w.l.o.g.\ $r = a_i$ and $r\neq 
a_j$. Since $r\neq a_j$, it follows that $b_{j+1} = r  = a_i$.

          To see (iii), assume that there exists some $i\neq j$ such that $h(a_i) = g (a_j)$.
          By (ii), we have that $g(a_j) = h(a_i) \neq g(a_i)$.
          Therefore, $h(a_i)$ is \infl.
          Thus, it follows that $a_i$ and $b_i$ are the only agents in the preferences of $h(a_i)$ and thus $\{a_i, b_i\} = \{a_j, b_{j+1}\}$.
          By (i), it follows that $a_i = a_j$, a contradiction.
		\end{proof}
		
		The definition of a generalized rotation and \Cref{lem:first-phase-rotation} imply the following
		observations.
		\begin{observation} \label{ob:ab2}
			If for some $a_i$ case \ab{2} applies, i.e., $a_i$ is a resident and
			$h(a_i)$ 
			is a flexible hospital, then 
			\begin{enumerate}
			\setlength\itemsep{-0.2em}
				\item $b_i$ and $b_{i+1}$ are also residents, and
				\item $a_i$ and $b_i$ are the two first residents on $h(a_i)$'s 
				preference list and $b_{i+1}$ is the third resident on 
				$h(a_i)$'s 
				preference list.
			\end{enumerate}
		\end{observation}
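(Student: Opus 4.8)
The plan is to read the shape of the post-Phase~1 instance off \Cref{lem:first-phase-rotation} and match it against the case definitions of a generalized rotation. Throughout I would use the property established for Phase~1a that, after Phase~1, every resident and every quota-one hospital issues its proposal to the first agent and holds a proposal from the last agent on its (remaining) preference list. Both parts of the observation follow once I identify which \ba-case produces $a_i$ from $b_i$ and determine exactly what the flexible hospital $h(a_i)$ holds and proposes.

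First I would determine $b_i$. Since $a_i$ is a resident, case \ba{2} (which sets $a_i$ to a quota-one hospital) is impossible, so either \ba{1} or \ba{3} applies. To rule out \ba{1}, suppose $b_i$ is a quota-one hospital with $a_i$ the last resident on its list; then $b_i$ holds $a_i$'s proposal, so $a_i$ proposes to $b_i$ and hence $h(a_i)=b_i$ is a quota-one hospital, contradicting the hypothesis that $h(a_i)$ is \fl. Therefore \ba{3} applies, $b_i$ is a resident, and the last hospital on $b_i$'s list is $h:=h(a_i)$, a flexible quota-two hospital; this already gives the first half of part~1.

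Next I would pin down the positions on $h$'s list. Because \ba{3} applies, $b_i$ holds a proposal from $h$, so $h$ issues a proposal; by \Cref{lem:first-phase-rotation}(1) it then holds exactly one proposal, and since $h$ is flexible (so it has more than two residents and is not deleted), \Cref{lem:first-phase-rotation}(2) forces the case in which $h$ has upper quota two and holds the proposal of one of its first two residents---namely the proposer $a_i$. By the Phase~1a proposal rule, a quota-two hospital holding a single proposal from one of its first two residents proposes only to its most preferred resident different from that proposer, so $h$ proposes exactly to $b_i$; thus $b_i$ is $h$'s most preferred resident other than $a_i$. A two-line case distinction on whether $a_i$ is $h$'s first or second resident then shows $\{a_i,b_i\}$ are precisely $h$'s two most preferred residents. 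Finally, \ab{2} defines $b_{i+1}$ as the second-most preferred resident on $h$'s list who is not $a_i$; in either case this is $h$'s third resident, which exists and is a resident because $h$ is flexible. This yields part~2 and the remaining half of part~1.

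I expect the most delicate step to be the elimination of \ba{1} together with the correct application of the special-case proposal rule for quota-two hospitals: both hinge on the precise ``holds-from-last, issues-to-first'' shape of the lists after Phase~1 and on the fact that a flexible hospital holding a single proposal from one of its top two residents proposes to exactly one further resident. Once these are in place, the rest is a routine case distinction, so I anticipate no further obstacle.
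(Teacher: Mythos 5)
Your proof is correct and takes essentially the same route as the paper's own: both rule out \ba{1} and \ba{2} (using that $a_i$ proposes only to its top choice $h(a_i)$) to conclude \ba{3} applies, then invoke \Cref{lem:first-phase-rotation} to get that $h(a_i)$ has upper quota two and holds only $a_i$'s proposal among its first two residents, and finally use the Phase~1a proposal rule to place $b_i$ (and hence $b_{i+1}$ as the third resident). The only difference is cosmetic ordering, plus your proof makes explicit the last step (that $b_{i+1}$ is the third resident), which the paper leaves implicit.
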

		\begin{proof}
			As \ab{2} applies for $a_i$, agent~$b_{i+1}$ is the second-most preferred
			resident 
			on the preferences of $h(a_{i})$ 
			which is not $a_{i}$.
			By the definition of~$h(a_i)$, resident $a_i$ proposes to~$h(a_i)$, 
and is therefore by \Cref{lem:first-phase-rotation} among the first two 
residents in the preferences of~$h(a_i)$. 

            We claim that \ba{3} applies for $b_i$. 
			If \ba{1} applies for~$b_{i}$, then~$a_i$ is the last choice of
quota-one hospital~$b_i$ and therefore proposes to $b_i$, but we have $b_i \neq 
h(a_i)$ since $h(a_i)$ is a quota-two hospital. If \ba{2} applies for
$b_i$, then $a_i$ is a quota-one hospital, a contradiction.
			Thus, \ba{3} applies for~$b_i$.
			As the preferences of resident $b_i$ are non-empty, it holds a 
            proposal from the last hospital on its preferences.
			Thus, by \ba{3}, it holds that $h(a_i)$ proposes to $b_i$.
			By \Cref{lem:first-phase-rotation}, $h (a_i)$ has upper quota two. 
			From this, it follows
that $b_i$ and $a_i$ are among the first two residents in the preferences of $h(a_i)$.
		\end{proof}
		
		From \Cref{lem:first-phase-rotation}, we can conclude the following 
		observation:
		\begin{observation} \label{ob:hai}
			Let $(a_1, b_1), \dots, (a_k, b_k)$ be a generalized rotation. For all $i\in
			[k]$ such that $a_i$ is a resident and $h(a_i)$ is a quota-two hospital, hospital
			$h(a_i)$ holds the proposal of $a_i$ and if $h(a_i)$ is open in a 
			stable matching $M$, then exactly two residents are matched to 
			$h(a_i)$,
			one of which must be $a_i$.
		\end{observation}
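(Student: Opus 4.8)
The plan is to read both claims off the structure produced by Phase~1, principally \Cref{lem:first-phase-rotation}, together with the ``no agent ends up matched better than its held proposal'' principle of \Cref{ob:ma}. First I would settle the proposal claim. Since $a_i$ is a resident occurring in a generalized rotation, its preference list has length at least two and is in particular non-empty; by \Cref{lem:first-phase-rotation}(1) it therefore issues exactly one proposal, and by the mechanics of Phase~1a this proposal is held by the first hospital on $a_i$'s current preference list, which is by definition $h(a_i)$. Hence $h(a_i)$ holds the proposal of $a_i$. By \Cref{lem:first-phase-rotation}(1) this is moreover the only proposal $h(a_i)$ holds.

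For the second claim, suppose $h:=h(a_i)$ is open in a stable matching $M$. As $h$ is a quota-two hospital, feasibility forces $|M(h)|\ge 2$. To bound $|M(h)|$ from above I would use \Cref{lem:first-phase-rotation}(2): since $h$ holds the (unique) proposal of $a_i$, either (a)~$h$ has upper quota two and $a_i$ is among its two most-preferred residents, or (b)~$h$ has only two residents left on its preference list, one of which is $a_i$ (as $a_i$ proposes to $h$). In case~(a) we get $|M(h)|\le u(h)=2$, and in case~(b) only two residents are acceptable to $h$, so again $|M(h)|\le 2$. In both cases $|M(h)|=2$.

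It remains to prove $a_i\in M(h)$, which is the step requiring the most care. In case~(b) it is immediate: $h$ has exactly two acceptable residents and both must be matched to $h$ because $|M(h)|=2$, and one of them is $a_i$. In case~(a), suppose towards a contradiction that $a_i\notin M(h)$. By \Cref{ob:ma}, $a_i$ cannot be matched to a hospital it prefers to $h=h(a_i)$; since $h(a_i)$ is $a_i$'s top choice, $a_i$ thus prefers $h$ to $M(a_i)$ (or is unmatched, which is strictly worse than $h$). Because $a_i$ is among $h$'s two most-preferred residents, $h$ prefers at most one resident to $a_i$, so at least one of the two residents in $M(h)$ is ranked by $h$ strictly below $a_i$. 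Since $a_i$ and $h$ accept each other (as $h$ holds $a_i$'s proposal), the pair $(a_i,h)$ then blocks $M$, contradicting stability. The delicate point is precisely this final blocking-pair construction, where the position of $a_i$ in $h$'s list, supplied by \Cref{lem:first-phase-rotation}(2), must be combined with the ``matched no better than the held proposal'' consequence of \Cref{ob:ma}.
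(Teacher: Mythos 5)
Your proof is correct and follows essentially the same route as the paper's: the proposal claim comes from the Phase~1a mechanics ($a_i$ proposes to the top hospital of its list, which is $h(a_i)$), and the matching claim is obtained by the same case distinction from \Cref{lem:first-phase-rotation}(2) (upper quota two with $a_i$ among the top two residents, versus only two residents left on the preference list), concluding in the first case with the same blocking-pair argument $(a_i,h(a_i))$. Your write-up merely makes explicit the details the paper leaves implicit (why $|M(h)|=2$ and why the pair blocks), so no substantive difference.
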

		\begin{proof}
			Let $a_i$ be a resident and $h(a_i)$ a quota-two hospital for 
			some $i\in [k]$. Then, by the definition of Phase 1a, $h(a_i)$ 
			holds the proposal of $a_i$. There exist two possibilities. If 
			$h(a_i)$ has upper quota two, then $a_i$ is one of the fist two 
			residents on the preference list of $h(a_i)$ by 
			\Cref{lem:first-phase-rotation}. Thus, if $h(a_i)$ is open in a 
			stable matching, then exactly two residents need to be matched to 
			$h(a_i)$. One of these two needs to be $a_i$, as otherwise 
			$(a_i,h(a_i))$ is a blocking pair.
			Otherwise, by \Cref{lem:first-phase-rotation}, $h(a_i)$ has only 
			two residents on its preferences left one of which is $a_i$. From 
			this, it directly follows that if $h(a_i)$ is open, then $a_i$ and 
			the other resident left on the preferences of  $h(a_i)$
			need 
			to be matched to it.
		\end{proof}
	As the final step before proving the correctness of Phase 2, we prove that, 
	similar as for classical rotations in the \textsc{Stable Roommates} 
	problem, in a stable matching,
	either each agent pair in a generalized rotation 
	is matched (either to each other or to a specific hospital) or none of 
them.  
		\begin{lemma}\label{le:rotation-eli}
			Let $(a_1, b_1), \dots, (a_k, b_k)$ be a generalized rotation.
			If a stable matching~$M$ contains $\{a_i, b_i\}$ or $(h (a_i), 
			\{a_i, b_i\})$ for some $i\in [k]$, then $M$ contains $\{a_i,
			b_i\}$ or~$(h (a_i), \{a_i, b_i\})$ for all~$i \in [k]$.
		\end{lemma}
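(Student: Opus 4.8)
The plan is to reformulate the hypothesis into a statement purely about the positions $a_i$ and $b_i$ occupy in their (Phase-1-reduced) preference lists, and then to show that being matched at one position forces being matched at the neighbouring one, so that the property propagates around the whole cycle and hence holds everywhere.

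First I would use the structure left behind by Phase~1: by \Cref{lem:first-phase-rotation} each agent proposes to the first entry of its reduced list and holds a proposal from the last entry, every quota-two hospital has upper quota two, and no agent holds more than one proposal. Going through \ba{1}--\ba{3} then shows that $b_i$'s bottom list-entry is either $a_i$ itself, when $a_i$ is a quota-one hospital (\ba{1}) or when $b_i$ is a resident whose last hospital $a_i$ is of quota one (\ba{2}), or the quota-two hospital $h(a_i)$ that $b_i$ holds (\ba{3}, where the resident $a_i$ proposes to $h(a_i)$). Consequently, ``$M$ contains $\{a_i,b_i\}$ or $(h(a_i),\{a_i,b_i\})$'' is equivalent to saying that $a_i$ is matched to its top choice (the hospital $h(a_i)$, or the resident $b_i$ if $a_i$ is a quota-one hospital) \emph{and} $b_i$ is matched to its bottom choice. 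This is the reformulation I would use throughout, with the caveat that when a resident's top hospital is \fl one must track not only the hospital but also the co-resident, which is exactly what \ab{2} encodes.

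Next I would prove \emph{backward} propagation: if $M$ contains the pair at position $i$, then it contains the pair at position $i-1$. Since a generalized rotation is a single cycle, iterating this claim yields all positions, proving \Cref{le:rotation-eli}. So assume $M$ contains pair $i$ but, for contradiction, not pair $i-1$. Let $H$ be the hospital realizing the relationship $a_{i-1}\to b_i$ via the applicable \ab{} case: $a_{i-1}$ itself in \ab{1}, $h(a_{i-1})$ in \ab{2}, $b_i=g(a_{i-1})$ in \ab{3a}, and the quota-two hospital $g(a_{i-1})$ in \ab{3b(i)}/\ab{3b(ii)}. The key point is that the two endpoints of this connection both prefer it to their current matches. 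By the reformulation $b_i$ sits at its bottom; when $b_i$ is a resident it strictly prefers $H$ to that bottom hospital, and when $b_i$ is itself the quota-one hospital $g(a_{i-1})$ (\ab{3a}) it strictly prefers $a_{i-1}$ to the resident it holds. Both statements use that $b_i$'s bottom is tied to $a_i$ while $a_i\neq a_{i-1}$ (distinctness of the $a_j$ and \Cref{obs:b}). On the other side, since pair $i-1$ is not contained, $a_{i-1}$ is not matched to its top; by \Cref{ob:ma} it is matched no better than its top and, because the $H$-slot is already taken by $b_i$ (again $a_i\neq a_{i-1}$), \Cref{le:deletion2} forces $a_{i-1}$ strictly worse than the $H$-option. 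Hence both $a_{i-1}$ and $b_i$ prefer $H$, so $H$ is blocked: a blocking pair if $H$ is of quota one, a size-two blocking coalition if $H$ is a closed quota-two hospital, and (using \Cref{ob:ma} on the residents already matched to $H$) a blocking pair if $H$ is an open quota-two hospital. In every case this contradicts the stability of $M$.

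I expect the main obstacle to be precisely this last, case-heavy step: pairing each of the five \ab{} cases with the three \ba{} cases and, above all, controlling the quota-two connecting hospitals. One must verify in each case that $H$ is genuinely preferred by the relevant endpoint, determine whether $H$ is closed, open, full, or undersubscribed in $M$ so as to pick the right blocking object, and exclude the degenerate coincidence in which $H$ coincides with $b_i$'s own bottom hospital (which would force $a_{i-1}$ and $a_i$ to share a top hospital and is ruled out, for \fl hospitals, by \Cref{lem:first-phase-rotation} together with the no-splitting property of flexible hospitals). The bookkeeping for \ab{2} and \ab{3b}, where $a_{i-1}$ and $b_i$ are both residents and the block is a coalition opening $H$, is the most delicate part, and it is there that \Cref{cor:first-phase} is needed to guarantee that the agents entering the blocking object are indeed matched as claimed.
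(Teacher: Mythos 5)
Your overall strategy coincides with the paper's: assume pair $i$ is realized in $M$ but pair $i-1$ is not, and derive a contradiction from a blocking pair or coalition built on the ``connection'' between $a_{i-1}$ and $b_i$ (the quota-one hospital $a_{i-1}$ itself under \ab{1}, the quota-one hospital $b_i=g(a_{i-1})$ under \ab{3a}, and the quota-two hospital $g(a_{i-1})$ --- possibly equal to $h(a_{i-1})$ --- in the remaining cases). Your reformulation ``pair $i$ realized $\iff$ $a_i$ is at its top and $b_i$ is at its bottom'' is also sound, given \Cref{ob:hai} and the structure from \Cref{lem:first-phase-rotation}.

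However, one central step fails as stated, precisely in the case you yourself flag as delicate. You claim ``since pair $i-1$ is not contained, $a_{i-1}$ is not matched to its top'' and conclude that ``both $a_{i-1}$ and $b_i$ prefer $H$.'' When $h(a_{i-1})$ is \fl (case \ab{2}), the absence of $(h(a_{i-1}),\{a_{i-1},b_{i-1}\})$ from $M$ does \emph{not} prevent $M$ from matching $a_{i-1}$ to $H=h(a_{i-1})=g(a_{i-1})$ together with some co-resident $r'\neq b_{i-1}$; then $a_{i-1}$ sits at its top choice, strictly prefers nothing, and cannot participate in any blocking pair or coalition. The contradiction in this sub-case must come from the blocking pair $(H,b_i)$ alone: by \Cref{ob:ab2}, $a_{i-1}$ and $b_{i-1}$ are the first two residents and $b_i$ the third on $H$'s list, and since $r'\notin\{a_{i-1},b_{i-1},b_i\}$ (the last because $b_i$ is matched to its bottom, which is not $H$), hospital $H$ prefers $b_i$ to $r'$. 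This is exactly why the paper's proof only asserts the weak statement that $a_{i-1}$ ``does not prefer $M(a_{i-1})$ to $g(a_{i-1})$'' and, in the open-hospital case, derives the block from $b_i$'s rank on the hospital's list rather than from $a_{i-1}$'s preferences. Relatedly, your appeal to \Cref{le:deletion2} to force $a_{i-1}$ ``strictly worse than the $H$-option'' is misdirected: that lemma lower-bounds an agent's partner by the proposal it holds; it cannot upper-bound where $a_{i-1}$ ends up. These problems are localized --- the skeleton and the other cases match the paper's proof and go through --- but as written the argument breaks down exactly in the flexible case.
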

		
		\begin{proof}
			Note that by \Cref{ob:hai}, the hospital $h(a_i)$
			can be matched to at most two residents for every $i \in [k]$.
			Assume, for the sake of contraction, that for some $i\in [k]$, $M$ 
			contains $\{a_i, b_i\}$ or $(h 
(a_i), \{a_i, 
			b_i\})$ and 
			that the following two assumptions hold:
			\begin{align}
			\{a_{i-1}, b_{i-1}\}\notin M \tag{1}\label{a1}\\
			( h (a_{i-1}), \{a_{i-1}, b_{i-1}\}) \notin M \tag{2}\label{a2}
			\end{align}
			
			We will now make a case distinction over all possible assignments for $a_i$, $b_i$,
			$a_{i-1}$, and~$b_{i-1}$ and argue that $\{a_i,
			b_i\}\in M$ or $(h (a_i), \{a_i, b_i\}) \in M$ together with
			Assumption~(1) and
			Assumption~(2) implies a blocking pair or coalition for~$M$, a contradiction to the stability of~$M$.

			{\bfseries Case 1: } $a_{i-1} $ is a quota-one hospital.

			Then $b_i$ is a resident.
          Since $a_i$ or $h(a_i)$ is the last hospital in the preferences of~$b_i$, matching~$M$ matches $b_i$ to the last hospital on her preferences, and thus $b_i$ prefers~$a_{i-1}$ to $M(b_i)$.
          Note that $b_{i-1}$ is the best resident on $a_{i-1}$'s preferences and $b_i$ is the second-best resident on $a_{i-1}$'s preferences.
			As $M(a_{i-1} ) \neq b_{i-1}$ by Assumption~(1), it follows that $a_{i-1}$ prefers $b_i$ to~$M(a_{i-1})$.
			Thus, $\{a_{i-1}, b_i\}$ is a blocking pair, a contradiction.

			{\bfseries Case 2: } $a_{i-1} $ is a resident.

			\textbf{Case 2.1: } $b_i$ is a quota-one hospital.

			Then $a_i$ is the last resident on the preferences of $b_i$ and thus $b_i$ prefers $a_{i-1}$ to $a_i$.
			As $b_i$ is a hospital, hospital~$h(a_{i-1})$ is not \fl.
			Thus, Assumptions~(1) and~(2) imply that $a_{i-1}$ is not matched to~$h (a_{i-1})$.
			As $a_{i-1}$ is not matched to the first hospital~$h(a_{i-1})$ in its preferences and also not to the second hospital~$b_i$, it follows that $\{a_{i-1}, b_i\}$ is a blocking pair.

			\textbf{Case 2.2: } $b_i$ is a resident.

			Then $a_i$ or $h(a_i)$ is the last hospital on $b_i$'s preferences, and thus $b_i$ prefers $g(a_{i-1})$ to~$M (b_i)$.
			If $g(a_{i-1}) = h(a_{i-1})$, then $g(a_{i-1})$ is the first 
			hospital in $a_{i-1}$'s preferences. Otherwise, $h (a_{i-1})$ is 
			\infl or a quota-one hospital, and thus by Assumptions~(1) and~(2) 
			$M(a_{i-1}) \neq h(a_{i-1})$.
			In both cases, it follows that $a_{i-1}$ does not prefer $M(a_{i-1})$ to $g(a_{i-1})$.
			If $g(a_{i-1}) $ is closed, then $(g(a_{i-1}), \{a_{i-1}, b_i\})$ is a blocking coalition to open $g(a_{i-1})$.
			Otherwise $b_i$ is among the first two residents in the preferences 
			of $g(a_{i-1})$, and thus, $\{g(a_{i-1}), b_i\}$ is a blocking pair.
	\end{proof} 
	
	Proving the correctness of Phase 2, we start by showing that if there 
	exists a stable matching before the elimination of a generalized rotation, there also
	exists a stable matching after the elimination which matches the same set of residents.
	\begin{restatable}{lemma}{rot}
		\label{lem:rotation}
		Let $(a_1, b_1), \dots, (a_k, b_k)$ be a generalized rotation.
		If an instance admits a stable matching~$M$, then it still admits a
		stable matching~$M'$ which matches the same residents and quota-one 
hospitals 
		after the elimination of this generalized rotation.
		Furthermore, $M$ and $M'$ open the same number of hospitals.
	\end{restatable}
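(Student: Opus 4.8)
The plan is to split the argument by whether the given stable matching $M$ realizes the rotation, using \Cref{le:rotation-eli}: either $M$ contains $\{a_i, b_i\}$ or $(h(a_i), \{a_i, b_i\})$ for \emph{every} $i\in[k]$, or for \emph{none} of them. In the ``none'' case I would simply set $M' := M$. Since eliminating the rotation only deletes mutual acceptabilities, any blocking pair or blocking coalition of $M'$ in the post-elimination instance would use only acceptabilities (and unchanged partners $M(r)$) that are also present in the original instance, hence would already block $M$ there; so $M'$ stays stable. It remains to check that $M$ is still a valid matching, i.e.\ uses no deleted acceptability. The only nontrivial deletion is that of $h(a_i)$ and $b_i$ when both $a_i, b_i$ are residents, but by \Cref{ob:hai} an open $h(a_i)$ must contain $a_i$, so $b_i\in M(h(a_i))$ would force $(h(a_i),\{a_i,b_i\})\in M$, contradicting the ``none'' case. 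As $M' = M$, the residents matched, the open quota-one hospitals, and the number of open hospitals are trivially preserved.

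The substantial case is when $M$ realizes the whole rotation. Here I would build $M'$ by the cyclic shift that reassigns, for each $i$, the partner of $a_i$ from $b_i$ to $b_{i+1}$, realized through whichever of \ab{1}--\ab{3b(ii)} applies to $a_i$: a direct re-match when $a_i$ or $g(a_i)$ is a quota-one hospital, and a re-pairing inside the quota-two hospital $h(a_i)$ or $g(a_i)$ otherwise. All agents outside the rotation keep their $M$-partners. Using the distinctness guarantees of \Cref{lem:no-same-g} (the residents among the $a_i$, the $b_i$, and the relevant hospitals $h(a_i), g(a_i)$ are pairwise distinct where it matters), I would verify that $M'$ is well defined and feasible: a flexible $h(a_i)$ merely exchanges $b_i$ for $b_{i+1}$ and stays open with two residents (\Cref{ob:ab2}), and every other re-pairing respects the quotas because each touched hospital ends with an admissible number of residents.

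For stability of $M'$ in the post-elimination instance, the key observation — exactly as for classical rotations — is that the shift makes every $b_i$ weakly better off while every $a_i$ becomes worse off: since $a_i$ is $b_i$'s least-preferred option by \ba{1}--\ba{3}, resident/hospital $b_i$ improves from $a_i$ to $a_{i-1}$, and since $b_{i+1}$ is $a_i$'s second option, $a_i$ degrades from $b_i$ to $b_{i+1}$. Consequently any surviving block would have to involve some $a_i$ together with an option it strictly prefers to $b_{i+1}$; the only such option is its first option $b_i$ (respectively the configuration $(h(a_i),\{a_i,b_i\})$), whose acceptability is precisely what the elimination removed. Discharging all remaining candidate blocks is then the routine but lengthy part, carried out by a case distinction over which \ab{$\cdot$} and \ba{$\cdot$} cases apply, using \Cref{ob:hai} to control the behaviour of quota-two hospitals.

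Finally I would read off from the construction that $M'$ matches exactly the same residents and opens the same quota-one hospitals as $M$, since the shift only permutes partners among agents already matched in $M$. The most delicate point, which I expect to be the main obstacle, is that $M$ and $M'$ open the \emph{same number} of hospitals: the shift can close an inflexible quota-two hospital $h(a_i)$ when both its residents leave, yet this is always offset by a hospital that opens elsewhere on the cycle (as $h_3$ closing and $h_2$ opening in \Cref{fig:example-p2-small}). Making this compensation precise requires tracking, index by index around the rotation, exactly which hospitals change their open/closed status, and arguing that the closings and openings are matched up one-to-one along the cycle.
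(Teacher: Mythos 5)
Your plan reproduces the paper's proof essentially step for step: the same all-or-nothing dichotomy via \Cref{le:rotation-eli}, taking $M'=M$ in the ``none'' case (justified, as you do, through \Cref{ob:hai}), the same cyclic shift replacing $\{a_i,b_i\}$ or $(h(a_i),\{a_i,b_i\})$ by $\{a_i,b_{i+1}\}$ or $(g(a_i),\{a_i,b_{i+1}\})$ in the ``all'' case, feasibility via \Cref{lem:no-same-g}, \Cref{ob:hai}, and \Cref{ob:ab2}, stability via the monotonicity of the $b_i$'s, and the same index-by-index pairing of each closed \infl $h(a_i)$ with the newly opened $g(a_i)$ to preserve the number of open hospitals. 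One caution for the write-up: your claim that any surviving block must involve some $a_i$ strictly preferring $b_i$ is an overstatement --- when $h(a_i)$ is \fl, resident $a_i$ keeps her top choice rather than getting worse, and the paper's Claims~4 and~5 must separately discharge blocks formed by $g(a_i)$ or $b_i$ together with agents outside the rotation (by lifting them to blocks of $M$) --- but these are exactly the cases you correctly relegate to the announced case analysis.
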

	\begin{proof}
		Let $M$ be a stable matching. We distinguish two cases.
		
		{\bfseries Case 1: } $M$ contains neither $\{a_i, b_i\}$ nor $(h (a_i), 
		\{a_i,
		b_i\})$ for all $i\in [k]$.
		
		We claim that $M$ is also contained in the reduced preferences after 
		the elimination of the generalized rotation. For the sake of contradiction, we
		assume that the mutual acceptability of a hospital-residents pair that 
		is part of $M$ was deleted in the elimination of the generalized rotation and show
		that this leads to a contradiction.
		
		Assume that the mutual acceptability of a resident $r $ and a 
		hospital $h$ with $M(r) = h$ was deleted.
		If $h$ is a quota-one hospital, then the only residents that were 
		deleted from the preferences of $h$ are residents $r$ which appear 
		together with $h$ in a 
		generalized rotation pair, i.e., $(h,r)=(a_i,b_i)$ or $(r,h)=(a_i,b_i)$ for some
		$i\in [k]$. However, by our initial assumption that $M$ does not 
		contain~$\{a_i,b_i\}$ for $i\in [k]$, this implies that~$M(r)\neq h$,
		a 
		contradiction.
		Thus $h$ is a quota-two hospital. A resident~$r$ is only deleted
		from the preferences of $h$ if $h= h (a_i)$ and $r= b_i$ for some $i 
		\in [k]$. As~$h(a_i)$ is the top-choice of~$a_i$, hospital $h(a_i)$
		holds the proposal of $a_i$. This implies by
		\Cref{lem:first-phase-rotation} that $a_i$ is ranked among the two best 
		residents in the  
		preferences of $h$, and that $M$ has either upper quota two or 
		exactly two residents on its preference list. As we have assumed 
		that~$M(b_i)=h(a_i)$ and~$(h(a_i),\{a_i,b_i\})\notin M$, it needs to 
		hold 
		that 
		$M(h) = \{b_i, r'\}$ for some~$r'\neq a_i$.
		As~$h(a_i)$ is the top-choice of $a_i$ and $a_i$ is among the two most 
		preferred residents of $h(a_i)$, pair~$(h, a_i)$
		blocks $M$, a contradiction.
		\medskip
		
		{\bfseries Case 2: } $M$ contains $\{a_i, b_i\}$ or $(h (a_i), \{a_i, b_i
		\})$ for all $i\in [k]$.

		\paragraph{Construction of new matching.} We claim that replacing, for 
		all $i\in [k]$, 
		$\{a_i, b_i\}$ or $(h 
		(a_i), \{a_i, b_i\})$ by~$(g(a_i), \{a_i,
		b_{i+1}\})$ if $a_i$ and $b_{i+1}$ are both residents and by
		$\{a_i,
		b_{i+1}\}$ otherwise, results in a stable matching $M'$. Note that in 
		the former case $g(a_i)$ is a quota-two hospital, as otherwise~$b_{i+1}$
		would be a hospital instead of a resident.
		Clearly, $M'$ matches the same set of residents and quota-one hospitals.
		
		\paragraph{Feasibility.} We first show that $M'$ does not violate any
		quotas. First of all note 
		that by \Cref{ob:hai} matching $M$ matches only $a_i$ and $b_i$ to 
		$h(a_i)$.
		All hospitals $h(a_i)$ are either closed or respect their 
lower quota in $M'$:
		If $h(a_i)$ is \fl, then it is also open in $M'$, as in this case, 
		$g(a_i) = h(a_i)$ and $(g(a_i), \{a_i, b_{i+1}\})\in M'$.
		Otherwise $h(a_i)$ is \infl, and then  $h(a_i) \neq g(a_j)$ for all $j\in [k]$ by 
		\Cref{lem:no-same-g}.
		Thus, $h (a_i)$ is closed in $M'$.

		Next, we show that every hospital obeys its upper quota in $M'$.
		Matching~$M'$ can only violate the upper
		quota of a hospital~$h = g(a_i)$ that is also open in $M$, as all replacements
		described above do not increase the number of residents matched to all other hospitals.
		So assume for a contradiction that there exists some hospital~$h = g(a_i)$ such that the upper quota of $h$ is violated in~$M'$.
		If $h (a_i)$ is \fl, then, as $g (a_i) \neq g(a_{i'}) 
		$ for all $i \neq i' \in [k]$ by \Cref{lem:no-same-g}, it follows that $M' (h) = (M(h) \setminus\{b_i\}) \cup \{b_{i+1}\}$ and $h$ obeys its upper quota also in~$M'$.
		Otherwise $h(a_i)$ is \infl.
		We will now show that $h$ is closed in $M'$, implying that $h$ 
		obeys its upper quota in $M'$.
		By \Cref{lem:no-same-g}, it holds that $h \neq h(a_j)$ for all $j\in [k]$.
		This
		implies that either \ab{3b(i)} or \ab{3b(ii)} applies for~$a_i$. If
		\ab{3b(i)} applies, then $h$ holds the proposal of $b_{i+1}$ which 
		implies by \Cref{lem:first-phase-rotation} that $b_{i+1}$ is among the 
		first two residents on the preference list of $h$. If \ab{3b(ii)} 
		applies, then by definition, $b_{i+1}$ is one of the two most 
		preferred residents  on the preference list of $h$.
		In both cases, $h$ prefers $b_{i+1}$ to one of
		the residents it is matched to in $M$.
		Since $h$ is on $b_{i+1}$'s preference list, it follows that 
		$b_{i+1}$ prefers $h$ to~$M$, as we have assumed that
		either $\{a_{i+1}, b_{i+1}\}$ or $(h (a_{i+1}), \{a_{i+1}, b_{i+1}\})$ 
		is part of~$M$
		which in both cases implies that $b_{i+1}$ is matched to its least 
		preferred hospital in $M$.
		Therefore, $h$ is closed in $M$, as otherwise $(h,b_{i+1})$ forms a blocking pair in 
		$M$.
		It follows that $h$ does not violate its upper quota.
		Therefore $M'$ is a feasible matching.

		Next we show that for any pair $(r, h)\in M'$ the acceptability of $r$ and $h$ has not been deleted.
		Since $a_i \neq b_j $ for all $i,j \in [k]$ by \Cref{lem:no-same-g}, this can only occur if $g(a_{i-1}) = h(a_i)$.
		Thus, $a_i$ issues a proposal to $g(a_{i-1}) = h (a_i)$.
		Since $a_i\neq a_{i-1}$ and every hospitals holds at most one proposal, 
		this implies that resident $a_{i-1}$ cannot issue a proposal to 
		$g(a_{i-1})$ and thereby that $h(a_{i-1})\neq g(a_{i-1})$ needs to 
		hold. 
		It follows that \ab{3b(i)} applies for $a_{i-1}$, and thus, $b_i= a_i$, 
		contradicting \Cref{lem:no-same-g}.

		To see that the number of open hospitals in $M'$ equals the number of open hospitals in $M$, first note that the set of open quota-one hospitals stays the same.
		For every \fl hospital~$h (a_i)$, we have that $h(a_i) = g(a_i)$ is also open in $M'$.
		For every \infl hospital $h(a_i)$, we have that $h(a_i) \neq g(a_j)$ for all $j\in [k]$ by \Cref{lem:no-same-g}.
		It follows that $h(a_i)$ is closed in $M'$, but $g(a_i)$ is open in~$M'$.
		In $M$, hospital $h(a_i)$ is open, while we have already seen that $g(a_i)$ is closed in~$M$.
		It follows that $M$ and $M'$ open the same number of hospitals.
		
		It remains to show that $M'$
		is 
		stable. 
		
		\paragraph{Stability.} Any blocking pair or coalition for $M'$ obviously needs to
		include an
		agent that is matched differently in $M'$ than in $M$, i.e., $a_i$, 
		$b_i$, $h(a_i)$, or $g(a_i)$ for some $i\in [k]$. We now show one after 
		each other for each of these four types of agents that they cannot be 
		part of a blocking pair or coalition in $M'$. However, we start by 
		observing the following: 
		
		{\bfseries Claim 1: } For all $i\in [k]$, agent $b_i$ is matched differently
		in $M'$ than in $M$ and $b_i$ prefers~$M'$ to $M$.
		
		By the definition of \ba{1} to 
		\ba{3}, $\{a_i,b_i\}\in M$ or $(h(a_i), \{a_i,b_i\})\in M$ implies that
		$M(b_i)$ is the last agent in the preferences of~$b_i$.
		Thus, it is enough to show that $b_i$ is matched differently in~$M$ and $M'$.
		If $b_i$ is a hospital, then $M(b_i) = a_i$ and $M' (b_i) = a_{i-1}$ are both residents, and $a_i \neq a_{i-1}$ implies that $M(b_i)\neq M' (b_i)$.
		If $b_i$ is a resident and~$a_i$ is a hospital, then again $M(b_i) = M' 
		(b_i)$ implies $a_{i} = a_{i-1}$, a contradiction.
		If $b_i$ is a resident and $a_i$ is also a resident, then $M(b_i) = h(a_i)$.
		If $M(b_i) = M' (b_i)$ holds, then, since $M(b_i)$ is a quota-two 
		hospital, it holds that $M' (b_i) = g (a_{i-1})$. For the sake of 
		contradiction, assume that $h(a_i)=M(b_i)=M'(b_i)=g (a_{i-1})$. This 
		leads to a contradiction as argued in the last paragraph in the part on 
		feasibility in this proof.
		
		{\bfseries Claim 2: } For all $i\in [k]$, agent $a_i$ is neither part 
		of a blocking
		pair nor of a blocking coalition in~$M'$.
		
		If $a_i$ is a quota-one hospital, then \ba{2} and \ab{1}
		apply for $b_i$ and $a_i$ which implies that $a_i$ is the last hospital on
		$b_i$'s preferences and $b_i$ is the first resident and $b_{i+1}$ is 
		the second resident on the preferences of $a_i$. Thereby, $a_i$ is 
		matched to its second-most preferred resident in $M'$. This implies
		that the only possible blocking pair including 
		$a_i$ in $M'$ is $(a_i,b_i)$. However, as by Claim 1, resident~$b_i$ prefers
		$M'$ to $M$, 
		this cannot be a blocking pair.
		
		If $a_i$ is a resident, then $a_i$ can be either part
		of a blocking pair $(a_i, h)$ with some hospital~$h\in H$ or part of a
		blocking coalition to open some hospital $h\in H$.
		If $h(a_i)$ is \fl, then $a_i$ is matched to her first choice in $M'$ 
		and thus is neither part of a blocking coalition nor a blocking pair.
		Hence, we assume that $h(a_i)$ has quota one or is \infl.
		As $a_i$ is matched
		to her second-most preferred hospital $g(a_i)$ in $M'$, the hospital in any blocking coalition or pair containing~$a_i$ is $h := h(a_i)$.
    We distinguish
		two cases based on whether $h(a_i)$ is a quota-one or quota-two hospital.
		
		If $h(a_i)$ is a quota-one hospital, then \ba{1} applies for $b_i = h(a_i)$, and $a_i$ is $b_i$'s least preferred
		resident. However, as $b_i$ prefers $M'(b_i)$ to $a_i$ as shown in
		Claim 1, $(h(a_i), a_i)$ cannot be a blocking pair for $M'$.
		
		If $h(a_i)$ is a quota-two hospital, then $h(a_i)$ is an \infl hospital as $h(a_i) \neq g(a_i)$.
		Thus, the preferences of $h(a_i)$ contain only two
		residents, namely $a_i$ and~$b_i$. As $a_i$
		cannot be matched to~$h(a_i)$ in $M'$, hospital~$h(a_i)$ is closed in $M'$.
		However, we have observed above that $b_i$ prefers~$M'(b_i)$ to~$M(b_i)=h(a_i)$ and thereby~$b_i$ is not part of a blocking
		coalition.
		Thus, no blocking coalition to open $h(a_i)$ exists, and~$a_i$ is 
		neither part of a blocking pair nor a blocking coalition.
		
		{\bfseries Claim 3: } For all $i\in [k]$ such that $a_i$ is a resident, hospital $h(a_i)$ is neither part of a
		blocking pair nor a blocking coalition in $M'$.
		
		If $h (a_i)$ is a quota-one hospital, then $h(a_i) = b_i$ and Claim 1 
		implies 
		that $h (a_i)$ prefers~$M'$ to~$M$. Moreover, Claim 2 implies that
		all residents that prefer $M$ to $M'$ are not part of a
		blocking pair or coalition. Thus, any blocking pair or coalition in $M'$ involving~$h(a_i)$ also blocks~$M$ and thus, such blocking pairs and coalitions do not exist.
		
		Thus, $h(a_i)$ is a quota-two hospital. Note that this implies that 
		\ba{3} 
		applies for $b_i$.
		Thereby $b_i$ is a resident holding a
		proposal from $h(a_i)$ and by \Cref{lem:first-phase-rotation} that $a_i$
		and $b_i$ are the first two residents on the preferences of $h(a_i)$.

		If $h(a_i)$ is an \infl hospital, then $h(a_i)$ is closed in $M'$, as 
		by Claim 1, $b_i$ is not matched to~$M(b_i)=h(a_i)$ in $M'$. The only
		possible 
		blocking coalition to open $h(a_i)$ is~$(h,\{a_i,b_i\})$. However, as 
		$b_i$
		prefers $M'(b_i)$ to $M(b_i)=h(a_i)$, resident $b_i$ cannot be part of this
		coalition.
		Thus, $h(a_i)$ is contained neither in a blocking coalition nor a blocking pair.
		
		If $h(a_i)$ is a flexible hospital, then by 
		\Cref{ob:ab2},
		$h(a_i) $ has upper quota two, $b_{i+1}$ is a resident and $a_i$,
		$b_i$, and $b_{i+1}$ are the first three residents in the preferences 
		of $h(a_i)$. Moreover, it holds that
		$h(a_i) = g(a_i)$. Thus, $h(a_i)$ is matched to $a_i$ and $b_i$ 
		in $M$ and to $a_i$ and $b_{i+1}$ in~$M'$. Consequently, the only 
		possible blocking pair involving $h(a_i)$ is $(h(a_i), b_i)$.
		As by Claim~1, resident~$b_i$ prefers $M'(h)$ to $M(h)=h(a_i)$, this pair cannot
		be blocking.   

		{\bfseries Claim 4: } For all $i\in [k]$ such that $a_i$ is a resident, 
		hospital $g(a_i)$ is neither part of a blocking pair nor a blocking 
		coalition in $M'$.

		Since $g(a_i)$ is open in $M'$, it cannot be part of a blocking coalition.
		Assume that there exists a blocking pair $(r, g(a_i))$ in $M'$. Note 
		that it needs to hold that $r\neq b_{i+1}$, as $b_{i+1}$ is matched to~$g(a_i)$ in~$M'$.
		Furthermore, $r\neq a_j$ for all $j\in [k]$ since by Claim~2 agents~$a_j$
		are not part of a blocking pair in $M'$.
		
		If $g(a_i)$ is closed in $M$, then $(g(a_i), \{b_{i+1}, r\})$ is a 
		blocking coalition in $M$: By Claim~1, resident~$b_{i+1}$ prefers $g(a_i)= M' (b_{i+1})$ to
		$M(b_{i+1})$.
		If $r=b_i$ for some
		$i\in [k]$, then~$r$ is matched better in $M'$ than in $M$, which 
		implies that $(g(a_i), \{b_{i+1}, r\})$ blocks $M$.
		Otherwise we have~$r\neq a_i,b_i$
		for all $i\in [k]$, and then $r$ is matched to the same hospital in~$M$ and
		$M'$, which again implies that $(g(a_i), \{b_{i+1}, r\})$ blocks $M$, a contradiction.
	
    Therefore, $g(a_i)$ is open in $M$.
		Recall that we have proven in the first
		part of this proof in the paragraph on feasibility that if a hospital 
		$g(a_i)$ is open in $M$, then it needs to hold that $g(a_i) = h(a_j)$ 
		for some $j\in [k]$.
		Thus, by Claim 3, $g(a_i) = h(a_j) $ is not part of a  blocking coalition or pair, a contradiction.
		
		{\bfseries Claim 5: } For all $i\in [k]$, agent $b_i$ cannot be part of 
		a blocking pair or a blocking coalition in $M'$. 
		
		For all $i\in [k]$, by Claim 2-4, we know that $b_i$ cannot form a 
		blocking pair or coalition with any agents appearing as $a_j$, 
		$h(a_j)$, or $g(a_j)$ for some $j\in [k]$. Thus, $b_i$ needs to form a 
		blocking pair or coalition with agents that are matched to the 
		same partners in~$M$ and in $M'$. However, as by Claim 1, agent~$b_i$ prefers
		$M'(b_i)$ to $M(b_i)$, this implies that the such a blocking pair or 
		coalition would also block $M$, a contradiction.
	\end{proof} 
	
	Before we prove that no stable matching is created by the elimination of a 
	generalized rotation, we identify a sufficient criterion for the non-existence of a
	stable matching, which directly follows from \Cref{cor:first-phase} and 
	\Cref{lem:rotation}.
	\begin{corollary}\label{cor:empty-preferences}
	 If the preference list of a resident gets empty during Phase 2
	 or contains only quota-two hospitals which have only one resident on their 
preferences, or 
	 the preference list of a quota-one hospital gets empty, then the instance 
	 does not admit a stable matching. 
	\end{corollary}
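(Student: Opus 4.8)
The plan is to argue by contraposition: I assume the original \HRLUQtwo instance~$\mathcal{I}$ admits a stable matching and derive a contradiction in each of the three degenerate situations. The engine of the argument is a preservation invariant that I read off the lemmas already proven for Phases~1 and~2. By \Cref{lem:first-phase2} and \Cref{lem:first-phase-backward}, Phase~1a neither deletes a pair used by a stable matching nor turns an unstable matching into a stable one, and by \Cref{lem:splitting} Phase~1b induces a bijection on stable matchings; hence Phase~1 preserves the set of stable matchings together with the set of matched residents and matched quota-one hospitals. By \Cref{lem:rotation}, eliminating a generalized rotation turns any stable matching into a stable matching of the reduced instance that matches \emph{exactly the same} residents and quota-one hospitals. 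Chaining these facts along the execution, I conclude that if $\mathcal{I}$ has a stable matching, then every intermediate instance produced by the algorithm still has one, and all of them match the same set of residents and the same set of quota-one hospitals.

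Next I would pin down which agents are \emph{forced} to be matched. Consider the instance~$\mathcal{J}$ at the beginning of the loop iteration during which the degeneracy is detected, i.e.\ right after the most recent completion of Phase~1. By \Cref{lem:first-phase-rotation}(1), every resident and every quota-one hospital with a non-empty preference list in~$\mathcal{J}$ holds exactly one proposal. Then \Cref{cor:first-phase}(1) (applied to the last execution of Phase~1a and transported through Phase~1b via \Cref{lem:splitting}) shows that each such resident is matched in every stable matching of~$\mathcal{J}$, and \Cref{cor:first-phase}(2) with $\ell(h)=1$ shows that each such quota-one hospital is open, hence matched to one resident, in every stable matching of~$\mathcal{J}$. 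Combining this with the preservation invariant, if $\mathcal{I}$ is solvable then every agent with a non-empty list at the start of the iteration is matched in the stable matching guaranteed to survive in the current instance, obtained from~$\mathcal{J}$ by the rotation elimination followed by the reapplication of Phase~1.

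Now I would dispatch the three cases, each by exhibiting an agent that the surviving stable matching must match but that the current instance cannot match. If a resident~$r$'s list becomes empty, then $r$ had a non-empty list at the start of the iteration (otherwise it was already empty, so it could not ``get empty''), so $r$ is a must-match resident; yet an empty list leaves $r$ no acceptable hospital, the contradiction. The case where $r$'s list contains only quota-two hospitals each with a single resident on its preferences reduces to the previous one: such a hospital can never reach its lower quota of two and is therefore closed in every feasible matching, so $r$ is again effectively unmatchable while being a must-match resident. Finally, if a quota-one hospital~$h$'s list becomes empty, then $h$ was a must-match (open) quota-one hospital at the start of the iteration by \Cref{cor:first-phase}(2), but an empty list forces $h$ to be closed in the current instance, again a contradiction. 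In each case the assumption that $\mathcal{I}$ is solvable fails, which proves the corollary.

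The main subtlety---and the step I would write most carefully---is the bookkeeping that the degenerate agent really is a must-match agent of the instance~$\mathcal{J}$ at the start of the iteration, and that this must-match status is genuinely transported to the instance in which the emptiness is observed. This requires noting that the agent's list was non-empty in~$\mathcal{J}$ (so that \Cref{cor:first-phase} applies there) and applying the preservation invariant to the chain $\mathcal{J}\to(\text{rotation elimination})\to(\text{Phase 1})$. One should also record the small feasibility fact used in the second case: a quota-two hospital with only one acceptable resident is closed in every feasible matching, which is immediate from the requirement that $|M(h)|=0$ or $|M(h)|\ge\ell(h)=2$.
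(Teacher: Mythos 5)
Your proof is correct and takes the paper's intended route: the paper derives this corollary directly from \Cref{cor:first-phase} (residents holding proposals are matched, and quota-one hospitals holding a proposal are open, in every stable matching) together with \Cref{lem:rotation} (rotation elimination preserves a stable matching covering the same residents and quota-one hospitals), which is exactly the engine of your argument. Your extra bookkeeping---transporting solvability and must-match status through Phase~1 via \Cref{lem:first-phase2,lem:first-phase-backward,lem:splitting} and recording that a quota-two hospital with a single acceptable resident is closed in every feasible matching---merely spells out details the paper leaves implicit.
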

	Thus, we reject an instance as soon as the preference list of a resident
	becomes empty by eliminating a rotation. This also implies that in the 
	following we can assume that the set of residents
	with non-empty preferences is the same before and after eliminating a 
	rotation. 

	\begin{restatable}[]{lemma}{forward}
		\label{lem:forward}
		Let $\ibefore$ be the instance before the elimination of a generalized 
		rotation (but after applying Phase 1) and $\iafter$ the instance after 
		eliminating a generalized rotation from~$\ibefore$.
		Any stable matching $M$ in $\iafter$ which matches all residents 
		with non-empty preferences in~$\ibefore$ is also stable in
		$\ibefore$. 
	\end{restatable}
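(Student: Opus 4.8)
The plan is to argue by contradiction: suppose $M$ is stable in $\iafter$ and matches every resident with non-empty preferences in $\ibefore$, yet $M$ is not stable in $\ibefore$. First I would note that $M$ is feasible in $\ibefore$, since eliminating a generalized rotation only deletes acceptabilities and changes neither any quota nor the set $M(h)$ matched to a hospital; thus every hospital is open/closed with the same residents as in $\iafter$. Because $M$ admits no blocking pair or coalition in $\iafter$ and the two instances differ only in the deleted pairs, any blocking pair or blocking coalition $B$ of $M$ in $\ibefore$ must use an acceptability that was restored, i.e.\ deleted during the elimination.

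Next I would classify the restored resident--hospital acceptabilities. For each $i\in[k]$ exactly one is removed, and since no quota-two hospital occurs in a rotation it is of one of three types: (A) $a_i$ is a quota-one hospital and $b_i$ a resident, and we delete $(a_i,b_i)$; (B) $a_i$ is a resident and $b_i$ a quota-one hospital, and we delete $(a_i,b_i)$; (C) $a_i,b_i$ are both residents, and we delete $(h(a_i),b_i)$. Using that every quota-two hospital holds at most one proposal after Phase~1, each restored hospital has a unique proposer, so the edge used by $B$ is well identified, and I would do a case distinction on its type.

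For types (A) and (C) the resident of the restored edge is $b_i$, and by \ba{2} (resp.\ \ba{3}) the restored hospital is the last, hence least-preferred, hospital on $b_i$'s list. As $b_i$ lies on the rotation it has at least two hospitals and is therefore matched in $M$; its $M$-partner lies on its $\iafter$-list and is thus preferred to the deleted last hospital, so $b_i$ prefers $M(b_i)$ to the restored hospital. But any blocking pair or coalition using the restored edge forces $b_i$ to prefer the restored hospital to $M(b_i)$, a contradiction; this settles (A) and (C).

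The remaining and genuinely delicate case is (B). Here the restored hospital is the quota-one hospital $b_i$, and by the propose/reject invariants of \Cref{lem:first-phase-rotation}, $a_i$ is exactly the resident whose proposal $b_i$ holds after Phase~1, so $h(a_i)=b_i$ is $a_i$'s top choice while, by \ba{1}, $a_i$ is $b_i$'s least-preferred resident. Since $b_i$ was deleted from $a_i$'s list, $a_i$ is matched in $M$ to a strictly worse hospital and therefore does prefer $b_i$; thus the argument must close on the hospital side. A blocking pair $(a_i,b_i)$ is impossible, because an open quota-one $b_i$ is full and matched to a resident it prefers to its worst resident $a_i$. Hence the only surviving threat is a blocking coalition $(b_i,\{a_i\})$ to open a \emph{closed} $b_i$, so the crux is to show that $b_i$ is open in $M$. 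I would derive this from \Cref{cor:first-phase}, which forces $b_i$ open in every stable matching of $\ibefore$ (it holds one proposal, meeting its lower quota), combined with the hypothesis that $M$ matches precisely the residents that must be matched: tracing through the cyclic structure of the rotation and the Phase~1 invariants, every hospital preferred by a resident of $b_i$ is either full with better residents or can no longer reach its lower quota in $\iafter$ (its partner having been shifted away by the rotation), which forces such a resident back onto $b_i$. I expect exactly this openness argument for the quota-one hospitals occurring as some $b_i$ to be the main obstacle, as it is the only point where the lower-quota coalitions go beyond the classical \textsc{Stable Roommates} rotation analysis and where the ``matches all residents with non-empty preferences'' hypothesis is indispensable.
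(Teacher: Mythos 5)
Your handling of types (A) and (C) is correct and coincides with the paper's argument: by \ba{2} resp.\ \ba{3} the restored hospital is the last hospital on the resident $b_i$'s list, the hypothesis that $M$ matches every resident with non-empty preferences in $\ibefore$ guarantees $b_i$ is matched, and $M(b_i)$ lies on the $\iafter$-list and is therefore strictly preferred, so no blocking pair or coalition can use the restored edge. You also correctly isolate type (B) -- resident $a_i$ wanting to open a closed quota-one hospital $b_i$ -- as the only remaining threat. However, your way of closing type (B) has a genuine gap, and it is exactly the step you yourself flag as the ``main obstacle.''

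The appeal to \Cref{cor:first-phase} is circular: that corollary concerns matchings that are stable in the instance on which Phase~1a was run, i.e., stable matchings of $\ibefore$ -- and whether $M$ is stable in $\ibefore$ is precisely what you are trying to prove. Nor does the corollary transfer to $\iafter$: the proposal that $b_i$ holds comes from $a_i$ along the edge $(a_i,b_i)$, which is exactly the edge the elimination deletes, so in $\iafter$ no proposal structure forces $b_i$ open. Your fallback sketch (``every hospital preferred by a resident of $b_i$ is full or can no longer reach its lower quota\dots'') is the right intuition but not an argument, because it never identifies \emph{which} resident is forced back onto $b_i$; an arbitrary resident on $b_i$'s list will not do. The paper instead uses the preceding rotation pair. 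A quota-one hospital can only appear as $b_i$ via \ab{3a} applied to the resident $a_{i-1}$, so $b_i=g(a_{i-1})$ is $a_{i-1}$'s second choice; moreover, the elimination at index $i-1$ makes $a_{i-1}$'s first choice $h(a_{i-1})$ unattainable in $\iafter$: either $h(a_{i-1})=b_{i-1}$ is a quota-one hospital and the edge $(a_{i-1},b_{i-1})$ was deleted, or $h(a_{i-1})$ is \infl and lost $b_{i-1}$ from its preference list, so with lower quota two and a single remaining acceptable resident it can never open. Hence $a_{i-1}$ is matched in $M$ strictly below $b_i$, and if $b_i$ were closed, then $(a_{i-1},b_i)$ would already be a blocking pair for $M$ in $\iafter$, contradicting the assumed stability of $M$ there. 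This use of the deletion at index $i-1$ -- rather than any Phase-1 counting argument about $b_i$ itself -- is the missing idea in your proposal.
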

	\begin{proof}
		Let $M$ be a stable matching in $\iafter$ matching all residents 
		with non-empty preferences in \ibefore. For the sake of
		contradiction, let us assume that $M$ is not stable in~$\ibefore$. 
		Let $C$ be a blocking coalition or blocking pair involving some 
		hospital $h$ for~$M$ in~$\ibefore$. The only possibility that $C$ is 
		not blocking in $\iafter$ is that the mutual acceptability of $h$ and 
		some resident occurring in $C$ has been deleted. We now make a case 
		distinction whether $h$ is a quota-one or a quota-two hospital and 
		argue for both cases that $C$ cannot block $M$ in $\ibefore$. 

		{\bfseries Case 1: } Hospital $h$ is a quota-one hospital. 
	
		Let $r$ be a resident such that $h$ and $r$ together form a blocking
		pair or coalition for $M$ in~$\ibefore$. For the mutual acceptability 
		of $h$ and $r$ to be 
		deleted, it needs to hold that $\{h, r\} = \{a_i, b_i\}$ for some $i\in 
		[k]$.
		Agent~$a_i$ is then the last agent on the preferences of~$b_i$.
		Therefore, $\{h, r\}$ can only block~$M$ if $b_i$ is unmatched, which 
		we assume in the following.
		Since every resident with non-empty preferences in~$\ibefore$ is matched in $\iafter$, it follows that $b_i = h$ and $a_i = r$.
		We claim that $a_{i-1}$ prefers $b_i
		$ to
		$M(a_{i-1})$:
		The preferences of resident $a_{i-1}$ start with a quota-one or \infl
		hospital 
		$h (a_{i-1})$, followed by $g(a_{i-1}) = b_i$, as \ab{3a} has to apply 
		for $a_{i-1}$.
		Furthermore, $a_{i-1}$ is not matched to~$h(a_{i-1})$ in $M$ because 
		either $h(a_{i-1})$ is \infl, implying that $h(a_{i-1})$ has lower 
		quota two but only $a_{i-1}$ on its preferences in~$\iafter$ after the 
		elimination of the generalized rotation, or
		$h(a_{i-1})$ has lower quota one and the mutual acceptability of 
		$a_{i-1}$ and $h(a_{i-1}) = b_{i-1}$ has been deleted by the 
		elimination of the generalized rotation.
		Thus, as $b_i$ is unmatched in $M$, $(a_{i-1}, b_i)$ is a blocking pair 
		for $M $ in \iafter, a 
		contradiction to the stability of $M$.
		
    {\bfseries Case 2: } Hospital $h$ is a quota-two hospital. 
		
		Then, it either holds that $C = (h, r_1)$ is a blocking pair or $C  = 
		(h, \{ r_1, r_2\})$ is a blocking coalition for two residents $r_1,
		r_2\in R$ for $M$ in $\ibefore$. 
		Assume without loss of generality that the acceptability of $r_1$ and 
		$h$ has been deleted. This implies that $h=h(a_i)$ and $r_1=b_i$ for 
		some~$i\in[k]$. Note that $r_1$
		prefers all hospitals on her preference list to $h$.
		As $h(a_i)$ is contained in the preferences of $r_1$ in $\iafter$, the 
		preferences of $r_1 $ are non-empty and thus, by our assumption on $M$, 
		resident $r_1$ is matched in $M$.
		This implies that
		$r_1$ does not prefer $h$ to her partner in~$M$, a contradiction.
	\end{proof} 

\subsubsection{Proof of \Cref{t:q2}}
We are now ready to prove that the full algorithm works correctly.
	\qq*
	\begin{proof}
		Each application of Phase 1a or 2 removes the mutual
		acceptability of at least one resident and at least one hospital, while 
		each execution of Phase 1b reduces the number of quota-two hospitals.
		Since we may assume that each hospital has upper quota at most $n$, it 
		follows that there are at most $\mathcal{O}(mn)$ hospitals at any stage 
		of the algorithm.
		Thus, there are at most $\mathcal{O}(n \cdot mn)= 
		\mathcal{O}(n^2 m)$ mutual acceptabilities at any stage of the 
		algorithm, and all executions of Phase 1b together can be performed in 
		$\mathcal{O}(n^2 m)$.
		Since all except the first proposal an agent receives results in a 
		rejection and reduces the number of mutual acceptabilities, it follows 
		that all executions of Phase 1a can be performed in $\mathcal{O}(n^2 
m)$ time.
		A generalized rotation can be found in $\mathcal{O}(n)$ time by \Cref{lem:one-proposal}.
		Thus, all executions of Phase 2 can be executed in $\mathcal{O}(n\cdot n^2 m)$ time.
		Thus, the runtime $\mathcal{O}(n^3 m)$ follows.

		If \Cref{alg} returns a matching $M$, then $M$ matches all agents from 
		$S$.
		Thus, the preferences of a resident can only get empty in the first 
		application of Phase 1a, which implies that the set of residents with 
		empty preferences is the same before eliminating the first generalized 
		rotation and after eliminating the last generalized rotation.
		Consequently,
		\Cref{lem:first-phase-backward,lem:splitting,lem:forward} imply that 
		$M$ is also stable in the input instance.

		Vice versa, if the input instance contains a stable matching $M^*$, 
		then 
		\Cref{cor:first-phase} implies that $M^*$ matches all agents from $S$.
		Consequently, 
		\Cref{lem:splitting,lem:rotation,lem:first-phase2} ensure that there 
		also exists a 
		stable matching covering $S$ after all modifications performed by the algorithm.
		As the algorithm returns the only matching~$M$ which covers~$S$ and is still present at the termination of the algorithm, matching~$M$ is stable, and the correctness of the algorithm follows.
	\end{proof}

	From the correctness of our algorithm, we can also translate (a weaker 
	version of) the so-called Rural Hospitals 
	Theorem~\cite{RePEc:ucp:jpolec:v:92:y:1984:i:6:p:991-1016,10.2307/1913160} 
	to \HRLUQtwo.
	The Rural Hospitals Theorem states that in every Hospital Residents instance (without lower quotas), every stable matching matches the same residents and for each hospital, the same number of residents is matched to it in every stable matching.
	The latter part of this statement does not hold for \HRLUQ, as the set of 
	open hospitals can differ for two stable matchings.
	For instance, consider the following instance with two residents $r_1$ and $r_2$ as well as two hospitals $h_1 $ and $h_2$ with lower and upper quota two, and the following preferences: $r_1 : h_1 \succ h_2 $ and $r_2 : h_1 \succ h_2$.
	Then both $(h_1, \{r_1 , r_2\})$ and $( h_2, \{r_1, r_2\})$ are stable matchings.
	However, we can still show that every stable matching in an \HRLUQtwo 
	instance matches the same set of residents and opens the same number of 
	hospitals:
	
	\begin{proposition}
	  Given an instance of \HRLUQtwo, every stable matching matches the same set of residents and opens the same number of hospitals.
	\end{proposition}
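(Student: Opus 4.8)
The plan is to prove the two assertions separately, leaning on the stable-matching correspondences already established for the algorithm behind \Cref{t:q2}.

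First, for the set of matched residents, I would combine \Cref{cor:first-phase} with \Cref{lem:first-phase2}. Let $S$ be the set of residents whose preference list is non-empty after the first application of Phase~1a. As noted in the description of Phase~1a, after Phase~1a every resident in $S$ holds a proposal, so by \Cref{cor:first-phase} every resident in $S$ is matched in \emph{every} stable matching. Conversely, a resident outside $S$ has had every acceptable hospital deleted from her list during Phase~1a, and by \Cref{lem:first-phase2} she can be matched to none of those hospitals in any stable matching; hence she is unmatched in every stable matching. Therefore the set of matched residents equals $S$ in every stable matching, settling the first claim (and incidentally also the ``same set of quota-one hospitals'' flavour, since such a hospital is matched to a single resident in $S$).

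Second, for the number of open hospitals, I would track this quantity along the run of the algorithm, always counting a hospital that has been split in Phase~1b as the single original hospital it came from. Under this accounting, Phase~1a preserves the count, since the only hospitals it removes are quota-two hospitals with at most one acceptable resident left, which are closed in every stable matching; Phase~1b preserves it because, by \Cref{cor:first-phase} and \Cref{lem:splitting}, a split quota-two hospital is open in every stable matching before the split if and only if (at least two of) its copies $h^1 \succ \dots \succ h^{u(h)}$ are open afterwards; and Phase~2 preserves it by the final sentence of \Cref{lem:rotation}, which guarantees that eliminating a generalized rotation yields a stable matching opening the same number of hospitals. Pushing an arbitrary stable matching of the input forward through all these transformations, via the explicit maps of \Cref{lem:splitting} and \Cref{lem:rotation}, lands in the final instance, whose preference lists of length at most one admit a unique matching. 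Since every stable matching of the input is carried to this single matching while the (original-hospital) open-count is preserved at each step, all stable matchings of the input must open the same number of hospitals.

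The main obstacle is making this open-count bookkeeping fully consistent across the interleaving of Phase~1b splits and Phase~2 rotation eliminations: \Cref{lem:rotation} is phrased in terms of the hospitals of the \emph{current} (possibly already-split) instance, so I would need to check that its open-count preservation survives collapsing the copies of a split hospital back into their original, i.e.\ that no rotation ever changes whether the copies of a given split hospital are collectively open. Equivalently, one must show that every split quota-two hospital receives the same number of residents in every stable matching; this is the technical heart of the argument and where I expect most of the work to lie. The ingredients I would use for it are \Cref{ob:ma} (a proposing resident is never matched strictly above the hospital it proposes to, and a proposing hospital cannot be filled entirely by residents it prefers to its proposees) together with the prefix structure of the identically-ranked copies $h^1 \succ \dots \succ h^{u(h)}$ exploited in \Cref{lem:splitting}.
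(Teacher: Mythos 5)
Your first part is correct and is essentially the paper's own argument, which likewise settles the matched-residents claim by citing \Cref{cor:first-phase}. One small inaccuracy: immediately after the \emph{first} execution of Phase~1a, a resident with non-empty preferences need not yet hold a proposal (she may be proposing to a quota-two hospital that has not proposed back); that every such resident holds a proposal is only guaranteed after the full Phase~1 loop, by \Cref{lem:first-phase-rotation}. This is harmless, since \Cref{cor:first-phase} can be invoked after the last execution of Phase~1a.

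For the second part, your plan---push every stable matching of the input forward through the algorithm, keep a ``collapsed'' open count that identifies the copies of a split hospital with their original, and use that the terminal instance has a unique stable matching---is sound and is in essence the paper's proof (phrased there as a contradiction). However, as you yourself flag, your accounting has a hole at Phase~2, and this is a genuine gap: the repair you propose (showing that every split hospital receives the same number of residents in \emph{every} stable matching, via \Cref{ob:ma} and the prefix structure of the copies) is left unproven, and it is harder than necessary---note that the analogous per-hospital statement for arbitrary quota-two hospitals is simply false (see the two-hospital example given right before the proposition), so any direct proof would have to exploit the quota-one-ness of the copies in an essential way. The missing idea that closes the hole with no extra work is the clause of \Cref{lem:rotation} you did not use: the matching~$M'$ obtained after eliminating a generalized rotation matches not only the same residents but also \emph{the same quota-one hospitals} as~$M$. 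Every copy created by a split is a quota-one hospital, so the rotation map never changes the open/closed status of any copy; and since the total number of open hospitals and all quota-one statuses are preserved, the number of open quota-two hospitals is preserved as well. Hence the collapsed count is invariant under Phase~2, your bookkeeping goes through verbatim, and the ``technical heart'' you anticipate (constancy of the per-split-hospital loads) is not needed as an ingredient at all---it follows afterwards as a corollary of the proposition together with the bijection of \Cref{lem:splitting}.
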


	\begin{proof}
	  \Cref{cor:first-phase} implies that the set of matched residents is the same in every stable matching.
	  
	  Assume for a contradiction that there exist two stable matchings which 
	  open a different number of hospitals.
	  Then \Cref{lem:first-phase2,lem:splitting,lem:rotation} imply that the 
	  algorithm preserves that there are two stable matchings opening a 
	  different number of hospitals.
	  However, at the termination of the algorithm, there is only one stable matching, a contradiction.
	\end{proof}

    \section{Introducing Ties} \label{se:ties}
   
	It is also possible to extend the models that we have considered so far by 
	allowing for ties in the preferences of residents 
	and hospitals (if they have preferences). We call the resulting problems 
	\HRLUQT, \HRLQT, and \HALUQT. Notably, \HRLUQT now also generalizes
	\HALUQT, as it is possible to make the hospitals indifferent among all 
	residents. In the following, we start by considering \HALUQT before turning
	to \HRLQT and finally to \HRLUQT.
	Note that the stability concept we apply here, that is, all agents need to 
	be better off after a deviation and not just indifferent, is usually called 
	\emph{weak} stability in the literature (opposed to \emph{strong} and 
	\emph{super-}stability, where one or both agents of a blocking pair might 
	be indifferent between the blocking pair and the matching).
	
	All hardness results for  
	\HRLUQI clearly also hold for \HRLUQIT. Moreover, the only positive result 
	for \HRLUQ, that is, the FPT result for the parameterization by the number 
	$m$ of hospital, still holds, as it is 
	possible to adapt the ILP constructed in \Cref{pr:HRLUQI-FPTM} in a
	straightforward way (residents are still fully characterized by their 
	preference list and the number of different preference lists can be bounded 
	in a function of
	$m$): 
	\begin{corollary} \label{co:HRLUQIT-m}
		Parameterized by the number $m$ of hospitals, \HRLUQIT is 
		fixed-parameter tractable.
	\end{corollary}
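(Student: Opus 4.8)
The plan is to reuse the integer-program strategy from the proof of \Cref{pr:HRLUQI-FPTM} essentially verbatim, and to check that the only two places where \emph{strictness} of the preferences was used survive the introduction of ties. Recall that that proof rested on the fact that a resident is fully determined by her preference order over hospitals; with ties this order becomes a \emph{weak} order over an accepted subset of $H$. The number of such weak orders is still bounded by a function of $m$: assigning to every hospital either the label ``unaccepted'' or one of at most $m$ indifference classes already yields the crude bound $(m+1)^m$ (equivalently, one may count ordered set partitions via the Fubini number). Hence I can again enumerate the resident types $t_1,\dots,t_q$ with $q \le f(m)$ for some computable $f$, let $n_i$ be the number of residents of type $t_i$, and introduce the same variables $x_{i,h}$, $o_h$, and $y_h$ with the same intended meaning.

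Next I would re-examine the constraints. The feasibility constraints (\ref{ILP:quotas})--(\ref{ILP:integrality}), the undersubscription detector (\ref{cond:undersubscribed}), and the counting constraint (\ref{ILP:num-residents}) make no reference to the preferences and are therefore kept unchanged. For the two stability constraints I reinterpret $\succ_{t_i}$ as \emph{strict} preference in the weak order of type $t_i$ (with the same convention that an accepted hospital beats every non-accepted one and beats being unmatched). Since a blocking coalition still requires $l(h)$ residents who \emph{strictly} prefer the closed hospital~$h$, constraint (\ref{ILP:no-bc}) remains correct as written. The no-blocking-pair constraint (\ref{ILP:no-bp}) is the only one that genuinely changes: under weak stability a resident of type $t_i$ blocks an undersubscribed hospital~$h$ only when she is matched to a \emph{strictly} worse hospital, so I would widen the inner sum to range over all $h'$ with $h' \succsim_{t_i} h$, thereby permitting her to sit at $h$ itself or at any hospital she is indifferent to.

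The main obstacle is purely bookkeeping: verifying that the reinterpreted notion of ``prefers'' is applied consistently at every point where a tie could create or suppress a blocking pair or coalition, and then re-running the equivalence argument of \Cref{pr:HRLUQI-FPTM} to confirm that the modified program is feasible if and only if the instance admits a weakly stable matching. No new idea is needed. Once this is done, the number of variables is still bounded by a function of $m$, so Lenstra's algorithm~\cite{DBLP:journals/mor/Kannan87,DBLP:journals/mor/Lenstra83} decides feasibility in $f(m)\cdot n^{O(1)}$ time, establishing fixed-parameter tractability with respect to~$m$.
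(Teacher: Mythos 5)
Your proposal is correct and takes essentially the same route as the paper: the paper justifies this corollary precisely by observing that the ILP of \Cref{pr:HRLUQI-FPTM} adapts in a straightforward way, since residents remain fully characterized by their (now weak) preference orders over hospitals, whose number is still bounded by a function of~$m$. Your constraint-level details---widening the no-blocking-pair condition to hospitals $h'$ with $h' \succsim_{t_i} h$ while leaving the blocking-coalition condition with strict preference untouched---are exactly the intended adaptation, followed by the same application of Lenstra's algorithm.
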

	
	We now show that that both \HRLQT and
	\HRLUQT are NP-complete even if all hospitals 
	have lower (and upper) quota two (in contrast to our polynomial-time algorithm for \HRLUQtwo).
	To do so, we reduce from \textsc{Complete Stable Marriage with Ties and 
	Incomplete Preferences}.
	In this problem, we are given a set $U$ of $t$ men and a set $W$ of $t$ 
	women, where each man has preferences over a subset of women acceptable to him
	and each woman has preferences over a subset of men acceptable to her and 
	preferences may contain ties. We refer to the combined
	set of men and women as agents. A matching~$M$ is called \emph{complete} if
	each man is matched to a woman acceptable to him and is called
	\emph{stable} if no man-woman pair~$(m,w)\in U\times W$ exists such that both $m$
	and $w$ find each
	other acceptable and prefer each other to the agent they are
	matched to in $M$. The goal is to decide the existence of a
	complete and stable matching.
	Notably, \textsc{Complete Stable Marriage with Ties and Incomplete 
	Preferences} is equivalent to \HRLUQT if all hospitals have upper quota one 
	and the task is to decide whether the instance admits a stable matching 
	which matches all residents.
	
	Reducing from this problem, we now show that \HRLQT is already NP-complete, 
	even 
	if all hospitals have lower quota at most two. The general idea of the 
	reduction is similar to the one sketched at the beginning of 
	\Cref{sec:strict} combined with an appropriate penalizing component. 
	\begin{proposition}
		\label{p:q2T}
		\HRLQT, \HRLUQT, and \HALUQT are NP-complete, even if all hospitals 
		have lower 
		quota two, each hospital is accepted by at most two residents and each 
		resident accepts at most four hospitals.
	\end{proposition}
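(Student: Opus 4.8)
The plan is to reduce from \textsc{Complete Stable Marriage with Ties and Incomplete Preferences} (\textsc{COM-SMTI}), using the ``edge'' interpretation of quota-two hospitals that is inverse to the reduction sketched in \Cref{sec:irving} and in the spirit of the agent/coalition correspondence at the beginning of \Cref{sec:strict}. Given men $U$ and women $W$ with $|U| = |W|$, I would introduce one resident $r_m$ for every man $m$ and one resident $r_w$ for every woman $w$, and for every mutually acceptable pair $(m,w)$ a hospital $h_{m,w}$ with lower and upper quota two that is accepted by exactly $r_m$ and $r_w$. The preference list of $r_m$ ranks the hospitals $h_{m,w}$ in the order (including ties) in which $m$ ranks the corresponding women, and symmetrically for $r_w$. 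Since $h_{m,w}$ has lower quota two and only two acceptable residents, opening it forces both $r_m$ and $r_w$ onto it; hence opening $h_{m,w}$ encodes that $m$ and $w$ are matched, and because every resident lies in at most one pair, the set of open edge hospitals encodes a matching between $U$ and $W$.

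Next I would enforce completeness through a penalizing component. For each man $m$ I attach a copy of the Condorcet gadget of \Cref{ob:counter}: three quota-two hospitals $h_m^1, h_m^2, h_m^3$ accepted by exactly the pairs $\{r_m, s_m^2\}$, $\{r_m, s_m^1\}$, $\{s_m^1, s_m^2\}$, together with two fresh residents $s_m^1 \colon h_m^2 \succ h_m^3$ and $s_m^2 \colon h_m^3 \succ h_m^1$, while $r_m$'s list is continued by $\dots \succ h_m^1 \succ h_m^2$ below all of its edge hospitals. Exactly as in \Cref{ob:counter}, if $r_m$ is \emph{not} matched to an edge hospital, then $\{r_m, s_m^1, s_m^2\}$ together with $h_m^1, h_m^2, h_m^3$ admits no stable matching, whereas if $r_m$ escapes to an edge hospital the gadget is stabilised by opening $h_m^3 = \{s_m^1, s_m^2\}$ and leaving $h_m^1, h_m^2$ closed. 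Therefore every stable matching matches each $r_m$ to some edge hospital; as the covered women are pairwise distinct, this produces a perfect (hence complete) matching of $U$ to $W$.

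For the equivalence of stability I would exploit that every hospital is accepted by exactly two residents, which equals its lower quota, so the instance falls under \Cref{ob:equiv}: no feasible matching admits a blocking pair, and only blocking coalitions matter. A blocking coalition to open a currently closed edge hospital $h_{m,w}$ requires $r_m$ and $r_w$ to \emph{both strictly} prefer $h_{m,w}$ to their current assignments, which under weak stability is precisely the condition that $(m,w)$ is a blocking pair of the marriage. I would then check both directions: from a complete stable matching $\mu$, open the edge hospitals realising $\mu$ and $h_m^3$ in every gadget (no blocking coalition survives because the only agents preferring a closed edge or gadget hospital are happily matched or, for $r_m$, prefer their edge hospital); from a stable matching $M$, read off $\mu(m) = w$ whenever $h_{m,w}$ is open, where the gadgets guarantee completeness and the blocking-coalition/blocking-pair correspondence transfers stability to $\mu$.

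Finally, the single construction serves all three models: taking upper quota two yields an \HRLUQT instance; forgetting the (irrelevant) hospital preferences—legitimate since each hospital accepts exactly two residents—yields a \HALUQT instance; and allowing a large upper quota yields an \HRLQT instance; each falls under \Cref{ob:equiv}. The hospital-degree bound of two is immediate, as is NP membership. The point that needs the most care, and the main obstacle, is the resident-degree bound of four together with the tie restrictions: each man-resident already accepts the two gadget hospitals, so I must start from a version of \textsc{COM-SMTI} with short preference lists (each man ranking at most two women and each woman at most four men), for which NP-hardness is still available. Verifying that such a bounded-list version of \textsc{COM-SMTI} remains NP-hard, and that the Condorcet gadget indeed excludes every stable matching unless $r_m$ escapes, are the two steps I expect to require genuine verification.
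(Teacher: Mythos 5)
Your overall strategy coincides with the paper's: both reduce from \textsc{Complete Stable Marriage with Ties and Incomplete Preferences} with bounded-length lists, encode each mutually acceptable man--woman pair as a quota-two hospital accepted by exactly the two corresponding residents, use the Condorcet gadget of \Cref{ob:counter} to force every man-resident onto an edge hospital, and exploit the same correspondence between blocking coalitions for closed edge hospitals and blocking pairs under weak stability. The difference lies in how the gadget is attached to $r_m$, and this is exactly where your proposal has a genuine gap. You make $r_m$ itself one of the three agents of the Condorcet cycle, so $r_m$'s preference list grows by \emph{two} gadget hospitals ($h_m^1$ and $h_m^2$). To respect the claimed bound of four hospitals per resident, you must therefore start from instances in which every man ranks at most two women. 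But the NP-hardness of \textsc{COM-SMTI} is only known for preference lists of length three \cite{IrvingMO09}; when the lists on one side have length at most two, Irving, Manlove and O'Malley give a polynomial-time algorithm, so the source problem you need is not merely ``requiring verification''---it is not NP-hard (unless P${}={}$NP). Consequently your reduction, as designed, cannot establish the stated degree bound; starting from the length-three-hard variant it still proves NP-completeness with lower quota two and hospital-degree two, but only with resident degree five.

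The missing idea is the paper's extra level of indirection: instead of putting $r_m$ into the cycle, the paper introduces one fresh resident $r^*_m$ and one fresh hospital $h^*_m$ (accepted exactly by $r_m$ and $r^*_m$, and appended as $r_m$'s last choice), and builds the Condorcet cycle of \Cref{ob:counter} entirely out of three further fresh residents $r_m', r_m'', r_m'''$ on hospitals $h_m, h_m', h_m''$, with $r^*_m: h^*_m \succ h_m$ linking the two parts. If $h^*_m$ were open, both $r_m$ and $r^*_m$ would be matched to it and the remaining triple would be the unstable cycle; hence $h^*_m$ is closed in every stable matching, and then $r_m$ must be matched to an edge hospital, since otherwise $\{r_m, r^*_m\}$ is a blocking coalition to open $h^*_m$. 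This adds only \emph{one} hospital to $r_m$'s list, so the known hardness for lists of length three suffices and every resident accepts at most four hospitals. (Incidentally, your gadget keeps every hospital at exactly two acceptors, whereas in the paper's gadget the hospital $h_m$ is accepted by three residents, namely $r^*_m$, $r_m'$, and $r_m'''$; on that particular count your construction is tighter than the paper's own.)
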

	\begin{proof}
      Consider an instance of \textsc{Complete Stable Marriage with Ties and 
      Incomplete Preferences} where each agent accepts at most three other 
      agents.
      This problem is known to be NP-complete~\cite{IrvingMO09}.
      
      \textbf{Construction:} We construct an instance of \HRLQT as follows. For 
      each man $m\in U$, we add a resident $r_m$, and for each woman $w\in W$, 
      we add a resident $r_w$.  
      For each man-woman pair~$(m, w)\in U\times W$ where both $m$ and $w$ find 
      each other acceptable, we add a hospital~$h_{(m, w)}$ (matching $(m,w)\in 
      U\times W$ in the given instance corresponds to matching both $m$ and $w$ 
      to~$h_{(m, w)}$).
      Furthermore, for each man~$m\in U$, we add a penalizing component 
      consisting of four hospitals $h^*_m$, $h_m$, $h_m'$, and $h_m''$  with 
      lower quota 
      two and four residents~$r^*_m$, $r_m'$, $r_m''$, and~$r_m'''$.
      
      The preferences of the residents are as follows.
      For each man~$m\in U$, the preferences of resident~$r_m$ arise from the 
      preferences of $m$ by replacing each woman~$w\in W$ by the 
      hospital~$h_{(m, w)}$. 
      At the end, we append the hospital~$h^*_m$.
      For each woman~$w\in W$, the preferences of resident~$r_w$ arise from the 
      preferences of~$w$ by replacing each man~$m\in U$ by the 
      hospital~$h_{(m,w)}$.
      For each~$m\in U$, the preferences of the residents in the penalizing 
      components are: $$r^*_m 
      : h^*_m \succ h_m, \qquad r_m' : h_m \succ h_m', \qquad r_m'' : h_m' 
      \succ h_m'', \qquad r_m''' : h_m'' \succ h_m.$$
      Note that the residents $r_m'$, $r_m''$, and $r_m'''$ together with the 
      hospitals $h_m$, $h_m'$, and $h_m''$
      correspond to the instance from 
      \Cref{ob:counter} not admitting a stable matching. This directly implies 
      that all hospitals~$h^*_m$ need to be closed in a stable matching and 
      thus all men $m\in U$ need to be matched to a hospital $h_{(m, w)}$ on 
      their preferences.

      Since each man~$m\in U$ and each woman~$w\in W$ in the given instance has 
      at most three agents 
      on its preferences, it follows that each resident $r_m$ has at most four 
      hospitals on her preferences, and each resident $r_w$ has at most three 
      hospitals on her preferences.

     {\bfseries ($\Rightarrow$)} Given a complete stable matching~$N$ in the
      \textsc{Complete Stable Marriage with Ties and 
      	Incomplete Preferences} instance, we get a stable 
      matching $M$ in the constructed \HRLQT instance by taking $(\{r_m, r_w\}, 
      h_{(m, w)})$ for each $(m, w)\in N$ 
      and $(\{r^*_m, r_m'\}, h_m)$ and $(\{r_m'', r_m'''\}, h_m'')$ for each 
      man~$m\in U$.
      Since every open hospital is matched to all resident it accepts, there 
      cannot be blocking pairs.
      Moreover, there cannot be a blocking coalition of the form $(\{m, w\}, 
      h_{(m, w)})$ for some $(m,w)\in U\times W$, since otherwise $(m, w)$ 
      would be 
      a blocking pair for $N$.
      Thus, any blocking coalition must involve a hospital from a penalizing component.
      However, since each man~$m\in U$ is matched in $N$, it follows that $r_m$ 
      prefers the hospital she is matched to over $h^*_m$. 
      Thus,
      there cannot be a blocking coalition to open $h^*_m$. Moreover, as $r_m'$ 
      is 
      matched to her top-choice, there is no blocking coalition to open $h_m'$.

      {\bfseries ($\Leftarrow$)} Vice versa, given a stable matching~$M$ in the 
      constructed 
      \HRLQT instance, first obverse 
      that for each $m\in U$, matching~$M$ matches resident~$r_m$ to a hospital~$h_{(m,
      w)}$ for a 
      woman~$w\in W$, as otherwise $r_m$ is matched to $h^*_m$ together with 
      $r^*_m$ and there is a blocking coalition in the corresponding
      penalizing component (see \Cref{ob:counter}).
      Thus, the matching $N:= \{(m, w)\mid (\{r_m, r_w\}, h_{(m,w)})\in M\}$ 
      matches all men, and it remains to show that it is stable.
      Assume for a contradiction that there exists a blocking pair~$(m, w)\in U 
      \times W$ for $N$.
      Then, both $r_m$ and $r_w$ prefer $h_{(m, w)}$ over their assigned 
      hospital in $M$, and it follows that $(\{r_m, r_w\}, h_{(m, w)})$ is a 
      blocking coalition in $M$.
	\end{proof}
	Notably, the construction falls under \Cref{ob:equiv}, implying that all
	three problems are still hard if stability only requires that no blocking 
	coalition exists. 
	
	Recall that for both \HRLQ and \HRLUQ, it is possible to decide whether 
	there exists a stable matching in which exactly 
	a given set of hospital is open in polynomial time. This result also 
	implies fixed-parameter 
	tractability with respect to the number
	of hospitals. In the following, we investigate whether these positive results can be extended
	if ties in the preferences are allowed. In fact, for \HRLQT, it is still 
	possible to decide 
	whether there exists a stable matching opening a given set of hospitals 
	$H_{\open}\subseteq H$ in polynomial time. The underlying idea is that each 
	resident needs to be
	matched to 
	one of her most preferred hospitals from $H_{\open}$ in every stable
	matching~$M$. Thus, we already know for each resident the set of hospitals
	she prefers to the hospital she is matched to in $M$. Using this, one 
	can easily check whether $M$ admits a 
	blocking coalition. The remaining task is then to find an assignment of 
	residents to one of their most preferred hospitals from $H_{\open}$ that 
	respects the lower 
	quota of each hospital from $H_{\open}$. This task can be formulated as an 
	instance of bipartite maximum matching.
	\begin{proposition}\label{pr:LTies}
		\label{pr:HRLQT-H'}
		Given a subset of hospitals $H_{\open}\subseteq H$, deciding whether 
		there exists a stable matching in an \HRLQT instance $(H,R)$ in which 
		exactly the hospitals from $H_{\open}$ are open is solvable in 
		$\mathcal{O}(nm + n^{2.5})$ time.
		Parameterized by the number $m_{\quota}$ of hospitals with non-unit 
		lower quota, 
		\HRLQT is solvable in 
		$\mathcal{O}\big((nm+n^{2.5})2^{m_{\quota}}\big)$ 
		time.
	\end{proposition}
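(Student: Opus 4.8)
The plan is to exploit, as the text preceding the proposition suggests, that in \HRLQT no hospital can ever be full (the upper quota exceeds $n$), so every open hospital is undersubscribed and a resident $r$ therefore forms a blocking pair with \emph{every} open hospital she strictly prefers to $M(r)$. Consequently, in any stable matching opening exactly $H_{\open}$, each resident $r$ must be matched to a hospital in her top tie-class $T(r)$ among the hospitals of $H_{\open}$ that she accepts (and left unmatched if $T(r)=\emptyset$). First I would, in $\mathcal{O}(nm)$ time, compute $T(r)$ for every resident together with the set $P(r)$ of hospitals that $r$ strictly prefers to the members of $T(r)$ (all of $A(r)$ if $T(r)=\emptyset$).

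The observation that makes the approach work is that $P(r)$ depends only on $H_{\open}$ and not on the particular hospital of $T(r)$ to which $r$ is assigned, because all hospitals in $T(r)$ lie in a single tie-class. Hence whether a closed hospital $h\notin H_{\open}$ admits a blocking coalition is independent of the concrete assignment: it does so exactly when $|\{r : h\in P(r)\}|\ge l(h)$. I would therefore test this condition for every closed hospital in $\mathcal{O}(nm)$ time and reject if it is violated anywhere; since every resident is matched to a most preferred open hospital, no blocking \emph{pair} can occur, so this handles stability completely.

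It then only remains to decide whether the residents can be assigned within their tie-classes so that every hospital of $H_{\open}$ reaches its lower quota; as the upper quotas are non-binding, this is a pure lower-quota feasibility question. The plan is to reduce it to bipartite matching: reject immediately if $\sum_{h\in H_{\open}} l(h)>n$, and otherwise create $l(h)$ \emph{slots} for each $h\in H_{\open}$ (at most $n$ slots in total), join each resident $r$ to all slots belonging to hospitals in $T(r)$, and test with Hopcroft--Karp whether all slots can be matched simultaneously. Saturating the slots is possible precisely when a feasible assignment exists; any saturating matching meets all lower quotas, and the remaining residents can be placed arbitrarily within their tie-classes since no upper quota binds. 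With $O(n)$ vertices and $O(n^2)$ edges this runs in $\mathcal{O}(n^{2.5})$, yielding the claimed $\mathcal{O}(nm+n^{2.5})$ bound. The main obstacle in writing this up rigorously is the decoupling argument: one must argue, using the tie-class invariance of $P(r)$, that the blocking-coalition test and the matching test may be carried out separately, that success of both produces a genuinely stable matching opening exactly $H_{\open}$, and conversely that a stable matching opening $H_{\open}$ certifies success of both tests.

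For the parameterized statement I would iterate over all $2^{m_{\quota}}$ subsets $Q\subseteq H^{\quota}$ of hospitals with non-unit lower quota, declare the allowed open hospitals to be $Q$ together with all unit-quota hospitals, and run the above procedure (now requiring only the hospitals of $Q$ to reach their quota). Adding all unit-quota hospitals is without loss of generality: since each resident is matched to a most preferred hospital among the allowed ones, no resident strictly prefers any allowed hospital to her match, so a closed unit-quota hospital can never form a blocking coalition and thus never needs to be guessed. Running the $\mathcal{O}(nm+n^{2.5})$ procedure once per subset then gives the bound $\mathcal{O}\big((nm+n^{2.5})2^{m_{\quota}}\big)$.
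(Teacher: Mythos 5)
Your proposal is correct and follows essentially the same route as the paper's proof: the same observation that every resident must land in her top tie-class among $H_{\open}$ (so blocking pairs vanish and blocking coalitions at a closed hospital $h$ reduce to counting residents with $h\in P(r)$ against $l(h)$), and the same reduction of the lower-quota feasibility question to saturating $l(h)$ slots per open hospital via Hopcroft--Karp, with the identical rejection when $\sum_{h\in H_{\open}} l(h)>n$. The only cosmetic difference is in the parameterized statement, where the paper pre-assigns and deletes each resident who prefers some quota-one hospital to all hospitals of the guessed set $H^{\quota}_{\open}$, whereas you add all quota-one hospitals to the allowed set and require only the guessed hospitals to meet their quotas --- these amount to the same argument.
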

	\begin{proof}
		Let $(H,R)$ be the given \HRLQT instance and let 
		$H_{\open}$ be the set of hospitals to open. For every resident~$r\in R$, let $H_r \subseteq H_{\open}$
denote the
		set of hospitals from $H_{\open}$ to which $r$ does not prefer 
		any 
hospital
		from $H_{\open}$ (the set of her top-choices in $H_{\open}$). Note that 
		$r$ needs to be assigned to one of the hospitals 
		from $H_r$ in every stable matching, as otherwise she forms a blocking 
		pair with any open hospital from $H_r$.
		As $r$ is indifferent
		among the hospitals in~$H_r$, for the stability of the 
		resulting matching, it does
		not matter to which of the hospitals from $H_r$ she is assigned. Thus, 
		we 
		return NO if there exists a hospital~$h\in H \setminus H_{\open}$ and a
		set of $l(h)$ residents~$r_1,\dots r_{l(h)}$ with $r_i$
		preferring~$h$ to the hospitals in $H_{r_i}$ for all~$i\in [l(h)]$.
		This can be done in $\mathcal{O} (mn)$.
		After that, the problem reduces to finding a feasible matching where 
		each 
		resident $r$ is assigned to a hospital from $H_r$ and each hospital
		from~$H_{\open}$ meets its lower quota. Note that if such a matching 
		exists, then this matching is stable, as the fact that we have not 
		rejected $H_{\open}$ before implies
		that there does not exist a blocking coalition. 
		The task of finding a feasible matching can be solved by computing a 
		maximum matching in a bipartite graph~$G$ which we construct as 
		follows: We add one vertex for each resident $r\in R$ and $\lowerq (h) 
		$ vertices for each hospital~$h\in H_{\open}$ and an edge between a 
		vertex corresponding to a resident $r\in R$ and a vertex corresponding 
		to a hospital $h\in H$ if $h \in H_r$.
		Note that we may assume that $\sum_{h\in H_{\open}} \lowerq (h) \le n$ (otherwise we have a trivial No-instance).
		Therefore, $G$ has $\mathcal{O} (n)$ vertices and $\mathcal{O} (n^2)$ 
		edges, and thus, using Hopcroft and Karp's 
		algorithm~\cite{DBLP:journals/siamcomp/HopcroftK73}, a matching in $G$ 
		can be computed in 
		$\mathcal{O} (n^{2.5})$ time. There exists a bipartite matching which 
		matches all vertices corresponding to hospitals if and only if there 
		exists a feasible matching of the residents to the hospitals where each 
		resident $r\in R$ is matched to a hospital from $H_r$. 
		By iterating over all possible subsets of~$H$ and subsequently 
		employing the algorithm described before, it is possible to check 
		whether there exists a stable matching in a \HRLQT instance 
		in~$\mathcal{O}((n m + n^{2.5})2^m)$~time. 
		
		Let $H^{\quota}\subseteq H$ be the set of  hospitals with non-unit 
		lower quota.
		It is possible to improve the FPT algorithm from above by instead 
		iterating 
		over all subsets $H_{\open}^{\quota}\subseteq H^{\quota}$ and 
		subsequently deciding  whether there exists a stable matching where the 
		set of open hospitals with a non-unit lower quota is exactly 
		$H^{\quota}_{\open}$. To answer this question, we need to modify the 
		algorithm from above slightly. That is, at the beginning of the 
		algorithm, we assign a resident~$r\in R$ to a quota-one hospital~$h\in 
		H$ (and delete her from the set of residents) if $r$ prefers $h$ to all 
		all hospitals from $H^{\quota}_{\open}$. The rest of the algorithm 
		remains unchanged leading to an overall running time 
		of~$\mathcal{O}((nm+n^{2.5})2^{m_{\quota}})$. 
	\end{proof}
	Note that this result implies that deciding whether there exists a stable 
	matching with~$m_{\open}$ open ($m_{\text{closed}}$ 
	closed) 
	hospitals in a \HRLQT instance lies still in XP with respect 
	to~$m_{\open}$~($m_{\text{closed}}$). 

	As \HRLUQT is a generalization of \HALUQ, by 
	\Cref{th:HRLUQIH'}, it is NP-hard to decide whether there exists a stable 
	matching in a \HRLUQT instance where exactly some given set of hospitals is 
	open. Turning to the 
	complexity of \HRLUQT parameterized by the number of $m$ hospitals, recall 
	that the basic idea of our ILP in \Cref{pr:HRLUQI-FPTM} for \HALUQ was to 
	bound the number of
	residents types in $m$.
	However, this does not work for \HRLUQT, as different hospitals may
	rank residents differently.
	In fact, it turns out that, in contrast to \HRLQT and \HALUQT, \HRLUQT is W[1]-hard with respect to the number of
	hospitals. To prove this result, we first show that deciding whether there 
	exists a stable matching in a \HRUQT instance (all hospitals only have an 
	upper quota they need to obey) which matches all residents 
	(\textsc{Com} \HRUQT) is
	W[1]-hard with respect to the number of hospitals, a result which may also 
	be of
	independent~interest.
	
	In particular, Meeks and Rastegari~\cite{MeeksR20} asked for the 
	computational complexity of \textsc{Complete Stable Marriage with Ties and 
	Incomplete Lists} when men are of a few different ``types'' and each man of the
	same type has the same preferences and women are indifferent between two 
	men of the same type.
	This corresponds to \textsc{Com} \HRUQT, as one hospital can be seen as its 
	upper quota many identical agents. Thus, the number of different types
	in Meeks and Rastegari's model corresponds to the number of hospitals in 
	our model. The following result shows that parameterized by the number of 
	different types of men, \textsc{Complete Stable Marriage with Ties and 
	Incomplete Lists} is W[1]-hard.
	
	\begin{proposition}
		\label{pr:COMHRT}
		Parameterized by the number $m$ of hospitals, \textsc{Com} \HRUQT is 
		W[1]-hard.
	\end{proposition}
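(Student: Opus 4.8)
The plan is to give a parameterized reduction from \textsc{Multicolored Independent Set} parameterized by the number $k$ of colors, which is W[1]-hard (as already used in \Cref{th:W-n}), and to produce a \HRUQT instance whose number $m$ of hospitals is bounded by a function of $k$ (concretely $m \in \mathcal{O}(k^2)$), while the number of residents stays polynomial. In spirit this is the ``transpose'' of the reduction behind \Cref{th:W-n}: there each vertex became a hospital and the colors contributed only a constant number of residents per color, whereas here each vertex is encoded by a \emph{resident}, and the role previously played by the many vertex- and edge-hospitals is taken over by a few hospitals per color (pair) whose \emph{tie}-containing preference lists do the selection and the adjacency checking. Working with the normalized, still W[1]-hard version of \textsc{Multicolored Independent Set} from \Cref{th:W-n} (each color class of equal size $q$, the graph $p$-regular, no edges inside a color class) keeps the counting arguments clean.

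First I would introduce, for each color $c\in[k]$, a \emph{color hospital} $A_c$ together with a vertex resident $r_v$ for every $v\in V^c$, and for each pair $c<d\in[k]$ an \emph{edge hospital} $B_{c,d}$. The central forcing tool is the completeness requirement: since \HRUQT has no lower quotas there are no blocking coalitions, so---unlike in \Cref{th:W-n}, where the Condorcet gadget of \Cref{ob:counter} relied crucially on lower quota two---every piece of ``must happen'' behaviour has to be produced either by a blocking pair or by the demand that \emph{all} residents be matched. I would therefore set the upper quota of $A_c$ to $q-1$ and calibrate the capacities of the edge hospitals so that, in any matching covering all residents, exactly one vertex resident of each color is pushed out of $A_c$ and into the $B_{c,d}$-gadgets; that leftover resident encodes the vertex selected for color $c$. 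The tie-containing preferences of each $B_{c,d}$ over the incident vertex residents are then designed so that the two selected residents of colors $c$ and $d$ form a blocking pair with $B_{c,d}$ precisely when the corresponding vertices are adjacent in $G$, and cause no blocking otherwise. Given such gadgets, the forward direction turns an independent set with one vertex per color into a complete stable matching in the obvious way, and the backward direction reads the selected vertices off a complete stable matching and uses the absence of blocking pairs to conclude that they are pairwise non-adjacent.

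The main obstacle will be engineering the preference lists and upper quotas so that three things hold simultaneously: (i) every complete matching routes \emph{exactly} one vertex resident per color into the verification layer (neither zero, which the pigeonhole on $A_c$ rules out, nor several, which the edge-hospital capacities must rule out); (ii) adjacency is detected \emph{exactly}, i.e.\ a blocking pair with $B_{c,d}$ appears if and only if the two selected endpoints are adjacent, which is where the ties are genuinely needed so that $B_{c,d}$ is indifferent among the ``harmless'' residents but strictly prefers an adjacent pair; and (iii) no \emph{spurious} blocking pair is introduced among the many residents that remain inside the $A_c$'s, so that a complete stable matching really does exist whenever an independent set does. Once these local invariants are verified, the two implications are routine, and the parameter bound $m\in\mathcal{O}(k^2)$ together with the polynomial running time shows that the map is a parameterized reduction, establishing W[1]-hardness of \textsc{Com} \HRUQT with respect to the number of hospitals.
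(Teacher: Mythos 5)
Your reduction has a genuine gap, and it sits exactly at the point you dismiss as engineering in item (ii). In your architecture the only residents are vertex residents, so after the pigeonhole step all knowledge of which vertex $v\in V^c$ was selected is carried by the single resident $r_v$ (matched to some edge hospital $B_{c,d_0}$) and, negatively, by the matched set of $A_c$. Now recall that a blocking pair $(r_w,h)$ in \HRUQT is a \emph{threshold} condition on the hospital side: $h$ must be undersubscribed (a counting condition, which cannot encode adjacency), or $h$ must strictly prefer $r_w$ to some resident currently matched to it, i.e.\ to the worst element of $M(h)$. Hence, for a fixed hospital $B_{c,d}$ whose matched set contains $r_v$ plus possible fillers, the set of color-$d$ residents able to block with it is always of the form $\{w : r_w \succ_{B_{c,d}} x\}$ for a single threshold $x$ determined by the selection; as $v$ ranges over $V^c$, these sets form a chain under inclusion. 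Your correctness requirement forces this set to equal $N(v)\cap V^d$ for every $v$, so the neighborhoods $N(v)\cap V^d$ would have to be pairwise nested, which fails for general graphs. The same obstruction applies with the roles of $c$ and $d$ swapped, with $A_c$ as the checking hospital (its matched set $V^c\setminus\{r_v\}$ yields a threshold taking at most two distinct values over all $v$, so it cannot distinguish $q$ vertices), and with any constant number of hospitals per pair (a union of $O(1)$ threshold relations cannot realize all $2^{q^2}$ bipartite adjacency patterns, by counting). Your own phrasing that $B_{c,d}$ ``strictly prefers an adjacent pair'' points at the problem: hospital preferences rank individual residents, and the blocking test only ever compares one candidate against one matched resident, so no hospital can evaluate a property of a \emph{pair}. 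You also leave unspecified which hospital detects adjacency for the pairs $\{c,d\}$ with $d\neq d_0$, where neither selected resident is matched inside that pair's gadget.

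This is precisely the obstruction the paper's proof is built to avoid. It reduces from \textsc{Multicolored Clique} and introduces one resident \emph{per edge}: the pair gadget for $\{c,d\}$ (hospitals $g^{c,d}$ and $\bar{g}^{c,d}$ with upper quotas $p-1$ and $1$) selects an edge by deciding which edge resident $t_e$ is pushed to its last choice $\bar{g}^{c,d}$, and the consistency check happens back at the color hospitals $h^c,\bar{h}^c$, whose preference lists interleave the blocks of edge residents $[t_e]_{e\in \delta(v^c_j)}$ with the selection residents $s^c_j$. This works because \emph{incidence} between edges of $E^{c,d}$ and vertices of $V^c$ partitions the edge residents (each edge has exactly one endpoint of color $c$), so it \emph{can} be written as a threshold condition with respect to a suitable ordering --- unlike adjacency between two color classes. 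Your completeness-plus-capacity-counting idea is sound and indeed mirrors the paper's argument (the sum of all upper quotas equals the number of residents), but to repair the reduction you would have to add edge residents and a per-pair edge-selection gadget, at which point you are essentially reconstructing the paper's proof, for which Clique is the natural source problem rather than Independent Set.
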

	\begin{proof}
		We reduce from \textsc{Multicolored Clique} which we introduced in 
		\Cref{thm:ha-m-closed} and 
		which is W[1]-hard parameterized by the number $k$ of different colors~\cite{Pietrzak03}.
		Let 
		$G=(V,E)$ be an
		undirected graph with a partitioning of the vertices in $k$ different 
		colors $(V^1,\dots V^k)$. 
		Moreover, for each $c< d\in [k]$, let
		$E^{c, d}$ denote the set of edges with one endpoint colored
		with color $c$ and the other endpoint colored with color $d$.  Without
loss of
		generality and to simplify notation, we assume that all colors contain 
		$\ell$ vertices and for 
		all color combinations~$c<d\in [k]$, the number of edges with one
		endpoint colored in $c$ and one endpoint colored in $d$ is $p$.
		For~$c\in [k]$, we write~$V^c=\{v^c_1,\dots v^c_\ell\}$.   
		For a 
		vertex $v\in V$, let $\delta(v)$ denote the set of edges that are 
		incident to~$v$ in~$G$. Moreover, for a set of residents $R'$, let 
		$[R']$ 
		denote an arbitrary strict linear ordering of all residents in $R'$
		and $(R')$ a single tie containing all residents from $R'$.	
		
		\textbf{Construction:} We introduce a vertex selection  
		gadget for each color $c\in [k]$ and an edge selection gadget 
		for each pair of colors $c< d\in [k]$. For each $c\in
		[k]$, the vertex selection gadget consists of two 
		hospitals $h^c$ and $\bar{h}^c$ both with upper quota $\ell$ and
		$2\ell$ 
		\emph{selection residents}~$r^c_1,\dots, r^c_\ell$ and $s^c_1,\dots,
		s^c_\ell$. In 
		such a gadget, the vertex with color $c$ of the clique should be 
		selected. For 
		each~$c< d\in [k]$, the edge gadget should pick the edge between
		the selected vertex of color $c$ and that selected vertex of color 
		$d$. For this, we introduce one
		hospital $g^{c,d}$ with upper quota~$p-1$ and one hospital
		$\bar{g}^{c,d}$ with upper quota $1$ together with an edge
		resident 
		$t_e$ for each edge $e\in E^{c, d}$.
		 The 
		preferences of all residents and hospitals are as follows. 
		\begin{align*}
		\intertext{Vertex gadgets:}
		\bar{h}^c \colon & r^c_1 \succ \dots \succ r^c_\ell \succ 
		[t_e]_{e\in \delta (v^c_\ell)}\succ s^c_\ell
		\succ 
		[t_e]_{e\in \delta 
			(v^c_{\ell-1})}\succ s^c_{\ell-1} \succ \dots \succ  
			[t_e]_{e\in \delta 
			(v^c_1)}\succ s^c_{1}, &
		\forall c\in
		[k]
		\\
		h^c  \colon & r^c_1 \succ\dots \succ r^c_\ell \succ s^c_1 \succ 
		[t_e]_{e\in \delta (v^c_1)} \succ s^c_2 
		\succ [t_e]_{e\in \delta (v^c_2)} 
		\succ \dots \succ  s^c_\ell \succ [t_e]_{e\in \delta 
		(v^c_\ell)}, &
		\forall c\in 
		[k]\\ 
		r^c_i  \colon & (h^c, \bar{h}^c),\qquad s^c_i \colon (h^c,
		\bar{h}^c), \qquad  \forall i\in [\ell], \forall c\in [k]\\
		\intertext{Edge gadgets:}
		g^{c,d} \colon & (t_e)_{e\in E^{c, d}},\qquad
		\bar{g}^{c,d}  \colon (t_e)_{e\in E^{c, d}}, \qquad \forall c< d \in
[k]\\
		t_e \colon & g^{c,d}\succ h^c\succ h^{d} \succ \bar{h}^c
		\succ \bar{h}^{d}
		\succ \bar{g}^{c,d}, \qquad \forall c<d \in [k],\forall e\in
E^{c, d}
		\end{align*}
		See \Cref{fig:w-hard-com-hrt} for an example.
		For each color $c\in [k]$, selecting vertex $v^c_i\in V^c$ corresponds
		to 
		matching $s^c_1,\dots, s^c_{i}$ and $r^c_{i+1}, \dots, r^c_{\ell}$ to
		$h^c$ 
and $r^c_1, \dots, r^c_i$ and
		$s^c_{i+1},\dots, s^c_{\ell}$ to $\bar{h}^c$. For each color 
		pair~$c< d\in [k]$, selecting edge $e\in E^{c, d}$ corresponds to
		matching 
		$t_e$ to $\bar{g}^{c, d}$.
		
		\begin{figure}
		 \begin{minipage}{0.2\textwidth}
		  \begin{center}
		   \begin{tikzpicture}
		   \node (hor-dist) at (1, 0) {};
		   \node (ver-dist) at (0, 1) {};
		   
		    \node[vertex, label=180:$v_1^c$] (v11) at (0,0) {};
		    \node[vertex, label=180:$v_2^c$] (v21) at ($(v11) + (ver-dist)$) {};
		    \node[vertex, label=0:$v_1^d$] (v12) at ($(v11) + (hor-dist)$) {};
		    \node[vertex, label=0:$v_2^d$] (v22) at ($(v21) + (hor-dist)$) {};
		    
		    \draw (v11) -- (v12);
		    \draw (v11) -- (v22);
		    \draw (v21) -- (v22);
		    
		    \draw (v21) -- (v12);
		   \end{tikzpicture}

		  \end{center}

		 \end{minipage}
		 \begin{minipage}{0.7\textwidth}
		  \begin{center}
		   \begin{tikzpicture}
		   \node (hor-dist) at (1.5, 0) {};
		   \node (ver-dist) at (0, 1.5) {};
		   
			      \node[vertex, label=180:$r_{1}^c$] (r11) at (0, 0) {};
			      \node[vertex, label=180:$r_{2}^c$] (r21) at ($(r11) + 
			      (ver-dist)$) {};
			      
			      \node[vertex, label=180:$s_{1}^c$] (s11) at ($(r21) + ( 
			      ver-dist)$) {};
			      \node[vertex, label=180:$s_{2}^c$] (s21) at ($(s11) + 
			      (ver-dist)$) {};
			      
			      \node[vertex, label=0:$r_{1}^d$] (r12) at ($(r11) + 
			      6*(hor-dist)$) {};
			      \node[vertex, label=0:$r_{2}^d$] (r22) at ($(r12) + 
			      (ver-dist)$) {};
			      
			      \node[vertex, label=0:$s_{1}^d$] (s12) at ($(r22) + ( 
			      ver-dist)$) {};
			      \node[vertex, label=0:$s_{2}^d$] (s22) at ($(s12) + 
			      (ver-dist)$) {};
			      
			      \node[vertex, label=270:$t_{v_1^c, v_1^d}$] (e1) at 
			      ($0.5*(r11) + 0.5*(r12)$) {};
			      \node[vertex, label=270:$t_{v_1^c, v_2^d}$] (e2) at ($(e1) + 
			      (ver-dist)$) {};
			      \node[vertex, label={[xshift=0.cm]90:$t_{v_2^c, v_1^d}$}] 
			      (e3) at ($(e2) + (ver-dist)$) {};
			      \node[vertex, label=90:$t_{v_2^c,v_2^d}$] (e4) at ($(e3) + ( 
			      ver-dist)$) {};
			      
			      \node[squared-vertex, label=270:$h^c$, 
			      label={[yshift=-0.4cm]270:${[1,2]}$}] (h1) at ($(r11)+ 
			      (hor-dist) - 0*(ver-dist)$) {};
			      \node[squared-vertex, label={[yshift=0.4 cm]90:$\bar{h}^c$}, 
			      label=90:${[1,2]}$] (h1b) at ($(h1) + 3*(ver-dist)$) {};
			      
                  \draw (h1) edge node[pos=0.3, fill=white, inner sep=2pt] 
                  {\scriptsize $1$} node[pos=0.84, fill=white, inner sep=2pt] 
                  {\scriptsize $1$} (r11);
                  \draw (h1) edge node[pos=0.3, fill=white, inner sep=2pt] 
                  {\scriptsize $2$} node[pos=0.84, fill=white, inner sep=2pt] 
                  {\scriptsize $1$} (r21);
                  \draw (h1) edge node[pos=0.3, fill=white, inner sep=2pt] 
                  {\scriptsize $3$} node[pos=0.84, fill=white, inner sep=2pt] 
                  {\scriptsize $1$} (s11);
                  \draw (h1) edge node[pos=0.3, fill=white, inner sep=2pt] 
                  {\scriptsize $6$} node[pos=0.84, fill=white, inner sep=2pt] 
                  {\scriptsize $1$} (s21);
                  
                  \draw (h1b) edge node[pos=0.3, fill=white, inner sep=2pt] 
                  {\scriptsize $1$} node[pos=0.84, fill=white, inner sep=2pt] 
                  {\scriptsize $1$} (r11);
                  \draw (h1b) edge node[pos=0.3, fill=white, inner sep=2pt] 
                  {\scriptsize $2$} node[pos=0.84, fill=white, inner sep=2pt] 
                  {\scriptsize $1$} (r21);
                  \draw (h1b) edge node[pos=0.3, fill=white, inner sep=2pt] 
                  {\scriptsize $8$} node[pos=0.84, fill=white, inner sep=2pt] 
                  {\scriptsize $1$} (s11);
                  \draw (h1b) edge node[pos=0.3, fill=white, inner sep=2pt] 
                  {\scriptsize $5$} node[pos=0.84, fill=white, inner sep=2pt] 
                  {\scriptsize $1$} (s21);

                  \draw (h1) edge node[pos=0.2, fill=white, inner sep=2pt] 
                  {\scriptsize $4$} node[pos=0.84, fill=white, inner sep=2pt] 
                  {\scriptsize $2$} (e1);
                  \draw (h1) edge node[pos=0.2, fill=white, inner sep=2pt] 
                  {\scriptsize $4$} node[pos=0.84, fill=white, inner sep=2pt] 
                  {\scriptsize $2$} (e2);
                  \draw (h1) edge node[pos=0.2, fill=white, inner sep=2pt] 
                  {\scriptsize $7$} node[pos=0.84, fill=white, inner sep=2pt] 
                  {\scriptsize $2$} (e3);
                  \draw (h1) edge node[pos=0.2, fill=white, inner sep=2pt] 
                  {\scriptsize $7$} node[pos=0.84, fill=white, inner sep=2pt] 
                  {\scriptsize $2$} (e4);
                  
                  \draw (h1b) edge node[pos=0.2, fill=white, inner sep=2pt] 
                  {\scriptsize $6$} node[pos=0.84, fill=white, inner sep=2pt] 
                  {\scriptsize $4$} (e1);
                  \draw (h1b) edge node[pos=0.2, fill=white, inner sep=2pt] 
                  {\scriptsize $6$} node[pos=0.84, fill=white, inner sep=2pt] 
                  {\scriptsize $4$} (e2);
                  \draw (h1b) edge node[pos=0.2, fill=white, inner sep=2pt] 
                  {\scriptsize $3$} node[pos=0.84, fill=white, inner sep=2pt] 
                  {\scriptsize $4$} (e3);
                  \draw (h1b) edge node[pos=0.2, fill=white, inner sep=2pt] 
                  {\scriptsize $3$} node[pos=0.84, fill=white, inner sep=2pt] 
                  {\scriptsize $4$} (e4);
			      
			      \node[squared-vertex, label=270:$h^d$, 
			      label={[yshift=-0.4cm]270:$[1,2]$}] (h2) at ($(r12) - 
			      (hor-dist)$) {};
			      \node[squared-vertex, label={[yshift=0.4cm]90:$\bar{h}^d$}, 
			      label={90:$[1,2]$}] (h2b) at ($(h2) + 3*(ver-dist)$) {};
			      
                  \draw (h2) edge node[pos=0.3, fill=white, inner sep=2pt] 
                  {\scriptsize $1$} node[pos=0.84, fill=white, inner sep=2pt] 
                  {\scriptsize $1$} (r12);
                  \draw (h2) edge node[pos=0.3, fill=white, inner sep=2pt] 
                  {\scriptsize $2$} node[pos=0.84, fill=white, inner sep=2pt] 
                  {\scriptsize $1$} (r22);
                  \draw (h2) edge node[pos=0.3, fill=white, inner sep=2pt] 
                  {\scriptsize $3$} node[pos=0.84, fill=white, inner sep=2pt] 
                  {\scriptsize $1$} (s12);
                  \draw (h2) edge node[pos=0.3, fill=white, inner sep=2pt] 
                  {\scriptsize $6$} node[pos=0.84, fill=white, inner sep=2pt] 
                  {\scriptsize $1$} (s22);
			      
                  \draw (h2) edge node[pos=0.2, fill=white, inner sep=2pt] 
                  {\scriptsize $4$} node[pos=0.84, fill=white, inner sep=2pt] 
                  {\scriptsize $3$} (e1);
                  \draw (h2) edge node[pos=0.2, fill=white, inner sep=2pt] 
                  {\scriptsize $7$} node[pos=0.84, fill=white, inner sep=2pt] 
                  {\scriptsize $3$} (e2);
                  \draw (h2) edge node[pos=0.2, fill=white, inner sep=2pt] 
                  {\scriptsize $4$} node[pos=0.84, fill=white, inner sep=2pt] 
                  {\scriptsize $3$} (e3);
                  \draw (h2) edge node[pos=0.2, fill=white, inner sep=2pt] 
                  {\scriptsize $7$} node[pos=0.84, fill=white, inner sep=2pt] 
                  {\scriptsize $3$} (e4);
                  
                  \draw (h2b) edge node[pos=0.3, fill=white, inner sep=2pt] 
                  {\scriptsize $1$} node[pos=0.84, fill=white, inner sep=2pt] 
                  {\scriptsize $1$} (r12);
                  \draw (h2b) edge node[pos=0.3, fill=white, inner sep=2pt] 
                  {\scriptsize $2$} node[pos=0.84, fill=white, inner sep=2pt] 
                  {\scriptsize $1$} (r22);
                  \draw (h2b) edge node[pos=0.3, fill=white, inner sep=2pt] 
                  {\scriptsize $8$} node[pos=0.84, fill=white, inner sep=2pt] 
                  {\scriptsize $1$} (s12);
                  \draw (h2b) edge node[pos=0.3, fill=white, inner sep=2pt] 
                  {\scriptsize $5$} node[pos=0.84, fill=white, inner sep=2pt] 
                  {\scriptsize $1$} (s22);
			      
                  \draw (h2b) edge node[pos=0.2, fill=white, inner sep=2pt] 
                  {\scriptsize $6$} node[pos=0.84, fill=white, inner sep=2pt] 
                  {\scriptsize $5$} (e1);
                  \draw (h2b) edge node[pos=0.2, fill=white, inner sep=2pt] 
                  {\scriptsize $3$} node[pos=0.84, fill=white, inner sep=2pt] 
                  {\scriptsize $5$} (e2);
                  \draw (h2b) edge node[pos=0.2, fill=white, inner sep=2pt] 
                  {\scriptsize $6$} node[pos=0.84, fill=white, inner sep=2pt] 
                  {\scriptsize $5$} (e3);
                  \draw (h2b) edge node[pos=0.2, fill=white, inner sep=2pt] 
                  {\scriptsize $3$} node[pos=0.84, fill=white, inner sep=2pt] 
                  {\scriptsize $5$} (e4);
                  
			      \node[squared-vertex, label=90:$g^{c,d}$, 
			      label={[yshift=-0.1cm]270:${[1,3]}$}] (g) at 
			      ($0.5*(h1)+0.5*(h1b)$) {};
			      \node[squared-vertex, label=90:$\bar{g}^{c,d}$, 
			      label={[yshift=-0.1cm]270:${[1,1]}$}] (gb) at ($0.5*(h2) + 
			      0.5*(h2b)$) {};
			      
                  \draw (g) edge node[pos=0.2, fill=white, inner sep=2pt] 
                  {\scriptsize $1$} node[pos=0.84, fill=white, inner sep=2pt] 
                  {\scriptsize $1$} (e1);
                  \draw (g) edge node[pos=0.2, fill=white, inner sep=2pt] 
                  {\scriptsize $1$} node[pos=0.84, fill=white, inner sep=2pt] 
                  {\scriptsize $1$} (e2);
                  \draw (g) edge node[pos=0.2, fill=white, inner sep=2pt] 
                  {\scriptsize $1$} node[pos=0.84, fill=white, inner sep=2pt] 
                  {\scriptsize $1$} (e3);
                  \draw (g) edge node[pos=0.2, fill=white, inner sep=2pt] 
                  {\scriptsize $1$} node[pos=0.84, fill=white, inner sep=2pt] 
                  {\scriptsize $1$} (e4);
                  
                  \draw (gb) edge node[pos=0.2, fill=white, inner sep=2pt] 
                  {\scriptsize $1$} node[pos=0.84, fill=white, inner sep=2pt] 
                  {\scriptsize $6$} (e1);
                  \draw (gb) edge node[pos=0.2, fill=white, inner sep=2pt] 
                  {\scriptsize $1$} node[pos=0.84, fill=white, inner sep=2pt] 
                  {\scriptsize $6$} (e2);
                  \draw (gb) edge node[pos=0.2, fill=white, inner sep=2pt] 
                  {\scriptsize $1$} node[pos=0.84, fill=white, inner sep=2pt] 
                  {\scriptsize $6$} (e3);
                  \draw (gb) edge node[pos=0.2, fill=white, inner sep=2pt] 
                  {\scriptsize $1$} node[pos=0.8, fill=white, inner sep=2pt] 
                  {\scriptsize $6$} (e4);
		   \end{tikzpicture}

		  \end{center}

		 \end{minipage}
          \caption{An example for the reduction showing W[1]-hardness of 
          	\textsc{Com} \HRUQT parameterized by the number of hospitals from 
          	\Cref{pr:COMHRT}.
          The input graph with coloring $(V^c=\{v_1^c, v_2^c\}, V^d=\{v_1^d, 
          v_2^d\})$ is depicted on the left, while the output is depicted on 
          the right.}
          \label{fig:w-hard-com-hrt}
		\end{figure}
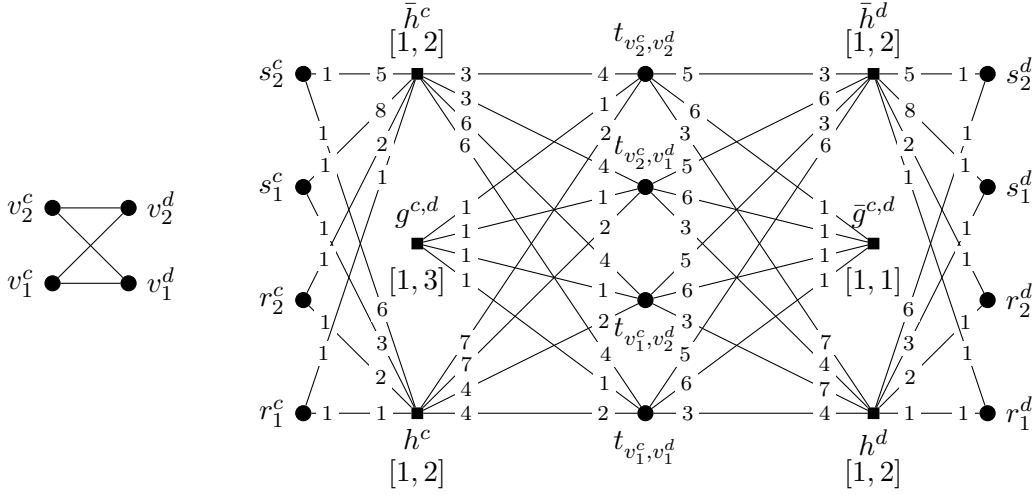

		{\bfseries ($\Rightarrow$)}
		Assume that $V'=\{v^1_{i_1}, \dots, v^k_{i_k}\}$ is a clique of 
		size 
		$k$ in $G$. We claim that the following matching $M$ is 
		a stable matching which matches all residents in the constructed 
		instance: 
		\begin{align*}
			M= & \{(g^{c,d}, \{t_e \mid e\in 
			 E^{c, d}\setminus\{\{v^c_{i_c},v^{d}_{i_{d}}\}\}\})\mid c<d \in [k]
			\}\cup \{(\bar{g}^{c,d},t_{\{v^c_{i_c},v^{d}_{i_{d}}
			\}})\mid c< d \in
			[k]\} \\
			&\cup \{(\bar{h}^c,\{r^c_1, \dots,
			r^c_{i_c},s^c_{i_c + 1},\dots, s^c_\ell \})\mid c\in
			[k]\} 
			\\
			& \cup \{(h^c,\{r^c_{i_c+1}, \dots, r^c_\ell,s^c_1, \dots, s^c_{i_c}
			\})\mid
			c\in [k]\}
		\end{align*}
		The matching $M$ is well defined, since $V'$ is a clique and clearly a 
		complete matching which respects the upper quota of all hospitals. To 
		show 
		stability, note that all residents except the edge 
		residents~$t_{\{v_{i^c_c},v_{i^{d}_{d}}\}}$ for all $c< d \in [k]$, 
		i.e., those
		corresponding to the edges lying in the clique,  
		are 
		matched to their top-choice.
		
		Thus, the only possible blocking pairs are
		$(h^c,t_{\{v_{i^c_c},v_{i^{d}_{d}}\}})$,
		$(h^{d},t_{\{v_{i^c_c},v_{i^{d}_{d}}\}})$,
		$(\bar{h}^{c},t_{\{v_{i^c_c},v_{i^{d}_{d}}\}})$,
		$(\bar{h}^{d},t_{\{v_{i^c_c},v_{i^{d}_{d}}\}})$, and $(g^{c, d}, 
		t_{\{v_{i^c_c},v_{i^{d}_{d}}\}})$ for some $c< d \in [k]$.
		However, by the construction of the matching~$M$, for each $c\in [k]$
		both $h^c$ and $\bar{h}^c$ prefer all residents assigned to them over
		all residents corresponding to the edges that are incident to 
		$v^c_{i_c}$.
		Furthermore, $g^{c, d}$ is full and is indifferent among all residents
		it accepts and thus cannot be part of a blocking pair.
		
		{\bfseries ($\Leftarrow$)}
		Let $M$ be a complete stable matching.
		Since the sum of the capacities of all hospitals equals the number of 
		residents, it follows that for each $c< d\in [k]$, one edge resident
		$t_e$ with $e=\{v^c_i,v^{d}_j\}$ for some $i,j\in [p]$ has to be
		matched to $\bar{g}^{c,d}$.
		We claim that from this it follows that the vertex-selection gadget for 
		$V^c$ has to select 
		$v^c_i$ and the vertex-selection gadget for $V^{d}$ has to select
		$v^{d}_j$, from which the correctness of the reduction easily follows.
		Without loss of generality, let us look at~$v^c_i$.
		Due to the stability of $M$, hospitals $h^c$ and $\bar{h}^c$ have to 
		prefer all 
		residents matched to them to $t_{\{v^c_i,v^{d}_j\}}$. Thereby, 
		$\ell$ residents from $\{r^c_j : j\in [\ell]\} 
		\cup \{s^c_j : j\in [i]\}$ have to be matched to $h^c$, and 
		$\ell$ 
		residents from $\{r^c_j : j\in [\ell]\} \cup \{s^c_j: j\in [i + 1, 
		\ell]\}$ have 
		to be matched to~$\bar{h}^c$.
		Thus, the agents~$\{s^c_j: j \in [i]\}$ are matched to $h^c$, and 
		$\{s^c_j: 
		j\in [i+1, \ell]\}$ are matched to $\bar{h}^c$ which corresponds to 
		selecting $v^c_i$ as the vertex of color $c$ in the clique.
	\end{proof}
	
	It is possible to reduce \textsc{Com} \HRUQT to \HRLUQT by introducing a 
	penalizing component that ensures that each resident needs to be matched to 
	a hospital from the original instance in every stable matching. This 
	implies the following result:
	\begin{proposition}
		\label{pr:HRT-W}
		Parameterized by the number $m$ of hospitals, \HRLUQT is W[1]-hard, 
		even 
		if only four hospitals have non-unit lower quota.
	\end{proposition}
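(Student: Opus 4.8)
The plan is to give a polynomial-time many-one reduction from \textsc{Com} \HRUQT, which is W[1]-hard parameterized by the number of hospitals by \Cref{pr:COMHRT}, and to show that the number of hospitals grows only by an additive constant, exactly four of which will carry a non-unit lower quota. Starting from a \textsc{Com} \HRUQT instance with residents $R_0$ and hospitals $H_0$, I keep all residents and all hospitals, giving each hospital of $H_0$ lower quota one (and its original upper quota and preferences), so that these hospitals behave like ordinary Hospital Residents hospitals and do not contribute to the count of hospitals with non-unit lower quota. The whole difficulty is to force every stable matching of the constructed instance to match \emph{every} resident of $R_0$ to a hospital of $H_0$, since otherwise \HRLUQT may simply leave residents unmatched and the correspondence to \emph{complete} stable matchings would break.

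To achieve this I add a single penalizing component shared by all residents, consisting of four hospitals $h^*, h_1, h_2, h_3$ of lower quota two (and upper quota two, except that the upper quota of $h^*$ is set to $|R_0|+1$) together with four new residents $r^*, r_1, r_2, r_3$. I append $h^*$ as the least preferred hospital to every resident of $R_0$ (so each original resident strictly prefers every hospital of $H_0$ to $h^*$, and $h^*$ to being unmatched), and I set $r^*: h^*\succ h_1$ as well as $r_1: h_1\succ h_2$, $r_2: h_2\succ h_3$, $r_3: h_3\succ h_1$. The residents $r_1, r_2, r_3$ together with $h_1, h_2, h_3$ form exactly the Condorcet gadget of \Cref{ob:counter}; since $h_1$ is the only one of these hospitals that additionally accepts $r^*$, this sub-instance admits no stable assignment unless a slot of $h_1$ is occupied by $r^*$. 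In particular, in every stable matching $r^*$ is matched to $h_1$ (it can be neither unmatched nor matched to $h^*$), and consequently $r^*$ strictly prefers $h^*$ to its partner.

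The two directions then fall out as follows. For the forward direction, from a complete stable matching $N$ of the original instance I match each resident of $R_0$ as in $N$, set $h_1=\{r^*, r_1\}$ and $h_3=\{r_2, r_3\}$, and leave $h^*$ and $h_2$ closed; stability is inherited from $N$ on the $H_0$-part, and in the gadget the only residents preferring a closed hospital (namely $r^*$ for $h^*$ and $r_2$ for $h_2$) are single residents, which cannot reach the lower quota two. For the backward direction, I first argue as above that $r^*$ must be matched to $h_1$. Since the upper quota of $h^*$ exceeds the number of residents, $h^*$ can never be full, so if $h^*$ were open then $(r^*, h^*)$ would be a blocking pair; hence $h^*$ is closed in every stable matching. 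An original resident prefers $h^*$ only to being unmatched, so if some resident of $R_0$ were unmatched, it together with $r^*$ would form a blocking coalition opening the lower-quota-two hospital $h^*$. Thus every original resident is matched, and since $h^*$ is closed it is matched inside $H_0$; the induced matching is therefore complete, and it is stable because any blocking pair inside $H_0$ would (using that $H_0$-hospitals have lower quota one, so the single-resident coalitions of \Cref{ob:equiv} coincide with ordinary blocking pairs) also block the constructed matching.

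The main obstacle is exactly the point I flagged: the per-vertex/per-variable penalizing components of the earlier NP-hardness proofs would blow up the number of hospitals and destroy the parameter bound, so here the gadget must be a single, constant-size component. The trick that makes this possible is that the lower quota two of $h^*$ lets a \emph{single} unmatched resident, paired with the forced-to-be-dissatisfied resident $r^*$, create a blocking coalition; this penalizes any incompleteness while keeping exactly four hospitals with non-unit lower quota, which yields the claimed refinement of the statement.
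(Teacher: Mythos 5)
Your proof is correct and takes essentially the same route as the paper's: a reduction from \textsc{Com} \HRUQT (\Cref{pr:COMHRT}) that appends a single lower-quota-two hospital $h^*$ to the end of every original resident's preference list and uses the Condorcet gadget of \Cref{ob:counter} to pin $r^*$ to $h_1$, so that $h^*$ must be closed and any unmatched original resident would form a blocking coalition with $r^*$ to open $h^*$. The only immaterial deviations are that the paper gives $h^*$ upper quota two with $r^*$ as its unique top choice, whereas you give $h^*$ upper quota $|R_0|+1$ so that it is automatically undersubscribed once $r^*$ is matched to $h_1$; both variants yield exactly the four hospitals $h^*, h_1, h_2, h_3$ with non-unit lower quota.
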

	\begin{proof}
		We reduce from 	\textsc{Com} \HRUQT, which is W[1]-hard with respect to 
		the number
		of hospitals as proven in \Cref{pr:COMHRT}. 
		
		\textbf{Construction:} From an \textsc{Com} \HRUQT instance, 
		we construct a \HRLUQT instance by adding a lower quota of one to each 
		of the hospitals. Moreover, we insert a hospital $h^*$ with lower quota 
		and upper quota two which we insert after all other hospitals in the 
		preference of 
		each 
		resident. Furthermore, we add a penalizing component consisting of 
		three hospitals $h_1$, $h_2$, and $h_3$ with lower quota two and four 
		residents $r^*$, 
		$r_1$, $r_2$, and $r_3$ with the following preferences:  
		$$r^*: h^*\succ h_1, \quad r_1:h_1\succ h_2, \quad r_2: h_2\succ h_3, 
		\quad 
		r_3:h_3\succ h_1.$$ Finally, we set the preferences of $h^*$ such that 
		$h^*$ accepts all residents and has $r^*$ as its unique top-choice. 
		Note that the residents $r_1$, $r_2$, $r_3$ together with the hospitals 
		$h_1$, $h_2$, $h_3$ correspond to the instance from \Cref{ob:counter} 
		not 
		admitting a stable matchings. This implies that the 
		hospital $h^*$ is closed in all stable matchings. 
		
		{\bfseries ($\Rightarrow$)}
		If there exists a stable matching in the given \textsc{Com} \HRUQT 
		instance matching 
		all residents, then adding $(h_1,\{r^*, r_1\})$ and 
		$(h_3,\{r_2, r_3\})$ to the matching results in a stable matching in 
		the constructed \HRLUQT instance.
		
		{\bfseries ($\Leftarrow$)}
		If there exists a stable matching $M$ in the constructed \HRLUQT 
		instance, then $h^*$ needs to be closed in $M$, 
		as 
		otherwise $r^*$ is matched to it and consequently $r_1$, $r_2$, and 
		$r_3$ cannot be matched in a stable way. This implies that all 
		residents from the given \textsc{Com} \HRUQT instance need to be 
		matched to hospitals 
		from the original instance. Thus, $M$ restricted to the 
		original hospitals and residents is a stable matching in the given 
		\textsc{Com} \HRUQT instance where all residents are matched.
	\end{proof} 

	On the positive side, by guessing for each hospital the least preferred 
	resident assigned to it in a stable matching, it is again possible to 
	bound the number of 
	different resident types in a function of $m$ and subsequently solve 
	the problem using an ILP. This approach results in an XP algorithm for 
	\HRLUQT:
	
	\begin{proposition}
		\label{pr:HRT-XP}
		Parameterized by the number $m$ of hospitals, \HRLUQT lies in XP.
	\end{proposition}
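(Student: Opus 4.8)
The plan is to augment the integer-linear-programming approach of \Cref{pr:HRLUQI-FPTM} with a preliminary guessing step that eliminates the dependence of the resident types on the hospitals' preferences. The reason the type-bounding argument used for \HALUQ (and reused for \HALUQT in \Cref{co:HRLUQIT-m}) fails for \HRLUQT is that a resident is now characterized not only by her own weak preference list over the $m$ hospitals, but also by how each hospital ranks her; since a hospital may order the $n$ residents arbitrarily, this can produce up to $n$ distinct types, which is exactly the source of the $W[1]$-hardness in \Cref{pr:HRT-W}. To get around this for an XP algorithm, I would first guess, for every hospital $h\in H$, the indifference class $\theta_h$ of the \emph{least preferred} resident assigned to $h$ in a hypothetical stable matching, together with a special value encoding that $h$ is closed. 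Each $\theta_h$ ranges over the at most $n$ indifference classes of $h$ plus the ``closed'' option, so there are $\mathcal{O}\big((n+1)^m\big)$ guesses, which is acceptable for XP with parameter $m$.

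The key observation is that, once the thresholds $(\theta_h)_{h\in H}$ are fixed, all information that the hospitals' preferences contribute to the weak-stability conditions is captured by a bounded amount of data per resident. Concretely, I would classify two residents $r,r'$ as having the same \emph{type} if (i) they have the same weak preference list over $H$, and (ii) for every hospital $h$ they lie in the same one of the four categories ``$h$ strictly prefers them to $\theta_h$'', ``in the indifference class of $\theta_h$'', ``$h$ strictly prefers $\theta_h$ to them'', or ``unacceptable to $h$''. The number of weak orders over subsets of an $m$-element set is bounded by a function of $m$, and there are at most $4^m$ category vectors, so the number of types is bounded by some $g(m)$. Category (ii) suffices because $\theta_h$ is the worst assigned resident: hence ``$h$ strictly prefers $r$ to some resident of $M(h)$'' holds exactly when $h$ strictly prefers $r$ to $\theta_h$, which is precisely the condition needed to detect blocking pairs.

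Given a guess, I would then set up an ILP mirroring the one in \Cref{pr:HRLUQI-FPTM}, with a variable $x_{i,h}$ counting residents of type $t_i$ assigned to $h$, a binary variable $o_h$ indicating whether $h$ is open, and an auxiliary binary variable $y_h$ for undersubscription. The feasibility, quota, and type-count constraints carry over verbatim. The modifications are: force $x_{i,h}=0$ whenever $t_i$ is unacceptable to $h$ \emph{or} strictly worse than $\theta_h$ at $h$; add a consistency constraint requiring that, if $o_h=1$, at least one resident from the indifference class of $\theta_h$ is assigned to $h$ (so that the guessed threshold is actually attained and the blocking-pair translation is exact); rephrase the no-blocking-pair constraint so that, for an open $h$, the residents allowed to be matched worse than $h$ are only those that $h$ does \emph{not} strictly prefer to $\theta_h$ unless $h$ is undersubscribed; and keep the no-blocking-coalition constraint, now using the weak preferences. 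Since the number of variables is bounded by a function of $m$, each ILP is solvable in FPT time by Lenstra's algorithm, and iterating over all $\mathcal{O}\big((n+1)^m\big)$ guesses yields an overall running time of the form $n^{f(m)}\cdot\mathrm{poly}$, i.e.\ XP.

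The main obstacle is correctness rather than running time: I expect the delicate part to be verifying that the ILP exactly encodes weak stability for a fixed guess, in both directions. One must ensure the consistency constraint rules out ``too optimistic'' thresholds (which would spuriously forbid legitimate blocking pairs and produce false positives), while also guaranteeing that the threshold of a genuine stable matching appears among the guesses (so no stable matching is missed). Careful bookkeeping of the distinction between strict preference and indifference in the definitions of blocking pair and blocking coalition, and the correct handling of undersubscribed open hospitals via $y_h$, will be where the argument needs the most attention.
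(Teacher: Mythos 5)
Your proposal is correct and takes essentially the same approach as the paper's proof: guess, for each hospital, whether it is open and the (indifference class of the) least-preferred resident assigned to it in a hypothetical stable matching, use these thresholds to collapse residents into a number of types bounded in $m$, and solve one ILP per guess via Lenstra's algorithm, yielding $n^{\mathcal{O}(m)}$ guesses and XP time overall. The only cosmetic difference is that the paper additionally guesses the set of full hospitals, whereas you reuse the undersubscription indicator $y_h$ from the ILP of \Cref{pr:HRLUQI-FPTM}; both correctly handle blocking pairs involving undersubscribed open hospitals.
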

	\begin{proof}
		Let $(H,R)$ be the given \HRLUQT instance.
		Similar as in \Cref{pr:HRLUQI-FPTM}, we solve this problem using an
		ILP. Let $\succsim_{t_1}$, $\succsim_{t_2}$, \dots, 
		$\succsim_{t_q}$ be a list of all 
		weak incomplete orders over $H$.
		For each $i\in [q]$, denote by $A(i)$ the set of hospitals contained in 
		the weak incomplete order~$\succsim_{t_i}$.
		The number $q$ lies in
		$\mathcal{O}(m\cdot 
		m!\cdot 2^{m})$, as each weak incomplete order can be created by
		picking a strict ordering of the $m$ elements, deleting the last~$i\in 
		[m]$
		elements and subsequently dividing the remaining elements into 
		equivalence 
		classes (by guessing the set of residents which are for some tie the first residents of this tie in the ordering).
		Unfortunately, it is not possible to
		directly bound the number of 
		different resident types by the number of different preference 
		relations because different hospitals might rank the 
		residents differently and this information is relevant when checking for
		blocking 
		pairs. Therefore, we start by guessing the subset
		of open hospitals $H_{\open}\subseteq H$ and for each such
		hospital $h\in H_{\open}$
		the least preferred resident~$r_h\in R$
		that is matched to
		$h$; in fact, we do not enforce that $r_h$ is matched to $h$. Instead, 
		we only enforce that either $r_h$ or a resident $r\neq r_h$ for which 
		it holds that $h$ is indifferent between $r$ and $r_h$ is matched to 
		$h$.
		Furthermore, we guess the set~$H_{\full} \subseteq H_{\open}$ of full hospitals.
		For each resident~$r\in R$ and hospital $h\in H_{\open}$,
		define $z_r^h$ to be~$1$ if $h$ prefers $r$ to
		$r_h$, to be 
		$0$
		if $h$ is 
		indifferent between $r$ and $r_h$, and to be $-1$ if $h$ strictly  
		prefers 
		$r_h$ to $r$. 
		We call the tuple~$(z_r^h)_{h\in H_{\open}}$ the \emph{hospital
		signature} of 
		resident~$r\in R$.
		For the sake of finding a stable matching, a resident is fully 
		characterized by her preference relation and her hospital signature.  
		For each $h\in H_{\open}$, $i\in [q]$, and $\mathbf{z}\in
		\{-1,0,1\}^m$, we 
		introduce a variable $x_{i,h}^\mathbf{z}$ denoting the number of 
		residents with preference relation $\succsim_{t_i}$ and hospital 
		signature 
		$\mathbf{z}$ that are assigned to hospital $h$. Further, let 
		$n_i^\mathbf{z}$ 
		denote the number of resident with preference relation $\succsim_{t_i}$ 
		and 
		hospital signature~$\mathbf{z}$ in the given instance.
		It is possible to check
		whether there 
		exists a stable matching respecting the current guess by solving the 
		following ILP (below the ILP, we explain the purpose of the different 
		constraints):
		
		\begin{align*}
		\sum_{\substack{h'\in H_{\open}: \\ h'\succsim_{t_i} h}} 
		x^{\mathbf{z}}_{i,h'}\geq
		n^\mathbf{z}_i, \qquad & \forall h \in H_{\open}, \forall i \in
		[q], \forall \mathbf{z}\in \{-1,0,1\}^m\text{ with } z_h=1 
		\tag{1a}\label{ILP:no-bpa2} \\
		\sum_{\substack{h'\in H_{\open}: \\ h'\succsim_{t_i} h}} 
		x^\mathbf{z}_{i,h'}
		\geq
		n_i^\mathbf{z}, \qquad & \forall h\in H_{\open} \setminus H_{\full} , \forall i \in
		[q] \text{ with } h\in A(i), \forall \mathbf{z} \in \{1,0,-1\}^m
		\tag{1b}\label{ILP:no-bpb2}\\
		\sum_{\substack{i\in [q], h'\in H_{\open}, \\ \mathbf{z}\in
				\{-1,0,1\}^m:  h \succ_{t_i} h'}}
		x^\mathbf{z}_{i,h'}\leq 
		l(h), \qquad & \forall h\in H\setminus H_{\open} \tag{2}
		\label{ILP:no-bc2}\\
		l(h)\leq \sum_{\substack{i\in [q], \\ \mathbf{z}\in
				\{-1,0,1\}^m}} x^\mathbf{z}_{i,h}  \leq u(h), \qquad& 
		\forall h\in 
		H_{\open} \tag{3}\label{ILP:quotas2}\\
		\sum_{h\in H_{\open}} x^\mathbf{z}_{i,h} \leq n_i^\mathbf{z}, 
		\qquad& \forall 
		i\in 
		[q],\forall \mathbf{z}\in \{-1,0,1\}^m \tag{4} 
		\label{ILP:num-residents2}
		\\
		x^\mathbf{z}_{i,h}  = 0, \qquad &\forall i \in [q], \forall h\in 
		H_{\open}\setminus
		A(t_i), \forall \mathbf{z}\in \{-1,0,1\}^m
		\tag{5} \label{ILP:acceptablitiy2}\\
		\sum_{i\in [q], \mathbf{z}\in \{-1,0,1\}^m } x^\mathbf{z}_{i,h}
		= u(h), \qquad & \forall h\in H_{\full} \tag{6} \label{ILP:full} \\
		\sum_{i\in [q], \mathbf{z}\in \{-1,0,1\}^m } x^\mathbf{z}_{i,h}
		\le u(h) - 1, \qquad & \forall h\in H_{\open} \setminus H_{\full} \tag{7} \label{ILP:undersubscribed} \\
		x^\mathbf{z}_{i,h}  = 0, \qquad &\forall i \in [q],\forall h \in 
		H_{\open}, \forall
		\mathbf{z}\in \{-1,0,1\}^m\text{ with } z_h=-1 
		\tag{8} \label{ILP:guess1}\\
		\sum_{\substack{i\in [q], \mathbf{z}\in 
			\{-1,0,1\}^m\\ \text{ with } z_h=0}}
		x^\mathbf{z}_{i,h}  \geq 1, \qquad &\forall h \in H_{\open}
		\tag{9} \label{ILP:guess2}\\
		x^\mathbf{z}_{i,h}\in\{0,1,\dots,n^\mathbf{z}_i\}, \qquad &\forall i\in [q] ,
		\forall h \in H_{\open}, \forall \mathbf{z}\in \{-1,0,1\}^m			
		\tag{10} \label{ILP:integrality2}
		\end{align*}
		Conditions (\ref{ILP:no-bpa2}) and (\ref{ILP:no-bpb2}) ensure 
		that no blocking pair exists. Condition (\ref{ILP:no-bpa2}) checks for 
		all~$h\in H_{\open}$ that 
		all residents that are preferred by $h$ to $r_h$ are matched 
		to a hospital that they find as least as good as $h$. 
		All other residents only form a blocking pair with $h$ if $h$ is 
		undersubscribed and they are matched to a hospital to which they 
		strictly prefer $h$. The non-existence of such a pair is enforced by 
		Condition (\ref{ILP:no-bpb2}). 
		Condition (\ref{ILP:no-bc2}) ensures that no blocking coalition to open
		a hospital in $H\setminus H_{\open}$ exists, while Condition~(\ref{ILP:quotas2}) ensures that all hospitals in $H_{\open}$ respect their lower and upper quota.
		Condition (\ref{ILP:num-residents2}) enforces that for each resident 
		type the number of matched residents of this type does not exceed the 
		number of 
		residents of this type from the given instance, while Condition~(\ref{ILP:acceptablitiy2}) enforces that no resident is matched to a
		hospital she does not accept.
		Conditions~(\ref{ILP:full}) and~(\ref{ILP:undersubscribed}) ensure that each hospital in~$H_{\full}$ is indeed full, while all other hospitals from~$H_{\open} \setminus H_{\full}$ are undersubscribed.
		Finally, Conditions~(\ref{ILP:guess1})
		and~(\ref{ILP:guess2}) ensure that for all hospitals~$h\in
		H_{\open}$, the guess for the worst matched resident $r_h$ is
		correct by enforcing that no resident $r\in R$ with 
		$r_h\succ_{h} r$ is matched to $h$ (Condition~(\ref{ILP:guess1})) and by enforcing that at least one
		resident $r\in R$ with $r\sim_h r_h$ is matched to $h$ 
		(Condition~(\ref{ILP:guess2})).
		
		For each guess, we solve the ILP from above and return YES if it is 
		feasible. Otherwise, we reject the current guess and continue with the 
		next guess and return NO 
		after all guesses have been rejected.
		
		Note that the total number of guesses is $3^m\cdot n^m$, and for each 
		guess, we need to solve the above~ILP. As the number of variables in
		the ILP lies 
		in $\mathcal{O}(m^2\cdot m!\cdot 2^{m}\cdot 3^m)$, by employing 
		Lenstra's algorithm~\cite{DBLP:journals/mor/Kannan87,DBLP:journals/mor/Lenstra83}, it is
		possible to 
		solve the problem in $\mathcal{O}(f(m)\cdot n^m)$ for some 
		computable function in $f$. 
	\end{proof}
	\section{Conclusion}
	\label{sec:conclusion}
	We conducted a thorough parameterized complexity
	analysis of the 
	Hospital
	Residents problem with lower and upper quotas. We have shown that the 
	hardness of this problem arises from choosing the set of open hospitals such
	that no blocking coalition exists, as the problem remains hard even if all 
	hospitals have only lower quotas and pairs cannot block an outcome, but it
	becomes easy as soon as the set of open hospitals is given. We have
	also 
	analyzed two variants of this problem.
		
	For future work, one could generalize \HRLUQ by having only hospitals, and 
	hospitals having preferences over other hospitals (similar to the 
	\textsc{Capacitated Stable Roommates} problem introduced by 
	Cechl{\'{a}}rov{\'{a}} and Fleiner~\cite{DBLP:journals/talg/CechlarovaF05}).
	Then, a feasible matching would be a set of pairs of hospitals where each hospital appears in either no or between its upper and lower quota many pairs.
	A blocking pair for a matching~$M$ would be a pair~$\{h_1, h_2\}$ of open 
	hospitals such that, for $i\in \{1,2\}$, $h_i $ is undersubscribed or 
	prefers $h_{3-i}$ to one hospital assigned to it by~$M$, while a blocking 
	coalition would be a closed hospital $h$ together with $\lowerq (h)$ 
	hospitals $h_1, \dots, h_{\lowerq (h)}$ such that, for $i\in [l(h)]$, $h_i 
	$ is undersubscribed or prefers $h$ to one of its partners in $M$.
	While all our hardness results clearly carry over to this setting, it would 
	be interesting to see whether this generalization is also solvable in 
	polynomial time if all lower quotas are at most two.

	Further, we believe that it would be 
	interesting to analyze other stable many-to-one 
	matching problems using a similar fine-grained parameterized approach as 
	taken in this paper to enrich our understanding of the 
	complexity of these problems.
	
	\bibliographystyle{splncs04.bst}

\begin{thebibliography}{10}
\providecommand{\url}[1]{\texttt{#1}}
\providecommand{\urlprefix}{URL }
\providecommand{\doi}[1]{https://doi.org/#1}

\bibitem{10.1257/000282805774670167}
Abdulkadiroglu, A., Pathak, P.A., Roth, A.E.: The {N}ew {Y}ork {C}ity high
  school match. Am. Econ. Rev.  \textbf{95}(2),  364--367 (2005)

\bibitem{DBLP:journals/tcs/AdilGRSZ18}
Adil, D., Gupta, S., Roy, S., Saurabh, S., Zehavi, M.: Parameterized algorithms
  for stable matching with ties and incomplete lists. Theor. Comput. Sci.
  \textbf{723},  1--10 (2018)

\bibitem{DBLP:journals/jco/AgostonBM16}
{\'{A}}goston, K.C., Bir{\'{o}}, P., McBride, I.: Integer programming methods
  for special college admissions problems. J. Comb. Optim.  \textbf{32}(4),
  1371--1399 (2016)

\bibitem{DBLP:journals/algorithmica/ArulselvanCGMM18}
Arulselvan, A., Cseh, {\'{A}}., Gro{\ss}, M., Manlove, D.F., Matuschke, J.:
  Matchings with lower quotas: Algorithms and complexity. Algorithmica
  \textbf{80}(1),  185--208 (2018)

\bibitem{DBLP:conf/atal/0001GSW19}
Aziz, H., Gaspers, S., Sun, Z., Walsh, T.: From matching with diversity
  constraints to matching with regional quotas. In: Proceedings of the 18th
  International Conference on Autonomous Agents and MultiAgent Systems ({AAMAS}
  '19). pp. 377--385. IFAAMAS (2019)

\bibitem{DBLP:reference/choice/AzizS16}
Aziz, H., Savani, R.: Hedonic games. In: Handbook of Computational Social
  Choice, pp. 356--376. Cambridge University Press (2016)

\bibitem{DBLP:journals/eccc/ECCC-TR03-049}
Berman, P., Karpinski, M., Scott, A.D.: Approximation hardness of short
  symmetric instances of {MAX-3SAT}. {ECCC} (49) (2003)

\bibitem{DBLP:journals/tcs/BiroFIM10}
Bir{\'{o}}, P., Fleiner, T., Irving, R.W., Manlove, D.: The college admissions
  problem with lower and common quotas. Theor. Comput. Sci.
  \textbf{411}(34-36),  3136--3153 (2010)

\bibitem{DBLP:conf/wea/BiroMM14}
Bir{\'{o}}, P., Manlove, D.F., McBride, I.: The hospitals/residents problem
  with couples: Complexity and integer programming models. In: Proceedings of
  the 13th International Symposium on Experimental Algorithms ({SEA} '14). pp.
  10--21. Springer (2014)

\bibitem{biroGS}
Biró, P.: Student admissions in {H}ungary as {G}ale and {S}hapley envisaged.
  Tech. Rep. TR-2008-291, University of Glasgow (2008)

\bibitem{RePEc:inm:orinte:v:33:y:2003:i:3:p:1-11}
Bodin, L., Panken, A.: High tech for a higher authority: The placement of
  graduating rabbis from {Hebrew Union College—Jewish Institute of Religion}.
  Interfaces  \textbf{33}(3),  1--11 (2003)

\bibitem{DBLP:conf/sagt/BoehmerBHN20}
Boehmer, N., Bredereck, R., Heeger, K., Niedermeier, R.: Bribery and control in
  stable marriage. In: Proceedings of the 13th International Symposium on
  Algorithmic Game Theory ({SAGT} '20). Lecture Notes in Computer Science, vol.
  12283, pp. 163--177. Springer (2020)

\bibitem{DBLP:conf/isaac/BredereckHKN19}
Bredereck, R., Heeger, K., Knop, D., Niedermeier, R.: Parameterized complexity
  of stable roommates with ties and incomplete lists through the lens of graph
  parameters. In: Proceedings of the 30th International Symposium on Algorithms
  and Computation ({ISAAC} '19). LIPIcs, vol.~149, pp. 44:1--44:14. Schloss
  Dagstuhl - Leibniz-Zentrum f{\"{u}}r Informatik (2019)

\bibitem{DBLP:journals/talg/CechlarovaF05}
Cechl{\'{a}}rov{\'{a}}, K., Fleiner, T.: On a generalization of the stable
  roommates problem. {ACM} Trans. Algorithms  \textbf{1}(1),  143--156 (2005)

\bibitem{DBLP:journals/mss/CechlarovaF17}
Cechl{\'{a}}rov{\'{a}}, K., Fleiner, T.: Pareto optimal matchings with lower
  quotas. Math. Soc. Sci.  \textbf{88},  3--10 (2017)

\bibitem{DBLP:books/sp/CyganFKLMPPS15}
Cygan, M., Fomin, F.V., Kowalik, L., Lokshtanov, D., Marx, D., Pilipczuk, M.,
  Pilipczuk, M., Saurabh, S.: Parameterized Algorithms. Springer (2015)

\bibitem{DBLP:series/txcs/DowneyF13}
Downey, R.G., Fellows, M.R.: Fundamentals of Parameterized Complexity. Texts in
  Computer Science, Springer (2013)

\bibitem{DBLP:journals/jet/EhlersHYY14}
Ehlers, L., Hafalir, I.E., Yenmez, M.B., Yildirim, M.A.: School choice with
  controlled choice constraints: Hard bounds versus soft bounds. J. Econ.
  Theory  \textbf{153},  648--683 (2014)

\bibitem{GaleS62}
Gale, D., Shapley, L.S.: College admissions and the stability of marriage. The
  American Mathematical Monthly  \textbf{69}(1),  9--15 (1962)

\bibitem{DBLP:books/fm/GareyJ79}
Garey, M.R., Johnson, D.S.: Computers and Intractability: {A} Guide to the
  Theory of NP-Completeness. W. H. Freeman (1979)

\bibitem{DBLP:journals/algorithmica/HamadaIM16}
Hamada, K., Iwama, K., Miyazaki, S.: The hospitals/residents problem with lower
  quotas. Algorithmica  \textbf{74}(1),  440--465 (2016)

\bibitem{DBLP:journals/siamcomp/HopcroftK73}
Hopcroft, J.E., Karp, R.M.: An n\({}^{\mbox{5/2}}\) algorithm for maximum
  matchings in bipartite graphs. {SIAM} J. Comput.  \textbf{2}(4),  225--231
  (1973)

\bibitem{DBLP:journals/jal/Irving85}
Irving, R.W.: An efficient algorithm for the ``stable roommates'' problem. J.
  Algorithms  \textbf{6}(4),  577--595 (1985)

\bibitem{IrvingMO09}
Irving, R.W., Manlove, D.F., O'Malley, G.: Stable marriage with ties and
  bounded length preference lists. J. Discrete Algorithms  \textbf{7}(2),
  213--219 (2009)

\bibitem{DBLP:journals/orl/Kamiyama13}
Kamiyama, N.: A note on the serial dictatorship with project closures. Oper.
  Res. Lett.  \textbf{41}(5),  559--561 (2013)

\bibitem{DBLP:journals/mor/Kannan87}
Kannan, R.: Minkowski's convex body theorem and integer programming. Math.
  Oper. Res.  \textbf{12}(3),  415--440 (1987)

\bibitem{DBLP:journals/jair/KurataHIY17}
Kurata, R., Hamada, N., Iwasaki, A., Yokoo, M.: Controlled school choice with
  soft bounds and overlapping types. J. Artif. Intell. Res.  \textbf{58},
  153--184 (2017)

\bibitem{DBLP:journals/mor/Lenstra83}
Lenstra, H.W.: Integer programming with a fixed number of variables. Math.
  Oper. Res.  \textbf{8}(4),  538--548 (1983)

\bibitem{DBLP:books/ws/Manlove13}
Manlove, D.F.: Algorithmics of Matching Under Preferences, Series on
  Theoretical Computer Science, vol.~2. WorldScientific (2013)

\bibitem{DBLP:journals/algorithmica/MarxS10}
Marx, D., Schlotter, I.: Parameterized complexity and local search approaches
  for the stable marriage problem with ties. Algorithmica  \textbf{58}(1),
  170--187 (2010)

\bibitem{DBLP:journals/disopt/MarxS11}
Marx, D., Schlotter, I.: Stable assignment with couples: Parameterized
  complexity and local search. Discret. Optim.  \textbf{8}(1),  25--40 (2011)

\bibitem{MeeksR20}
Meeks, K., Rastegari, B.: Solving hard stable matching problems involving
  groups of similar agents. Theor. Comput. Sci.  \textbf{844},  171--194 (2020)

\bibitem{DBLP:journals/algorithmica/MnichS20}
Mnich, M., Schlotter, I.: Stable matchings with covering constraints: {A}
  complete computational trichotomy. Algorithmica  \textbf{82}(5),  1136--1188
  (2020)

\bibitem{MONTE201314}
Monte, D., Tumennasan, N.: Matching with quorums. Econ. Lett.  \textbf{120}(1),
   14--17 (2013)

\bibitem{DBLP:journals/siamdm/NgH91}
Ng, C., Hirschberg, D.S.: Three-dimensional stable matching problems. {SIAM} J.
  Discret. Math.  \textbf{4}(2),  245--252 (1991)

\bibitem{Pietrzak03}
Pietrzak, K.: On the parameterized complexity of the fixed alphabet shortest
  common supersequence and longest common subsequence problems. J. Comput.
  Syst. Sci.  \textbf{67}(4),  757--771 (2003)

\bibitem{RePEc:ucp:jpolec:v:92:y:1984:i:6:p:991-1016}
Roth, A.E.: The evolution of the labor market for medical interns and
  residents: A case study in game theory. J. Political Econ.  \textbf{92}(6),
  991--1016 (1984)

\bibitem{10.2307/1913160}
Roth, A.E.: On the allocation of residents to rural hospitals: A general
  property of two-sided matching markets. Econometrica  \textbf{54}(2),
  425--427 (1986)

\bibitem{DBLP:journals/ijgt/Roth08}
Roth, A.E.: Deferred acceptance algorithms: history, theory, practice, and open
  questions. Int. J. Game Theory  \textbf{36}(3-4),  537--569 (2008)

\bibitem{Zhang2010AnalysisOT}
Zhang, H.: Analysis of the Chinese college admission system. Ph.D. thesis, The
  University of Edinburgh (2010)

\end{thebibliography}

\end{document}